\documentclass{article}
\usepackage{kr}

\usepackage{times}
\usepackage{soul}
\usepackage{url}
\usepackage[hidelinks]{hyperref}
\usepackage[utf8]{inputenc}
\usepackage[small]{caption}
\usepackage{graphicx}
\usepackage{amsmath}
\usepackage{amsthm}
\usepackage{booktabs}
\usepackage{algorithm}
\usepackage{algorithmic}
\usepackage{dsfont}
\usepackage{boxedminipage}
\newtheorem{example}{Example}
\newtheorem{theorem}{Theorem}
\newtheorem{proposition}{Proposition}
\newtheorem{lemma}{Lemma}

\newtheorem{corollary}{Corollary}
\newtheorem{definition}{Definition}

\usepackage{thmtools}
\usepackage{thm-restate}
\usepackage{amsfonts}
\usepackage{xspace}
\usepackage{pifont}
\usepackage{color}
\usepackage{upgreek, stmaryrd}
\usepackage{amssymb}
\usepackage{misc}   
\usepackage{xargs}
\usepackage[pdftex,dvipsnames]{xcolor}
\usepackage{mathtools}
\usepackage{enumitem}
\usepackage{tikz}
\usetikzlibrary{arrows.meta, positioning}

\usepackage{microtype}

\setlist[enumerate,1]{leftmargin=*, labelsep=.6em}

\title{Fitting Horn DL Ontologies to ABox and Query Examples: \\[1mm]
A Tale of Simulation Quantifiers and Finite Models}

\author{%
 Marvin Grosser \and
 Carsten Lutz
\affiliations
Leipzig University\\
ScaDS.AI Center Dresden/Leipzig\\
\emails
\{grosser, clu\}@informatik.uni-leipzig.com
}

\begin{document}

\maketitle

\begin{abstract}
We study the problem of fitting a
description logic (DL) ontology to a given set of positive and negative examples in the form of an ABox and a Boolean query. While previous work has investigated this problem for the expressive DLs \ALC and
\ALCI, we here focus on the Horn DLs \EL and \ELI, as well as their extensions with the bottom concept. 
As the query language, we consider 
atomic queries (AQs), conjunctive queries (CQs), and unions thereof (UCQs). We provide characterizations of the existence of a fitting ontology  based on simulations, use them to develop decision procedures, and clarify the exact computational complexity. For AQs, the problem is in {\sc PTime} 
for both \EL and~\ELI. For CQs and UCQs, it is $\Sigma^\textsc{P}_2$-complete 
for \EL and  \ExpTime-complete for  \ELI. Adding the bottom concept does not change any of these complexities.
  Interestingly, moving from \ALC and \ALCI to \EL and \ELI introduces additional technical challenges  rather than simplifying the matter. 
\end{abstract}

\section{Introduction}

In knowledge representation
and databases, fitting problems aim to support 
human engineers in constructing ontologies and queries. Given as input a collection of positively and
negatively labeled examples, a fitting algorithm outputs  a formal object that fits all examples
or reports that no such object exists. The study of fitting problems originated in databases, where the formal object to be constructed is a query \cite{DBLP:conf/afips/Zloof75}. This line of work
is known as  query-by-example, and sometimes as
reverse query engineering, and it continues to be studied today, see for instance \cite{DBLP:journals/tods/CohenW16,DBLP:conf/icdt/Barcelo017,DBLP:journals/sigmod/CateFJL23} and references therein.
Although not  the focus of this article, fitting problems are also closely
related to machine learning, where fitting a model to a  set of examples
is  linked to PAC-style generalization guarantees \cite{DBLP:books/daglib/0033642}. 

Fitting problems have also attracted considerable interest in ontology-based KR, 
 particularly in the context of description logics (DLs).
There are at least two ways in which fitting problems can support ontology engineering. On the one hand,
they can help an engineer  construct
\emph{concepts} that  can then serve as building blocks in an ontology. On the other hand, fitting can be used to construct entire \emph{ontologies}, either for 
stand-alone use or as a module in a larger ontology under development.

Interest in concept fitting first emerged under the label of \emph{DL concept learning} although, unlike in machine learning, generalization guarantees were typically not considered; see \cite{DBLP:conf/ijcai/CateFJL23} for an exception. This research line led to the development of practical systems,  most prominently DL Learner \cite{DBLP:journals/ml/LehmannH10,DBLP:conf/www/BuhmannLWB18}. Starting with
\cite{DBLP:conf/ijcai/FunkJLPW19}, the
focus shifted to algorithms for concept fitting that are sound, complete and terminating, as well as to 
investigations of the decidability and computational complexity of concept fitting in the presence of an ontology;
see for instance \cite{DBLP:journals/ai/JungLPW22} and the references therein. 

Ontology fitting, by contrast, has  attracted attention only relatively recently. A key design choice
concerns the form of the examples.
One natural option, adopted in \cite{KR2025-40}, is to use interpretations (logical structures)
as examples. Another,
used in \cite{FGL-KR25}, is to use pairs $(\Amc,q)$ where \Amc is an ABox and $q$ is a (Boolean) query. For a positive example, a fitting ontology \Omc must then satisfy $\Amc \cup \Omc \models q$ while for a negative example it must satisfy \mbox{$\Amc \cup \Omc \not\models q$}. The latter setting
is particularly relevant in the context of ontology-mediated querying \cite{DBLP:conf/rweb/BienvenuO15,DBLP:conf/ijcai/XiaoCKLPRZ18}. 

In this article, we study the problem of fitting ontologies to ABox-query examples. Whereas previous work has focused on 
expressive DLs such as \ALC and \ALCI
as the ontology language, we instead
consider Horn DLs, especially  \EL, \ELI, and their extensions with the bottom concept `$\bot$'. These DLs are of special interest because they are computationally more tractable than their expressive counterparts. In particular, \EL admits subsumption and querying in polynomial time for tree-shaped queries, and
both \EL and \ELI are supported by so-called consequence-based reasoning algorithms that are highly efficient in practice \cite{DBLP:conf/birthday/CucalaGH19}.
This has contributed to an extension of \EL becoming one of three profiles of
the OWL 2 ontology language standard~\cite{profiles}. As query languages,
we consider atomic queries (AQs), conjunctive queries (CQs), and unions thereof (UCQs).
In addition, we consider a fitting problem
in which the examples  consist of an ABox only and where we seek an ontology that is consistent with the positive examples and inconsistent with the negative ones. For all these fitting
problems, we
provide effective characterizations and 
determine the precise complexity of deciding whether a fitting ontology exists. Our
algorithms also produce explicit
fitting ontologies when they exist.

Our characterizations resemble those
obtained in \cite{FGL-KR25} for the expressive DLs \ALC and \ALCI, but they also exhibit  notable differences. 
To begin with, the 
characterizations for $\mathcal{ALC}$ and \ALCI are based on homomorphisms which ours replace with simulations. This introduces technical challenges, especially for
CQs and UCQs as the query language,
intuitively because homomorphisms are more local in nature than simulations. For example, a reflexive loop must map homomorphically to a reflexive loop, but it may be simulated by an infinite path. When turning our characterizations into fitting algorithms, we are therefore led to employ the
extension of \EL and \ELI with simulation quantifiers as studied
for the \EL case in \cite{lutz2010enriching}. In the case
of \ELI, an additional complication
arises: the non-local nature of
simulations clashes, in a certain sense, with the more local 
nature of \ELI-concepts used in ontologies. We address this issue by working with finite interpretations rather than with unrestricted ones in our characterizations, and by considering finite model reasoning in \ELI extended with simulation quantifiers. 

Table~\ref{fig:summary} summarizes our complexity results and compares them with the known complexities of query entailment in \EL and \ELI. All results are completeness results, except the \PTime ones (for which we conjecture completeness under logspace-reductions as well). 
The lower bounds for (U)CQs already hold for rooted queries, i.e.,
(U)CQs in which every connected component contains an answer variable.
The complexities are generally lower than for \ALC and \ALCI.
For example, \ALC- and \ALCI-ontology fitting is \TwoExpTime-complete 
for CQs and UCQs, and
\NPclass-complete for AQs \cite{FGL-KR25}.
The complexity also differs from
that of fitting ontologies  to examples that take the form of interpretations.
That problem is \ExpTime-complete
for all four DLs listed in 
Table~\ref{fig:summary} \cite{KR2025-40}.

 We also discuss, throughout the paper, the size and nature of fitting ontologies. 
For AQs and for consistency-based fitting, the existence of a fitting ontology coincides with the existence of a fitting ontology that does not contain auxiliary symbols (i.e., concept or role names not present in the examples). For conjunctive queries and UCQs, this is not the case. In particular, a fresh role name is required to deal with disconnected queries. 
We also observe that, in most cases, admitting auxiliary symbols enables the
construction of fitting ontologies of polynomial size while, without such symbols, they are of single exponential size. The sole exception is the
case of $\ELI$ and (U)CQs, where we do not obtain any bounds on the size of fitting ontologies at all.

Proofs are provided in the appendix of \cite{grosserKR26}.

\section{Preliminaries}
\label{sect:prelims}

    \begin{table}
        \centering
        \begin{tabular}{lrr}\toprule
        & $\EL_{(\bot)}$&$\ELI_{(\bot)}$ \\\midrule
     AQ-entailment & \PTime   & \ExpTime \\
     (U)CQ-entailment & \NPclass  & \ExpTime \\\midrule
     ontology fitting, consistency &   \PTime & \PTime \\
     ontology fitting, AQs & \PTime  & \PTime \\
      ontology fitting, (U)CQ & $\Sigma^{\textsc{P}}_2$  & \ExpTime \\\bottomrule
    \end{tabular}
        \caption{Complexities of query entailment (top) 
        from \protect\cite{DBLP:conf/ijcai/BaaderBL05,RosatiDL07-ELCQA,BaaderEtAl-OWLED08DC,DBLP:conf/jelia/EiterGOS08}
        and of ontology fitting from this article (bottom).} 
        \label{fig:summary}
        \vspace*{-4mm}
    \end{table}

\subsection{Description Logic}

Let \NC, \NR, and \NI be countably infinite sets of \emph{concept names}, \emph{role names}, and \emph{individual names}.
An \emph{inverse role} takes the form $r^-$ with $r$ a role name, and a \emph{role} is a role name or an inverse role. If $r=s^-$ is an inverse role, then we set $r^-=s$.
An \emph{\ELIbot-concept} $C$ is built according to 
the syntax rule
$$
C,D ::= \top \mid \bot \mid A  \mid C \sqcap D \mid \exists r . D
$$
where $A$ ranges over concept names and $r$ over roles. 
An \emph{\ELbot-concept} is an \ELIbot-concept that does not use inverse
roles, an \emph{\EL-concept} is an $\EL_\bot$-concept that
does not use `$\bot$', and likewise for \emph{\ELI-concepts}.
The \emph{role depth} of a concept $C$,
denoted $\mn{rd}(C)$,
is the maximum nesting depth of the operator $\exists r . D$ in it.
We shall occasionally also mention \ALC-concepts, which extend \EL-concepts
with negation $\neg C$, and likewise
for \ALCI and \ELI.

Let $\Lmc \in \{ \EL, \ELbot, \ELI, \ELIbot, \ALC, \ALCI \}$. 
An \emph{\Lmc-ontology} is a finite set of \emph{concept inclusions
  (CIs)} $C \sqsubseteq D$, where $C,D$ are \Lmc-concepts. 
An \emph{ABox} is a finite and nonempty set of
\emph{concept assertions} $A(a)$ and \emph{role assertions} $r(a,b)$
where $A \in \NC$, $r \in \NR$, and $a,b \in \NI$. We use $\mn{ind}(\Amc)$ to denote the set of individual names used
in~\Amc. 

The semantics of concepts is defined as usual in terms of interpretations
$\Imc=(\Delta^\Imc,\cdot^\Imc)$ with $\Delta^\Imc$ the (non-empty)
\emph{domain} and $\cdot^\Imc$ the \emph{interpretation function}, see \cite{Baader2017} for  details. 
A \emph{pointed interpretation} is a pair $(\Imc,d)$ with \Imc an interpretation and $d \in \Delta^\Imc$. 
An interpretation \Imc
\emph{satisfies} a CI $C \sqsubseteq D$ if $C^\Imc \subseteq D^\Imc$,
a concept assertion $A(a)$ if $a \in A^\Imc$, and
a role assertion $r(a,b)$ if $(a,b) \in r^\Imc$; we thus make the \emph{standard names assumption}. We say that \Imc is a \emph{model} of an
ontology \Omc, written $\Imc \models \Omc$, if it satisfies all CIs
in it, and likewise for ABoxes. An ontology is \emph{satisfiable} if it
has a model and an ABox is \emph{consistent} with an ontology \Omc if \Amc and \Omc 
have a common model. 

\subsection{Simulations, Products, etc.}

A \emph{homomorphism} from an interpretation $\Imc_1$ to an interpretation $\Imc_2$
is a mapping $h \colon \Delta^{\Imc_1} \to \Delta^{\Imc_2}$
such that $d \in A^{\Imc_1}$
implies $h(d) \in A^{\Imc_2}$ and $(d,e) \in r^{\Imc_1}$ implies $(h(d),h(e)) \in r^{\Imc_2}$
for all  $A \in \NC$, $r \in \NR$,  and $d,e \in \Delta^{\Imc_1}$. We write $(\Imc_1,d_1) \rightarrow (\Imc_2,d_2)$ to denote the existence of a homomorphism $h$ from $\Imc_1$ to $\Imc_2$ with $h(d_1)=d_2$. 

For $k \in \mathbb{N} \cup \{ \omega \}$, a \emph{$k$-\EL-simulation} from $\Imc_1$ to $\Imc_2$ is a relation $S \subseteq \Delta^{\Imc_1} \times 
	\Delta^{\Imc_2} \times \{ i \in \mathbb{N} \mid i \leq k\}$ such that for all $A \in \NC$ and
    $r \in \NR$: 
				\begin{description}
				\item[(Atom)] if $d_1 \in A^{\Imc_1}$ and $(d_1,d_2,i) \in S$, then $d_2 \in A^{\Imc_2}$;
				\item[(Rel)] if  
					$ ( d_1,e_1 ) \in r^{\Imc_1}$ 
					and $ ( d_1,d_2,i ) \in S$ with $i \geq 1$, 
                    then there exists $(d_2,e_2) \in r^{\Imc_2}$ with $(e_1,e_2,i-1) \in S$.\footnote{\label{fn:handlingomega}here we take $\omega -1$ to mean $\omega$}
			\end{description}
An \emph{\ELI-simulation} is defined in the same way, but is required to satisfy the additional condition
			\begin{description}
				\item[(iRel)] if  
					$ ( e_1,d_1 ) \in r^{\Imc_1}$ 
					and $ ( d_1,d_2,i) \in S$ with $i \geq 1$, then there exists $(e_2,d_2) \in r^{\Imc_2}$ with $(e_1,e_2,i-1) \in S$.
			\end{description}
       $k$-\ELbot- and \emph{$k$-\ELIbot-simulations} $S$ are defined in
     the same way, but additionally need to be total:
     \begin{description}
				\item[(Tot)]
     for all $d_1 \in \Delta^{\Imc_1}$, there is a $d_2 \in \Delta^{\Imc_2}$
     with \mbox{$(d_1,d_2,k) \in S$}.
     \end{description}
      Let $\Lmc \in \{ \EL, \ELI, \EL_\bot, \ELI_\bot \}$.
		For $k \in \mathbb{N} \cup \{ \omega \}$, $d_1 \in \Delta^{\Imc_1}$, and $d_2 \in \Delta^{\Imc_2}$,  we write 
	$( \Imc_1, d_1) \preceq^k_\Lmc ( \Imc_2, d_2 )$ if there is a
	$k$-\Lmc-simulation $S$ from $\Imc_1$ to $\Imc_2$ with \mbox{$( d_1,d_2 ,k) \in S$}. 
    
    An \emph{\Lmc-simulation}
    from $\Imc_1$ to $\Imc_2$ is a
    relation $S \subseteq \Delta^{\Imc_1} \times 
	\Delta^{\Imc_2}$ such that 
    $\{ (d,e,\omega) \mid (d,e) \in S \}$ is an $\omega$-\Lmc-simulation 
    from $\Imc_1$ to $\Imc_2$. We denote the existence of an \Lmc-simulation from $\Imc_1$ to $\Imc_2$ by $\Imc_1 \preceq_\Lmc \Imc_2$. Note that for $\Lmc \in \{ \EL, \ELI \}$, the empty
    relation is an \Lmc-simulation from $\Imc_1$ to $\Imc_2$, for all interpretations $\Imc_1, \Imc_2$. This is not the case for $\Lmc \in \{ \EL_\bot, \ELI_\bot \}$, where \Lmc-simulations have to be total.
    We write $\left( \Imc_1, d_1 \right) \preceq_\Lmc \left( \Imc_2, d_2 \right)$ if there is an
	\Lmc-simulation $S$ from $\Imc_1$ to $\Imc_2$ that contains $(d_1, d_2)$. 
    We shall often exploit that, clearly, every homomorphism is an $\ELI_\bot$-simulation. 
    
    We state the basic and well-known connection between  DL concepts and $k$-simulations.
\begin{lemma}\label{lem:SimPresConc}
  Let $\Lmc \in \{ \EL,\ELI\}$, $k \geq 0$,   and  $C$ an \Lmc-concept with $\mn{rd}(C) \leq k$. Then  $d_1 \in C^{\Imc_1}$ and $(\Imc_1,d_1) \preceq^k_\Lmc (\Imc_2,d_2)$ implies $d_2 \in C^{\Imc_2}$.
\end{lemma}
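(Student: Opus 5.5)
We argue by structural induction on $C$, proving the stronger statement that for every triple $(d_1,d_2,i)$ in a $k$-\Lmc-simulation $S$ and every \Lmc-concept $C$ with $\mn{rd}(C)\leq i$, $d_1 \in C^{\Imc_1}$ implies $d_2 \in C^{\Imc_2}$. The lemma is the case $i=k$. The definition forces this strengthening: applying (Rel) or (iRel) decreases the index by one, so the induction hypothesis must be usable at every index up to $k$, not just at $k$.

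The base cases are straightforward. For $C=\top$ there is nothing to show. For $C=A$ a concept name, the claim is exactly condition (Atom), which applies for every index~$i$. Since \Lmc ranges only over \EL and \ELI, $\bot$ does not occur, so no totality condition is needed. For $C = D_1 \sqcap D_2$, both conjuncts have role depth at most $\mn{rd}(C)\leq i$. Applying the induction hypothesis to each conjunct at the same triple $(d_1,d_2,i)$ gives $d_2\in D_1^{\Imc_2}\cap D_2^{\Imc_2}$.

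The only step with content is $C=\exists r.D$. Here $\mn{rd}(C)=\mn{rd}(D)+1\leq i$, so $i\geq 1$ and $\mn{rd}(D)\leq i-1$. From $d_1\in C^{\Imc_1}$ we obtain some $e_1$ with $(d_1,e_1)\in r^{\Imc_1}$ and $e_1\in D^{\Imc_1}$.

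\begin{itemize}
\item If $r$ is a role name, condition (Rel) yields $e_2$ with $(d_2,e_2)\in r^{\Imc_2}$ and $(e_1,e_2,i-1)\in S$.
\item If $r=s^-$ is an inverse role, which happens only for $\Lmc=\ELI$, we have $(e_1,d_1)\in s^{\Imc_1}$. Condition (iRel) then gives $e_2$ with $(e_2,d_2)\in s^{\Imc_2}$ and $(e_1,e_2,i-1)\in S$.
\end{itemize}

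In both cases the induction hypothesis for $D$ at the triple $(e_1,e_2,i-1)$ gives $e_2\in D^{\Imc_2}$, and hence $d_2\in C^{\Imc_2}$.

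We expect no real obstacle. The one point needing care is setting up the induction so that it tracks the index $i$ alongside the concept, which is why we induct on $C$ over all triples in $S$ simultaneously rather than fixing $i=k$.
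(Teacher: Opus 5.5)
Your proof is correct. Strengthening the claim to every triple $(d_1,d_2,i)\in S$ with $\mn{rd}(C)\leq i$, inducting on $C$, and using (iRel) for $\exists s^-.D$ is exactly what is needed. The paper gives no proof of this lemma and simply cites it as well known; your structural induction is the standard argument behind it.
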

The next lemma asserts the existence of characteristic concepts, a construction is in the appendix.
\begin{restatable}{lemma}{lemcharconc} \label{lem:charconc}
    Let $\Lmc \in  \{ \EL, \ELI\}$, let
	\Imc be an interpretation  that interprets only finitely many role and concept names as non-empty, $ d \in \Delta^\Imc $, and $k \geq 0$.
	Then there exists a \emph{characteristic \Lmc-concept $C_{\Imc,d}^{\Lmc,k}$ of role depth $k$ for $d$ in 
    \Imc} such that for all interpretations \Jmc and
	$ d' \in \Delta^\Jmc$:
	\[ (\Imc, d ) \preceq_\Lmc^k ( \Jmc, d' )   \quad\text{ if and only if } \quad 
		d' \in ( C_{\Imc, d}^{\Lmc,k} )^\Jmc 	
	.\]
\end{restatable}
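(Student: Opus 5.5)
We construct $C_{\Imc,d}^{\Lmc,k}$ by induction on $k$ and prove the equivalence by induction on $k$, simultaneously for all $d \in \Delta^\Imc$. Let $\Sigma_C$ and $\Sigma_R$ be the finite sets of concept and role names interpreted as non-empty in \Imc. For \EL let $R = \Sigma_R$; for \ELI let $R = \Sigma_R \cup \{ r^- \mid r \in \Sigma_R\}$, with $r^{-\Imc}$ the converse of $r^\Imc$.

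For $k=0$ set $C_{\Imc,d}^{\Lmc,0} = \bigsqcap \{ A \in \Sigma_C \mid d \in A^\Imc\}$, with the empty conjunction read as $\top$. For $k>0$ set
\[
C_{\Imc,d}^{\Lmc,k} = C_{\Imc,d}^{\Lmc,0} \sqcap \bigsqcap_{r \in R} \ \bigsqcap_{e \,:\, (d,e) \in r^\Imc} \exists r . C_{\Imc,e}^{\Lmc,k-1}.
\]
The difficulty is that $d$ may have infinitely many $r$-successors, so this conjunction may be infinite. However, by induction there are only finitely many characteristic concepts of depth $k-1$ over the finite signature $\Sigma_C \cup R$, up to syntactic identity. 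We therefore take the conjunction over the finite set of distinct concepts $\{ C_{\Imc,e}^{\Lmc,k-1} \mid (d,e) \in r^\Imc\}$. This set is finite because, by induction, each such concept is a fixed syntactic object built from finitely many symbols with bounded depth and bounded branching, where branching is bounded by the number of depth-$(k-2)$ characteristic concepts. Establishing this finiteness claim is the only non-routine step, and it is a straightforward count.

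For correctness we argue both directions.

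\emph{($\Rightarrow$)} If $(\Imc,d) \preceq^k_\Lmc (\Jmc,d')$ and $\mn{rd}(C_{\Imc,d}^{\Lmc,k}) \leq k$, then Lemma~\ref{lem:SimPresConc} gives $d' \in (C_{\Imc,d}^{\Lmc,k})^\Jmc$, since $d \in (C_{\Imc,d}^{\Lmc,k})^\Imc$. The latter holds by a trivial induction using the identity simulation.

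\emph{($\Leftarrow$)} Define
\[
S = \{ (e,e',i) \mid i \leq k,\ e' \in (C_{\Imc,e}^{\Lmc,i})^\Jmc \}.
\]
We verify that $S$ is a $k$-\Lmc-simulation.
\begin{itemize}
\item (Atom) holds by the depth-0 conjunct.
\item (Rel) and, in the \ELI case, (iRel) hold as follows. If $(e,f) \in r^\Imc$ with $r \in R$ and $i \geq 1$, then $\exists r . C_{\Imc,f}^{\Lmc,i-1}$ is a conjunct of $C_{\Imc,e}^{\Lmc,i}$. So some $f'$ with $(e',f') \in r^\Jmc$ satisfies $f' \in (C_{\Imc,f}^{\Lmc,i-1})^\Jmc$, which means $(f,f',i-1) \in S$.
\end{itemize}
Since $(d,d',k) \in S$, this gives $(\Imc,d) \preceq^k_\Lmc (\Jmc,d')$.

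Finally, role names outside $\Sigma_R$ have empty extension in \Imc, and concept names outside $\Sigma_C$ likewise, so those cases impose no conditions.
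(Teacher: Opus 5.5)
Your proposal is correct and takes the same route as the paper. The paper uses the identical inductive definition ($C^{0}_{\Imc,d}$ as the conjunction of the concept names true at $d$, and depth $k+1$ adding $\exists r.C^{k}_{\Imc,e}$ for every role successor $e$, including inverse roles for \ELI), but it cites Piro's thesis for correctness and only remarks that the argument generalises to \ELI. You instead prove both directions yourself, via Lemma~\ref{lem:SimPresConc} and the explicit $k$-simulation $S$, and you add the deduplication argument for infinitely branching $\Imc$. That argument is worthwhile: the paper leaves implicit that its conjunctions become finite only after identifying repeated conjuncts.
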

The next lemma establishes a crucial link between bounded and unbounded simulations. For a proof see Lemma~15 in \cite{KR2025-40}.
\begin{lemma}
    \label{lem:finmodsim}
    Let $\Lmc \in  \{ \EL, \ELI\}$ and $\Imc,\Jmc$ be finite interpretations. Then $$ (\Imc, d) \preceq_\Lmc (\Jmc, e) \text{ if and only if }(\Imc, d) \preceq_\Lmc^{k} (\Jmc, e)$$ for all $d \in \Delta^\Imc$, $e \in \Delta^\Jmc$ and $k = |\Delta^\Imc| \cdot |\Delta^\Jmc|$.
\end{lemma}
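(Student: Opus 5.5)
The forward direction is immediate: if $S$ is an \Lmc-simulation containing $(d,e)$, then $\{(d',e',i) \mid (d',e') \in S, i \leq k\}$ is a $k$-\Lmc-simulation containing $(d,e,k)$. Conditions (Atom), (Rel) and (iRel) transfer verbatim, because every successor pair in $S$ may be used with any smaller index. So the whole work lies in the converse. Fix $k = |\Delta^\Imc| \cdot |\Delta^\Jmc|$ and assume $(\Imc,d) \preceq^k_\Lmc (\Jmc,e)$.

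For the converse I would use a fixpoint or approximant argument. For each $i \geq 0$, let $R_i$ be the set of pairs $(d',e') \in \Delta^\Imc \times \Delta^\Jmc$ such that $(\Imc,d') \preceq^i_\Lmc (\Jmc,e')$. These sets have two key properties:
\begin{itemize}
\item they are decreasing, $R_0 \supseteq R_1 \supseteq \cdots$, since a $k$-simulation restricted to indices $\leq i$ (with the top index truncated) witnesses $i$-similarity;
\item they satisfy a step characterisation: $(d',e') \in R_{i+1}$ iff $d'$'s concept names are contained in $e'$'s, and every $r$-successor (and, for \ELI, $r$-predecessor) $d''$ of $d'$ has a matching $r$-successor (predecessor) $e''$ of $e'$ with $(d'',e'') \in R_i$.
\end{itemize}
The step characterisation follows by unfolding the definition: take the union of witnessing $i$-simulations, shifted appropriately.

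Because $R_{i+1}$ is determined by $R_i$ alone, once $R_i = R_{i+1}$ the sequence is constant from then on. A strictly decreasing chain of subsets of a set of size $|\Delta^\Imc|\cdot|\Delta^\Jmc|$ has at most $|\Delta^\Imc|\cdot|\Delta^\Jmc|$ strict drops after $R_0$. Hence $R_k = R_{k+1}$, and this stable set $R_k$ satisfies the step condition with respect to itself. That is precisely the statement that $R_k$ is an \Lmc-simulation (for \EL and \ELI; the (Tot) condition is not at issue since $\Lmc \in \{\EL,\ELI\}$). Since $(d,e) \in R_k$, we obtain $(\Imc,d) \preceq_\Lmc (\Jmc,e)$.

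The main obstacle is setting up the step characterisation cleanly with the indexed-triple definition. In particular, showing that ``$(d',e')$ is $i$-similar'' is monotone in $i$ requires some care: one must re-index a given $k$-simulation $S$ to $\{(x,y,j) \mid (x,y,j') \in S,\ j \leq j' \text{ adjusted}\}$. This is cleanest if one first closes $S$ downward in the index, since decreasing an index preserves (Rel) and (iRel). Monotonicity then gives the decreasing chain. The counting bound itself is routine, and it even shows that $k$ could be taken one smaller.
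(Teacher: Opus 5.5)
Your proof is correct for the lemma as stated. The sets $R_i$ are the approximants $R_{i+1}=F(R_i)$ of the greatest fixpoint of the simulation-refinement operator $F$ on subsets of $\Delta^\Imc\times\Delta^\Jmc$. The chain can drop strictly at most $|R_0|\le k$ times, so $R_k=F(R_k)$ is an \Lmc-simulation containing $(d,e)$. The paper does not prove this lemma itself; it defers to Lemma~15 of \cite{KR2025-40}, so there is no in-paper argument to compare against.

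One small simplification: you do not need the downward-closure re-indexing to get monotonicity. $R_1\subseteq R_0$ is immediate and $F$ is monotone, so $R_{i+1}\subseteq R_i$ follows by induction from the step characterisation. The `only if' direction of that characterisation just restricts an $(i+1)$-simulation to the triples with index at most $i$.

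Your closing aside, however, is false: $k$ cannot in general be lowered by one. Your count bounds the number of strict drops only by $|R_0|$, and it can happen that $|R_0|=k$ and the chain descends all the way to $\emptyset$.

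For example, let $\Imc$ have the single element $d$ with $r^\Imc=\{(d,d)\}$, and let $\Jmc$ have the single element $e$ with all roles empty; let all concept names be empty in both. Then $(\Imc,d)\preceq^0_\Lmc(\Jmc,e)$ holds, since only (Atom) is required. But $(\Imc,d)\not\preceq_\Lmc(\Jmc,e)$, so the equivalence fails for $k-1=0$.

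For $\Lmc=\EL$, replace $\Jmc$ by an $r$-path $e_1,\dots,e_n$. The same reasoning gives $(\Imc,d)\preceq^{n-1}_\EL(\Jmc,e_1)$ without any simulation, so the bound $|\Delta^\Imc|\cdot|\Delta^\Jmc|$ is attained for every $n$. This does not affect your proof, which uses $k$ itself.
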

While it is \NPclass-complete to decide
whether there is a homomorphism 
from one given finite interpretation to
another, the corresponding problem
for simulations can be solved in 
\PTime, see e.g.\ \cite{DBLP:conf/focs/HenzingerHK95}.
\begin{lemma}
\label{lem:simptime}
    Let $\Lmc \in  \{ \EL, \ELbot, \ELI,\ELIbot\}$. Given two finite
    pointed interpretations $(\Imc_1,d_1)$
    and $(\Imc_2,d_2)$, it can be decided
    in \PTime whether $(\Imc_1,d_1) \preceq_\Lmc (\Imc_2,d_2)$.
\end{lemma}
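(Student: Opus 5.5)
The plan is to compute the \emph{maximal} simulation by a greatest-fixpoint refinement, in the style of the standard algorithm for similarity checking. I treat $\Lmc \in \{\EL,\ELI\}$ first and then handle the totality condition (Tot) needed for $\ELbot$ and $\ELIbot$.

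\textbf{Refinement.} Start with
\[
S_0 = \{(d,e) \in \Delta^{\Imc_1}\times\Delta^{\Imc_2} \mid \text{for all } A \in \NC,\ d \in A^{\Imc_1} \text{ implies } e \in A^{\Imc_2}\}.
\]
Only the finitely many concept names occurring in the finite input matter, so $S_0$ is computable in polynomial time. Then iterate: $S_{j+1}$ is obtained from $S_j$ by removing every pair $(d,e)$ that violates (Rel) with respect to $S_j$. A pair violates (Rel) if there are a role name $r$ and some $(d,d') \in r^{\Imc_1}$ such that no $(e,e') \in r^{\Imc_2}$ has $(d',e') \in S_j$. For $\ELI$, also remove pairs violating (iRel) in the analogous way. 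Each round either removes at least one pair or reaches a fixpoint, so there are at most $|\Delta^{\Imc_1}|\cdot|\Delta^{\Imc_2}|$ rounds, and each round is polynomial. The fixpoint $S^*$ satisfies (Atom) and (Rel) (and (iRel) for $\ELI$), so it is an $\Lmc$-simulation.

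\textbf{Maximality.} The key invariant is that every $\Lmc$-simulation $S$ satisfies $S \subseteq S_j$ for all $j$. I prove this by induction on $j$. For $j=0$ it follows from (Atom). For the step, take $(d,e)\in S$. Any $r$-successor (or, for $\ELI$, $r$-predecessor) $d'$ of $d$ is matched by some $e'$ with $(d',e') \in S \subseteq S_j$, so $(d,e)$ is not removed. It follows that $S^*$ is the union of all $\Lmc$-simulations. Hence $(\Imc_1,d_1)\preceq_\Lmc(\Imc_2,d_2)$ holds if and only if $(d_1,d_2)\in S^*$.

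\textbf{Totality for $\ELbot$ and $\ELIbot$.} A total simulation containing $(d_1,d_2)$ exists if and only if the maximal simulation $S^*$ contains $(d_1,d_2)$ and is itself total. For the forward direction: any total simulation is contained in $S^*$, so $S^*$ is total and contains the pair. The backward direction is immediate, since $S^*$ is then a total simulation containing $(d_1,d_2)$. Checking totality of $S^*$ adds one polynomial test.

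I do not expect any real obstacle here. The only point needing care is the maximality invariant, which shows that greedy removal never discards a pair belonging to some simulation.
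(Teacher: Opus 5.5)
Your proof is correct and is essentially what the paper relies on: the paper gives no proof of its own and instead cites Henzinger, Henzinger, and Kopke, whose algorithm is an optimized implementation of exactly this greatest-fixpoint refinement to the maximal simulation. For $\EL_\bot$ and $\ELI_\bot$, you check that the maximal simulation is total and contains $(d_1,d_2)$, which correctly fills in the one step the citation leaves implicit.
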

We will sometimes also consider simulations from an ABox to an ABox, or from an ABox to an interpretation. In
such cases, we view an ABox \Amc as an interpretation with $\Delta^\Amc = \mn{ind}(\Amc)$, $r^\Amc = \{ (a,b) \mid r(a,b) \in \Amc \}$, and $A^\Amc = \{ a \mid A(a) \in \Amc \}$ for all $r \in \NR$ and $A \in \NC$. 

\smallskip

The \emph{direct product} of two pointed interpretations
$(\Imc_1,d_1)$ and $(\Imc_2,d_2)$ is the  pointed interpretation
$(\Imc_1\times \Imc_2, (d_1,d_2))$ where 
$$
\begin{array}{r@{\;}c@{\;}l}
  \Delta^{\Imc_1\times \Imc_2} &=& \Delta^{\Imc_1} \times \Delta^{\Imc_2} \\[1mm]
  A^{\Imc_1\times \Imc_2} &=& A^{\Imc_1} \times A^{\Imc_2} \quad \text{ for all } A \in \NC \\[1mm]
  r^{\Imc_1\times \Imc_2} &=& \{ ((d_1,d_2),(e_1,e_2)) \mid (d_1,e_1) \in r^{\Imc_1} \\[1mm]
  && \qquad\text{and } (d_2,e_2) \in r^{\Imc_2}\} \quad \text{ for all } r \in \NR.
\end{array}
$$
The product construction extends naturally to non-empty finite sequences $S$ of pointed instances. We denote the resulting pointed interpretation
by $(\prod S,d_\Pi)$. By convention, $\prod \emptyset$ is the  pointed interpretation $(\Imc,d)$ with $d=()$,
$\Delta^\Imc = \{ d \} = A^{\Imc}$ for all $A \in \NC$
and $r^\Imc = \{(d,d) \}$ for all $r \in \NR$.
We also refer to this \Imc as \emph{the maximal interpretation}. The following
is well-known \cite{DBLP:books/daglib/0013017}. 
\begin{lemma}
\label{lem:prodlem}
    Let $S=(\Imc_i,d_i)_{i \in I}$ be a sequence of pointed interpretations. Then 
    \begin{enumerate}
        
    \item $(\prod S,d_\Pi) \rightarrow (\Imc,d)$ for all $(\Imc,d) \in S$; 
   
    \item  for all $\bar e=(e_i)_{i \in I} \in \Delta^{\prod S}$ and all \ELI-concepts~$C$: $\bar e \in C^{\prod S}$ iff
    $e_i \in C^{\Imc_i}$ for all $i \in I$; 
    \item If $(\Jmc,e)$ is a pointed interpretation with $(\Jmc,e) \rightarrow (\Imc, d)$ for all $(\Imc, d) \in S$, then $(\Jmc, e) \rightarrow (\prod S, d_\Pi)$.
    \end{enumerate}
\end{lemma}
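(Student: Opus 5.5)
We prove the three items of Lemma~\ref{lem:prodlem} directly from the definition of the product. Throughout, $\prod S$ has domain $\prod_{i\in I}\Delta^{\Imc_i}$, concept names interpreted as $A^{\prod S}=\prod_i A^{\Imc_i}$, role names as the componentwise relation, and distinguished element $d_\Pi=(d_i)_{i\in I}$. The empty-sequence convention (the maximal interpretation) is handled separately where needed.

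For item~1, fix $(\Imc_j,d_j)\in S$ and take the projection $\pi_j\colon \bar e\mapsto e_j$. If $\bar e\in A^{\prod S}$, then $e_j\in A^{\Imc_j}$. If $(\bar e,\bar f)\in r^{\prod S}$, then $(e_j,f_j)\in r^{\Imc_j}$. Finally, $\pi_j(d_\Pi)=d_j$. So $\pi_j$ is the required homomorphism. For item~3, given homomorphisms $h_i$ witnessing $(\Jmc,e)\rightarrow(\Imc_i,d_i)$, define $h(x)=(h_i(x))_{i\in I}$. Concept and role preservation hold componentwise by the definition of the product, and $h(e)=d_\Pi$. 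If $S$ is empty, the constant map to $()$ is a homomorphism into the maximal interpretation, since every concept name contains $()$ and every role contains the loop $((),())$.

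Item~2 is proved by induction on the structure of the \ELI-concept $C$. The cases $\top$ and $A$ are immediate from the definition, and conjunction follows from the induction hypothesis. The direction ``$\Rightarrow$'' in all cases also follows at once from item~1, because homomorphisms preserve \ELI-concepts (every homomorphism is a simulation, so Lemma~\ref{lem:SimPresConc} applies with $k=\mn{rd}(C)$).

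The only step requiring care is ``$\Leftarrow$'' for $C=\exists r.D$, where $r$ may be a role name or an inverse role. Suppose that for each $i$ we have $e_i\in(\exists r.D)^{\Imc_i}$. Then choose a witness $f_i$ with $(e_i,f_i)\in r^{\Imc_i}$ (or $(f_i,e_i)\in s^{\Imc_i}$ if $r=s^-$) and $f_i\in D^{\Imc_i}$. Using the axiom of choice when $I$ is infinite, form the tuple $\bar f=(f_i)_{i\in I}$. By the definition of $r^{\prod S}$, which is componentwise and therefore also componentwise for inverses, we get $(\bar e,\bar f)\in r^{\prod S}$. By the induction hypothesis, $\bar f\in D^{\prod S}$, so $\bar e\in(\exists r.D)^{\prod S}$.

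For empty $S$, item~2 states that every \ELI-concept contains $()$ in the maximal interpretation, which is a trivial induction. Note that $\bot$ is excluded, as the statement concerns \ELI-concepts.
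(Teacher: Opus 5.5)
Your proof is correct: the projections give item~1, componentwise tupling of the homomorphisms gives item~3, and the structural induction with a componentwise choice of witnesses (including the inverse-role case and the empty product) gives item~2. The paper gives no proof of its own and only cites the lemma as well known, so your argument is simply the standard one. Your explicit handling of infinite index sets is worth keeping, because the definition of $\Imc_{\Amc,S,\Lmc}$ takes products indexed by $aS$, and $aS$ need not be finite when $\Imc$ is infinite.
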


Let \Imc be an interpretation and $d \in \Delta^\Imc$.
A \emph{path} in \Imc is a sequence
$p=d_1 r_1 \cdots d_{n - 1} r_{n - 1} d_n$ of domain elements $d_i$ from $\Delta^\Imc$ and roles $r_i$
such that $(d_i, d_{i + 1}) \in r_{i}^{\Imc}$ for  $1 \leq i < n$.
We say that the path \emph{starts} at $d_1$ and use 
$\mn{tail}(p)$ to denote~$d_n$. 
The \emph{$\ELI$-unraveling} of \Imc at $d$ is the
interpretation $\Imc^{\downarrow\ELI}_d$ defined as follows:
$$
\begin{array}{rcl}
    \Delta^{\Imc^{\downarrow\ELI}_d} &=& \text{set of all paths in \Imc starting at } d \\[1mm]
    A^{\Imc^{\downarrow\ELI}_d} &=& \{ p \mid \mn{tail}(p) \in A^\Imc \} \\[1mm]
    r^{\Imc^{\downarrow\ELI}_d} &=& \{ (p,p') \mid p' = pre \text{ for some } e \} \, \cup \\[1mm]
    && \{ (p',p) \mid p' = pr^-e \text{ for some } e \}.
\end{array}
$$
The \emph{$\EL$-unraveling} $\Imc^{\downarrow\EL}_d$ of $\Imc$ at $d$ is defined in the same way, except that paths must not contain inverse roles. An $\ELbot$-unraveling  is simply an $\EL$-unraveling, and 
likewise for $\ELIbot$-unravelings.
The following properties are well-known. 
\begin{lemma}
\label{lem:unravbasic}
  Let \Imc be an interpretation, $d \in \Delta^\Imc$, and $\Lmc \in \{ \EL, \ELI\}$. Then 
  \begin{enumerate}
    \item $(\Imc^{\downarrow\Lmc}_d,d) \rightarrow (\Imc,d)$ via the homomorphism $\pi \mapsto \mn{tail}(\pi)$;
    \item $(\Imc, d) \preceq_\Lmc (\Imc_d^{\downarrow \Lmc},d)$;
    \item for all \Lmc-concepts
    $C$ and $p \in \Delta^{\Imc^{\downarrow\Lmc}_d}$: $p \in C^{\Imc^{\downarrow\Lmc}_d}$
    iff $\mn{tail}(p) \in C^\Imc$. 
  \end{enumerate}
\end{lemma}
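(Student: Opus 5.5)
The statement to prove is Lemma~\ref{lem:unravbasic}. We treat the three items in order; the third follows from the first two together with Lemma~\ref{lem:SimPresConc}. Throughout we identify the path consisting only of $d$ with $d$ itself, which is how the statement writes the root of $\Imc^{\downarrow\Lmc}_d$.

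For item~1 we show that $\pi \mapsto \mn{tail}(\pi)$ is a homomorphism and fix the root. The root $d$ has tail $d$. Concept names are preserved directly, since $p \in A^{\Imc^{\downarrow\Lmc}_d}$ means exactly $\mn{tail}(p) \in A^\Imc$. For role edges we inspect the two clauses defining $r^{\Imc^{\downarrow\ELI}_d}$. If $p' = pre$, then by the definition of a path $(\mn{tail}(p),e) \in r^\Imc$. If $p = p' r^- e$, then $(\mn{tail}(p'),e) \in (r^-)^\Imc$, that is, $(e,\mn{tail}(p')) \in r^\Imc$, which is what is needed. In the \EL case only the first clause is present.

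For item~2 we use the relation $S = \{(\mn{tail}(p),p) \mid p \in \Delta^{\Imc^{\downarrow\Lmc}_d}\}$, which contains $(d,d)$. (Atom) holds by the definition of $A^{\Imc^{\downarrow\Lmc}_d}$. For (Rel), take $(\mn{tail}(p),e) \in r^\Imc$; then $pre$ is a path and an $r$-successor of $p$, and $(e,pre) \in S$. For (iRel) in the \ELI case, take $(e,\mn{tail}(p)) \in r^\Imc$; this means $(\mn{tail}(p),e) \in (r^-)^\Imc$. Then $p' = pr^-e$ is a path, and the second clause of the definition gives $(p',p) \in r^{\Imc^{\downarrow\ELI}_d}$. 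The only point needing care is this matching of the inverse-role clause; the rest is routine.

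Item~3 also requires $(\Imc,\mn{tail}(p)) \preceq_\Lmc (\Imc^{\downarrow\Lmc}_d,p)$ for an arbitrary $p$, not only for the root. The relation $S$ from item~2 already contains $(\mn{tail}(p),p)$ for every $p$, so this is immediate.

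For item~3, fix an \Lmc-concept $C$ and a path $p$.
\begin{itemize}
\item[$(\Leftarrow)$] Suppose $\mn{tail}(p) \in C^\Imc$. Since $S$ is an \Lmc-simulation containing $(\mn{tail}(p),p)$, the relation $\{(x,y,\omega)\mid (x,y)\in S\}$ yields a $k$-simulation for every $k$, in particular for $k = \mn{rd}(C)$. Lemma~\ref{lem:SimPresConc} then gives $p \in C^{\Imc^{\downarrow\Lmc}_d}$.
\item[$(\Rightarrow)$] Suppose $p \in C^{\Imc^{\downarrow\Lmc}_d}$. The homomorphism from item~1 is an \ELI-simulation, hence a $k$-simulation for every $k$, relating $p$ to $\mn{tail}(p)$. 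Lemma~\ref{lem:SimPresConc} gives $\mn{tail}(p) \in C^\Imc$.
\end{itemize}
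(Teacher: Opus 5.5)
Your proof is correct. The paper gives no proof of this lemma; it only calls it well known. Your argument is the standard one: the tail map is the homomorphism, $\{(\mn{tail}(p),p)\}$ is the simulation, and item~3 follows by feeding both into Lemma~\ref{lem:SimPresConc}.

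One cosmetic point: in item~3, a $k$-simulation uses only indices $i \le k$. So the relation to pass to Lemma~\ref{lem:SimPresConc} is $\{(x,y,i)\mid (x,y)\in S,\ i\le k\}$ rather than the $\omega$-labelled triples. That this is a $k$-simulation is immediate.
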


\subsection{Queries} 

A \emph{conjunctive query (CQ)} takes the form
$q = \exists \overline x \, \varphi(\overline x)$
where $\overline x$ is a tuple of variables and $\varphi$ a
conjunction of \emph{atoms} $A(t)$ and $r(t,t')$, with
\mbox{$A \in \NC$}, $r \in \NR$, and $t,t'$ variables from
$\overline x$ or individuals from \NI.
With $\mn{var}(q)$ and $\mn{ind}(q)$, we denote the set of variables and individual names in $\overline x$.
We take the liberty to view
$q$ as a set of atoms, writing e.g.\ $\alpha \in
q$ to indicate that $\alpha$ is an atom in~$q$. 
We may also write $r^-(x,y) \in q$ in place of \mbox{$r(y,x) \in q$}.
 An \emph{atomic query (AQ)}
is a  CQ of the simple form $A(a)$, with $A$ a concept name and $a$ an individual.
A {\em union of conjunctive queries (UCQ)} $q$ is a disjunction of CQs. A CQ is \emph{rooted}
if every variable is reachable from an individual name in the undirected graph \mbox{$G_q=(\mn{ind}(q) \cup \mn{var}(q),\{\{t,t'\} \mid r(t,t') \in q\})$}. A UCQ is \emph{rooted} if
every CQ in it is.

 A CQ $q$ gives rise to an interpretation
 $\Imc_q$ with $\Delta^{\Imc_q}=\mn{ind}(q) \cup \mn{var}(q)$, $A^{\Imc_q} = \{ t \mid
 A(t) \in q \}$, and $r^{\Imc_q}=\{(t,t') \mid r(t,t') \in
 q\}$ for all $A \in \NC$ and $r \in \NR$.
With a homomorphism from a CQ $q$ to an
interpretation \Imc, we mean a homomorphism from $\Imc_q$ to \Imc
 that is the identity
on all individual names. 
For an interpretation \Imc and a UCQ~$q$, we write
$\Imc \models q$ if there is a 
homomorphism $h$ from a CQ in $q$ to \Imc. 
For an ABox \Amc and ontology \Omc, we write $\Amc \cup \Omc \models q$ if  $\Imc \models q$ for all models
\Imc of $\Amc$ and~\Omc.

We shall also consider queries that
in essence are DL concepts. Let $\Lmc \in \{ \EL, \EL_\bot,\ELI, \ELI_\bot \}$. A \emph{rooted \Lmc-query} $q$ takes
the form $C(a)$ with $C$ an \Lmc-concept and $a \in \NI$. We write 
$\Amc \cup \Omc \models q$ if  $a \in C^\Imc$ for all models
\Imc of $\Amc$ and~\Omc. An \emph{existential \Lmc-query} $q$
takes the form $\exists x \, C(x)$
with $C$ an \Lmc-concept. We set
$\Amc \cup \Omc \models q$ if  $C^\Imc \neq \emptyset$ for all models
\Imc of $\Amc$ and~\Omc. When speaking only of an \Lmc-query, we mean an \Lmc-query that is either rooted or existential. There is also a finite
version of \Lmc-query entailment,
denoted by `$\models^{\mn{fin}}$'. While the unrestricted and the finite version  coincide for the DLs \Lmc listed above,
this is not the case for some of their
extensions introduced later on. 
It is easy to 
see that every $\EL$- and $\ELI$-query  can be expressed as a (tree-shaped) CQ.

Note that all queries introduced above are Boolean, that is, they evaluate to true or false instead of producing answers. This is without loss of generality since we
admit individual names in queries. 

We use $||O||$ to denote the \emph{size} of a syntactic object $O$ such as an ontology
or an ABox. It is defined as the length
of the encoding of $O$ as a word over a suitable alphabet.

\subsection{ABox Examples and the Fitting Problem}

Let \Qmc be a query language such as $\Qmc = \text{AQ}$ or $\Qmc = \text{CQ}$. An \emph{ABox-\Qmc example} is a pair $(\Amc,q)$
with \Amc an ABox and $q$ a query from \Qmc such that all individual names that appear in $q$ are from $\mn{ind}(\Amc)$.

By a \emph{collection of labeled examples} we  mean a pair 
$E=(E^+, E^-)$ of finite sets of examples. The examples in $E^+$
are the \emph{positive examples} and the examples in $E^-$ are the \emph{negative
examples}. We say that \Omc \emph{fits} $E$ if 
$\Amc \cup \Omc
\models q$ for all $(\Amc,q) \in E^+$ and $\Amc \cup \Omc
\not\models q$ for all $(\Amc,q) \in E^-$. The
following example illustrates this central
notion.
\begin{example}
\label{ex:firstone}
Let $E=(E^+,E^-)$ be  the collection of labeled ABox-CQ examples  where $E^+$
contains the two examples
$$
\begin{array}{l}
(\ \{\, \mathsf{NewHire}(\mathsf{jane}) \,\}, \ 
\exists x \, \mathsf{mentor}(\mathsf{jane},x) \wedge \mathsf{Emp}(x) \ ) \\[2mm]
( \ \{\, \mathsf{NewHire}(\mathsf{alex}),\ \mathsf{RemoteWorker}(\mathsf{alex}) \,\}, \\[1mm]
 \qquad\qquad \exists x\ \mathsf{mentor}(\mathsf{alex},x) \wedge \mathsf{SeniorEmp}(x)
\ ) .
\end{array}
$$
and $E^-$ contains the single
example
$$
  ( \{ \ \mathsf{NewHire}(\mathsf{bob}) \,\}, \ 
\exists x\ \mathsf{mentor}(\mathsf{bob},x) \wedge \mathsf{SeniorEmp}(x)\ ).
$$
Then, the \EL-ontology \Omc that contains the CIs
$$
\begin{array}{r@{\;}c@{\;}l}
  \mathsf{NewHire} &\sqsubseteq& \exists\mathsf{mentor}.\mathsf{Emp} \\[1mm]
 \mathsf{NewHire} \sqcap \mathsf{RemoteWorker} &\sqsubseteq& \exists\mathsf{mentor}.\mathsf{SeniorEmp} 
\end{array}
$$
fits all examples in $E$.
If we further add the 
CI $
\mathsf{SeniorEmp} \sqsubseteq \mathsf{Emp}
$, the ontology still fits.
\end{example}

Let \Lmc be an ontology language, such as $\Lmc = \ELIbot$, and  \Qmc a query
language. Then \emph{(\Lmc,\Qmc)-ontology fitting} is the problem
to decide, given as 
input a collection of 
labeled \mbox{ABox-\Qmc} examples $E$, whether $E$ admits a fitting \Lmc-ontology.
We generally assume that the ABoxes used in $E$ 
 have pairwise disjoint sets of individual names.
It is not hard to verify that this is without loss of generality because consistently renaming individual names in a collection of examples has no impact on the existence of a fitting ontology.
There is a natural variation of 
$(\Lmc,\Qmc)$-ontology fitting 
where one additionally requires the fitting ontology to be consistent with all ABoxes that occur in  examples. We then speak of 
\emph{consistent $(\Lmc,\Qmc)$-ontology fitting}. The following observation from \cite{FGL-KR25} shows that it suffices to 
design algorithms for  $(\Lmc,\Qmc)$-ontology fitting  as originally introduced.
\begin{proposition}
\label{prop:fittingswithconsistentABoxes}
  Let \Lmc be any ontology language and $\Qmc \in \{ \text{AQ}, \text{CQ}, \text{UCQ} \}$. Then there is a \PTime-reduction from consistent $(\Lmc,\Qmc)$-ontology fitting 
  to  $(\Lmc,\Qmc)$-ontology fitting.
\end{proposition}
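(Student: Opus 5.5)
The reduction adds, for each ABox $\Amc$ occurring in $E$, one extra negative example that forces $\Omc$ to be consistent with $\Amc$. Let $E=(E^+,E^-)$ be a collection of labeled ABox-$\Qmc$ examples. For every ABox $\Amc$ that occurs in some example of $E$, pick a concept name $A_\Amc$ that does not occur in $E$ and an individual $a_\Amc \in \mn{ind}(\Amc)$. Let $E'$ be $E$ together with the negative examples $(\Amc, A_\Amc(a_\Amc))$. These are AQ examples, hence also CQ and UCQ examples, so $E'$ is a collection of ABox-$\Qmc$ examples for each $\Qmc \in \{\text{AQ},\text{CQ},\text{UCQ}\}$. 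The construction is clearly computable in polynomial time.

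If we want the ABoxes in $E'$ to stay pairwise disjoint, as assumed in the paper, we instead use a consistently renamed copy $\Amc'$ of $\Amc$ in the new example. This needs one check: for every ontology $\Omc$, $\Amc \cup \Omc$ is consistent iff $\Amc' \cup \Omc$ is consistent. This holds because renaming individuals transports models, and ontologies mention no individual names.

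\textbf{Claim:} $E$ has a fitting $\Lmc$-ontology that is consistent with all ABoxes of $E$ iff $E'$ has a fitting $\Lmc$-ontology.

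\emph{($\Leftarrow$)} Suppose $\Omc$ fits $E'$. Then $\Omc$ fits $E$, since $E \subseteq E'$. For each ABox $\Amc$ of $E$ we have $\Amc \cup \Omc \not\models A_\Amc(a_\Amc)$. Hence some model of $\Amc \cup \Omc$ exists, so $\Amc$ is consistent with $\Omc$.

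\emph{($\Rightarrow$)} Suppose $\Omc$ fits $E$ and is consistent with every ABox of $E$. We must show $\Amc \cup \Omc \not\models A_\Amc(a_\Amc)$. Take a model $\Imc$ of $\Amc \cup \Omc$ and change only the interpretation of $A_\Amc$, setting $A_\Amc^{\Imc'} = \emptyset$.

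The step needing care is whether the modified $\Imc'$ is still a model of $\Omc$. It is if $A_\Amc$ does not occur in $\Omc$. We cannot assume that $\Omc$ avoids $A_\Amc$, but we can arrange it. Fix a fresh name $B$ that occurs neither in $E'$ nor in $\Omc$, and let $\Omc^*$ be $\Omc$ with every $A_\Amc$ replaced by $B$. Renaming a concept name that does not occur in an ABox or a query preserves the entailment of that query from that ABox. This follows by transferring models via swapping the interpretations of $A_\Amc$ and $B$. Therefore $\Omc^*$ still fits $E$ and is still consistent with each ABox. Moreover $\Omc^*$ is an $\Lmc$-ontology for any reasonable $\Lmc$, since renaming concept names does not change membership in the language. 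Now apply the modification above to $\Omc^*$: it yields a model of $\Amc \cup \Omc^*$ in which $a_\Amc \notin A_\Amc$. Hence $\Omc^*$ fits $E'$, which completes the proof.
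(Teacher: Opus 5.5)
Your reduction is the right one: one extra negative example $(\Amc,A_\Amc(a_\Amc))$ per ABox, using a fresh concept name. The ($\Leftarrow$) direction, the AQ/CQ/UCQ bookkeeping and the renaming of individuals for disjointness are all fine. The paper gives no proof of its own here and defers to \cite{FGL-KR25}.

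The step that does not go through as written is the concept-name renaming in ($\Rightarrow$). You chose pairwise distinct names $A_\Amc$, but then replace ``every $A_\Amc$'' by a single fresh $B$. That substitution is not injective, and your justification (swapping the interpretations of $A_\Amc$ and $B$) only covers renaming one name. Merging several names into one can change the models of the ontology. For example, if $\Omc$ contains $A_{\Amc_1}\sqsubseteq\bot$ and $\top\sqsubseteq A_{\Amc_2}$, this is harmless for $E$, since neither name occurs in $E$. But then $\Omc^*$ contains $B\sqsubseteq\bot$ and $\top\sqsubseteq B$, so it is unsatisfiable and neither fits nor is consistent with the ABoxes. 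If you instead meant to rename only the one $A_\Amc$ under consideration, then $\Omc^*$ depends on $\Amc$. It may still entail $A_{\Amc'}(a_{\Amc'})$ for another ABox $\Amc'$, e.g.\ via $\top\sqsubseteq A_{\Amc'}$, so ``$\Omc^*$ fits $E'$'' does not follow.

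The repair is easy, with either of two options:
\begin{itemize}
\item rename simultaneously and injectively, sending each $A_\Amc$ to its own fresh name $B_\Amc$; models then transfer by permuting the interpretations of these names;
\item use a single fresh concept name $A$ in all added negative examples from the start, so your one-name swap argument applies verbatim.
\end{itemize}
With either change the proof is complete.
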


\section{Consistency-Based Fitting}
\label{sect:consistency}

We start with a version of ontology fitting that is based on ABox consistency rather than on querying.
Besides being natural in its own right,
it also serves to showcase some central techniques used in this paper.
Here, an example is simply an ABox, and
 an ontology \Omc \emph{fits} a collection
  $E=(E^+,E^-)$ if \Amc is consistent with \Omc
for all $\Amc \in E^+$ and inconsistent with \Omc for all
$\Amc \in E^-$. We refer to the  resulting 
decision problem as \emph{consistent \Lmc-ontology fitting}. 

We only consider $\EL_\bot$ and $\ELI_\bot$ because
the cases of \EL and \ELI are uninteresting. In fact, it is easy to see that a collection of labeled ABox examples $E$  has a fitting \EL-ontology
if and only if $E$ does not contain negative examples, and the same is true for \ELI. The subsequent theorem provides a
characterization of the existence of fitting $\EL_\bot$- and $\ELI_\bot$-ontologies.
\begin{theorem} \label{thm:consfitting}
	Let $ E = \left( E^+, E^- \right)  $ be a collection of labeled ABox examples with $E^+ \neq \emptyset$ and
	$ \Lmc \in  \{ \ELbot, \ELIbot\}$. Then the following
	are equivalent, where $\Amc^+ = \biguplus E^+ $:
	\begin{enumerate}
		\item $ E $ admits a fitting \Lmc-ontology;
		\item $ \Amc \not \preceq_\Lmc \Amc^+ $ for every  $ \Amc \in E^- $. 
	\end{enumerate}
\end{theorem}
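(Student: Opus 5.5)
We prove both directions separately; the backward direction is the substantive one and requires constructing a fitting ontology.

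\emph{(1 $\Rightarrow$ 2).} Assume \Omc is a fitting \Lmc-ontology, and suppose for contradiction that $\Amc \preceq_\Lmc \Amc^+$ for some $\Amc \in E^-$. Since every $\Amc' \in E^+$ is consistent with \Omc and the ABoxes in $E^+$ use pairwise disjoint individuals, taking the disjoint union of models yields a model \Imc of $\Amc^+ \cup \Omc$. Viewing $\Amc^+$ as an interpretation, the identity on individuals is a homomorphism, hence a total $\ELIbot$-simulation, from $\Amc^+$ into \Imc. Composing it with the total simulation from \Amc to $\Amc^+$ gives a total \Lmc-simulation $S$ from \Amc to \Imc. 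We then build a model \Jmc of $\Amc \cup \Omc$ from \Imc, for example by adding the individuals of \Amc as copies of their $S$-images. The cleanest choice is to pick, for each $a \in \mn{ind}(\Amc)$, some $d_a$ with $(a,d_a)\in S$, and define \Jmc as \Imc extended with fresh elements $a$ whose concept memberships and role edges copy those of $d_a$.

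The difficulty is that role assertions $r(a,b)\in\Amc$ require $(d_a,d_b)\in r^\Imc$, which a simulation does not guarantee; it only yields some $r$-successor of $d_a$ simulating $b$. I would therefore avoid building a model and argue semantically instead. Every CI in \Omc is equivalent to $C \sqcap \neg D \sqsubseteq \bot$-style constraints. The key point is that, by Lemma~\ref{lem:SimPresConc} and totality (Tot), simulations preserve the satisfaction of \Lmc-concepts. I would therefore take the canonical (possibly infinite) model built by chase: inconsistency of $\Amc\cup\Omc$ means the chase derives $\bot$ at some element, i.e. some element satisfies an \Lmc-concept $C$ with $C\sqsubseteq\bot$ entailed. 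Mapping the chase of \Amc into the chase of $\Amc^+$ along the simulation, the concepts derived at the images are preserved, so $\Amc^+$ would also derive $\bot$, contradicting consistency. Formally, this is an induction over chase steps, showing that the simulation extends to the chase results.

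\emph{(2 $\Rightarrow$ 1).} Let $n$ be the maximum over $\Amc\in E^-$ of $|\mn{ind}(\Amc)|\cdot|\mn{ind}(\Amc^+)|$. By Lemma~\ref{lem:finmodsim} (adapted to the total variant by also handling (Tot)), $\Amc\not\preceq_\Lmc\Amc^+$ yields some $a\in\mn{ind}(\Amc)$ such that $(\Amc,a)\not\preceq^n_\Lmc(\Amc^+,b)$ for all $b$. For each such negative example, put $C_\Amc := C^{\Lmc,n}_{\Amc,a}$ (Lemma~\ref{lem:charconc}) and let $\Omc=\{C_\Amc\sqsubseteq\bot \mid \Amc\in E^-\}$. \Amc is then inconsistent with \Omc, because $a\in C_\Amc^\Imc$ in every model of \Amc, since the identity is a homomorphism from \Amc. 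For the positive examples, $\Amc^+$ viewed as an interpretation satisfies $\Amc'$ for each $\Amc'\in E^+$, and $C_\Amc^{\Amc^+}=\emptyset$ by the choice of $a$, so $\Amc^+$ is a model of each $\Amc'\cup\Omc$.

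The main obstacle is the totality subtlety in the second direction: $\not\preceq_\Lmc$ might fail only through (Tot) at an element different from the chosen root. Since (Tot) quantifies over all elements, however, failure of a total simulation means there is some $a$ that is not $\omega$-simulated by any $b$; otherwise the union of the witnessing simulations would itself be total. This justifies the choice of $a$ above. The first direction is the other delicate point, and I would carry out the chase induction carefully there.
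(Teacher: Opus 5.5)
Your proof is correct. The direction 2~$\Rightarrow$~1 is the paper's own argument: pick $a_\Amc$, pass to bounded depth via Lemma~\ref{lem:finmodsim}, add $C^{\Lmc,\ell}_{\Amc,a_\Amc}\sqsubseteq\bot$, and use $\Amc^+$ itself as the model. One clarification there: no ``total variant'' of Lemma~\ref{lem:finmodsim} is needed. Your union argument in the last paragraph yields an $a$ that is related to no $b$ by an ordinary, non-total \EL- or \ELI-simulation. Lemmas~\ref{lem:finmodsim} and~\ref{lem:charconc} are then applied to that non-total notion.

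For 1~$\Rightarrow$~2 you take a genuinely different route. The paper does not chase. It invokes Lemma~\ref{lem:IASLsubI}, which turns any model \Imc of \Omc and any simulation $S$ from \Amc into \Imc into an explicit model $\Imc_{\Amc,S,\Lmc}$ of \Amc and \Omc. This model attaches to each individual the unraveling of the direct product of its $S$-images, and a structural induction shows that an element satisfies an \Lmc-concept iff all its product components do.

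You instead chase \Amc with \Omc while maintaining a total simulation into a fixed model \Imc of $\Amc^+\cup\Omc$. This is sound, but the inductive step you only allude to should be spelled out. When $C\sqsubseteq D$ fires at $d$, every $e\in dS$ lies in $C^\Imc\subseteq D^\Imc$. For each such $e$, a homomorphism of the tree of $D$ into \Imc sending the root to $e$ extends $S$ to the new nodes, and this also handles (iRel) for inverse edges. Totality is what guarantees that a $\bot$-step at $d$ would yield an element of \Imc violating \Omc. Any model of $\Amc^+\cup\Omc$ suffices, so the chase of $\Amc^+$ is unnecessary, and your remark about $C\sqcap\neg D\sqsubseteq\bot$ plays no role.

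What each route buys: yours is more elementary for consistency alone, needing only the chase of a Horn ontology and preservation of concepts under simulations. The paper's construction buys independence from the ontology. $\Imc_{\Amc,S,\Lmc}$ is a model of every \Lmc-ontology satisfied by \Imc, and its behaviour at individuals is fixed by \Imc and $S$ alone (Corollary~\ref{lem:IASLsim}). That is why it can be reused in the proof of Theorem~\ref{thm:charaq} and can appear, with no ontology in sight, in Condition~(b) of Theorem~\ref{thm:charUCQwithFin}.
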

The next example shows that the existence of a fitting $\EL_\bot$-ontology does  not, in general, coincide with the
existence of a fitting $\ELI_\bot$-ontology. It is interesting to contrast this with the case of consistency-based fitting for \ALC and \ALCI, where the existence of a fitting ontology coincides \cite{FGL-KR25}.
\begin{example}
\label{ex:ELneqELI}
  Consider the collection of ABox examples $E=(E^+,E^-)$ where 
  $E^+$ contains the single example
  $$
    \Amc_1 = \{ r(a_1,b_1), r(a_2,b_2), A_1(a_1), A_2(a_2) \}
  $$
  and $E^-$ contains the single example
  $$
    \Amc_2 = \{ r(a_1,b), r(a_2,b), A_1(a_1), A_2(a_2) \}. 
  $$
  Then there is a fitting $\ELI_\bot$-ontology, e.g.\ the one that 
   contains the single CI
  $$
    \exists r^- . A_1 \sqcap \exists r^- . A_2 \sqsubseteq \bot.
  $$
  In contrast, it follows from Theorem~\ref{thm:consfitting} that there is no fitting \ELbot-ontology. Just observe that $\{(a_i,a_i),(b,b_i) \mid i \in \{1,2\}\}$  is an  $\EL_\bot$-simulation from $\Amc_2$
  to $\Amc_1$ (but not an $\ELI_\bot$-simulation).
\end{example}
We  work towards a proof of Theorem~\ref{thm:consfitting}.  
We give the following definition
also for \EL and \ELI as it will
be reused in subsequent sections.
\begin{definition}
\label{def:tripleint}
	Let \Amc be an ABox, \Imc an interpretation, 
	and $S$ an \Lmc-simulation from \Amc to \Imc,
    for $\Lmc \in \{ \EL, \ELbot, \ELI,\allowbreak \ELIbot\}$.
	Define interpretation  $ \Imc_0 $ as
	 \begin{align*}
		 \Delta^{\Imc_0} &:= \mn{ind}(\Amc) \\
		 A^{\Imc_0} &:= \{a  \mid (a,d) \in S \text{ implies } d \in  A^\Imc\}
		 &\text{for all } A \in \NC \\
		 r^{\Imc_0} &:= \{( a,b )  \mid  r( a,b) \in \Amc\} 
		 &\text{for all } r \in \NR.
	\end{align*}
    For every $a \in \mn{ind}( \Amc)  $, let $\Jmc_a$
    be  obtained from 
    \[
	   \prod_{(a,d) \in S} ( \Imc, d ) 	\] 
    by renaming the distinguished element $d_\Pi$ to $a$. Assume that this is the only element that $\Jmc_a$ shares with $\Imc_0$ 
     and that $\Delta^{\Jmc_a} \cap \Delta^{\Jmc_b} = \emptyset$ whenever $a \neq b$. 
    The interpretation $\Imc_{\Amc, S, \Lmc}$ is obtained by taking the (non-disjoint) union of $\Imc_0$ and the unraveled
    interpretations $\Jmc^{\downarrow\Lmc}_a$, $a \in \mn{ind}(\Amc)$.
\end{definition}
\begin{restatable}{lemma}{lemIASLsubI}	\label{lem:IASLsubI}
	Let \Amc be an ABox, \Imc an interpretation, \Omc an \Lmc-ontology and $S$ an \Lmc-simulation from \Amc to \Imc, with $\Lmc \in \{ \EL, \ELbot, \ELI,\ELIbot\}$. 
    Then $\Imc \models \Omc$ implies
	 $ \Imc_{\Amc, S, \Lmc} \models \Omc$. 
\end{restatable}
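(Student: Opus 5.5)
The plan is to show that every element of $\Imc_{\Amc,S,\Lmc}$ is ``approximated from below'' by some element of \Imc: we check each CI $C \sqsubseteq D$ of \Omc by transferring membership in $C$ to \Imc, applying the CI there, and transferring membership in $D$ back. Concretely, we will establish two facts for every \Lmc-concept $E$ (with $\bot$ handled separately):
\begin{enumerate}
\item[(a)] for every element $e$ of $\Imc_{\Amc,S,\Lmc}$ that lies in some unraveling $\Jmc^{\downarrow\Lmc}_a$, other than the root $a$, we have $e \in E^{\Imc_{\Amc,S,\Lmc}}$ iff $\mn{tail}(e) \in E^{\Jmc_a}$, iff $\mn{tail}(e)_d \in E^\Imc$ for all components $d$;
\item[(b)] for every individual $a$, we have $a \in E^{\Imc_{\Amc,S,\Lmc}}$ iff $d \in E^\Imc$ for all $(a,d)\in S$.
\end{enumerate}

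With (a) and (b) in hand the lemma follows quickly. If $e \in C^{\Imc_{\Amc,S,\Lmc}}$, then each relevant element of \Imc (the components, respectively all $S$-images of $a$) lies in $C^\Imc$, hence in $D^\Imc$ since $\Imc \models \Omc$, hence $e \in D^{\Imc_{\Amc,S,\Lmc}}$. For $D=\bot$ we need that no element satisfies $C$. For an individual $a$ in the $\bot$ case, $S$ is total, so $a$ has at least one image, which is all the argument needs. For a tree element we use the product components; the product is over a nonempty index set by totality, and in the non-$\bot$ case an empty index set gives the maximal interpretation, for which the ``for all'' statements are vacuous.

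To prove (a) and (b) we use simultaneous induction on $E$. The concept-name case of (b) holds by the definition of $A^{\Imc_0}$. For a tree element that is not the root, the concept-name case follows because the element is a path in the product, whose concept names are the products. The conjunction case is immediate. The existential case $\exists r.F$ is the main work, and we treat the ``if'' and ``only if'' directions separately:
\begin{itemize}
\item[$\Leftarrow$] For a tree element, each successor is either a tree child or, for \ELI, the parent. Lemma~\ref{lem:unravbasic}(3) applied to $\Jmc_a$, together with Lemma~\ref{lem:prodlem}(2) for the product, already gives $p \in E$ iff $\mn{tail}(p) \in E^{\Jmc_a}$ inside $\Jmc^{\downarrow\Lmc}_a$. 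The one subtlety is that the root $a$ of that tree also carries $\Imc_0$-structure: extra ABox edges and possibly fewer concept names.
\item[$\Rightarrow$] For the direction ``$a$ satisfies $\exists r.F$'': if the witness is an ABox successor $b$, then (Rel) of $S$ gives, for each $(a,d)\in S$, some $d'$ with $(d,d') \in r^\Imc$ and $(b,d')\in S$, and the induction hypothesis for $b$ yields $d' \in F^\Imc$. If the witness is a tree child, we use the product semantics.
\end{itemize}

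The expected obstacle is exactly this interaction at the roots, where the two directions of the equivalence do not hold symmetrically. I would therefore weaken (a) and (b) to one-directional statements: membership in $\Imc_{\Amc,S,\Lmc}$ implies membership at the corresponding \Imc-elements. For the other direction I would argue separately that each element of \Imc related via $S$, or via product components, embeds back into $\Imc_{\Amc,S,\Lmc}$, in the sense of an \Lmc-simulation from the product $\Jmc_a$ into the unraveling rooted at $a$ (by Lemma~\ref{lem:unravbasic}(2)). For the root, $\Imc_0$ adds only more edges and the concept names match. Combined with Lemma~\ref{lem:SimPresConc}, this transfers $D$ back from the product to $\Imc_{\Amc,S,\Lmc}$, since $D$ holds at the product by Lemma~\ref{lem:prodlem}(2).
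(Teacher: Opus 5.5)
Your proposal is correct and is essentially the paper's proof. The paper proves by induction exactly your componentwise claim: a path $p$ of $\Jmc^{\downarrow\Lmc}_a$ satisfies $C$ in $\Imc_{\Amc,S,\Lmc}$ iff every component of $\mn{tail}(p)$ satisfies $C$ in $\Imc$. As in your plan, the only nontrivial case is an ABox successor of a root, handled via (Rel)/(iRel), and the case $aS=\emptyset$ is handled via the maximal interpretation.

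The asymmetry you anticipate at the roots does not actually arise. $\Imc_0$ and $\Jmc^{\downarrow\Lmc}_a$ assign the root $a$ exactly the same concept names, and the ``if'' direction only needs the product successor inside the unraveling. So the paper keeps the full biconditional in a single induction; your separate simulation argument for that direction is valid but unnecessary.
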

We are now ready to prove Theorem~\ref{thm:consfitting}.

\medskip

\noindent
\begin{proof} {\bf (of Theorem~\ref{thm:consfitting}.)}
	`1$~\Rightarrow~$2'.
	Assume that there is an \Lmc-ontology \Omc that fits $E$ and that, to the contrary of what we have to show,
     there is an $ \Amc_0 \in E^- $ such that $ \Amc_0
	\preceq_\Lmc \Amc^+ $ via some simulation $S$. Since \Omc fits $E$, every
     $ \Amc \in E^+$ is consistent with \Omc. 
     Since \Lmc-ontologies are invariant under disjoint
     union, this implies that also 
     $ \Amc^+$  is consistent with~\Omc. Let  $ \Imc $ be a model of  $ \Amc^+$ and $\Omc $. 
     Then clearly $S$ is a simulation from $\Amc_0$ to \Imc. Consider the interpretation
	$ \Imc_{\Amc_0, S, \Lmc} $
    which by construction is a model of $\Amc_0$. By Lemma~\ref{lem:IASLsubI}, it is also a model of \Omc. This means that $ \Amc_0$  is consistent with \Omc, in contradiction to  $ \Omc $ fitting $ E $.
	
	\medskip
	`2$~\Rightarrow~$1'.
	Assume that $\Amc \not \preceq_\Lmc \Amc^+ $ for every  $ \Amc \in E^- $. Note that since $\Lmc \in \{\EL_\bot, \ELI_\bot\}$, we are concerned with total simulations. This
	means that for every $ \Amc \in E^- $ there is an $ a_\Amc \in
	\mn{ind}(\Amc) $ such that 
		$( \Amc, a_\Amc ) \not \preceq_{\Lmc} ( \Amc^+, b )
	$
	for all $ b \in  \mn{ind}( \Amc^+ )  $.  Lemma~\ref{lem:finmodsim}, 
	gives  \[
		( \Amc, a_\Amc ) \not \preceq^\ell_{\Lmc} ( \Amc^+, b )
		\tag{$*$}
	,\]
	where $  \ell =  | \Amc^+| \cdot  
	\mn{max}( \{ | \Amc|  \mid \Amc \in E^- \}  )$.
	Now consider the \Lmc-ontology
	\begin{align*}
		\Omc := \bigcup_{\Amc \in E^-} 
		\{ C_{\Amc, a_\Amc}^{\Lmc,\ell} \sqsubseteq \bot \}, 
	\end{align*}
    where $C_{\Amc, a_\Amc}^{\Lmc,\ell}$ is the
    characteristic \Lmc-concept of role depth $\ell$ for $a_\Amc$
    in \Amc viewed as an interpretation.
	We show that \Omc fits $E$. 
	Clearly, every $ \Amc \in E^-$
	 is inconsistent w.r.t.\ \Omc, and thus \Omc fits all negative examples. 
	It remains to show that every $\Amc \in E^+$  is consistent w.r.t.\ \Omc. 
    To this end, it clearly suffices to
    show that $\Amc^+$ is consistent
    w.r.t.\ \Omc. Consider $\Amc^+$
    viewed as an interpretation, which trivially
    is a model of $\Amc^+$. We show that
    ${\Amc^+} $ is also a model of
    \Omc, that is, $(C_{\Amc, a_\Amc}^{\Lmc,\ell} )^{\Amc^+} = \emptyset  $ for all $ \Amc \in E^- $. In fact, this already follows
    from ($*$), because by choice of $C_{\Amc, a_\Amc}^\ell$ the existence
    of some $b \in (C_{\Amc, a_\Amc}^{\Lmc,\ell} )^{\Amc^+}$ implies 
		$( \Amc, a_\Amc ) \preceq^\ell_{\Lmc} ( \Amc^+, b ).$
\end{proof}
In the following, we highlight some properties
of the fitting ontologies constructed in the 
proof
of Theorem~\ref{thm:consfitting} and discuss alternative ways of constructing them. For simplicity, we restrict our attention to $\EL_\bot$. However, all observations extend straightforwardly also to $\ELI_\bot$.

The fitting $\EL_\bot$-ontology $\Omc$ constructed for a collection of labeled ABox examples $E$ in the proof
of Theorem~\ref{thm:consfitting} is derived only from the negative examples $E^-$. 
It contains only concept and role names from $E^-$, thus no auxiliary concept or role names, that is, symbols that do not occur in~$E$.  \Omc
 is of  size single exponential in $|E|$,
but can be reduced to polynomial size at the
expense of introducing an auxiliary
concept name $ X_{\Amc, a}^k $ for
every  $ \Amc \in E^- $, $ a \in \mn{ind}(\Amc) $ and $ k \in \left\{0, \ldots,
\ell\right\}  $. The resulting ontology 
$\Omc'$ contains the following
CIs for all $C_{\Amc, a}^{\ell\Lmc} \sqsubseteq \bot \in \Omc$ and 
         $1 \leq k \leq \ell$:
$$    
\begin{array}{rcl}
\displaystyle
	 \bigsqcap_{A(a) \in \Amc} \hspace{-0.1cm} A 
		 \sqcap \hspace{-0.1cm}\bigsqcap_{r\left( a,b \right) \in \Amc } \hspace{-0.1cm} \exists r.\, X_{\Amc, b}^{k-1} 
	 &\sqsubseteq& X_{\Amc, a}^k \\[5mm]
   \displaystyle \bigsqcap_{A(a) \in \Amc}A \ \sqsubseteq \ X_{\Amc, a}^0\qquad
 X_{\Amc, a_\Amc}^\ell &\sqsubseteq& \bot.
\end{array}
$$
Note that the sole purpose of the auxiliary concept names is to represent the
characteristic concepts $ 
C_{\Amc, a_\Amc}^{\Lmc,\ell}$ succinctly.
In fact, $\Omc'$ is a conservative extension of \Omc in the sense that every model of $\Omc'$ is a model of \Omc and, conversely, every model of $\Omc$ can be extended to a model of $\Omc'$ by interpreting the auxiliary concept names. Consequently, $\Omc'$ fits~$E$.

Interestingly, $\EL_\bot$-ontologies 
that fit $E$
can also be constructed in a completely different way, based on the positive examples. 
This alternative construction, which follows  ideas from \cite{FGL-KR25}, requires the use
of auxiliary concept names, and yields an ontology of polynomial size. To define a fitting ontology $\Omc_{\mn{alt}}$ of this kind, we introduce an auxiliary concept name $ \overline{V_a} $ for each individual $ a \in  \mn{ind}\left(\Amc^+
\right)  $. Informally, $ \overline{V_a} $ is satisfied at an element $d$ in an interpretation \Imc
if $(\Imc,d) \not\preceq_{\ELbot} (\Amc^+,a)$.
Let $ \Sigma
$ be the set of concept and role names used in  $ E^- $.
The ontology 
$\Omc_{\mn{alt}}$ then contains the following
CIs for all $a \in \mn{ind}(\Amc^+)$ and $A,r \in \Sigma$:
$$
\begin{array}{c}
 A \sqsubseteq \overline{V_a}  
  \quad \text{ if } A(a) \not\in \Amc^+, \\[1mm]
 \displaystyle
 \exists r. \bigsqcap_{r(a,b) \in \Amc^+} \overline{V_b} \sqsubseteq
			\overline{V_a},
            \qquad 
        \bigsqcap_{b \in \mn{ind}\left(\Amc^+\right)}
		 \overline{V_b}
	 \sqsubseteq \bot.
\end{array}
$$
\begin{restatable}{lemma}{lemsecondontconstrcorr} \label{lem:consaltont}
  If any $\EL_\bot$-ontology fits $E$, then $\Omc_{\mn{alt}}$ fits $E$.
\end{restatable}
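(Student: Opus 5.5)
By Theorem~\ref{thm:consfitting}, the hypothesis that some $\EL_\bot$-ontology fits $E$ gives $\Amc \not\preceq_{\ELbot} \Amc^+$ for every $\Amc \in E^-$. The plan is to show two things about $\Omc_{\mn{alt}}$. First, it is consistent with $\Amc^+$, and hence with every positive example. Second, every $\Amc \in E^-$ is inconsistent with it. In both parts, the auxiliary names $\overline{V_a}$ are treated as encoding the complement of ``simulated into $(\Amc^+,a)$'', restricted to the signature $\Sigma$.

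\emph{Positive examples.} I take the interpretation $\Imc^+$ obtained from $\Amc^+$, viewed as an interpretation, by setting $\overline{V_a}^{\Imc^+} := \{ b \in \mn{ind}(\Amc^+) \mid (\Amc^+|_\Sigma, b) \not\preceq_{\EL} (\Amc^+,a)\}$. Here $\Amc^+|_\Sigma$ is the restriction to the symbols in $\Sigma$; equivalently, one may use $\EL$-simulations that only check symbols from $\Sigma$. I then verify each CI.
\begin{itemize}
\item If $A(a)\notin\Amc^+$ and $b \in A^{\Imc^+}$ with $A\in\Sigma$, then $b$ cannot be simulated into $a$, so $b \in \overline{V_a}$.
\item If $b$ has an $r$-successor $c$ with $c\in\overline{V_{a'}}$ for all $r(a,a')\in\Amc^+$, then no simulation from $b$ to $a$ can match that edge, so $b\in\overline{V_a}$.
\item For the $\bot$-CI, note that $b$ simulates into $(\Amc^+,b)$ via the identity. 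Hence $b\notin\overline{V_b}$, and so no element lies in all $\overline{V_{b'}}$.
\end{itemize}
Thus $\Imc^+$ is a model of $\Amc^+$ and of $\Omc_{\mn{alt}}$. Since every $\Amc\in E^+$ is a sub-ABox of $\Amc^+$, each positive example is consistent with $\Omc_{\mn{alt}}$.

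\emph{Negative examples.} Suppose, towards a contradiction, that $\Imc$ is a model of some $\Amc\in E^-$ and of $\Omc_{\mn{alt}}$. I would define
\[
S := \{ (d,a) \in \Delta^\Imc\times\mn{ind}(\Amc^+) \mid d \notin \overline{V_a}^\Imc\}
\]
and show that $S$ is an $\ELbot$-simulation from $\Imc|_\Sigma$ to $\Amc^+$.
\begin{itemize}
\item (Atom): if $d\in A^\Imc$ with $A\in\Sigma$ and $(d,a)\in S$, then $A(a)\in\Amc^+$, since otherwise the first CI would force $d\in\overline{V_a}$.
\item (Rel): if $(d,e)\in r^\Imc$ and $(d,a)\in S$, then $e\notin\overline{V_b}$ for some $r(a,b)\in\Amc^+$, since otherwise the second CI would force $d\in\overline{V_a}$. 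This gives $(e,b)\in S$. If $a$ has no $r$-successor, the empty conjunction is $\top$, so $\exists r.\top\sqsubseteq\overline{V_a}$ and again $d\in\overline{V_a}$; the case is therefore consistent.
\item (Tot): by the $\bot$-CI, every $d$ avoids some $\overline{V_b}$, so $(d,b)\in S$ for some $b$.
\end{itemize}
Composing with the identity embedding of $\Amc$ into $\Imc$ (all symbols of $\Amc$ lie in $\Sigma$) yields $\Amc\preceq_{\ELbot}\Amc^+$. This contradicts the observation at the start, so $\Amc$ is inconsistent with $\Omc_{\mn{alt}}$.

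The main obstacle is the signature bookkeeping. The first-part model must interpret $\overline{V_a}$ using only $\Sigma$. Otherwise the Atom CIs, which are present only for $A\in\Sigma$, would not match, and symbols that occur only in $E^+$ would break the reflexivity argument. Note that reflexivity holds in either reading, because the identity is a full simulation. The second part also needs care with the conventions for the empty conjunction and with totality.
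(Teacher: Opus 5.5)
Your proof is correct and follows the paper's argument. The negative part is the same extraction of the $\ELbot$-simulation $\{(d,a)\mid d\notin\overline{V_a}^\Imc\}$, combined with Theorem~\ref{thm:consfitting}. For the positive part, the paper simply sets $\overline{V_a}:=\mn{ind}(\Amc)\setminus\{a\}$, i.e.\ it uses the identity where you use the maximal simulation. That choice also shows that the $\Sigma$-restriction you describe as the main obstacle is unnecessary; your own model works just as well with unrestricted simulations.
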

It is interesting to note that none of the constructed ontologies uses existential quantifiers on the right-hand side of concept 
 inclusions. 

By Lemma~\ref{lem:simptime},
Theorem~\ref{thm:consfitting}  yields the following result.
\begin{theorem}
	Let $ \Lmc \in  \{ \ELbot, \ELIbot\}$. The consistent \Lmc-ontology
	fitting problem is in \PTime. 
\end{theorem}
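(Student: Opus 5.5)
The statement follows directly from Theorem~\ref{thm:consfitting} together with Lemma~\ref{lem:simptime}. The plan is to turn the characterization into an algorithm and argue that each step runs in polynomial time.

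Given an input $E=(E^+,E^-)$, the algorithm first handles the degenerate case $E^+=\emptyset$. Then the ontology $\{\top \sqsubseteq \bot\}$ fits trivially, since every negative ABox is inconsistent with it and there are no positive examples to check. So we answer ``yes'' if $E^+=\emptyset$.

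Otherwise we build $\Amc^+=\biguplus E^+$. Since individual names across examples are assumed pairwise disjoint, this is a plain union and takes linear time. By Theorem~\ref{thm:consfitting}, a fitting \Lmc-ontology exists iff $\Amc\not\preceq_\Lmc\Amc^+$ for every $\Amc\in E^-$. The one point needing care is that Lemma~\ref{lem:simptime} is stated for pointed interpretations, whereas here we need the existence of a total \Lmc-simulation between unpointed ABoxes. For $\Lmc\in\{\ELbot,\ELIbot\}$, I would reduce this as follows:
\[
\Amc\preceq_\Lmc\Amc^+ \quad\text{iff}\quad \text{for every } a\in\mn{ind}(\Amc) \text{ there is } b\in\mn{ind}(\Amc^+) \text{ with } (\Amc,a)\preceq_\Lmc(\Amc^+,b).
\]

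The direction from left to right is immediate from (Tot). For the converse, take the union of the witnessing simulations. This union is still an \Lmc-simulation, because (Atom), (Rel) and (iRel) are each preserved under unions, and it is total by construction. So the test amounts to at most $|\mn{ind}(\Amc)|\cdot|\mn{ind}(\Amc^+)|$ pointed checks, each decidable in polynomial time by Lemma~\ref{lem:simptime}.

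Alternatively, one could compute the greatest simulation directly by the standard refinement procedure and check totality, which is also clearly polynomial. Summing over the polynomially many negative examples gives a polynomial-time procedure overall.

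There is no real obstacle beyond this small bookkeeping about totality versus pointed simulations.
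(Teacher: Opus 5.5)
Your proposal is correct and takes essentially the same route as the paper, which obtains the result in one line by combining Theorem~\ref{thm:consfitting} with Lemma~\ref{lem:simptime}. Your additional bookkeeping is sound and fills in details the paper leaves implicit: handling $E^+=\emptyset$ (a case Theorem~\ref{thm:consfitting} excludes) via $\{\top \sqsubseteq \bot\}$, and reducing the existence of a total simulation to polynomially many pointed checks.
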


\section{Atomic Queries}
\label{sect:AQs}

We consider atomic queries and present a characterization that resembles the one for  consistency-based fitting, but has an additional ingredient. Informally, each positive example $(\Amc,q)$  now acts like a certain kind of implication
on the negative examples, 
 similar to an existential
rule with $q$ as the (atomic unary) rule head
and \Amc as the rule body that, however, must be matched in terms of a simulation
rather than  a homomorphism. We now make
this precise.

Let $ E=(E^+, E^-  )  $ be a  collection of labeled ABox-AQ-examples. Note that the case where $E^- = \emptyset$ is trivial, since then the ontology $\{\top \sqsubseteq A \mid (\Amc, A(a)) \in E^+ \}$ always fits. In the following, we therefore assume that there is at least one negative example. Let $\Amc^- =  \biguplus_{( \Amc, Q(a)) \in E^- } \Amc$.
A \emph{completion} of $\Amc^-$ is an ABox \Cmc that satisfies
$$ \Amc^- \subseteq \Cmc \subseteq \Amc^- \cup \{Q(b)  \mid
( \Amc, Q(a) ) \in E^+, b \in \mn{ind}( \Amc^- )  \}.$$ 
In contrast to consistency-based fitting,
 also \EL and \ELI without the $\bot$ concept are natural choices
for the ontology language.
\begin{restatable}{theorem}{thmcharaq} \label{thm:charaq}
	Let $ E=(E^+, E^-  )  $ be a  collection of labeled ABox-AQ-examples  with $E^- \neq \emptyset$,
    and $ \Lmc \in\{ \EL, \ELbot, \ELI,\allowbreak \ELIbot\} $.
	Then
	the following are equivalent:
	\begin{enumerate}
		\item $ E $ admits a fitting \Lmc-ontology 
		\item there exists a completion $ \Cmc $ of $ \Amc^- $ such that
			\begin{enumerate}[label=(\alph*)]
				\item  $ Q(a) \not \in \Cmc $ for each $ ( \Amc, Q(a)
					) \in E^- $;
				\item if $ ( \Amc, Q(a) ) \in E^+  $ and $(\Amc,a) \preceq_\Lmc (\Cmc,b)$, then  \mbox{$ Q(b) \in \Cmc$}.
			\end{enumerate}
	\end{enumerate}
\end{restatable}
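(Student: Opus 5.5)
We prove the two directions separately. Throughout, we use that the ABoxes in $E$ have pairwise disjoint individuals and that \Lmc-ontologies are invariant under disjoint unions.

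\medskip
\noindent
`1$\,\Rightarrow\,$2'. Let \Omc be a fitting \Lmc-ontology. We first observe that every negative ABox is consistent with \Omc, since an inconsistent ABox would entail $Q(a)$. Hence $\Amc^-$ is consistent with \Omc, and for each $(\Amc,Q(a))\in E^-$ we have $\Amc^-\cup\Omc\models Q(a)$ iff $\Amc\cup\Omc\models Q(a)$. We then take the completion
\[
\Cmc=\Amc^-\cup\{Q(b)\mid (\Amc,Q(a))\in E^+,\ b\in\mn{ind}(\Amc^-),\ \Amc^-\cup\Omc\models Q(b)\}.
\]
Condition (a) follows immediately from the observation above.

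For (b), assume $(\Amc,Q(a))\in E^+$ and that $S$ is an \Lmc-simulation from \Amc to \Cmc with $(a,b)\in S$, but $Q(b)\notin\Cmc$. Then some model \Imc of $\Amc^-\cup\Omc$ has $b\notin Q^\Imc$. Every such model satisfies all of \Cmc, because the added atoms are entailed. So $S$ is also an \Lmc-simulation from \Amc to \Imc, and it remains total in the $\bot$ case. By Lemma~\ref{lem:IASLsubI}, $\Imc_{\Amc,S,\Lmc}$ is a model of \Omc, and by construction it is a model of \Amc.

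We then check that $a\notin Q^{\Imc_{\Amc,S,\Lmc}}$. In $\Imc_0$ this holds because $(a,b)\in S$ and $b\notin Q^\Imc$. In $\Jmc_a$, and hence in its unraveling, the root lies in $Q$ only if all $d$ with $(a,d)\in S$ are in $Q^\Imc$, which fails for $d=b$. Therefore $\Amc\cup\Omc\not\models Q(a)$, contradicting that \Omc fits the positive example.

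\medskip
\noindent
`2$\,\Rightarrow\,$1'. Fix a completion \Cmc satisfying (a) and (b), and set $\ell=\max\{|\mn{ind}(\Amc)|\mid(\Amc,q)\in E^+\}\cdot|\mn{ind}(\Cmc)|$. For \EL and \ELI, let \Omc consist of
\[
C^{\Lmc,\ell}_{\Amc,a}\sqsubseteq Q \quad\text{for each }(\Amc,Q(a))\in E^+,
\]
using Lemma~\ref{lem:charconc}.

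\emph{Positive examples.} Each positive example is entailed, since $a\in (C^{\Lmc,\ell}_{\Amc,a})^{\Amc}$ by the trivial simulation.

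\emph{Negative examples.} We show that \Cmc, viewed as an interpretation, is a model of \Omc. Suppose $b\in(C^{\Lmc,\ell}_{\Amc,a})^\Cmc$. Then $(\Amc,a)\preceq^\ell_\Lmc(\Cmc,b)$ by Lemma~\ref{lem:charconc}, hence $(\Amc,a)\preceq_\Lmc(\Cmc,b)$ by Lemma~\ref{lem:finmodsim}, and so $Q(b)\in\Cmc$ by (b). Thus \Cmc is a model of $\Amc^-$ and \Omc in which, by (a), $Q(a)$ fails for every negative example $(\Amc,Q(a))$. This gives non-entailment.

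\medskip
\noindent
\emph{The $\bot$ case.} This is the delicate step, because (b) speaks about total simulations while Lemmas~\ref{lem:charconc} and~\ref{lem:finmodsim} concern plain (\EL/\ELI) simulations. We handle it per positive example $(\Amc,Q(a))$ as follows.

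If some $c\in\mn{ind}(\Amc)$ admits no $d$ with $(\Amc,c)\preceq_{\Lmc'}(\Cmc,d)$, where $\Lmc'$ is the $\bot$-free fragment, we add $C^{\Lmc',\ell}_{\Amc,c}\sqsubseteq\bot$ instead. This makes \Amc inconsistent with \Omc, so $Q(a)$ is entailed; consistency is not required here. By Lemmas~\ref{lem:charconc} and~\ref{lem:finmodsim}, no element of \Cmc satisfies the left-hand side, so \Cmc remains a model of \Omc.

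Otherwise, every $c$ is simulated into \Cmc. Any $b$ satisfying $C^{\Lmc',\ell}_{\Amc,a}$ yields a simulation containing $(a,b)$. The union of this simulation with simulations witnessing the other elements is again a simulation, now total. Condition (b) then gives $Q(b)\in\Cmc$, and the argument proceeds as before.
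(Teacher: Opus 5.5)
Your proof is correct and follows the paper's argument. For `1$\,\Rightarrow\,$2' you refute a violation of (b) via $\Imc_{\Amc,S,\Lmc}$ and Lemma~\ref{lem:IASLsubI}, and for `2$\,\Rightarrow\,$1' you use the CIs $C^{\Lmc,\ell}_{\Amc,a}\sqsubseteq Q$ resp.\ $C^{\Lmc,\ell}_{\Amc,c}\sqsubseteq\bot$ with \Cmc itself as the countermodel, exactly as the paper does. The differences are minor: you define the completion via entailment from $\Amc^-\cup\Omc$ rather than reading it off one fixed countermodel $\biguplus_{e\in E^-}\Imc_e$, and you spell out the union-of-simulations step that restores totality in the $\bot$ case, which the paper leaves implicit.
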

Point~(b) of Theorem~\ref{thm:charaq}
formalizes what we mean when saying `positive examples act as an implication'.
It is not hard to find a variation of Example~\ref{ex:ELneqELI} which shows that, also in the AQ case, the existence of a fitting $\EL_\bot$-ontology does not coincide with the 
existence of a fitting $\ELI_\bot$-ontology, and likewise for \EL and \ELI.
As illustrated by the next example, the presence of the bottom concept makes a difference, too.
\begin{example}
     Consider the collection of ABox examples $E=(E^+,E^-)$ where 
  $E^+$ contains the single example
  $
     (\{ A_1(a), B(b) \}),A_2(a))
  $
  and $E^-$ contains the single example
  $
   (\{ A_1(a) \}, A_2(a) ). 
  $
  Then $\{ B \sqsubseteq \bot \}$ is a 
  fitting $\EL_\bot$-ontology, but there is no fitting \ELI-ontology.
  In fact,  for $\Lmc \in \{\EL, \ELI\}$, every completion \Cmc of $\Amc^-$ that satisfies Point~(b) of
  Theorem~\ref{thm:charaq} must satisfy $A_2(a) \in \Cmc$ and thus violate Point~(a).
\end{example}
The discussion about different ways to construct fitting ontologies as well as
their size and use of auxiliary symbols
given in Section~\ref{sect:consistency}
also applies here. One subtlety is that the positive and negative examples swap r\^oles. In particular, the
fitting  ontology constructed in the proof of Theorem~\ref{thm:charaq},
which parallels the proof
of Theorem~\ref{thm:consfitting}, is derived from the
positive examples and does neither use auxiliary symbols nor existential quantifiers on the right-hand side of CIs. It is of single exponential size and can be made polynomial at the expense of introducing auxiliary symbols.
Again, there is an alternative construction that, in the case of AQs, starts from a completion \Cmc of $\Amc^-$ that satisfies Conditions~(a) and~(b) from Theorem~\ref{thm:charaq}. For simplicity, we restrict our attention to $\EL_\bot$.  Let $\Sigma
$ be the set of concept and role names used in  $ E^+ $. The ontology $\Omc_{\mn{alt}}$ then contains the following CIs for all 
$a \in \mn{ind}(\Amc^+)$ and $A,r \in \Sigma$:
$$
\begin{array}{c}
 A \sqsubseteq \overline{V_a}  
  \quad \text{ if } A(a) \not\in \Cmc, \\[1mm]
 \displaystyle
 \exists r. \bigsqcap_{r(a,b) \in \Cmc} \overline{V_b} \sqsubseteq
			\overline{V_a},
            \qquad 
        \bigsqcap_{b \in \mn{ind}\left(\Cmc\right)}
		 \overline{V_b}
	 \sqsubseteq \bot\\[5mm]
      \displaystyle
     \bigsqcap_{a \in \mn{ind}(\Cmc), A(a) \not \in \Cmc}\overline{V_a}  \sqsubseteq A \quad \text{ if } (\Amc, A(a)) \in E^+.
\end{array}
$$

Using a naive implementation of Theorem~\ref{thm:charaq} to decide $(\Lmc,\text{AQ})$-ontology fitting involves guessing a completion \Cmc
of $\Amc^-$ and thus results in an \NPclass upper bound.
However, instead of guessing \Cmc it is possible
to identify a unique candidate for \Cmc in 
polynomial time, thus improving the upper
bound to \PTime. This   parallels the development for
\ALC and \ALCI in \cite{FGL-KR25}.
\begin{definition}
\label{def:refcand} 
Let $E = (E^+, E^-)$ be a collection of labeled ABox-AQ examples with $E^- \neq \emptyset$. We use $\Amc^\pm$ to denote the ABox that can be obtained from $\Amc^- = \biguplus_{(\Amc,Q(a)) \in E^-} \Amc$ by exhaustively applying the following rule:
    \begin{description}
      \item[(\Rsf)]  
        if $(\Amc,Q(a)) \in E^+$ and
        $(\Amc,a) \preceq_\Lmc (\Amc^\pm,b)$, set $\Amc^\pm = \Amc^\pm \cup \{ Q(b) \mid (a,b)\in S \}$.
    \end{description}
 \end{definition}
 Based on Lemma~\ref{lem:simptime}, it is easy
 to see that $\Amc^\pm$ can be constructed
 in polynomial time. Moreover, it is not hard to show the following.
\begin{restatable}{proposition}{proprefcand}
\label{prop:refcand}
    Let $\Lmc \in \{ \EL, \ELbot, \ELI, \ELIbot\}$ and $E = (E^+, E^-)$ be a collection of labeled ABox-AQ examples  with $E^- \neq \emptyset$.
    Then the following 
    are equivalent:
    \begin{enumerate}
        
        \item     $E$ admits no fitting \Lmc-ontology;

        \item $Q(a) \in \Amc^\pm$ for some $(\Amc,Q(a)) \in E^-$.
       
    \end{enumerate}
\end{restatable}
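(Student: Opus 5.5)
The plan is to reduce Proposition~\ref{prop:refcand} to Theorem~\ref{thm:charaq}. Two facts about $\Amc^\pm$ are needed: it satisfies Condition~(b) of Theorem~\ref{thm:charaq}, and it is contained in every completion that satisfies~(b). First note that $\Amc^\pm$ is well defined and is itself a completion of $\Amc^-$. Rule (\Rsf) only adds assertions $Q(b)$ with $b \in \mn{ind}(\Amc^-)$ and $Q$ taken from a positive example, so the ABox only grows inside a finite set and the rule terminates. When the rule is no longer applicable, $\Amc^\pm$ satisfies~(b) by construction: whenever $(\Amc,Q(a)) \in E^+$ and $(\Amc,a)\preceq_\Lmc(\Amc^\pm,b)$, we have $Q(b)\in\Amc^\pm$. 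I read the rule as adding $Q(b)$ for the matched $b$.

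For the direction `2 $\Rightarrow$ 1' I argue contrapositively. Suppose $E$ admits a fitting \Lmc-ontology. By Theorem~\ref{thm:charaq} there is a completion $\Cmc$ satisfying (a) and~(b). I show $\Amc^\pm \subseteq \Cmc$ by induction on the rule applications, keeping the invariant that the current ABox $\Bmc$ is contained in $\Cmc$. The base case is $\Amc^- \subseteq \Cmc$. For the step, suppose (\Rsf) fires with $(\Amc,Q(a))\in E^+$ and $(\Amc,a)\preceq_\Lmc(\Bmc,b)$. Since $\Bmc\subseteq\Cmc$ and both have the same individuals, the identity is a homomorphism from $\Bmc$ to $\Cmc$. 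Composing the simulation with it gives $(\Amc,a)\preceq_\Lmc(\Cmc,b)$. Here one has to check that totality is preserved in the $\bot$ cases; it is, because the composition of a total simulation with a total homomorphism is again total. By~(b) for $\Cmc$ we get $Q(b)\in\Cmc$, which maintains the invariant. Hence $\Amc^\pm\subseteq\Cmc$, and by~(a) for $\Cmc$ no negative query $Q(a)$ lies in $\Amc^\pm$. So Point~2 fails.

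For the direction `1 $\Rightarrow$ 2', again argue contrapositively. If $Q(a)\notin\Amc^\pm$ for every negative example, then $\Amc^\pm$ is a completion satisfying~(a), and it satisfies~(b) by saturation. Theorem~\ref{thm:charaq} then yields a fitting \Lmc-ontology.

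The main obstacle is the monotonicity step: simulations into a sub-ABox must lift to simulations into the larger ABox. This step needs care with the totality condition (Tot) for $\ELbot$ and $\ELIbot$, and with the reading of (\Rsf), which mentions $S$. Everything else is bookkeeping. Polynomial-time computability is not part of the statement and already follows from Lemma~\ref{lem:simptime}.
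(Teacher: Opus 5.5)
Your proposal is correct and follows essentially the paper's proof. Both directions go through Theorem~\ref{thm:charaq}, and the key claim is that $\Amc^\pm$ is contained in every completion $\Cmc$ satisfying Condition~(b). You prove this, as the paper does, by induction along the rule applications, composing the simulation into the current ABox with the identity into $\Cmc$; your explicit checks that totality is preserved and that either reading of (\Rsf) works are sound extra care.
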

Proposition~\ref{prop:refcand} clearly yields
the following. 
\begin{theorem}
\label{thm:AQincoNP}
    Let $\Lmc \in \{ \EL, \ELbot, \ELI, \ELIbot \}$. Then  $(\Lmc,\text{AQ})$-ontology fitting  is in \PTime.
\end{theorem}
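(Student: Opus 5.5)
The statement to prove is Theorem~\ref{thm:AQincoNP}: $(\Lmc,\text{AQ})$-ontology fitting is in \PTime. The plan is a direct algorithm built on Proposition~\ref{prop:refcand}, with the only real work being a polynomial bound on computing $\Amc^\pm$.

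Given $E=(E^+,E^-)$, I would first handle $E^-=\emptyset$. In that case we answer ``yes'', since $\{\top \sqsubseteq A \mid (\Amc,A(a))\in E^+\}$ fits, as noted before Theorem~\ref{thm:charaq}.

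Otherwise I compute $\Amc^\pm$ from Definition~\ref{def:refcand}:
\begin{enumerate}
\item Start from $\Amc^-=\biguplus_{(\Amc,Q(a))\in E^-}\Amc$.
\item Repeatedly apply rule~(\Rsf) until nothing changes.
\item Answer ``yes'' iff $Q(a)\notin\Amc^\pm$ for every $(\Amc,Q(a))\in E^-$.
\end{enumerate}
Correctness is then immediate from Proposition~\ref{prop:refcand}.

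The substantive part is the polynomial running-time bound. First, every application of (\Rsf) adds only assertions $Q(b)$ with $b\in\mn{ind}(\Amc^-)$ and $Q$ a concept name occurring in a query of $E^+$. So $\Amc^\pm$ always stays inside a fixed polynomial-size superset of $\Amc^-$, namely the largest completion. Since the ABox only grows, there are at most $|\mn{ind}(\Amc^-)|\cdot|E^+|$ rounds that change it.

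In each round, for every positive example $(\Amc,Q(a))$ and every $b\in\mn{ind}(\Amc^\pm)$, we test $(\Amc,a)\preceq_\Lmc(\Amc^\pm,b)$. This is polynomially many tests. Each test is in \PTime by Lemma~\ref{lem:simptime}, viewing both ABoxes as finite interpretations, and this holds also for the total variants $\ELbot,\ELIbot$. Hence the whole fixpoint computation runs in polynomial time.

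One point I would handle explicitly is the rule's phrasing ``$\{Q(b)\mid(a,b)\in S\}$''. I read it as adding $Q(b)$ for each $b$ such that the simulation test succeeds at $b$, which corresponds to taking the maximal simulation. Either reading yields a monotone operator on a finite lattice, so the least fixpoint is reached in polynomially many steps. This is the only step requiring care; the rest is routine.
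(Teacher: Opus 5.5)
Your proposal is correct and follows the paper's own argument. You compute $\Amc^\pm$ by exhaustively applying the rule of Definition~\ref{def:refcand}, which runs in polynomial time by Lemma~\ref{lem:simptime}, and read off the answer from Proposition~\ref{prop:refcand}, with $E^-=\emptyset$ as the trivial case; your round count and monotonicity remark just spell out what the paper calls ``easy to see''.
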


\section{UCQs and CQs}
\label{sect:UCQCQ}

We start with  characterizations of 
$(\Lmc,\text{UCQ})$-ontology fitting,
for $\Lmc \in \{ \EL, \ELbot, \ELI, \ELIbot \}$,
which also cover the  case
of CQs. These characterizations  are more intricate
than in the AQ case. In particular, 
the finite completions used for the AQ characterizations  have to be replaced with potentially infinite ones, which we choose to represent
as interpretations. The transition
to infinite completions
 is connected to the fact that the fitting
ontologies constructed in the AQ-case do not use existential quantifiers on the right-hand side of CIs while
this  cannot be
achieved
for CQs and UCQs. Example~\ref{ex:firstone}
may serve as an illustration.

Since already for AQs the existence of a fitting ontology does not coincide
for any two of our four languages 
$\Lmc \in \{\EL,\ELbot,\ELI,\ELIbot\}$, the same is clearly true for CQs and
UCQs. In the following characterization,
we assume that there is at least one negative example. Note that the case without negative examples is again not too interesting: for $\Lmc \in \{ \EL_\bot, \ELI_\bot \}$, $\{ \top \sqsubseteq \bot \}$ is a fitting ontology; for $\Lmc \in \{ \EL, \ELI \}$, either the
ontology $\{ \top \sqsubseteq A \mid A \in \NC \text{ occurs in } E\} \cup 
\{ \top \sqsubseteq \exists r . \top \mid r \in \NR \text{ occurs in } E\}$
fits or there is no fitting ontology.
\begin{restatable}{theorem}{charUCQwithFin}
    \label{thm:charUCQwithFin}
	Let $ E= \left( E^+, E^- \right)  $ be a collection of labeled ABox-UCQ examples with 
	$ E^- \neq \emptyset  $ and $\Lmc \in \{\EL,\ELbot,\allowbreak \ELI,\ELIbot\}$. Then the following are equivalent:
	\begin{enumerate}
		\item there is an \Lmc-ontology that fits  $ E $;
		\item there exists a finite interpretation  $ \Imc $ such that
			 \begin{enumerate}
				 \item $ \Imc =\biguplus_{e \in E^-}  \Imc_e$ where, for each 
					 $ e=\left( \Amc, q \right) \in E^- $, $ \Imc_e $ is a 
					  model of \Amc with  $ \Imc_e \not \models q $;
				\item  for all $ \left( \Amc, q \right) \in E^+ $: if $ S$ is an \Lmc-simulation from \Amc to \Imc, then $ \Imc_{\Amc, S, \Lmc}  \models q$.
			\end{enumerate}
	\end{enumerate}
\end{restatable}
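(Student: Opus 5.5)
The direction `1$\Rightarrow$2' relies on finite countermodels together with Lemma~\ref{lem:IASLsubI}. The direction `2$\Rightarrow$1' builds an ontology from the bounded-depth \Lmc-theory of the finite interpretation \Imc, written with characteristic concepts (Lemma~\ref{lem:charconc}).

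\emph{`1$\Rightarrow$2'.} Let \Omc be an \Lmc-ontology fitting $E$. For each $e=(\Amc,q)\in E^-$ we have $\Amc\cup\Omc\not\models q$. I would invoke finite controllability of UCQ entailment for \ELI$_\bot$ (known for Horn DLs, but not stated earlier in the paper) to obtain a \emph{finite} model $\Imc_e$ of \Amc and \Omc with $\Imc_e\not\models q$. Let $\Imc=\biguplus_e\Imc_e$. Then \Imc is a model of \Omc, since \Lmc-ontologies are invariant under disjoint union, and Point~(a) holds. For Point~(b), let $(\Amc,q)\in E^+$ and let $S$ be an \Lmc-simulation from \Amc to \Imc. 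By Lemma~\ref{lem:IASLsubI}, $\Imc_{\Amc,S,\Lmc}\models\Omc$. Moreover, $\Imc_{\Amc,S,\Lmc}$ is a model of \Amc. Role assertions hold because they are copied into $\Imc_0$. For a concept assertion $A(a)\in\Amc$ and any $(a,d)\in S$, (Atom) gives $d\in A^\Imc$, so $a\in A^{\Imc_0}$. Since $\Amc\cup\Omc\models q$, we obtain $\Imc_{\Amc,S,\Lmc}\models q$.

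\emph{`2$\Rightarrow$1'.} Fix \Imc as in~2 and let $\Sigma$ be the signature of $E$. By Lemma~\ref{lem:finmodsim}, choose $k$ at least $|\Delta^\Imc|\cdot\max\{|\mn{ind}(\Amc)|\}$ over the positive examples, plus the maximal number of variables in a query of $E$. Take $\Omc_k$ to be the set of all $\Sigma$-CIs $C\sqsubseteq D$ of role depth $\le k$ that hold in \Imc, with the $\bot$-CIs included when $\Lmc$ has $\bot$. Up to equivalence this set is finite, because every such concept is equivalent to a disjunction-free combination of the finitely many characteristic concepts $C^{\Lmc,k}_{\Jmc,d}$. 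Since $\Imc\models\Omc_k$, each $\Imc_e$ is a model of $\Omc_k$ and of its ABox that refutes the query, so all negative examples are satisfied.

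For a positive example $(\Amc,q)$, let \Jmc be any model of \Amc and $\Omc_k$. I would proceed in three steps.
\begin{enumerate}
\item Let $S$ be the maximal \Lmc-simulation from \Amc to \Imc. By Lemma~\ref{lem:finmodsim}, $S$ coincides with the maximal $k$-simulation. In the $\bot$ case, the CIs $C^{\Lmc,k}_{\Amc,a}\sqsubseteq\bot$ for non-simulable $a$ force $S$ to be total, since otherwise \Jmc could not exist.
\item For each $a$, the CI $C^{\Lmc,k}_{\Amc,a}\sqsubseteq C^{\Lmc,k}_{\Jmc_a,a}$ holds in \Imc. Indeed, any $d$ satisfying the left-hand side has $(a,d)\in S$, and then $d\in (C^{\Lmc,k}_{\Jmc_a,a})^\Imc$ by Lemma~\ref{lem:prodlem}. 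Hence the CI belongs to $\Omc_k$, so $(\Jmc_a,a)\preceq^k_\Lmc(\Jmc,a)$.
\item Because $\Jmc_a^{\downarrow\Lmc}$ is tree-shaped, a $k$-simulation from it yields a homomorphism of its depth-$k$ truncation into \Jmc. Combining these homomorphisms with the identity on $\Imc_0$ maps every match of $q$ that stays within depth $k$ of the ABox into \Jmc.
\end{enumerate}
Since $\Imc_{\Amc,S,\Lmc}\models q$ by~(b), this gives $\Jmc\models q$.

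The main obstacle is the last step for non-rooted components of $q$. Such a component may be matched arbitrarily deep in an unraveled tree, or only because of a cycle pattern that simulations do not see. I expect to need extra CIs. The first thing I would try is to add, for each relevant tree position realising a component, a CI of the form $\top\sqsubseteq\exists u.C$ with a fresh role name $u$ and $C$ a characteristic concept of the subtree rooted at the match. This makes deep matches reachable at bounded depth. Presumably this is where the paper's remark that a fresh role name is needed for disconnected queries comes in. I would check that these CIs still hold in (a suitable $u$-extension of) \Imc, so that negative examples stay refuted.
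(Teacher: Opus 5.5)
Your `1$\Rightarrow$2' matches the paper's argument: finite controllability plus Lemma~\ref{lem:IASLsubI}. In `2$\Rightarrow$1', your route is a sound alternative only for components of $q$ that contain an individual. That route uses the depth-$k$ theory of $\Imc$, the CIs $C^{\Lmc,k}_{\Amc,a}\sqsubseteq C^{\Lmc,k}_{\Jmc_a,a}$, and homomorphisms from depth-$k$ truncations of the unravelings. The statement covers arbitrary UCQs, and for non-rooted components your proof stops. Worse, the patch you sketch is unsound, because a CI $\top\sqsubseteq\exists u.C$ makes $\exists x\,C(x)$ entailed by every ABox. Take $E^+=\{(\{A(a)\},\exists x\,B(x))\}$ and $E^-=\{(\{A(b)\},B(b)),\,(\{C(c)\},\exists x\,B(x))\}$. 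This is fitted by $\{A\sqsubseteq\exists u.B\}$, yet no ontology containing $\top\sqsubseteq\exists u.B$ fits it. On the side of $\Imc$, $B$ must be empty in the component for the second negative example. So no extension of $\Imc$ that keeps $u$ inside components satisfies the CI. Letting $u$ cross components instead breaks the argument that each $\Imc_e$ on its own refutes its query.

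What is missing is a \emph{guard}. Make $u$ the universal relation inside each $\Imc_e$, and only there. Add $u$ to the signature of $\Omc_k$; the example shows that $\Sigma=\mn{sig}(E)$ cannot suffice, since (b) forces $b$ to reach a $B$-element via a role outside $\mn{sig}(E)$. Then use CIs $C^{\Lmc,k}_{\Amc,b}\sqsubseteq\exists u.D$, where $b$ is the root of the tree hosting the deep match and $D$ is the tree concept of the match's tree-shaped image. Such a CI holds in $\Imc$. Each $d\in(C^{\Lmc,k}_{\Amc,b})^\Imc$ has $(b,d)\in S$. The $d$-coordinate of the product element at the top of the match is reachable from $d$, hence lies in $d$'s component, and it lies in $D^\Imc$. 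This is exactly Lemma~\ref{lem:univrole} combined with Corollary~\ref{lem:IASLsim}. With your choice of $k$, these CIs already belong to $\Omc_k$. Any model of $\Amc\cup\Omc_k$ then realizes each deep non-rooted component on its own, and your bounded-depth homomorphism handles the rest.

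For comparison, the paper skips the whole-theory step. For the maximal $S$ it fixes one \Lmc-forest \Amc-variation $p'$ of a CQ in $q$ with $\Imc_{\Amc,S,\Lmc}\models p'$. It then adds only $C^{\Lmc,\ell}_{\Amc,a}\sqsubseteq D$ for rooted reduction queries $D(a)$ and $C^{\Lmc,\ell}_{\Amc,a_{\widehat p}}\sqsubseteq\exists u.D$ for existential ones. These are checked in $\Imc$ via $(C^{\Lmc,\ell}_{\Amc,a})^\Imc\subseteq aS\subseteq D^\Imc$.
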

The subsequent example
illustrates 
Theorem~\ref{thm:charUCQwithFin}.
\begin{example}
\label{ex:elprepforeli}
    Consider the collection  $E=(E^+, E^-)$ where 
    $$
    \begin{array}{rcl}
       E^+ &=& \{ \ (\{A(a)\},\exists x\ r(a,x)\land A(x) ),\\[1mm]
    && \phantom{\{ \ }(\{r(a,a)\}, B(a))\ \} \\[2mm]
     E^- &=& \{ \ (\{A(a)\}, B(a)) \ \}.
     \end{array}
     $$
    $E$ does not admit a fitting \EL-ontology.
    Indeed, any  fitting \EL-ontology \Omc must satisfy
        $
      \Omc \models A \sqsubseteq \exists r. A$ because of the first positive example.  Thus in every model \Imc of \Omc and for every
    $d \in A^\Imc$, we
    have $(\Amc_2,a) \preceq_\EL (\Imc,d)$ with
    $\Amc_2 = \{ r(a,a) \}$  the ABox from the second positive example. It can be seen that this implies $d \in B^\Imc$; c.f.\ Point~(b) of Theorem~\ref{thm:charUCQwithFin}. It follows that $\Omc \models A \sqsubseteq B$
    and consequently \Omc violates the negative example.

    In contrast, the \ELI-ontology 
    $\{
      A \sqsubseteq \exists r . A, \
      \exists r^- . \top \sqsubseteq B \}
    $
    fits $E$. Note in particular that, for a model \Imc of \Omc and 
    $d \in A^\Imc$, $(\Amc_2,a) \preceq_\ELI (\Imc,d)$ 
    needs \emph{not} hold because there may  not be an infinite outgoing $r^-$-path from $d$ in \Imc. 
\end{example}
To further discuss Theorem~\ref{thm:charUCQwithFin}, we compare it to the characterizations for $(\ALC,\text{UCQ})$ and $(\ALCI,\text{UCQ})$ from \cite{FGL-KR25}. They
differ in three aspects.
First, the characterizations
for \ALC and \ALCI
use homomorphisms in place of simulations. Second, our definition of $\Imc_{\Amc,S,\Lmc}$ involves
direct products while
the one in  \cite{FGL-KR25}
does not.  
And third, the characterizations for \ALC and \ALCI do not require the interpretation \Imc in Point~2 to be finite. 
It is worthwhile
to discuss these differences in some more detail. 

The first difference arises
because of the limited expressive
power of \EL and \ELI, which is tightly linked to simulations. It is interesting to note that the characterizations for \ALC and \ALCI are in terms of homomorphisms, \emph{not} in terms of bisimulations, which are the counterpart of simulations for these DLs. This is related to 
`second-order effects' which sometimes make it possible to express CQs (or in this case: ABoxes from positive examples) in \ALC and \ALCI, see~\cite{FLW-IJCAI18}.
The second
difference is due to the fact that in contrast to homomorphisms, simulations are not functional, and thus we may have $(a,d_1),(a,d_2)\in S$ for distinct $d_1,d_2$, which we  reconcile via products. The
root of the third difference
is the non-local nature of 
simulations. For instance, a reflexive loop must map homomorphically to a reflexive loop, but it may be simulated by an infinite path.
The example below illustrates why this prevents us from using potentially infinite interpretations \Imc in Theorem~\ref{thm:charUCQwithFin}. We remark
that the third difference
is crucial  only for the \ELI case. For \EL,
in contrast, it follows from the technical development in Section~\ref{sect:EL} that the qualifier `finite' can be dropped from Point~2 of 
Theorem~\ref{thm:charUCQwithFin}.\footnote{See in particular the remark after Theorem~\ref{thm:ELsimPTime}.}
\begin{example}
\label{ex:simscausetrouble}
  Consider the collection $E'$ obtained from the collection $E$ in Example~\ref{ex:elprepforeli} by adding a third positive 
  example
  $$
  (\{ B(c), r(b,c)\}, B(b) ),
  $$
  which enforces that any fitting \ELI-ontology \Omc must satisfy
        $
      \Omc \models \exists r . B \sqsubseteq B$.
  Then $E'$ no longer admits a fitting \ELI-ontology. Informally, the
  reason is that \ELI-ontologies cannot capture the infinite nature of simulations. More precisely, it can be shown that for \Omc to satisfy the second positive example of $E$, there must be a \emph{finite subset} \Amc of the
      \ELI-unraveling of the ABox $\{r(a,a)\}$ from that example such that
    $\Amc \cup \Omc \models B(a)$. 
    Now let  \Imc be a model of a 
    fitting \ELI-ontology \Omc and let $d \in A^\Imc$. Since \Amc is finite and acyclic, there must
    be an element $e$ on the infinite outgoing $r$-chain from $d$ in \Imc
    such that $(\Amc,a) \preceq_\ELI (\Imc,e)$. Thus we must have $e \in B^\Imc$. Since the additional positive example  enforces
     $\Omc \models \exists r . B \sqsubseteq B$, this yields $d \in B^\Imc$, thus $\Omc \models A \sqsubseteq B$ and \Omc violates the negative example.

    Let us connect this to 
Theorem~\ref{thm:charUCQwithFin}.
    The infinite interpretation \Imc that is an infinite $r$-path
    with root $a$ and $A^\Imc = \Delta^\Imc$
    satisfies Points~(a) and~(b) of that theorem despite the fact that there is no  \ELI-ontology that fits $E'$. However, there is no finite interpretation that exhibits the same behaviour. Intuitively, the mismatch between the infinite nature of simulations and the finite nature of \ELI-ontologies only manifests on infinite interpretations while on finite interpretations it is mitigated by Lemma~\ref{lem:finmodsim}.
\end{example}
Up to some minor differences, the discussion about  different ways to construct fitting ontologies 
given in Section~\ref{sect:consistency}
also applies to Theorem~\ref{thm:charUCQwithFin}. Notably, any collection of examples that only contains rooted queries admits a fitting ontology without auxiliary symbols whenever it admits one at all, whereas non-rooted queries may  force a fitting ontology to use an auxiliary role. In the case of  $\EL_{(\bot)}$, using additional auxiliary symbols again allows us to construct fitting ontologies of 
polynomial size. For $\ELI_{(\bot)}$, the size of fitting ontologies remains open.

\subsection{Decomposing Conjunctive Queries}

To obtain decision procedures based on the characterizations given in Theorem~\ref{thm:charUCQwithFin}, we first introduce (a variation of) some well-known  technical
machinery that revolves around forest-shaped models and the decomposition
of a CQ into tree-shaped parts, going
back to the \emph{fork rewritings} of \cite{DBLP:conf/dlog/Lutz08}.

A model \Imc of an ABox \Amc is an \emph{\EL-forest model} if
\begin{enumerate}
\item the directed graph
$$
  G_\Imc = (\Delta^\Imc, \bigcup_{r \in \NR} r^\Imc \setminus (\mn{ind}(\Amc) \times \mn{ind}(\Amc)))
$$
is a forest (a disjoint union of trees) and 

\item $r^\Imc \cap (\mn{ind}(\Amc) \times \mn{ind}(\Amc)) = \{ (a,b) \mid r(a,b) \in \Amc \}$. 

\end{enumerate}
\emph{\ELI-forest models}
are defined likewise, but based on 
the undirected version of the graph $G_\Imc$ in Point~1. In other words, in \EL-forest models all edges in trees must point away from the root of the tree while this is not the case for \ELI-forest models. An \ELbot-forest
model is simply an \EL-forest model,
and likewise for \ELIbot.
The following is well-known \cite{DBLP:conf/dlog/Lutz08}.
\begin{lemma}
    \label{lem:forest}
    Let $\Lmc \in \{ \EL, \EL_\bot, \ELI, \ELI_\bot \}$,
    \Omc be an \Lmc-ontology, \Amc an ABox, and $q$ a UCQ. If $\Amc \cup \Omc \not \models q$, then there is an \Lmc-forest model \Imc of
    \Amc and \Omc 
    that satisfies $\Imc \not\models q$.
\end{lemma}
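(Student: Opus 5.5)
We use the standard chase (universal model) construction for Horn DLs, followed by a query-preserving unraveling of the anonymous part. Assume $\Amc \cup \Omc \not\models q$. We may also assume that \Amc is consistent with \Omc. Otherwise $\Amc \cup \Omc \models q$ holds trivially, so this case cannot occur.

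\textbf{Step 1: a normal form.} We first rewrite \Omc into a normal form that is a conservative extension of \Omc, using fresh concept names. Its CIs have the shapes $A_1 \sqcap A_2 \sqsubseteq B$, $A \sqsubseteq \exists r.B$, $\exists r.A \sqsubseteq B$, $\top \sqsubseteq A$ and $A \sqsubseteq \bot$. Here $r$ ranges over role names for $\EL_{(\bot)}$ and over roles for $\ELI_{(\bot)}$. Query entailment of $q$ is unaffected, because $q$ does not mention the fresh names. A forest model of the normalized ontology that avoids $q$ therefore restricts to a forest model of \Omc that avoids $q$.

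\textbf{Step 2: build the model.} We define \Imc in two stages. First, the ABox individuals receive the concept names that are entailed at them. Second, we attach a fresh tree of anonymous elements to each individual as follows. Whenever an element $d$ has type $t$ (the set of concept names entailed at it) and $A \sqsubseteq \exists r.B \in \Omc$ with $A \in t$, we add a fresh $r$-successor $e$. The type of $e$ is the set of concept names $C$ with $\Omc \models \bigsqcap t' \sqsubseteq C$, computed for $\EL$ from $t'=\{B\}$. For \ELI, we instead use the maximal type consistent with the inverse-role constraints coming from $d$, computed by the usual type-elimination fixpoint.

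By construction, the edges between individuals are exactly those of \Amc, and every anonymous edge points into a fresh tree. For \EL every such edge points away from the root, while for \ELI it may be an inverse edge but the undirected structure is still a tree. Hence \Imc is an \Lmc-forest model in the sense defined above. We then verify $\Imc \models \Omc$ by checking each normal-form CI. The key point here is that the types are closed under entailment, which handles in particular the CIs $\exists r.A \sqsubseteq B$ for the inverse edges in the \ELI case.

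\textbf{Step 3: $\Imc \not\models q$ (the main obstacle).} We must show that \Imc is universal, namely that it maps homomorphically into every model of \Amc and \Omc. Then $\Imc \models q$ would imply $\Amc \cup \Omc \models q$, because homomorphisms preserve UCQs and the map is the identity on individuals. This gives the contradiction.

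For \EL this is routine. For \ELI the types of anonymous elements depend on their predecessor, and universality is the delicate point. We choose the type of each successor as the least type forced by its predecessor together with \Omc, which makes the chase deterministic. The homomorphism into an arbitrary model $\Jmc$ is then built top-down. An anonymous element $e$ created at $d$ by $A \sqsubseteq \exists r.B$ is mapped to an $r$-successor of $h(d)$ in $B^\Jmc$. Such a successor satisfies all of $e$'s entailed concept names, since these are consequences of $B$ together with $\exists r^-.\bigsqcap \mn{type}(d)$.

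Alternatively, we can start from an arbitrary model \Jmc with $\Jmc \not\models q$ and unravel the non-ABox part into trees, as in Definition~\ref{def:tripleint} and Lemma~\ref{lem:unravbasic}. This yields a model by Lemma~\ref{lem:IASLsubI} applied with the identity simulation. The unraveled model maps homomorphically back to \Jmc via $\mn{tail}$, so it still does not satisfy $q$. We would try this unraveling argument first, since it avoids computing the types explicitly.
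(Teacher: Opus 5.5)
Your proposal is correct, and the closing unraveling argument is the one to keep. The paper does not prove this lemma itself; it cites it as well known from Lutz, where it is obtained by unraveling a countermodel. It does, however, use exactly your mechanism in the proof of Theorem~\ref{thm:charUCQwithFin}. Take a model $\Jmc$ of $\Amc$ and $\Omc$ with $\Jmc\not\models q$. The identity $S_{id}$ on $\mn{ind}(\Amc)$ is a total $\Lmc$-simulation from $\Amc$ to $\Jmc$, and the interpretation $\Imc_{\Amc,S_{id},\Lmc}$ of Definition~\ref{def:tripleint} is an $\Lmc$-forest model of $\Amc$. Lemma~\ref{lem:IASLsubI} makes it a model of $\Omc$. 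Mapping individuals to themselves and paths to their tails (Lemma~\ref{lem:unravbasic}) is a homomorphism into $\Jmc$ that fixes individuals, so it cannot satisfy $q$. This route needs no normal form, no consistency case split and no explicit types, and it treats all four logics uniformly.

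Your primary chase route is genuinely different. It constructs a single universal forest model that refutes every non-entailed UCQ at once and maps into every model. That is more than the lemma asks for, and it is paid for by normalization, types defined via \ELI-subsumption, and a CI-by-CI model check.

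If you keep the chase, you need to fix the \ELI successor types. Step~2 uses the ``maximal type consistent with the inverse-role constraints'', and this destroys universality. Take $\Omc=\{A\sqsubseteq\exists r.A'\}$ read as an \ELI-ontology, $\Amc=\{A(a)\}$ and $q=\exists x\,(r(a,x)\wedge Z(x))$. A maximal consistent successor type contains $Z$, so the chase model satisfies $q$ although $\Amc\cup\Omc\not\models q$.

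Only the least type of Step~3 works: $t_e=\{C \mid \Omc\models B\sqcap\exists r^-.\bigsqcap t_d\sqsubseteq C\}$ for $e$ created at $d$ by $A\sqsubseteq\exists r.B$. With this definition the model check goes through:
\begin{itemize}
\item A CI $\exists r.X\sqsubseteq Y$ on the edge from $d$ to $e$ is satisfied. If $X\in t_e$, then the CI that created $e$ gives $\Omc\models\bigsqcap t_d\sqsubseteq\exists r.X$, and $t_d$ is closed under consequences of $\bigsqcap t_d$, so $Y\in t_d$.
\item No $t_e$ contains a concept name $A_0$ with $A_0\sqsubseteq\bot\in\Omc$. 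Your top-down homomorphism into any model of $\Amc\cup\Omc$ realizes every $t_e$, and this only requires that $\Amc$ be consistent with $\Omc$.
\end{itemize}
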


    Let $\Amc$ be an ABox and $p$ a CQ.
An \emph{\Amc-variation} of
$p$ is a CQ $p'$ that can be obtained from $p$ by consistently replacing zero or more variables with individual names from $\mn{ind}(\Amc)$ and possibly identifying  variables. We say that $p'$
is \emph{\Lmc-forest}, for $\Lmc \in \{ \EL, \EL_\bot, \ELI, \ELI_\bot \}$,
if the following conditions are 
satisfied:
\begin{enumerate}

\item if $r(a,b) \in p'$ with $a,b \in \NI$, then  $r(a,b) \in \Amc$;

\item $\Imc_{p'}$ is an $\Lmc$-forest model of the ABox $\Amc \cap p'$. 

\end{enumerate}
Informally, it follows from the above definition that every connected component 
$\widehat p$ of $p' \setminus \Amc$ is a tree in the directed sense if $\Lmc \in \{ \EL, \EL_\bot \}$ and in the undirected
sense otherwise. Consequently, $\widehat p$ can be viewed as
an \Lmc-concept $C_{\widehat p}$.
Some of the trees $\widehat p$ may have an individual name
$a$ as their root and thus give rise
to a rooted \Lmc-query $C_{\widehat p}(a)$. Other trees may not contain individual names and thus give rise to
an existential \Lmc-query $\exists x \, C_{\widehat p}(x)$.\footnote{For $\Lmc=\ELI$, the root is not uniquely determined. We assume that it is chosen arbitrarily.} We call these \Lmc-queries the \emph{reduction queries} of $p'$ and denote them by $\mn{red}(p')$.

\Amc-variations are tightly connected
to forest models. Indeed a homomorphism
$h$ from a CQ $q$ into an \Lmc-forest model
of some ABox \Amc induces an \Lmc-forest \Amc-variation in an obvious way:
for all variables $x$ and $y$, (i)~replace $x$ with $a$ if $h(x)=a \in \mn{ind}(\Amc)$ and then (ii)~identify any remaining variables $x$ and $y$ if $h(x)=h(y)$. 
\begin{restatable}{lemma}{propqEntIffVarEnt}
\label{prop:qEntIffVarEnt}
		Let $ \Lmc \in \{\EL,\ELbot, \ELI, \ELIbot\} $, \Amc be an ABox,
		\Imc a model of \Amc, and $ p $ a CQ.
		Then
        \begin{enumerate}

            \item If \Imc is an \Lmc-forest model of \Amc, then
                    $ \Imc \models p $ if and only if there 
		exists an \Lmc-forest \Amc-variation $ p' $ of  $ p $ 
		such that  $ \Imc \models p' $;

        \item for every  \Lmc-forest \Amc-variation $p'$ of $p$: $ 
		\Imc \models p'$  if and only if $\Imc \models q$ 
        for all  $q \in \mn{red}(p')$.
        \end{enumerate}
\end{restatable}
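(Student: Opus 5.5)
For Point~1, the direction from right to left is the easy one and I would argue it first. Take an \Amc-variation $p'$ of $p$ and a homomorphism $h'$ from $p'$ to \Imc that is the identity on individual names. The variation was obtained by a substitution $\sigma$ that maps variables of $p$ to individuals of \Amc or to (possibly identified) variables of $p'$. The composition $h' \circ \sigma$ is then a homomorphism from $p$ to \Imc: every atom of $p$ is sent by $\sigma$ to an atom of $p'$, and $h'$ preserves that atom. So $\Imc \models p$, and this direction does not even need \Imc to be a forest model.

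For left to right, I would take a homomorphism $h$ from $p$ to the \Lmc-forest model \Imc and build $p'$ as sketched before the lemma. Every variable $x$ with $h(x)=a \in \mn{ind}(\Amc)$ is replaced by $a$, and any two remaining variables $x,y$ with $h(x)=h(y)$ are identified. After this, $h$ induces a map $h'$ from $p'$ to \Imc that is injective on variables, maps no variable to an individual, and is a homomorphism fixing individuals. It remains to check that $p'$ is \Lmc-forest. Condition~1 holds because an atom $r(a,b)$ of $p'$ gives $(a,b) \in r^\Imc$, and Point~2 of the forest-model definition then yields $r(a,b) \in \Amc$. For Condition~2, $h'$ is injective on the non-ABox part and maps it into \Imc. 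Hence the graph $G_{\Imc_{p'}}$ embeds into $G_\Imc$, which is a forest, and a subgraph of a forest is a forest. The embedding preserves edge directions, which gives the directed version needed for \EL. The ABox-edge condition of $\Imc_{p'}$ follows from Condition~1.

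For Point~2, I would use the fact that the components $\widehat p$ of $p' \setminus \Amc$ share no variables. Their only shared elements are individual names, and by construction each tree contains at most one individual, namely its root. Forest-ness ensures this, since two individuals in one component would create a cycle through the ABox part, or more precisely a path in the forest graph between ABox nodes, which is excluded. Moreover, the atoms of $p'$ over individuals alone come from \Amc and are therefore satisfied in every model of \Amc. So \Imc satisfies $p'$ if and only if each component can be matched independently, with its root sent to $a$ in the rooted case or to an arbitrary element in the existential case. Finally, for a tree-shaped CQ, a homomorphism rooted at $d$ exists if and only if $d \in C_{\widehat p}^\Imc$. I would prove this by a routine induction on the tree, where inverse roles handle the undirected trees in the \ELI case.

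The main obstacle I expect is the bookkeeping in Point~2: I must argue carefully that each tree component contains at most one individual and that ABox atoms inside $p'$ are automatically satisfied. Both facts follow from the definition of \Lmc-forest, and I would spell them out before the component-wise decomposition.
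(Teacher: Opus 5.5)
Your proposal is correct and follows the paper's proof essentially step by step. It uses the same construction of $p'$ from a homomorphism (replace variables mapped to individuals, then identify variables with equal images) and the same composition argument for the converse, which, as you note, does not need forest-shapedness. For Point~2 it uses the same decomposition of $p'\setminus\mathcal{A}$ into tree components, combined with the correspondence between tree-shaped CQs and concepts that the paper proves by induction on role depth. Your injective-embedding argument showing that $G_{\mathcal{I}_{p'}}$ is a subgraph of $G_{\mathcal{I}}$ fills in a step the paper only asserts (``since $\mathcal{I}$ is a forest model \ldots\ $p'$ is an $\mathcal{L}$-forest variation'').
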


\subsection[Algorithms and Complexity: EL and EL\_bot]{Algorithms and Complexity: \EL and $\EL_\bot$}
\label{sect:EL}

We use Theorem~\ref{thm:charUCQwithFin} to  develop  decision procedures for $(\Lmc,\text{UCQ})$-ontology fitting with $\Lmc \in \{ \EL, \EL_\bot \}$, and also prove matching lower bounds. Clearly, the same procedures can then also be used for $(\Lmc,\text{CQ})$-ontology fitting.
The general idea is to check the existence of a finite interpretation \Imc  that satisfies Points~(a) and~(b) of
Theorem~\ref{thm:charUCQwithFin}
by independently checking the existence of each finite interpretation $\Imc_e$ from Point~(a). A  challenge is posed by 
Point~(b) due to the  infinite nature of simulations illustrated by Example~\ref{ex:simscausetrouble}. We thus first identify a suitable reformulation of Theorem~\ref{thm:charUCQwithFin}. 

The reformulation of Theorem~\ref{thm:charUCQwithFin} uses ontologies
formulated in the 
extension of $\EL_\bot$
and $\ELI_\bot$ with simulation quantifiers, as 
studied for $\EL_\bot$ in 
\cite{lutz2010enriching}. 
 Let $\Lmc \in \{ \EL, \ELI \}$.
An $\Lmc_\bot^\mn{sim}$-concept $C$ is defined
like an $\Lmc_\bot$-concept, but
additionally admits quantifiers of the form $\exists^{\mn{sim}}_\Lmc(\Imc, d)$    
where $\Imc$ is a finite interpretation with $d\in \Delta^\Imc$ that interprets only finitely many role and concept names (all others are considered empty). An
$\Lmc_\bot^\mn{sim}$-ontology is then defined as expected. 
The semantics of the additional quantifier is defined by \[
(\exists^{\mn{sim}}_\Lmc(\Jmc, e))^\Imc = \{d \in \Delta^\Imc \mid (\Jmc, e) \preceq_{\Lmc} (\Imc, d)\}.
\]
\emph{$\Lmc_\bot^{\mn{sim}}$-query entailment}
is the problem to decide, given an $\Lmc_\bot^{\mn{sim}}$-ontology \Omc, an ABox \Amc, and an \Lmc-query $q$, whether \mbox{$\Amc \cup \Omc \models q$}. \emph{Finite \Lmc-query entailment} is defined likewise, based on `$\models^{\mn{fin}}$'.
The
following is a consequence of Theorems~10 and~11 of  
\cite{lutz2010enriching}.
\begin{theorem} 
\label{thm:ELsimPTime}
Finite $\EL_\bot^{\mn{sim}}$-query entailment is in \PTime.
\end{theorem}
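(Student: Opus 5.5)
We derive the statement from Theorems~10 and~11 of \cite{lutz2010enriching}, which concern $\EL_\bot$ extended with simulation quantifiers. The work consists of three reductions.

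\textbf{Step 1: finite versus unrestricted entailment.} Show that for $\EL_\bot^{\mn{sim}}$ the two notions of entailment agree on \EL-queries, so that the unrestricted \PTime procedure of \cite{lutz2010enriching} also decides finite entailment.
\begin{itemize}
\item One direction is trivial: $\Amc\cup\Omc\models q$ implies $\Amc\cup\Omc\models^{\mn{fin}} q$.
\item For the converse, assume $\Amc\cup\Omc\not\models q$. Take the canonical (least) model construction of \cite{lutz2010enriching}, or a countermodel.
\item Then argue that a finite countermodel can be obtained. The natural finite candidate is a type-based quotient: one element per conjunction of subconcepts of \Omc (together with the finitely many simulation-quantifier subformulas) that is needed, plus the individuals.
\item The key point is that simulation quantifiers $\exists^{\mn{sim}}_\EL(\Jmc,e)$ occurring only positively on the left of CIs are preserved under the quotient. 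A simulation from finite $\Jmc$ into the quotient must be lifted back to the original model. Forward \EL-simulations into a quotient by \EL-types lift precisely because every quotient element realizes a type that is "saturated": if the quotient admits a simulation from $(\Jmc,e)$, then the characteristic concept $C^{\EL,k}_{\Jmc,e}$ holds there for all $k$. By Lemma~\ref{lem:finmodsim}, restricted to finite structures, this is equivalent to the simulation quantifier.
\end{itemize}

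\textbf{Step 2: the main obstacle.} Example~\ref{ex:simscausetrouble} shows that, for \ELI, infinite paths can simulate loops that are absent in finite models. Finite and unrestricted entailment may therefore genuinely differ.
\begin{itemize}
\item For \EL, we would check that Theorem~11 of \cite{lutz2010enriching} already states finite controllability, or that its canonical model is finite. Their completion algorithm introduces one witness element per right-hand side concept $\exists r.C$, which yields a finite canonical model.
\item We would then show that this finite canonical model is universal also for finite models: it maps into every finite model via a simulation.
\item The first thing to try is the standard argument that the canonical model $\Imc_{\Amc,\Omc}$ \EL-simulates into every model. Combined with Lemma~\ref{lem:SimPresConc} and the fact that simulation quantifiers are preserved along simulations, this establishes universality.
\end{itemize}

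\textbf{Step 3: queries and \PTime.} Existential and rooted \EL-queries can be evaluated in the finite canonical model in polynomial time using Lemma~\ref{lem:simptime}. Positive occurrences of $\exists^{\mn{sim}}$ are also checked with Lemma~\ref{lem:simptime}, during saturation. This gives the \PTime bound.
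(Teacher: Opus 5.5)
Steps~2 and~3 are the part of your plan that carries the proof, and they follow the paper's route. The paper simply cites Theorems~10 and~11 of \cite{lutz2010enriching} for the unrestricted \PTime{} bound. It cites the same article for the coincidence of finite and unrestricted entailment, which rests on every consistent $\Amc\cup\Omc$ having a \emph{finite} universal model $\Imc$: a finite model that \EL-simulates into every model of $\Amc\cup\Omc$, identically on individuals. Since \EL-queries are preserved under simulations, this gives $\Amc\cup\Omc\models q$ iff $\Imc\models q$ iff $\Amc\cup\Omc\models^{\mn{fin}} q$. So Step~1 is unnecessary, and it should be discarded because, as written, it fails.

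\textbf{Why Step~1 fails.} Inside the finite quotient, Lemma~\ref{lem:finmodsim} does make $\exists^{\mn{sim}}_\EL(\Jmc,e)$ equivalent to a characteristic concept. But nothing transfers this back to the original countermodel. That lemma needs \emph{both} interpretations to be finite, and a quotient over finitely many types cannot reflect characteristic concepts $C^{\EL,k}_{\Jmc,e}$ of unbounded depth. Here is a concrete case:
\begin{enumerate}
\item Let $(\Jmc,e)$ be a single $A$-element with an $r$-self-loop, $\Omc=\{\exists^{\mn{sim}}_\EL(\Jmc,e)\sqsubseteq\bot\}$ and $\Amc=\{A(a)\}$.
\item Let $\mathcal{M}$ consist of $a\in A^{\mathcal{M}}$ with, for every $n\geq 1$, an outgoing $r$-path of $n$ further $A$-elements.
\item $\mathcal{M}$ is a model of $\Amc\cup\Omc$, since it has no infinite $r$-path. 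Yet $a\in (C^{\EL,k}_{\Jmc,e})^{\mathcal{M}}$ for every $k$. So finite approximations of a simulation do not lift in infinite models.
\item Every finite quotient of $\mathcal{M}$ (a homomorphic image obtained by merging elements) identifies two elements on some long path. This produces an $A$-cycle into which $(\Jmc,e)$ simulates, so the quotient violates $\Omc$.
\end{enumerate}
Hence the countermodel cannot be arbitrary; it has to be the canonical one, which is exactly your Step~2.

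\textbf{Circularity in Step~2.} You say that Lemma~\ref{lem:SimPresConc} and preservation of simulation quantifiers along simulations ``establish universality.'' They do not. These facts transfer satisfaction once a simulation from the canonical model into a given model exists, but they do not produce that simulation. Universality is the real content you are importing from \cite{lutz2010enriching}. If you want to prove it yourself:
\begin{enumerate}
\item Show by induction over the saturation steps that a certain relation is an \EL-simulation into every model $\mathcal{M}$ of $\Amc\cup\Omc$. It maps every individual to itself, and every auxiliary witness element to all elements of $\mathcal{M}$ satisfying the concept for which it was introduced.
\item Whenever a left-hand side $\exists^{\mn{sim}}_\EL(\Jmc',e')$ fires at an element $d$ of the partially saturated structure, compose the witnessing simulation with the current relation. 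This shows that every element related to $d$ satisfies $\exists^{\mn{sim}}_\EL(\Jmc',e')$ in $\mathcal{M}$, so the added right-hand side is sound.
\item The theorem covers arbitrary $\EL_\bot^{\mn{sim}}$-ontologies, where simulation quantifiers may also occur on right-hand sides. Your witness construction must therefore also include a copy of $\Jmc$ for every right-hand side $\exists^{\mn{sim}}_\EL(\Jmc,e)$, not only one witness per $\exists r.C$.
\end{enumerate}
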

Actually, Theorem~\ref{thm:ELsimPTime}
is proved in \cite{lutz2010enriching}
for the unrestricted case rather than
for the finite case, but it is shown in the same article that the two cases coincide.

\smallskip

We now state the announced reformulation, which will
enable us to implement $(\Lmc,\text{UCQ})$-ontology fitting, for $\Lmc \in \{ \EL, \EL_\bot \}$, in terms of finite $\EL_\bot^{\mn{sim}}$-query entailment.
\begin{restatable}{theorem}{charVarSimQuant}\label{lem:charVarSimQuant}
    Let $E=(E^+, E^-)$ be a collection of labeled ABox-UCQ examples with $E^-\neq\emptyset$ and $ 
    \Lmc \in \{\EL, \ELbot,\allowbreak  \ELI, \ELIbot\}$. Then  
    for each $e \in E^+$, there is a set $\Gamma_{e}$ of $\Lmc_\bot^{\mn{sim}}$-ontologies, each of size linear in $||e||$, such that
    the following are equivalent:

    \begin{enumerate}

    \item there is an \Lmc-ontology that fits  $ E $;
    
        \item   
    there is a finite interpretation \Imc such that
    \begin{enumerate}

        \item $ \Imc = \biguplus_{e \in E^-}  \Imc_e$ where, for each 
					 $ e=( \Amc, q ) \in E^- $, $ \Imc_e $ is a model of \Amc such that $ \Imc_e \not \models p' $, for every \Lmc-forest \Amc-variation $p'$ of a CQ in $q$;
                     
        \item for all $e \in E^+$, there exists some $\Omc_e \in \Gamma_e$ such that $\Imc \models \Omc_e$.
            \end{enumerate}
    \end{enumerate}
\end{restatable}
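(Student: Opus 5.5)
The plan is to take the characterization of Theorem~\ref{thm:charUCQwithFin} and rewrite its Point~(b), one positive example at a time, as the satisfaction of a suitable $\Lmc_\bot^{\mn{sim}}$-ontology. Point~(a) is changed only in a routine way. By Lemma~\ref{lem:forest}, if some model $\Imc_e$ of \Amc with $\Imc_e\not\models q$ exists, then an \Lmc-forest model of this kind exists. On forest models, Lemma~\ref{prop:qEntIffVarEnt} turns $\Imc_e \not\models q$ into $\Imc_e \not\models p'$ for all \Lmc-forest \Amc-variations $p'$ of CQs in $q$. Conversely, a finite $\Imc_e$ with this property can be unravelled into a forest model that refutes $q$, while Point~(b) is evaluated on the finite $\Imc$ itself. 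So the real work is in Point~(b).

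\emph{Step 1: reduce to the maximal simulation.} Fix $e=(\Amc,q)\in E^+$. Let $S^*=\{(a,d)\mid (\Amc,a)\preceq_\Lmc(\Imc,d)\}$. This is the union of all \Lmc-simulations from \Amc to \Imc, hence itself a simulation. In the $\bot$-case, if no total simulation exists, Point~(b) holds vacuously. I would show that $S\subseteq S'$ yields a homomorphism from $\Imc_{\Amc,S',\Lmc}$ to $\Imc_{\Amc,S,\Lmc}$ that is the identity on $\mn{ind}(\Amc)$. The ingredients are as follows. The products in $\Jmc_a$ over more factors project onto the products over fewer factors. The unravelings are compatible with this projection. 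Finally, $A^{\Imc_0}$ only shrinks as $S$ grows. Since UCQs are preserved under such homomorphisms, Point~(b) is equivalent to $\Imc_{\Amc,S^*,\Lmc}\models q$. The key point is that $S^*(a)=(\exists^{\mn{sim}}_\Lmc(\Amc,a))^\Imc$, so the data relevant to $e$ is definable with a single simulation quantifier per individual.

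\emph{Step 2: define $\Gamma_e$ via variations.} $\Imc_{\Amc,S^*,\Lmc}$ is, up to the cycles in \Amc, an \Lmc-forest model of \Amc. By Lemma~\ref{prop:qEntIffVarEnt}, $q$ holds in it iff there are a CQ $p$ in $q$ and an \Lmc-forest \Amc-variation $p'$ of $p$ all of whose reduction queries hold. Condition~1 of forest variations makes the ABox role atoms automatic.
\begin{itemize}
\item A rooted reduction query $C(a)$ holds iff $C$ holds at the root of the unraveled product $\Jmc_a$, or, for concept names, in $\Imc_0$. By Lemma~\ref{lem:prodlem} and Lemma~\ref{lem:unravbasic}, this happens iff every $d$ with $(a,d)\in S^*$ satisfies $C$. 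This is exactly $\Imc\models\exists^{\mn{sim}}_\Lmc(\Amc,a)\sqsubseteq C$.
\item An existential reduction query $\exists x\,C(x)$ holds iff it is realized in some $\Jmc_b^{\downarrow\Lmc}$.
\end{itemize}
Accordingly, $\Gamma_e$ contains one ontology $\Omc_e$ for every choice of $p$ and $p'$ (restricted to one variation per $p$, so that each $\Omc_e$ has linear size) and every choice of a witness individual $b$ for each existential part. Each $\Omc_e$ collects the CIs above. For the equivalence of Points~2 of the two theorems I then verify both directions pointwise using Step~1.

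\emph{Main obstacle.} I expect the existential reduction queries to be the hardest part. A witness in $\Jmc_b^{\downarrow\Lmc}$ is an element reachable from the root of the product by a single role word, and it must satisfy $C$ in every factor simultaneously. This is a synchronized reachability condition, and its length can be exponential in the size of the finite interpretation. It cannot be written naively as a CI $\exists^{\mn{sim}}_\Lmc(\Amc,b)\sqsubseteq\exists r_1.\cdots\exists r_n.C$ of linear size.

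My first attempt would be to exploit the simulation quantifier once more. I would look for a characterization of the form: there is an element $d$ of \Imc with $d\in C^\Imc$, together with a pointed interpretation $(\Jmc,e)$ built from \Amc and a fresh root connected to $b$, such that $(\Jmc,e)\preceq_\Lmc(\Imc,d)$ captures the required synchronized paths. This would be phrased as CIs over fresh auxiliary concept names that mark, via $\exists^{\mn{sim}}$, the elements lying on paths from all of $S^*(b)$.

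If that fails, I would instead allow in $\Gamma_e$ CIs whose right-hand side is a simulation quantifier $\exists^{\mn{sim}}_\Lmc(\Imc_{p'},x)$ applied to the query tree. I would then argue via Lemma~\ref{lem:finmodsim} that, in finite interpretations, this is equivalent to the synchronized condition. Finiteness of \Imc is what makes this last argument go through.
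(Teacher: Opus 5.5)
\textbf{Verdict: genuine gap.} Your reduction to the maximal simulation $S^*$ is correct, and so is your treatment of rooted reduction queries via $S^*(a)=(\exists^{\mn{sim}}_\Lmc(\Amc,a))^\Imc$ and Corollary~\ref{lem:IASLsim}. This agrees with the paper, which skips your homomorphism $\Imc_{\Amc,S',\Lmc}\to\Imc_{\Amc,S,\Lmc}$ and simply uses $aS\subseteq(\exists^{\mn{sim}}_\Lmc(\Amc,a))^\Imc$ for arbitrary $S$. The gap is exactly the obstacle you flag: you never give the CI for an existential reduction query $\exists x\,C(x)$. No such CI can exist as long as you try to show, for one and the same $\Imc$, that Point~(b) of Theorem~\ref{thm:charUCQwithFin} holds iff $\Imc$ satisfies a member of $\Gamma_e$. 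Satisfaction of $\Lmc^{\mn{sim}}_\bot$-ontologies is preserved under disjoint unions (Corollary~\ref{cor:simdisju}), but the synchronized condition is not. For $\Lmc=\EL$ and $e=(\{A(b)\},\exists x\,C(x))$, let $\Imc^{(k)}$ consist of a single $A$-element whose only path to a $C$-element is labelled $r^ks$. Every $\Imc^{(k)}$ satisfies Point~(b). For $j\neq k$, however, the root of $\Jmc_b$ in $(\Imc^{(j)}\uplus\Imc^{(k)})_{\Amc,S^*,\EL}$ has no role word leading to $C$, so Point~(b) fails for the union. Hence no two $\Imc^{(k)}$ may satisfy a common member of $\Gamma_e$, so $\Gamma_e$ would need infinitely many pairwise inequivalent members, which contradicts the size bound. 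Your second fallback does not help: on a tree-shaped query a simulation quantifier is equivalent to the tree concept itself, and Lemma~\ref{lem:finmodsim} bounds simulation depth, not the length of a shared role word. Your first fallback, as described, again only constrains individual elements of $\Imc$.

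The missing idea is that $\Imc$ is existentially quantified, so it may interpret a \emph{fresh} role name $u$ however you like. In direction 1$\Rightarrow$2, take the interpretation from Theorem~\ref{thm:charUCQwithFin} and set $u^{\Imc_e}=\Delta^{\Imc_e}\times\Delta^{\Imc_e}$ on each component. Since $u$ does not occur in $E$, Points~(a) and~(b) are unaffected. Now a single $u$-step leads from the root of $\Jmc_a$ (the tuple $(d)_{d\in aS}$) to every tuple $(c_d)_{d\in aS}$ with $c_d$ in the component of $d$, so synchronization becomes trivial. By Lemma~\ref{lem:univrole}, $\Imc_{\Amc,S,\Lmc}\models\exists x\,C(x)$ iff $\Imc_{\Amc,S,\Lmc}\models\exists u.C(a)$ for some $a\in\mn{ind}(\Amc)$. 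By Corollary~\ref{lem:IASLsim}, this is the pointwise condition $aS\subseteq(\exists u.C)^\Imc$, that is, the CI $\exists^{\mn{sim}}_\Lmc(\Amc,a)\sqsubseteq\exists u.C$ for maximal $S$. The chosen $a$ is the witness individual that $\Gamma_e$ ranges over. Conversely, this CI yields $\Imc_{\Amc,S,\Lmc}\models\exists u.C(a)$, hence $\exists x\,C(x)$, for every $S$. The fitting ontology built from $\Imc$ then contains $C^{\ell,\Lmc}_{\Amc,a}\sqsubseteq\exists u.C$, so $u$ becomes an auxiliary role.

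Two smaller repairs are also needed.
\begin{itemize}
\item For $\Lmc\in\{\ELbot,\ELIbot\}$, $\Gamma_e$ must also contain the ontologies $\{\exists^{\mn{sim}}_\Lmc(\Amc,a)\sqsubseteq\bot\}$. Otherwise an $\Imc$ admitting no total simulation from $\Amc$, for which Point~(b) is vacuous, may satisfy no member of $\Gamma_e$.
\item The new Point~(a) is weaker than the old one, so direction 2$\Rightarrow$1 cannot cite Theorem~\ref{thm:charUCQwithFin} as a black box. You must rerun its construction under the weaker hypothesis, as in Lemma~\ref{cor:charUCQfinaalt}. Use $(\Imc_e)_{\Amc,S,\Lmc}$, with $S$ the identity on $\mn{ind}(\Amc)$, together with Lemma~\ref{lem:IASLsubI} as the forest countermodel.
\end{itemize}
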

\noindent
\begin{proof}
    Let $e=(\Amc, q) \in E^+$ and $u$ be a fresh role name. 
    The set $\Gamma_{e}$ contains the following ontologies:
    \begin{itemize}
        \item for every \Lmc-forest \Amc-variation $p'$ of a CQ in $q$ and every way to choose an individual $a_{q'}\in \mn{ind}(\Amc)$ for every existential reduction query $q'$ of $p'$,  the ontology $\Omc$  that contains the following CIs:
        \begin{align*}
            \exists_\Lmc^{\mn{sim}}(\Amc, a) &\sqsubseteq C &\text{ for every }C(a) \in \mn{red}(p');\\
            \exists_\Lmc^{\mn{sim}}(\Amc, a_{q'}) &\sqsubseteq \exists u. C &\hspace*{-0.2cm}\text{for every } q'=\exists x\, C(x) \in \mn{red}(p').
        \end{align*}
        \item if $\Lmc \in \{ \EL_\bot,\ELI_\bot \}$, then 
     for every $a \in \mn{ind}(\Amc)$, the ontology $\Omc_{a} = \{\exists_\Lmc^{\mn{sim}}(\Amc, a) \sqsubseteq \bot\}.
    $
    \end{itemize}
    It is shown in the appendix that  $\Gamma_e$ is as required.
\end{proof}
The new formulation of Point~(b)
is helpful in the following way.
Before independently checking the existence of the interpretations $\Imc_e$ from Point~(a) of 
Theorem~\ref{lem:charVarSimQuant},
we choose 
an ontology $\Omc_e \in \Gamma_e$
for every $e \in E^+$ and make sure that each $\Imc_e$ satisfies all chosen  $\Omc_e$. The existence check for a single $\Imc_e$ then amounts to a 
suitable collection of  non-entailment 
checks of the form $\Amc \cup \bigcup_{e \in E^+}\Omc_{e} \not\models^{\mn{fin}} q'$.

\begin{theorem}
\label{thm:CQinSigmaptwo}
    Let $\Lmc \in \{ \EL, \ELbot \}$. Then  $(\Lmc,\text{UCQ})$-ontology fitting  is in $\Sigma^\textsc{P}_2$.
\end{theorem}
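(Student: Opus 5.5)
We decide the condition in Point~2 of Theorem~\ref{lem:charVarSimQuant} with an $\exists\forall$ procedure: guess a choice of ontologies, then verify finite non-entailments with an \NPclass oracle. The existential step guesses, for every positive example $e=(\Amc,q)\in E^+$, one ontology $\Omc_e\in\Gamma_e$. Each member of $\Gamma_e$ has size linear in $||e||$. It is specified by one of three polynomial-size objects: an \Lmc-forest \Amc-variation $p'$ of some CQ in $q$ together with the individuals $a_{q'}$ chosen for the existential reduction queries, or, in the $\EL_\bot$ case, a single individual $a$. So the guess has polynomial size. Let $\Omc^*=\bigcup_{e\in E^+}\Omc_e$, an $\EL_\bot^{\mn{sim}}$-ontology of polynomial size.

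We then have to check that, for every negative example $e=(\Amc,q)\in E^-$, there is a finite model $\Imc_e$ of $\Amc$ and $\Omc^*$ with $\Imc_e\not\models p'$ for every \Lmc-forest \Amc-variation $p'$ of a CQ in $q$. Point~(b) of Theorem~\ref{lem:charVarSimQuant} is then met by $\Imc=\biguplus_e\Imc_e$. The reason is that $\EL_\bot^{\mn{sim}}$-ontologies, including the $\exists u$-concepts and simulation quantifiers, are preserved under disjoint union: a simulation from $(\Amc,a)$ into a disjoint union lands in one component, since $\Amc$ may be assumed connected after splitting examples; otherwise we argue componentwise. Conversely, any $\Imc$ as in Point~2 restricts to such $\Imc_e$.

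The difficulty is that we need a \emph{single} finite model that avoids \emph{all} variations $p'$ simultaneously. By Lemma~\ref{prop:qEntIffVarEnt}(2), $\Imc_e\models p'$ iff $\Imc_e\models q''$ for all $q''\in\mn{red}(p')$. Hence a model avoids $p'$ iff it falsifies some reduction query of $p'$. I would show, using forest/canonical models, that for Horn $\EL_\bot^{\mn{sim}}$ there is a least (universal) finite model up to the relevant queries: the entailed \Lmc-queries are exactly those true in one suitable model. This follows the analysis in \cite{lutz2010enriching}, where finite and unrestricted entailment coincide and the canonical model is used. Granting this, the condition for $e$ becomes: $\Amc$ is consistent with $\Omc^*$, and for every variation $p'$ some $q''\in\mn{red}(p')$ satisfies $\Amc\cup\Omc^*\not\models^{\mn{fin}}q''$.

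The universal step ranges over all negative examples $e$ and all polynomial-size variations $p'$. For each pair it asks whether some reduction query is non-entailed, which is a polynomial number of \PTime checks by Theorem~\ref{thm:ELsimPTime}. Consistency of $\Amc$ with $\Omc^*$ is also a \PTime check, namely non-entailment of $\exists x\,\bot(x)$. The whole verification is therefore a \coNPclass check, and overall we obtain $\Sigma^{\textsc{P}}_2$.

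The main obstacle is the universal-model claim: that the per-variation non-entailments can be witnessed by one common finite model. I would first try the canonical model of $\Amc\cup\Omc^*$ from \cite{lutz2010enriching}, made finite while preserving non-entailment of the polynomially many relevant \Lmc-queries. One way is to unravel only up to the depth given by Lemma~\ref{lem:finmodsim} and then fold back; another is to take a product of finite countermodels, using Lemma~\ref{lem:prodlem}. The product route is problematic, because simulation quantifiers need not be preserved under products in the required direction. If that fails, I would fall back on the fact that the canonical model refutes exactly the non-entailed tree queries, together with finite-controllability from \cite{lutz2010enriching}.
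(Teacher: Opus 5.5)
Your proposal is correct and is essentially the paper's own algorithm. You guess $\Omc_e\in\Gamma_e$ for every positive example, then range universally over negative examples and \Lmc-forest \Amc-variations $p'$, checking in \PTime via Theorem~\ref{thm:ELsimPTime} that some query in $\mn{red}(p')$ is not finitely entailed. The finite universal model of $\Amc\cup\Omc^*$ from \cite{lutz2010enriching} serves as the glue, exactly as the main text explains. Your explicit consistency test is a good addition, since it handles negative examples whose CQs admit no \Lmc-forest variation at all.

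One correction: your dismissal of the product route is mistaken, and the product is in fact what the appendix proof of Lemma~\ref{lem:elicorrectness} uses. In the ontologies of $\Gamma_e$, simulation quantifiers occur only on left-hand sides, so you only need their preservation along the projection homomorphisms out of the product. The paper even proves full invariance of $\EL_\bot^{\mn{sim}}$-concepts under products. Hence the product of the finite countermodels $\Imc_{e,p'}$, with each $(a,\dots,a)$ renamed to $a$, is already a single finite model of $\Amc\cup\Omc^*$ that refutes every $p'$ simultaneously, without appealing to universal models at all.
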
    

To prove Theorem~\ref{thm:CQinSigmaptwo},  let $\Lmc \in \{ \EL, \ELbot \}$. We use the following algorithm. Given a collection $E=(E^+,E^-)$ of  labeled ABox-UCQ examples with 
$ E^- \neq \emptyset  $, the algorithm first guesses, for each
positive example $e\in E^+$,
an ontology $\Omc_{e}$ from the
set $\Gamma_{e}$. It then co-guesses, for every negative example
$(\Amc,q) \in E^-$ and every CQ $p$
in~$q$, an \Lmc-forest $\Amc$-variation
$p'$ of $p$ and uses Theorem~\ref{thm:ELsimPTime} to verify in \PTime that $\Amc \cup \bigcup_{e \in E^+}\Omc_{e} \not\models^{\mn{fin}} q'$ for some  $q' \in \mn{red}(p')$ (of which there are at most $|\mn{ind}(\Amc) \cup \mn{Var}(q)|$ many).
The following is proved in the appendix.
\begin{restatable}{lemma}{elicorrectness}
    \label{lem:elicorrectness}
 The algorithm accepts its input $E$ if
 and only if there is an \Lmc-ontology that fits $E$.
\end{restatable}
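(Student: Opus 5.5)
We reduce the lemma to Theorem~\ref{lem:charVarSimQuant}: it suffices to show that the algorithm accepts $E$ if and only if Point~2 of that theorem holds. First we make the acceptance condition explicit. The algorithm accepts iff there is a choice of $\Omc_e \in \Gamma_e$ for every $e \in E^+$ such that, writing $\Omc^* = \bigcup_{e\in E^+}\Omc_e$, the following holds: for every $(\Amc,q) \in E^-$, every CQ $p$ in $q$ and every \Lmc-forest \Amc-variation $p'$ of $p$, some $q' \in \mn{red}(p')$ satisfies $\Amc \cup \Omc^* \not\models^{\mn{fin}} q'$. The existential guess ranges over polynomially many sets $\Gamma_e$, each consisting of linear-size ontologies. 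The universal co-guess ranges over polynomial-size variations, and the check itself is in \PTime by Theorem~\ref{thm:ELsimPTime}.

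\emph{Soundness ("accepts" implies Point~2).} Fix the accepting choice $\Omc^*$ and a negative example $e=(\Amc,q)$. For each pair $(p,p')$, the acceptance condition yields a finite model $\Imc_{p'}$ of $\Amc \cup \Omc^*$ with $\Imc_{p'} \not\models q'$ for some $q' \in \mn{red}(p')$. By Lemma~\ref{prop:qEntIffVarEnt}(2), this gives $\Imc_{p'} \not\models p'$.

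We must merge these models into a single finite model $\Imc_e$ of \Amc and $\Omc^*$ that satisfies no $p'$. For this we take the direct product $\prod_{p'} (\Imc_{p'})$ and keep the individuals as the diagonal tuples $(a,\dots,a)$, so the product is a model of \Amc. Two facts make the product work:
\begin{itemize}
  \item It is still a model of $\Omc^*$. The CIs have the forms $\exists^{\mn{sim}}_\Lmc(\Amc,a)\sqsubseteq C$, $\exists^{\mn{sim}}_\Lmc(\Amc,a)\sqsubseteq \exists u.C$ and $\exists^{\mn{sim}}_\Lmc(\Amc,a)\sqsubseteq\bot$. If $\Amc$ simulates into the product at a tuple, then projecting the simulation gives a simulation into every factor at the corresponding component. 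Hence each factor satisfies the right-hand side, and by Lemma~\ref{lem:prodlem}(2) so does the product (for $\bot$, the left-hand side is then empty in the factors, hence in the product).
  \item It satisfies no $p'$, because a match of $p'$ in the product would project via the homomorphism of Lemma~\ref{lem:prodlem}(1) to a match in $\Imc_{p'}$.
\end{itemize}
If there is no negative example, or $\Omc^*$ contains a $\bot$-CI, there are minor edge cases; we handle them by noting that the product of finitely many finite models is finite. Taking $\Imc = \biguplus_e \Imc_e$ then gives Point~2(a). For Point~2(b), the ontologies in $\Omc^*$ are invariant under disjoint union: simulations from the connected components of \Amc land inside a single $\Imc_e$, and for disconnected \Amc we handle each component separately. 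We expect this disjoint-union step to need care, since $\exists^{\mn{sim}}(\Amc,a)$ involves all of \Amc.

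\emph{Completeness (Point~2 implies "accepts").} Take \Imc and the sets $\Omc_e$ from Point~2, and let the algorithm guess exactly these $\Omc_e$. Then each $\Imc_e$ is a finite model of $\Omc^*$: this holds because $\Imc \models \Omc^*$ and satisfaction restricts to disjoint components, again using the invariance of simulation quantifiers under disjoint union. By Point~2(a), $\Imc_e \not\models p'$ for every $p'$, so by Lemma~\ref{prop:qEntIffVarEnt}(2) some $q'\in\mn{red}(p')$ fails in $\Imc_e$. Thus $\Amc\cup\Omc^*\not\models^{\mn{fin}} q'$, and every co-guessed branch accepts.

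The main obstacle is the product step in soundness. Simulation quantifiers are not generally preserved downward under products, so we must check direction carefully: we need that the left-hand sides holding in the product implies they hold in the factors. This is the true direction, because product projections are homomorphisms, and homomorphisms are simulations and compose with them.
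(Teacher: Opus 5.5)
You follow the paper's argument. You reduce to Theorem~\ref{lem:charVarSimQuant}, obtain completeness by restricting $\Imc$ to its components $\Imc_e$, and obtain soundness from the direct product of the finite witnesses $\Imc_{p'}$ (diagonal tuples as individuals) followed by a disjoint union. Your direct check that the product satisfies $\Omc^*$ is correct and a bit leaner than the paper's general invariance of $\Lmc^{\mn{sim}}_\bot$-concepts under products: you pull the left-hand sides back along the projections and push the right-hand sides, which are plain $\Lmc$-concepts, forward by Lemma~\ref{lem:prodlem}(2).

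Two of your side remarks are off, though nothing depends on them:
\begin{itemize}
\item Simulation quantifiers \emph{are} preserved from the factors to the product, because the componentwise product of simulations is a simulation into the product.
\item The disjoint-union step needs no special care. $\EL$- and $\ELI$-simulations need not be total, so only the part of $\Amc$ reachable from $a$ must be simulated, and that part lands in a single component.
\end{itemize}

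The genuine hole, which the paper's proof shares, is a negative example $(\Amc,q)$ for which no CQ in $q$ has any $\Lmc$-forest $\Amc$-variation, e.g.\ $q=\exists x\, r(x,x)$ with $\Amc=\{A(a)\}$. Your product is then indexed by the empty family, so it is the one-element maximal interpretation. In general that interpretation cannot host the individuals of $\Amc$, and it violates every CI with right-hand side $\bot$. Your remark on edge cases does not cover this.

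For $\EL_\bot$ and $\ELI_\bot$ the lemma as stated actually fails here. Take $E^+=\{(\{A(b)\},\exists x\, r(x,x))\}$ and $E^-=\{(\{A(a)\},\exists x\, r(x,x))\}$. The only member of $\Gamma_e$ for the positive example is $\{\exists^{\mn{sim}}_\EL(\{A(b)\},b)\sqsubseteq\bot\}$, and the co-guess ranges over the empty set, so the algorithm accepts. Yet no ontology fits two isomorphic examples with opposite labels.

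The repair is to let the algorithm also check, for every negative example, that $\Amc\cup\Omc^*$ is finitely consistent (e.g.\ $\Amc\cup\Omc^*\not\models^{\mn{fin}} X(a)$ for a fresh concept name $X$). The witnessing finite model is then added as an extra factor of the product. For $\EL$ and $\ELI$ this check always succeeds, since the interpretation on $\mn{ind}(\Amc)$ making all non-fresh concept and role names full is a model. So in those cases only the proof, not the statement, needs this fix.
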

The proof of Lemma~\ref{lem:elicorrectness} crucially exploits that $\EL_\bot^{\mn{sim}}$ is a Horn DL. More precisely,
it is shown in \cite{lutz2010enriching}
that for every ABox \Amc and $\EL_\bot^{\mn{sim}}$-ontology~\Omc, there is a finite \emph{universal model} $\Imc_{\Amc \cup \Omc}$, that is, $\Imc_{\Amc \cup \Omc}$ is a model of $\Amc$ and \Omc that admits an \Lmc-simulation $S$ into every model \Jmc of 
$\Amc$ and \Omc such that $S$ is the identity
on $\mn{ind}(\Amc)$.This universal model serves as the `glue' that allows us to recover a single interpretation $\Imc_e$ from the different checks that our algorithm makes for every \Amc-variation $p'$ of a negative example $e=(\Amc,q)$ and CQ $p$ in $q$.

\smallskip

We also show that the complexity of our algorithm is optimal. The following theorem is proved by reduction from
\mn{2}-\mn{COLORING}-\mn{EXTENSION}.
\begin{restatable}{theorem}{CQSigmaptwohard}
    \label{thm:CQSigmaptwohard}
    Let $\Lmc \in \{ \EL, \ELbot \}$. Then  $(\Lmc,\text{rooted CQ})$-ontology fitting  is $\Sigma^\textsc{P}_2$-hard.
\end{restatable}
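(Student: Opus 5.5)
We reduce from \mn{2}-\mn{COLORING}-\mn{EXTENSION}. Its input is a graph $G=(V,F)$ and a set $U \subseteq V$ of distinguished vertices. The question is whether there is a 2-coloring of $U$ such that every extension to all of $V$ fails to be a proper 2-coloring, i.e.\ some edge is monochromatic. This is the standard $\Sigma^\textsc{P}_2$-complete formulation; if the intended variant is the complementary quantifier pattern, the gadgets below are adjusted accordingly. In our fitting problem the existential quantifier will be the choice of a fitting ontology, which by Theorem~\ref{lem:charVarSimQuant} amounts to choosing some $\Omc_e \in \Gamma_e$ for each positive example. The universal quantifier will be non-entailment of a negative example, which amounts to checking that no CQ homomorphism exists into the universal model. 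The plan is to encode the coloring of $U$ by forcing, for each $u \in U$, a choice between two incompatible consequences via positive examples. The remaining vertices are then checked by a single negative example whose query is essentially the CQ of $G$ mapped into the two-element clique $K_2$.

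\emph{Construction.} We use a single negative example $(\Amc_0,q_0)$ with individual $a$. The ABox $\Amc_0$ is built so that the canonical model below $a$ contains, for each $u \in U$, an $r_u$-successor that must become either a ``red'' or a ``blue'' element. For each $u\in U$ we add a positive example $(\{A(a)\}, \exists x\,r_u(a,x) \wedge (\text{red}(x) \text{ or } \text{blue}(x)))$, where the disjunction is emulated with a rooted CQ. The idea is that the query asks for an $r_u$-successor $x$ that has an $s$-successor which is the ``other'' element of a fixed two-element gadget in $\Amc_0$. Simulation semantics then forces $x$ to copy one of the two gadget elements, and that is the choice of color. The query $q_0$ of the negative example is the rooted CQ that is the graph $G$ over variables $x_v$, with each $x_u$ for $u\in U$ attached to $a$ via $r_u$, and with edges $F$ realized as $e$-atoms. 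Here $e$ is interpreted in the gadget so that the gadget's $e$-structure forms a single edge between red and blue, plus self-loops where monochromatic edges are allowed. A match then corresponds to an extension that hits a monochromatic edge, or, depending on the orientation of the gadget, to a proper extension. We will choose the orientation so that $\Amc_0\cup\Omc\not\models q_0$ iff every extension of the chosen coloring fails, giving exactly the $\exists\forall$ pattern.

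\emph{Correctness.} For ``$\Rightarrow$'' (a witness coloring exists, so a fitting ontology exists), we write down an explicit \EL-ontology with CIs of the form $A \sqsubseteq \exists r_u . C_{\text{red}}$ or $A \sqsubseteq \exists r_u . C_{\text{blue}}$ according to the coloring, where $C_{\text{red}}$ and $C_{\text{blue}}$ are characteristic concepts (Lemma~\ref{lem:charconc}) of the gadget elements. We then verify the negative example on the finite universal model, using Lemma~\ref{lem:finmodsim} to replace simulations by bounded ones. For ``$\Leftarrow$'' we use Theorem~\ref{thm:charUCQwithFin}. Given a fitting ontology, take a finite $\Imc$ satisfying Points (a) and (b). Point (b) applied to each positive example yields, for each $u$, an $r_u$-successor of $a$ in $\Imc$ realizing one of the two colors, which we read off as a coloring of $U$. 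Any proper extension would then give a homomorphism of $q_0$ into $\Imc$, contradicting Point (a). The case of $\ELbot$ follows identically, since our examples avoid $\bot$-triggers, and consistency of the constructed ontology is trivial.

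\emph{Main obstacle.} The hardest step is designing the color gadget so that the following three things happen together. First, the positive examples genuinely force a \emph{binary} choice per $u$, and the ontology cannot satisfy them with an element simulating both colors at once, such as the maximal interpretation's loop. Second, simulations are too weak to distinguish elements beyond their \EL-types, so the gadget must give red and blue incomparable \EL-types while the query still detects monochromatic edges homomorphically. Third, the ontology must not be able to add extra $e$-edges that create spurious query matches. We would first try a gadget in which red and blue carry distinct concept names, $R$ and $B$, and $q_0$ uses these names directly, with $e$ modeled as the edge relation of $G$ copied into the query. We would then check that any element satisfying both $R$ and $B$ immediately violates the negative example, which rules out the ``both colors'' cheat.
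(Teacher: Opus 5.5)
\textbf{There is a genuine gap, and it starts with the source problem.} You reduce from: ``is there a 2-coloring of $U$ such that no extension to $V$ is a proper \emph{2}-coloring''. That problem is polynomial. Whether a partial 2-coloring extends to a proper 2-coloring is decided by propagating colors along edges, i.e.\ bipartiteness with precolored vertices. So your problem lies in \NPclass\ (in fact in \PTime), and no reduction from it can give $\Sigma^{\textsc{P}}_2$-hardness.

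The $\Pi^{\textsc{P}}_2$-complete problem \mn{2}-\mn{COLORING}-\mn{EXTENSION} asks whether every 2-coloring of $G_{|P}$ extends to a \emph{3}-coloring of $G$. The paper reduces from its complement. The universal quantifier over 3-colorings is exactly what non-entailment of a negative example expresses. The negative ABox $\Amc_3$ is $K_3$: individuals $a_1,a_2,a_3$, symmetric $r$-edges, complete $s$, and markers $T_1(a_1),T_2(a_2)$. The query $q_G$ is $G$ written with $r$-atoms, plus $s(a_1,x_i)$ for rootedness and $V_v(x_v)$ for $v\in P$. Your two-element red/blue gadget can only test 2-colorability, so it cannot play this role.

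\textbf{The second gap is the choice gadget, which you leave unresolved yourself.} A positive example $(\{A(a)\},q)$ cannot name individuals of $\Amc_0$: queries may only use individuals of their own ABox, and ABoxes of different examples are disjoint. Moreover, the forest models of $\{A(a)\}$ are trees below $a$, so nothing links a chosen $r_u$-successor to a gadget in the negative example. Your $\Leftarrow$ direction fails for the same reason. Point~(b) of Theorem~\ref{thm:charUCQwithFin}, via Corollary~\ref{lem:IASLsim}, only yields concept memberships at elements of $\Imc$. It does not yield the edges among anonymous successors that a homomorphism of $q_0$ would need.

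The missing idea is to make the positive example's \emph{own} ABox the palette and to transfer the choice to named individuals of the negative ABox:
\begin{itemize}
\item The paper's single positive ABox $\Amc_2$ is $K_2$: individuals $b_1,b_2$, symmetric $r$, all $s$-edges including loops, and markers $T_1(b_1),T_2(b_2)$.
\item Its query $q_P$ is $G_{|P}$ with $r$-atoms, $s$-self-loops on every variable, $s(b_1,x_v)$, and fresh concept names $V_v(x_v)$. The loops force every variable onto $b_1$ or $b_2$ in forest models.
\item Hence a fitting ontology must derive each $V_v$ at $b_1$ or $b_2$ according to a proper 2-coloring $c_2$. This is how the ontology ``memorizes'' the existentially chosen coloring.
\item Since $\{(b_1,a_1),(b_2,a_2)\}$ is a simulation from $\Amc_2$ into every model of $\Amc_3$, Point~(b) gives $a_{c_2(v)}\in V_v^\Imc$.
\item A 3-coloring extending $c_2$ would then map $q_G$ homomorphically into the named part $\Amc_3$ of $\Imc$, contradicting Point~(a).
\end{itemize}
For the converse, take $\Imc$ to be $K_3$ with $V_v^\Imc=\{a_{c_2(v)}\}$. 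The markers $T_1,T_2$ make $\{(b_1,a_1),(b_2,a_2)\}$ the only nonempty simulation from $\Amc_2$. For \EL, the empty simulation is also allowed and is harmless, since it yields the maximal interpretation at each individual.
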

In summary, we have shown that for $\Lmc \in \{ \EL, \EL_\bot \}$ and
$\Qmc \in \{ \text{CQ}, \text{UCQ} \}$, 
$(\Lmc,\Qmc)$-ontology fitting is $\Sigma^\textsc{P}_2$-complete, in the rooted and in the non-rooted case.

\subsection[Algorithms and Complexity: ELI and ELI\_bot]{Algorithms and Complexity: \ELI and $\ELI_\bot$}
\label{sect:ELI}
We now use Theorem~\ref{lem:charVarSimQuant} to develop decision procedures for $(\Lmc,\text{UCQ})$-ontology fitting with $\Lmc \in \{ \ELI, \ELI_\bot \}$, and also prove matching lower bounds. To use Theorem~\ref{lem:charVarSimQuant}, we have to work with $\ELI_\bot^{\mn{sim}}$-ontologies. However, in contrast to the case of $\EL_\bot^{\mn{sim}}$, we are not aware that this DL has ever
been studied in the literature. We therefore
study it here.
A first observation is that, in contrast to $\EL_\bot^{\mn{sim}}$, finite and infinite query entailment no longer coincide.
\begin{example}
  Let the $\ELI_\bot^{\mn{sim}}$-ontology \Omc  contain the CIs
  $$
       A \sqsubseteq \exists r . A \qquad 
\exists^{\mn{sim}}_\ELI(\Imc,d) \sqsubseteq \bot
  $$
  where
  $\Imc_\circlearrowleft$ is the interpretation with $\Delta^{\Imc_\circlearrowleft}=\{d\}$ and
  $r^{\Imc_\circlearrowleft}=\{(d,d)\}$.
  For $\Amc = \{ A(a) \}$, we then have
  $$
  \Amc \cup \Omc \not\models B(a)
  \text{ but }
  \Amc \cup \Omc \models^{\mn{fin}} B(a).
  $$
\end{example}
The above example is closely related to the failure of the finite model property for the two-way $\mu$-calculus \cite{bojarnczyk2002two}.
Our main objective is to establish the following.
\begin{theorem}
\label{thm:ELIsimexptime}
 $\ELI_\bot^{\mn{sim}}$-query entailment is \ExpTime-complete, both in the finite and in the unrestricted case.
\end{theorem}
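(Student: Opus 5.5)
The plan is to handle the lower and upper bounds separately, and to get the upper bound by translating into the two-way modal $\mu$-calculus, where both finite and unrestricted satisfiability are decidable in \ExpTime (\cite{bojarnczyk2002two} and subsequent work on the finite model theory of the two-way $\mu$-calculus).

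\textbf{Lower bound.} This is immediate in both cases. Plain \ELI-query entailment (even AQ entailment) is \ExpTime-hard, and for \ELI finite and unrestricted entailment coincide. An \ELI-ontology is a special case of an $\ELI_\bot^{\mn{sim}}$-ontology, so hardness transfers.

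\textbf{Upper bound.} The central observation is that a simulation quantifier $\exists^{\mn{sim}}_\ELI(\Jmc,e)$ with $\Jmc$ finite is expressible as a greatest fixpoint formula of size polynomial in $||\Jmc||$. For each element $f \in \Delta^\Jmc$ introduce a fixpoint variable $X_f$ and take the simultaneous greatest fixpoint of
\[
X_f \;=\; \bigwedge_{f \in A^\Jmc} A \;\wedge\; \bigwedge_{(f,g)\in r^\Jmc} \langle r\rangle X_g \;\wedge\; \bigwedge_{(g,f)\in r^\Jmc} \langle r^-\rangle X_g ,
\]
with $\exists^{\mn{sim}}_\ELI(\Jmc,e)$ corresponding to the $X_e$-component. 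The standard argument that the greatest fixpoint of this system is exactly the largest \ELI-simulation from $\Jmc$ shows correctness in arbitrary (finite or infinite) interpretations. Simultaneous fixpoints can be written as nested ones without exponential blowup in the number of distinct subformulas (Bekić), and the automata-based procedures work with the Fischer–Ladner closure, so a polynomial closure size is all that is needed. Because the fixpoints are all greatest fixpoints, the resulting formulas lie in the alternation-free, in fact $\nu$-only, fragment.

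Given this, the translation proceeds as follows. An ontology $\Omc$ becomes the global constraint $\nu Y.(\bigwedge_{C\sqsubseteq D\in\Omc}(\neg C\vee D)\wedge\bigwedge_r[r]Y\wedge[r^-]Y)$, together with a universal-role trick for reaching disconnected parts. The ABox $\Amc$ is internalized with nominals or, more simply, by its usual encoding as a formula over a tree-shaped representation, since the two-way $\mu$-calculus handles ABoxes with nominals in \ExpTime. The negated query $\neg C(a)$ or $\neg\exists x\,C(x)$ becomes a conjunct, with $\exists x$ handled through the global box. Entailment then reduces to unsatisfiability, respectively finite unsatisfiability, of a polynomial-size formula. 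The unrestricted case is decided by the \ExpTime two-way alternating parity tree automaton procedure. For the finite case I would invoke Bojańczyk's result that finite satisfiability of the two-way $\mu$-calculus is decidable in \ExpTime (\cite{bojarnczyk2002two}), extended to nominals by treating the ABox as a finite initial part of the model.

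\textbf{Main obstacle.} The hardest step is the finite case: the ABox requires nominals, and I must confirm that the \ExpTime finite satisfiability result extends to them. If it does not, the fallback is to guess a type assignment for $\mn{ind}(\Amc)$. There are exponentially many candidates, which can be enumerated in \ExpTime, and each is checked for consistency with the ABox. The problem then reduces to finite satisfiability of nominal-free formulas, one for each individual, with the individual's type fixed, together with compatibility of that type with the fixpoint variables along ABox edges. This compatibility can be ensured by recording, for each $X_f$ and each individual, whether it holds, and enforcing these choices consistently.
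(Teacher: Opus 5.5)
Your lower bound and your greatest-fixpoint encoding of $\exists^{\mn{sim}}_\ELI(\Jmc,e)$ match the paper (its translation $\pi$). For the unrestricted case, internalizing the ABox with nominals is legitimate, since the two-way $\mu$-calculus with nominals (Sattler and Vardi's full hybrid $\mu$-calculus) is decidable in \ExpTime.

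The genuine gap is the finite case, which you flag yourself. The \ExpTime finite satisfiability result of \cite{bojarnczyk2002two} is for the two-way $\mu$-calculus \emph{without} nominals. ``Treating the ABox as a finite initial part of the model'' is not a result you can invoke, and your fallback does not close the gap as stated:
\begin{itemize}
\item Whether $X_f$ holds at an individual $a$ is a global greatest-fixpoint property. A simulation witnessing $\exists^{\mn{sim}}_\ELI(\Jmc,f)$ at $a$ can leave $a$, pass through anonymous elements, and come back (inverse roles, and finite models are not forests). It can also cross ABox edges and cycle through them infinitely often.
\item A guess ``$X_f$ holds at $a$'' can be certified as a post-fixpoint condition. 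A guess ``$X_f$ fails at $a$'' cannot be checked individual by individual. This matters, because simulation quantifiers occur on left-hand sides of CIs, e.g.\ $\exists^{\mn{sim}}_\ELI(\Jmc,f)\sqsubseteq\bot$.
\item Splitting a finite model into one nominal-free piece per individual assumes the anonymous parts of different individuals can be separated while staying finite and satisfying $\Omc$. Duplicating shared parts can destroy simulations that right-hand sides depend on.
\item In a nominal-free finite model, nothing forces the ``individual'' to be a single element.
\end{itemize}

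The missing idea is to use the Horn character of $\ELI_\bot^{\mn{sim}}$ (invariance under \ELI-simulations and under direct products) to eliminate the ABox entirely, so nominals never arise. With fresh $u$ and $S_{\Amc,a}$, the paper adds $\top\sqsubseteq\exists u.S_{\Amc,a}$ and $S_{\Amc,a}\sqsubseteq\exists^{\mn{sim}}_\ELI(\Amc,a)$. It then proves that $\Amc\cup\Omc\models^{\mn{fin}}C(a)$ iff $\Omc\cup\Omc_\Amc\models^{\mn{fin}}S_{\Amc,a}\sqsubseteq C$, and similarly for $\exists x\,C(x)$ by adding $C\sqsubseteq\bot$.

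The hard direction turns a finite countermodel $\Imc$ into a finite model of $\Amc\cup\Omc$:
\begin{enumerate}
\item Take the product $\Jmc$ of pointed copies of $\Imc$, indexed by a largest set $S^\Imc_{\Amc,b}$.
\item In one copy of $\Jmc$ per individual $c$, rename a tuple enumerating exactly $S^\Imc_{\Amc,c}$ to $c$, and add the ABox edges.
\item Use left-total simulations, built from re-indexing endomorphisms of the product, to show that the result satisfies $\Omc$ and that $a\notin C$.
\end{enumerate}
Only after this reduction do you need finite satisfiability of the nominal-free two-way $\mu$-calculus with simultaneous fixed points, which is what \cite{bojarnczyk2002two} supplies. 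The same reduction also covers the unrestricted case uniformly.
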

We only need finite entailment for our
purposes, but clarifying the unrestricted case does not require extra effort. Before proving Theorem~\ref{thm:ELIsimexptime}, we 
 observe that it yields the following.
\begin{restatable}{theorem}{ELIrootedUCQinExpTime}
    \label{thm:ELIrootedUCQinExpTime}
    Let $\Lmc \in \{ \ELI, \ELIbot \}$. Then  $(\Lmc,\text{UCQ})$-ontology fitting  is in \ExpTime.
\end{restatable}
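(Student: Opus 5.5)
We decide Point~2 of Theorem~\ref{lem:charVarSimQuant} by brute-force enumeration over the ontologies in the sets $\Gamma_e$, with one call to the \ExpTime procedure of Theorem~\ref{thm:ELIsimexptime} for each candidate. The input is a collection $E=(E^+,E^-)$ with $E^-\neq\emptyset$; the case $E^-=\emptyset$ was settled directly before Theorem~\ref{thm:charUCQwithFin}.

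The algorithm enumerates all tuples $(\Omc_e)_{e\in E^+}$ with $\Omc_e\in\Gamma_e$. Each $\Gamma_e$ consists of ontologies indexed by an \Lmc-forest \Amc-variation $p'$ of a CQ in $q$, together with a choice of individual $a_{q'}$ for every existential reduction query, plus the ontologies $\Omc_a$. An \Amc-variation is determined by a partial map from variables to individuals and a partition of the remaining variables. Hence $|\Gamma_e|$ is at most exponential in $||e||$, and the number of tuples is at most $2^{\mathrm{poly}(||E||)}$. For a fixed tuple, set $\Omc^*=\bigcup_{e\in E^+}\Omc_e$, an $\ELI_\bot^{\mn{sim}}$-ontology of polynomial size.

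For each negative example $(\Amc,q)\in E^-$ and each CQ $p$ in $q$, the algorithm enumerates the exponentially many \Lmc-forest \Amc-variations $p'$ of $p$. For each such $p'$ it checks, via Theorem~\ref{thm:ELIsimexptime}, whether $\Amc\cup\Omc^*\not\models^{\mn{fin}} q'$ holds for some $q'\in\mn{red}(p')$. The algorithm accepts if some tuple passes all of these checks. Each individual check runs in exponential time in a polynomially sized input, and there are only exponentially many checks in total, so the overall running time is exponential.

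Correctness has two directions.

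\textbf{Soundness.} Suppose a finite $\Imc$ as in Point~2 of Theorem~\ref{lem:charVarSimQuant} exists, and take the $\Omc_e$ it satisfies. Each $\Imc_e$ is a finite model of $\Amc$ and $\Omc^*$: the $\Omc_e$ contain only simulation quantifiers on the left, so they are preserved when passing from $\Imc$ to the disjoint component $\Imc_e$, since simulations from connected ABoxes land in one component. For a disconnected ABox one first splits into components. Since $\Imc_e\not\models p'$ for every variation $p'$, Lemma~\ref{prop:qEntIffVarEnt}(2) yields, for each $p'$, a reduction query $q'$ not satisfied in $\Imc_e$. This witnesses $\Amc\cup\Omc^*\not\models^{\mn{fin}}q'$, so the checks pass.

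\textbf{Completeness.} This is the main obstacle. The checks provide a separate finite countermodel $\Jmc_{p',q'}$ for each variation $p'$, and these must be glued into a single finite model $\Imc_e$ of $\Amc$ and $\Omc^*$ that refutes every $p'$ simultaneously. In the \EL case this was handled by universal models. For $\ELI_\bot^{\mn{sim}}$ the plan is to argue the same Horn property, namely that the product of finite models of $\Amc\cup\Omc^*$ that agree on $\mn{ind}(\Amc)$ is again a model. For ordinary \ELI CIs this follows from Lemma~\ref{lem:prodlem}. For CIs with a simulation quantifier on the left it also goes through: a simulation into the product projects to simulations into each factor, so the premise holds in each factor, hence the conclusion holds in each factor and, being an $\ELI_\bot$-concept, holds in the product. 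For the existential conclusion $\exists u.C$ this follows again by Lemma~\ref{lem:prodlem}. The product is finite.

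It remains to see that the product refutes all the queries. It does so because it maps homomorphically into each factor: a match of $q'$ in the product would project to a match of $q'$ in $\Jmc_{p',q'}$. Here we take the product over all pairs $(p',q')$ that were chosen. The individuals are handled by using the diagonal element $(a,\dots,a)$ to represent $a$, and the product is restricted to the relevant part so that the ABox assertions hold. We then set $\Imc_e$ to be this product and $\Imc=\biguplus_{e\in E^-}\Imc_e$. Point~(a) of Theorem~\ref{lem:charVarSimQuant} holds by construction. Point~(b) follows because each $\Imc_e$ satisfies $\Omc^*$, since all of its CIs carry simulation-quantifier premises anchored in connected examples.
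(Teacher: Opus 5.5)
Your proposal is correct and follows the paper's proof essentially step by step. You enumerate a choice $\Omc_e \in \Gamma_e$ for each positive example and run the finite-entailment checks of Theorem~\ref{thm:ELIsimexptime} for all $\Lmc$-forest $\Amc$-variations of the negative examples. For the converse you glue the individual countermodels by a direct product and a disjoint union. Your projection argument for the $\exists^{\mn{sim}}$-premises is a slightly leaner substitute for the paper's appeal to invariance of $\ELI_\bot^{\mn{sim}}$-concepts under products. No restriction of the product is needed, since the ABox assertions already hold at the renamed diagonal elements.

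One edge case, which the paper's own argument shares, requires an extra check. Suppose no CQ of a negative example $(\Amc,q)$ has any $\Lmc$-forest $\Amc$-variation. Then your checks for that example are vacuous and the product ranges over an empty family. So for $\ELI_\bot$ you must additionally verify that $\Amc \cup \Omc^*$ is finitely consistent, which is one more \ExpTime{} call, e.g.\ testing $\Amc \cup \Omc^* \not\models^{\mn{fin}} \bot(a)$. Without this check, the collection $E^+=\{(\{A(b)\},\exists x\, r(x,x))\}$, $E^-=\{(\{A(a)\},\exists x\, r(x,x))\}$ would be wrongly accepted.
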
 
The algorithm is essentially the same as for \EL and~$\EL_\bot$,
except that we enumerate all possibilities instead of guessing or co-guessing. Given a collection $E=(E^+,E^+)$ of  labeled ABox-UCQ examples with 
$ E^- \neq \emptyset  $, the algorithm  uses an outer loop to iterate over
all ways to choose for each
positive example $e\in E^+$,
an ontology $\Omc_{e}$ from the
set~$\Gamma_{e}$. In an inner loop, it  iterates over all ways to choose, for every negative example
$(\Amc,q) \in E^-$ and every CQ $p$
in~$q$, an \Lmc-forest $\Amc$-variation
$p'$ of $p$. It then uses Theorem~\ref{thm:ELIsimexptime} to verify in \ExpTime that $\Amc \cup \bigcup_{e \in E^-}\Omc_{e} \not\models q'$ for some $q' \in \mn{red}(p')$.
Clearly, we obtain an \ExpTime algorithm overall. The correctness proof is almost identical to that
of Lemma~\ref{lem:elicorrectness}, details are omitted.

\medskip

We now return to the proof of Theorem~\ref{thm:ELIsimexptime}.
With \emph{\ELI-subsumption w.r.t.\
$\ELI_\bot^{\mn{sim}}$-ontologies},
we mean the problem to decide, given two
\ELIbot-concepts $C$ and $D$ and 
an $\ELI_\bot^{\mn{sim}}$-ontology
\Omc, whether $C^\Imc \subseteq D^\Imc$ for all models \Imc of \Omc.
If this is the case, then we write
$\Omc \models C \sqsubseteq D$ and
say that \emph{$C$ is subsumed by $D$ w.r.t.\ \Omc}. There is also a finite version where only finite models \Imc are considered, noted $\Omc \models^{\mn{fin}} C \sqsubseteq D$. The following is essentially a consequence of the
results in \cite{vardi1998reasoning} and \cite{bojarnczyk2002two}.
\begin{restatable}{theorem}{thmELIsimsubsexptime}
\label{thm:ELIsimsubsexptime}
   \ELIbot-subsumption w.r.t.\
$\ELI_\bot^{\mn{sim}}$-ontologies is  \ExpTime-complete, both in the finite and in the unrestricted case.
\end{restatable}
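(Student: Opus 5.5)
The proof splits into a lower bound, which is easy, and an upper bound obtained by a polynomial translation into the two-way modal $\mu$-calculus.

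\emph{Lower bound.} Subsumption w.r.t.\ plain $\ELI$-ontologies is already \ExpTime-hard. This holds both in the unrestricted and in the finite case, since $\ELI$ has the finite model property for subsumption. Every $\ELI$-ontology is an $\ELI_\bot^{\mn{sim}}$-ontology, so hardness transfers directly.

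\emph{Upper bound.} I would translate $\Omc$, $C$ and $D$ in polynomial time into a formula of the two-way modal $\mu$-calculus. Concept names, $\top$, $\bot$, $\sqcap$ and $\exists r.D$ translate to the obvious modal operators, with $\exists r^-$ becoming a backward modality. The key case is $\exists^{\mn{sim}}_\ELI(\Jmc,e)$. Here I introduce one fixpoint variable $X_f$ for each $f\in\Delta^\Jmc$ and take the greatest fixpoint of the system
\[
X_f \;\equiv\; \bigwedge_{f\in A^\Jmc} A \;\wedge \bigwedge_{(f,f')\in r^\Jmc}\langle r\rangle X_{f'} \;\wedge \bigwedge_{(f',f)\in r^\Jmc}\langle r^-\rangle X_{f'},
\]
projected to the component $X_e$. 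This is the standard fact that simulation is the largest relation satisfying (Atom), (Rel) and (iRel), so $\nu$-unfolding defines exactly $\{d\mid(\Jmc,e)\preceq_\ELI(\Imc,d)\}$. Vectorial fixpoints can be expressed by nested ones with only polynomial blowup in the equational/automaton representation. Alternatively I can avoid that issue by going directly to alternating two-way parity tree automata, as in \cite{vardi1998reasoning}.

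The ontology then becomes the universally enforced formula $\bigwedge_{C'\sqsubseteq D'\in\Omc}(\neg C'\vee D')$. Universality can be imposed by adding a fresh universal-role-like construction, or by the standard trick of checking satisfiability of $\nu Y.(\varphi_\Omc\wedge\bigwedge_r([r]Y\wedge[r^-]Y))\wedge C\wedge\neg D$. This works because every model can be restricted to the connected component, and connected components preserve simulations in both directions in the relevant sense. The translation is polynomial, with each simulation quantifier contributing $O(|\Jmc|)$ variables.

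For unrestricted subsumption, Vardi's \ExpTime satisfiability for the two-way $\mu$-calculus \cite{vardi1998reasoning} finishes the argument. For the finite case, I would invoke Bojańczyk's result \cite{bojarnczyk2002two} that finite satisfiability of the two-way $\mu$-calculus is decidable in \ExpTime. The translated formula is satisfiable in a finite model iff $\Omc\not\models^{\mn{fin}}C\sqsubseteq D$, since the translation preserves semantics on every structure, finite or not.

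\emph{Main obstacle.} The delicate step is keeping the translation polynomial while staying within the fragment where these results apply. Simulation quantifiers introduce alternation-free but mutually recursive greatest fixpoints, and negation of them occurs on left-hand sides of CIs. Since the full two-way $\mu$-calculus is closed under negation, this is harmless semantically. For complexity, however, I must make sure the \ExpTime bounds hold for the succinct (vectorial, or equivalently automaton-based) representation. I would argue this via the automata constructions underlying both cited results, which take equation systems of polynomial size.
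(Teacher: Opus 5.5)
Your proposal is correct and follows essentially the same route as the paper. \ExpTime-hardness is inherited from \ELI-subsumption via its finite model property, and the upper bound comes from a polynomial translation into the two-way $\mu$-calculus: simultaneous greatest fixpoints encode the simulation quantifiers, a universal modality relativized to the relevant roles is justified by a connected-model argument, and Vardi's and Boja\'nczyk's \ExpTime{} results close the unrestricted and finite cases respectively. Like the paper, you rely on the folklore fact that the automaton translation handles simultaneous fixpoints with only polynomial blow-up.
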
    
\noindent
\begin{proof} (sketch)
   The  upper bound can be obtained as follows. Let $\Lmc_\mu$ denote the
   two-way $\mu$-calculus with simultaneous fixed points, see 
   e.g.\ \cite{vardi1998reasoning} for the
   conventional two-way $\mu$-calculus
   and \cite{DBLP:books/el/07/BradfieldS07} for $\mu$-calculi with
   simultaneous fixed points. Given \ELIbot-concepts $C,D$
   and an $\ELI_\bot^{\mn{sim}}$-ontology \Omc, one may construct
   in polynomial time $\Lmc_\mu$-formulas $\varphi_C, \varphi_D, \varphi_\Omc$ such that $\Omc \models^{(\mn{fin})} C \sqsubseteq D$ if and only if $\varphi_O \wedge \varphi_C \wedge \neg \varphi_D$ is (finitely) unsatisfiable.
   It is folklore that the standard translation of an $\Lmc_\mu$-formula without simultaneous fixed points into a two-way parity  automaton given in
   \cite{vardi1998reasoning} 
   and also used in \cite{bojarnczyk2002two} can be generalized to simultaneous fixed points while remaining in polynomial time. It remains to decide emptiness of the
   resulting automaton, which is possible in \ExpTime both in the unrestricted \cite{vardi1998reasoning} and in the finite case \cite{bojarnczyk2002two}.

   The lower bounds are inherited from subsumption in \ELI, which is \ExpTime complete. \cite{BaaderEtAl-OWLED08DC}. This also covers the finite case since \ELI enjoys the finite model property regarding subsumption. 
\end{proof}
We next prove Theorem~\ref{thm:ELIsimexptime}
using Theorem~\ref{thm:ELIsimsubsexptime}, thus completing the proof of .
Note that  $\ELI_\bot^{\mn{sim}}$-query entailment 
is formulated in terms of ABoxes while there are no ABoxes in the problem of \ELIbot-subsumption w.r.t.\
$\ELI_\bot^{\mn{sim}}$-ontologies.
It is nevertheless possible to reduce the former to the latter in polynomial time. The central idea is to replace
 ABoxes by their approximations expressed 
through a simulation quantifier.
This is summarized by the following
lemma.
\begin{restatable}{lemma}{finmodontologyfinmodkb}
    \label{lem:finmodontologyfinmodkb}
Let $\Lmc \in \{\ELI, \ELIbot\}$, \Amc be an ABox,  and \Omc an $\Lmc^{\mn{sim}}_\bot$-ontology. Further let $\Omc_\Amc$ be
the ontology that contains the following CIs, for all $a \in \mn{ind}(\Amc)$:
$$
  \top \sqsubseteq \exists u.\;S_{\Amc,a} \qquad S_{\Amc,a} \sqsubseteq \exists^{\mn{sim}}_\Lmc ({\Amc}, a),
$$
where $u$ is a fresh role name and all $S_{\Amc,a}$ are fresh concept names.
Then the following hold for all \ELIbot-concepts $C$ and $a \in \mn{ind}(\Amc)$:
\begin{enumerate}
    \item 
$ \Omc \cup \Omc_\Amc \models^{\mn{fin}} S_{\Amc, a} \sqsubseteq C \ \text{ if and only if }\ \Amc \cup \Omc \models^{\mn{fin}} C(a)$;
\item $ \Omc \cup \Omc_\Amc \cup \{C \sqsubseteq \bot\} \models^{\mn{fin}} S_{\Amc, a} \sqsubseteq \bot \text{ for all } a \in \mn{ind}(\Amc) \ \text{ if and only if }\ \Amc \cup \Omc \models^{\mn{fin}} \exists x\, C(x)$.
\end{enumerate}
and the same is true in the unrestricted case.
\end{restatable}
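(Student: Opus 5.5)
Both points rest on two translations between models, which I will establish first. \emph{(A) From ABox models to ontology models.} Given a (finite) model $\Jmc$ of $\Amc \cup \Omc$, extend it to a model of $\Omc \cup \Omc_\Amc$. Set $S_{\Amc,b}^{\Jmc'} = \{b\}$ for all $b \in \mn{ind}(\Amc)$ and $u^{\Jmc'} = \Delta^\Jmc \times \mn{ind}(\Amc)$. Then $\top \sqsubseteq \exists u . S_{\Amc,b}$ holds. Since $\Jmc$ is a model of $\Amc$, the identity on $\mn{ind}(\Amc)$ is a homomorphism, hence an $\Lmc$-simulation, from $\Amc$ to $\Jmc$. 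So $b \in \exists^{\mn{sim}}_\Lmc(\Amc,b)$, and the second CI of $\Omc_\Amc$ holds. Finiteness is preserved. Because $u$ and the $S_{\Amc,b}$ are fresh, $\Omc$ and every \ELIbot-concept not mentioning them have the same extension in $\Jmc$ and $\Jmc'$.

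\emph{(B) From ontology models to ABox models.} Given a (finite) model $\Imc$ of $\Omc \cup \Omc_\Amc$ and elements $d_b \in S_{\Amc,b}^\Imc$, we have $(\Amc,b) \preceq_\Lmc (\Imc,d_b)$ for every $b$. Taking the union of these simulations gives a single $\Lmc$-simulation $S$ from $\Amc$ to $\Imc$ with $(b,d_b) \in S$. In the $\ELI_\bot$ case totality is needed, and it holds because every individual is covered. Now apply Definition~\ref{def:tripleint} to obtain $\Imc_{\Amc,S,\Lmc}$. By construction it is a model of $\Amc$, and by Lemma~\ref{lem:IASLsubI}, extended to $\Lmc^{\mn{sim}}_\bot$-ontologies, it is a model of $\Omc$.

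I would then verify two further properties. First, $\Imc_{\Amc,S,\Lmc}$ maps back into $\Imc$: each $a$ is sent to the tuples of its product and projected to $d_a$, and this is an $\Lmc$-simulation from $\Imc_{\Amc,S,\Lmc}$ to $\Imc$ relating $a$ to $d_a$. It is a simulation because for $r(a,b) \in \Amc$ the pair $(b,d_b)$ witnesses the required successor of $d_a$. Consequently, by Lemma~\ref{lem:SimPresConc}, $a \in C^{\Imc_{\Amc,S,\Lmc}}$ implies $d_a \in C^\Imc$, and nonemptiness of $C$ transfers likewise. Second, I would need finiteness, since Definition~\ref{def:tripleint} unravels. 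I would replace the unravelings by the products $\Jmc_a$ themselves, which are finite when $\Imc$ is finite, and check that Lemma~\ref{lem:IASLsubI} still goes through. Its proof presumably only needs that elements of $\Jmc_a$ simulate, and are simulated by, suitable elements of $\Imc$ (Lemma~\ref{lem:prodlem}).

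With (A) and (B) in hand, Point~1 follows directly. For ``$\Leftarrow$'', take a finite model $\Imc$ of $\Omc \cup \Omc_\Amc$ and $d \in S_{\Amc,a}^\Imc$. For each $b$, choose $d_b \in S_{\Amc,b}^\Imc$ via $\top \sqsubseteq \exists u.S_{\Amc,b}$, with $d_a = d$. By (B), $\Imc_{\Amc,S,\Lmc}$ is a finite model of $\Amc \cup \Omc$, so $a \in C$ there by assumption, and hence $d \in C^\Imc$. For ``$\Rightarrow$'', use (A): in $\Jmc'$ we have $a \in S_{\Amc,a}$, hence $a \in C^{\Jmc'} = C^\Jmc$.

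Point~2 is analogous. If the entailment over $\Amc$ holds, then in any model of $\Omc \cup \Omc_\Amc \cup \{C \sqsubseteq \bot\}$ an element of $S_{\Amc,a}$ would yield, via (B), a model of $\Amc \cup \Omc$ in which $C$ is nonempty. Its image under the back-simulation would place an element in $C^\Imc$, which is impossible. Hence $S_{\Amc,a}$ is empty. For the converse, suppose some model $\Jmc$ of $\Amc \cup \Omc$ has $C^\Jmc = \emptyset$. Then (A) yields a model of $\Omc \cup \Omc_\Amc \cup \{C \sqsubseteq \bot\}$ with $S_{\Amc,a}$ nonempty. Note that $\top \sqsubseteq \exists u . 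S_{\Amc,b}$ forces all $S_{\Amc,b}$ to be nonempty whenever the model is nonempty, which is why the universal condition ``for all $a$'' is equivalent to the existential one. The unrestricted case is identical, using the unraveled version of Definition~\ref{def:tripleint}.

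\emph{Main obstacle.} The hard part is (B), specifically extending Lemma~\ref{lem:IASLsubI} to CIs with simulation quantifiers and keeping the model finite. For a simulation quantifier on the left, I need that if $(\Kmc,e) \preceq_\Lmc (\Imc_{\Amc,S,\Lmc},x)$, then $(\Kmc,e) \preceq_\Lmc (\Imc,y)$ for every $y$ that is $S$-related (or product-component-related) to $x$. This holds by composing with the back-simulation. For simulation quantifiers on the right, I need a concept-wise preservation going in the other direction, and this is where the product and unraveling structure must be exploited. I would first try to prove by induction that, for every subconcept $D$ of $\Omc$, $x \in D^{\Imc_{\Amc,S,\Lmc}}$ if and only if all its $\Imc$-images lie in $D^\Imc$. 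For simulation quantifiers this requires building simulations into products componentwise.
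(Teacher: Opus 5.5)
Your proposal is correct. It shares the paper's overall strategy: for the easy direction, expand a model of $\Amc\cup\Omc$ by the fresh symbols; for the hard direction, glue products of copies of $\Imc$ onto the individuals and transfer $\ELI^{\mn{sim}}_\bot$-concepts via simulation and product invariance. The hard direction, however, is implemented differently.

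\textbf{The paper's route.} It forms a single product $\prod_{d\in S^\Imc_{\Amc,b_m}}(\Imc,d)$, with $b_m$ maximising $|S^\Imc_{\Amc,c}|$. Each individual $c$ is placed, in its own copy, at a tuple enumerating exactly $S^\Imc_{\Amc,c}$. From re-indexing endomorphisms it builds a left-total simulation from the glued structure back into that product, and then argues with simulations in both directions between the glued structure and each copy. Point~2 is obtained by applying Point~1 to $\Omc\cup\{C\sqsubseteq\bot\}$ and the concept $\bot$, whereas you redo Point~2 directly with (A) and (B); both work.

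\textbf{Your route.} You take the components from one simulation $S$ from $\Amc$ to $\Imc$ and reuse Definition~\ref{def:tripleint}, dropping the unravelling in the finite case. You transfer concepts through a back-simulation into $\Imc$ itself, together with Corollary~\ref{cor:LsiInvDirProd} and the inclusion of each block (or its unravelling) into the glued structure. These facts give the claim of Lemma~\ref{lem:IASLsubI} for all $\ELI^{\mn{sim}}_\bot$-concepts directly, so the induction you were planning is not needed. You also avoid the argmax choice and the re-indexing machinery.

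\textbf{What your choice of components buys.} Choosing components via a simulation is exactly what makes the ABox-edge case work. The successor required at $r(a,b)\in\Amc$ is supplied by the simulation property of $S$. The paper's proof instead obtains it from $\Omc_\Amc$, which only yields a successor in $\exists^{\mn{sim}}_\Lmc(\Amc,b)$, not in $S_{\Amc,b}$. At that step the paper's argument effectively needs components chosen as in your construction, e.g.\ via the maximal simulation.

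\textbf{One correction.} Your justification of the back-simulation is wrong as stated. The $d_b$ are chosen independently via the $u$-edges, so $(d_a,d_b)$ need not be an $r$-edge of $\Imc$. A relation linking $a$ only to $d_a$ (and $b$ only to $d_b$) is therefore in general not a simulation. The fix is as follows.
\begin{enumerate}
\item Relate every tuple, including the one renamed to $a$, to \emph{all} its components, so that $a$ is related to all of $aS$.
\item For $r(a,b)\in\Amc$ and $e\in aS$, condition (Rel) (resp.\ (iRel)) of $S$ gives an $r$-neighbour $e'$ of $e$ with $e'\in bS$, i.e.\ a component of $b$.
\end{enumerate}
This requires $aS\neq\emptyset$ for every $a$. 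Otherwise the block at $a$ is the maximal interpretation, which in general does not simulate into $\Imc$. Your choice of some $d_b$ for every $b$ guarantees this in both the $\ELI$ and the $\ELIbot$ case, not only for totality.
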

We also prove a matching lower bound,
by reduction from the complement of AQ entailment in \ELI which is \ExpTime-hard \cite{BaaderEtAl-OWLED08DC}.

\begin{restatable}{theorem}{ELIfitExpT}
    Let $\Lmc \in \{\ELI, \ELIbot\}$. The $(\Lmc, \text{rooted CQ})$-ontology fitting problem is \ExpTime-hard.
\end{restatable}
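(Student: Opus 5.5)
We reduce from the complement of AQ entailment in \ELI, which is \ExpTime-hard \cite{BaaderEtAl-OWLED08DC}. Since \ExpTime is closed under complement, it suffices to reduce non-entailment ``$\Amc_0 \cup \Tmc \not\models A_0(a_0)$'' to fitting. We may assume, as is standard, that $\Tmc$ is in normal form ($A \sqsubseteq B$, $A_1\sqcap A_2 \sqsubseteq B$, $A \sqsubseteq \exists r.B$, $\exists r.A \sqsubseteq B$, with $r$ possibly inverse) and that $\Amc_0=\{A_1(a_0)\}$ is a singleton. Our aim is a collection $E$ in which the positive examples force every fitting ontology $\Omc$ to entail all of $\Tmc$ on the relevant part of the models. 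The single negative example $(\Amc_0, A_0(a_0))$ then asks that $\Omc$ not entail $A_0(a_0)$. Correctness then has two directions. If $\Amc_0\cup\Tmc \not\models A_0(a_0)$, then $\Tmc$ itself (or a slight variant) should fit. Conversely, any fitting $\Omc$ is at least as strong as $\Tmc$, so $\Amc_0\cup\Tmc\models A_0(a_0)$ would yield $\Amc_0\cup\Omc\models A_0(a_0)$, a contradiction.

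The positive examples encode the CIs one at a time, and the \ELI\ choice fixes their form. For $A\sqsubseteq B$ we use $(\{A(c)\},B(c))$. For $A_1\sqcap A_2\sqsubseteq B$ we use $(\{A_1(c),A_2(c)\},B(c))$. For $\exists r.A\sqsubseteq B$ we use $(\{r(c,d),A(d)\},B(c))$, with $r$ possibly inverse via orientation of the assertion. For $A\sqsubseteq \exists r.B$ we use $(\{A(c)\}, \exists x\, r(c,x)\wedge B(x))$, a rooted CQ. Every query is rooted, and each CI gets fresh individuals. For the direction ``$\Tmc$ fits'', each positive example is entailed by $\Tmc$ trivially. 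The negative example is not entailed by the hypothesis, and since ABoxes of different examples are disjoint there is no interference.

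For the converse, let $\Omc$ fit $E$. We need that every model of $\Amc_0\cup\Omc$ can be turned into a model of $\Tmc$ without losing non-entailment, or equivalently that $\Omc\models \alpha$ for each CI $\alpha\in\Tmc$. Each positive example $(\{\text{body}\},\text{head})$ gives $\Omc\models \text{body-concept}\sqsubseteq\text{head-concept}$. This holds because a model of $\Omc$ where an element satisfies the body concept but not the head yields, via a simulation argument, a model of $\Omc$ and the example ABox refuting the head. We obtain that model with Lemma~\ref{lem:IASLsubI}, applied to the simulation from the tiny tree-shaped ABox into that model, and the construction $\Imc_{\Amc,S,\Lmc}$. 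Since the example ABoxes are tree-shaped, the \ELI\ simulation is induced by concept satisfaction. Hence $\Omc\models\Tmc$, and by Lemma~\ref{lem:forest} any counter-model of $\Amc_0\cup\Omc$ for $A_0(a_0)$ is a model of $\Tmc$. This contradicts entailment.

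The main obstacle is this transfer step. We need $\Imc_{\Amc,S,\Lmc}$ to preserve the needed facts: where the head concept fails in the original model, it should fail at the root of the example. Because the head is a concept or a rooted CQ of depth one, non-satisfaction should transfer through the product/unraveling via Lemma~\ref{lem:prodlem}. With the \ELI\ unraveling, $c$'s neighbourhood must not newly satisfy $\exists r.B$, and this is what I would check carefully. As a fallback, I would first try proving $\Omc\models\text{body}\sqsubseteq\text{head}$ directly from Theorem~\ref{thm:charUCQwithFin}. Clearly the reduction is polynomial.
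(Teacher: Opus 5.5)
Apart from how $\top$ is handled (see below), your reduction is the one in the paper. Your correctness proof, however, takes a different and more direct route.

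\textbf{Comparison of the routes.} The paper goes through the characterization in Theorem~\ref{thm:charUCQwithFin}. For the forward direction it takes a finite countermodel, using the finite model property of \ELI, and verifies Conditions~(a) and~(b). For the converse it restricts an interpretation satisfying~(a) and~(b), and uses Condition~(b) with Corollary~\ref{lem:IASLsim} to show that the restriction is a model of the input ontology. You instead show two things directly: the input ontology itself fits, and every fitting ontology $\Omc$ entails each CI of $\mathcal{T}$, so any countermodel for the negative example is a countermodel for $\mathcal{T}$. Your route needs neither the characterization nor finite models. It also shows that the fitting ontologies are exactly those that entail $\mathcal{T}$ but not the query. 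The paper's route buys uniformity with its other proofs.

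\textbf{The transfer step.} The step you flag as the main obstacle is already covered by existing results.
\begin{itemize}
\item For the single-individual ABoxes, just rename the violating element to $c$.
\item For $(\{r(c,d),A(d)\},B(c))$, suppose $e\notin B^\Imc$ has an $r$-successor $f\in A^\Imc$. Then $S=\{(c,e),(d,f)\}$ is an $\ELI_\bot$-simulation. Lemma~\ref{lem:IASLsubI} together with Corollary~\ref{lem:IASLsim} (note $cS=\{e\}$) gives a model of $\Omc$ and the ABox in which $c\notin B$.
\item $\Imc_{\Amc,S,\Lmc}$ is only needed to keep $c$ and $d$ apart when $e=f$.
\end{itemize}
Lemma~\ref{lem:forest} plays no role.

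\textbf{What you still need to fix: $\top$.} The standard normal form allows $\top$ in place of concept names. CIs $\top\sqsubseteq B$ and $\top\sqsubseteq\exists r.B$ cannot be encoded by your scheme, for three reasons:
\begin{itemize}
\item $\top(c)$ is not an ABox assertion.
\item ABoxes must be nonempty and must contain the query individual.
\item Replacing $\top$ by a fresh $X(c)$ only forces $X\sqsubseteq B$.
\end{itemize}
You silently assume a $\top$-free normal form, which is not standard without further work.

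The paper handles exactly this with a fresh concept name $A_\top$. It replaces $\top$ by $A_\top$ in the example ABoxes, adds the conjunct $A_\top(x)$ to the existential queries, and adds $A_\top(a)$ for every individual to the negative ABox. Your direct argument survives this change. In the forward direction, $\mathcal{T}\cup\{\top\sqsubseteq A_\top\}$ fits. In the converse, restrict the countermodel to the extension of $A_\top$ before concluding that it is a model of $\mathcal{T}$. The $A_\top(x)$ conjunct is what makes this restriction preserve CIs $A\sqsubseteq\exists r.B$. Alternatively, justify a $\top$-free normal form by this relativization as a preprocessing step on the entailment side.
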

W have thus shown that for $\Lmc \in \{ \ELI, \ELI_\bot \}$ and
$\Qmc \in \{ \text{CQ}, \text{UCQ} \}$, 
$(\Lmc,\Qmc)$-ontology fitting is $\Sigma^\textsc{P}_2$-complete, in the rooted and in the non-rooted case.

\section{Conclusion}
\label{sect:concl}

We have obtained a relatively complete picture of ontology fitting problems for the most common Horn DLs based on ABoxes and queries. 
It would, however, be interesting to extend our study to other relevant Horn DLs including $\mathcal{ELO}$, Horn-\ALC, and Horn-\ALCI. Existential rule languages such as guarded existential rules would be another interesting subject to study.

One may also be interested in the case of full UCQs, that is, UCQs that do not use existentially quantified variables. We conjecture that
it is not too difficult to extend the techniques that we have introduced for AQs  to full UCQs, and that $(\Lmc,\text{full UCQ})$-ontology fitting is in \PTime for all $\Lmc \in \{ \EL, \EL_\bot, \ELI, \ELI_\bot \}$. In the case of \ALC and \ALCI, such an extension has been presented in \cite{FGL-KR25}.

It should  be straightforward to adapt 
the characterizations and decidability results presented in this paper to the DLs 
$\EL^{\mn{sim}}_\bot$ and 
$\ELI^{\mn{sim}}_\bot$. An interesting nuance is that 
the qualifier `finite' in Theorem~\ref{thm:charUCQwithFin} can then be dropped for both $\EL^{\mn{sim}}_\bot$ \emph{and} 
$\ELI^{\mn{sim}}_\bot$.

Another question left open is the size of fitting ontologies in $(\ELI,\text{UCQ})$-ontology fitting, where we have obtained no bounds at all. In particular, it would be interesting to understand whether polynomial size fittings always exist if auxiliary symbols are admitted. 

\section*{Acknowledgements} The third author was supported by DFG project LU 1417/4-1. The authors acknowledge the financial support by the Federal Ministry of Research, Technology and Space of Germany and by Sächsische Staatsministerium für Wissenschaft, Kultur und Tourismus in the programme Center of Excellence for AI-research 'Center for Scalable Data Analytics and Artificial Intelligence Dresden/Leipzig', project identification number: ScaDS.AI.

\bibliographystyle{kr}
\bibliography{local}

\cleardoublepage

\appendix

\section{Proofs for Section~\ref{sect:prelims}}
To present the subsequent constructions uniformly, we fix, once and for all,  $\mn{N}_{\mn{R}}^\Lmc$ to denote the set of role names if $\Lmc \in \{\EL, \ELbot\}$, and the set of all roles (including inverse roles) if $\Lmc \in \{\ELI, \ELIbot\}$.
\lemcharconc*
\noindent
\begin{proof}
    Let $\Lmc \in \{\EL, \ELI\}$ and \Imc be an interpretation.
    For every element $d \in \Delta^\Imc$, we define its characteristic \Lmc-concept inductively as follows:
    \begin{align*}
        C_{\Imc,d}^{0,\Lmc} &:= \bigsqcap_{d \in A^\Imc} A \\
        C_{\Imc,d}^{k+1, \Lmc} &:= C^{0, \Lmc}_{\Imc, d} \sqcap \bigsqcap_{\substack{(d,e) \in r^\Imc, \\ r\in \mn{N}_{\mn{R}}^\Lmc}} \exists r. C^{k, \Lmc}_{\Imc, e} \qquad \text{ for all } k\in\mathbb{N}, 
    \end{align*}
    For $\Lmc = \EL$, the correctness of the above construction was  proved in Proposition~5.1.5 of \cite{piro2013model}. It is straightforward to see that the  argument generalizes to \ELI.
\end{proof}

\section{Proofs for Section~\ref{sect:consistency}}

\lemIASLsubI*
\noindent
\begin{proof}
    Let \Amc, \Imc, \Omc, $S$,   and \Lmc be as in the lemma,
    and assume that $\Imc \models \Omc$. Recall
    that $\Imc_{\Amc,S,\Lmc}$ is stitched together
    from interpretations $\Jmc^{\downarrow\Lmc}_a$, $a \in  \mn{ind}(\Amc)$ where $\Jmc_a =  
	   \prod_{(a,d) \in S} ( \Imc, d )$ with the distinguished element $d_\Pi$ renamed to $a$. Thus every element $p \in \Jmc^{\downarrow\Lmc}_a$ is a path with $\mn{tail}(p)$ an element of $\Jmc_a$, or, as a special case, $p=a$. In the latter case, for uniformity we set $\mn{tail}(p)=d_\Pi$.
    We show the following by induction on the structure of~$C$.
    \\[2mm]
    {\bf Claim.} For all \Lmc-concepts $C$, all
    $(a,\cdot) \in S$, and all paths $p \in \Delta^{\Jmc^{\downarrow\Lmc}_a}$ with $\mn{tail}(p)=(d_i)_{i \in I}$:
    $$
      p \in C^{\Imc_{\Amc,S,\Lmc}}
      \text{ if and only if } d_j \in C^\Imc \text{ for all } j \in I    $$
    \emph{Proof of claim.} 
    The cases that $C$ is
    $\top$, $\bot$, a concept name, or a conjunction are straightforward using the induction hypothesis and Lemmas~\ref{lem:prodlem}
    and~\ref{lem:unravbasic}. 
    
    The only interesting case is thus $C = \exists r . D$, where $r$ may be an inverse role
    if $\Lmc \in \{ \ELI, \ELI_\bot \}$. Also in this case, the `if' direction is straightforward. We thus only prove the `only if' direction. Assume that
    $p \in (\exists r. D)^{\Imc_{\Amc,S,\Lmc}}$ with $\mn{tail}(p)=(d_i)_{i \in I}$. Then there is a $p' \in \Delta^{\Imc_{\Amc,S,\Lmc}}$ with $(p,p') \in r^{\Imc_{\Amc,S,\Lmc}}$ and $p' \in D^{\Imc_{\Amc,S,\Lmc}}$.

    First assume that $p' \in \Delta^{\Jmc^{\downarrow\Lmc}_a}$ and let $\mn{tail}(p')=(e_i)_{i \in I}$. The induction hypothesis
    yields $e_j \in D^\Imc$ for all $j \in I.$
    The constructions 
    of $\Imc_{\Amc,S,\Lmc}$, $\Jmc^{\downarrow\Lmc}_a$, and $\Jmc_a$ further yield $(d_j,e_j) \in r^{\Imc}$ for all $j \in I$. Thus, $d_i \in C^\Imc$ for all $i \in I$, as required. 
    
    The interesting case is that
    $p' \notin  \Delta^{\Jmc^{\downarrow\Lmc}_a}$. Then, by construction 
    of $\Imc_{\Amc,S,\Lmc}$, we must have $p=a$ and $p'=b \in \mn{ind}(\Amc)$ for some $b$ with $r(a,b) \in \Amc$. 
    For every $i \in I$,
    by construction of $\Jmc_a$
    we know that $(a,d_i) \in S$.
    Since $r(a,b) \in \Amc$,
    this implies $(b,e_i) \in S$ for some $e_i$ with $(d_i,e_i) \in r^\Imc$.
     Since $(b,e_i )\in S$, $e_i$ is among the elements of $\mn{tail}(b)=(e_j')_{j \in J}$ for all $i \in I$, by construction of $\Imc_{\Amc, S, \Lmc}$. Applying the induction hypothesis, we get $e_i \in D^\Imc$ for all $i \in I$. It thus follows that $d_i \in (\exists r. D)^\Imc$ for all $i \in I$, as required. This finishes the proof of the claim.

    \medskip
    Take any CI
    $C \sqsubseteq D \in \Omc$ 
	and  $p \in \Delta^{\Imc_{\Amc,S, \Lmc}}$ with  $p \in C^{\Imc_{\Amc,S, \Lmc}}$. 
    By construction of $\Imc_{\Amc,S, \Lmc}$, there is an $a \in \mn{ind}(\Amc)$,
    such that $p \in \Delta^{\Jmc^{\downarrow\Lmc}_a}$. First assume 
    that there is no $(a,\cdot) \in S$, which is only possible when $\Lmc \in \{\EL,\ELI\}$. Then $\Jmc_a$
    is the maximal interpretation and consequently $p \in E^{\Jmc^{\downarrow\Lmc}_a}$
    for every \ELI-concept $E$. This in
    particular implies $p \in D^{\Imc_{\Amc,S, \Lmc}}$, as required. Now assume that there
    is an  $(a,\cdot) \in S$ and let 
    $\mn{tail}(p)=(d_i)_{i\in I}$. Then the
    claim yields $d_i \in C^\Imc$ for all $i \in I$. Since \Imc is a model
    of \Omc, we obtain $d_i \in D^\Imc$ for each $i \in I$.  By applying the claim a second time we derive $p \in D^{\Imc_{\Amc,S, \Lmc}}$, as required.
    
\end{proof}

Hereinafter, for any relation $S \subseteq A \times B$, we use the notation $aS$ to denote the set $ \{b\mid(a,b)\in S\}$, thereby treating $S$ as a function from $A$ to the powerset of $B$. 

Using this notation, the following corollary is an immediate consequence of the preceding proof.
\begin{corollary} \label{lem:IASLsim}
	Let $\Lmc \in \{\EL, \ELbot, \ELI, \ELIbot\}$, \Amc be an ABox, $a \in \mn{ind}(\Amc)$, \Imc an interpretation, and $S$ an \Lmc-simulation from \Amc to \Imc. Then,  \[ \Imc_{\Amc, S, \Lmc} \models C (a) \quad \text{ if and only if } \quad aS\subseteq C^\Imc \]
    for every \Lmc-concept $C$.
\end{corollary}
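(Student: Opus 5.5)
The plan is to read the corollary off the Claim established in the proof of Lemma~\ref{lem:IASLsubI}, applied to the path $p=a$ in $\Jmc^{\downarrow\Lmc}_a$. Recall that in that proof we set $\mn{tail}(a)=d_\Pi$. The distinguished element $d_\Pi$ of $\Jmc_a=\prod_{(a,d)\in S}(\Imc,d)$ is the tuple $(d_i)_{i\in I}$ that enumerates exactly the elements of $aS$.

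There are two cases.

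\emph{Case $aS\neq\emptyset$.} The Claim, instantiated with $p=a$, gives $a\in C^{\Imc_{\Amc,S,\Lmc}}$ if and only if $d_j\in C^\Imc$ for all $j\in I$. The right-hand side is literally $aS\subseteq C^\Imc$. It remains to note that $\Imc_{\Amc,S,\Lmc}\models C(a)$ means precisely $a\in C^{\Imc_{\Amc,S,\Lmc}}$, since $a$ is an element of the domain under the standard names assumption.

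\emph{Case $aS=\emptyset$.} This case can only arise for $\Lmc\in\{\EL,\ELI\}$, because $\bot$-simulations are total. The condition $aS\subseteq C^\Imc$ then holds trivially. Here $\Jmc_a$ is the maximal interpretation, and as observed at the end of the proof of Lemma~\ref{lem:IASLsubI}, every element of $\Jmc^{\downarrow\Lmc}_a$ satisfies every \ELI-concept in $\Jmc^{\downarrow\Lmc}_a$. This gives $a\in C^{\Imc_{\Amc,S,\Lmc}}$.

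The only point needing care is in this second case. One must check that $C$ still holds at $a$ in the full interpretation $\Imc_{\Amc,S,\Lmc}$, and not only in $\Jmc^{\downarrow\Lmc}_a$. This follows by monotonicity: \Lmc-concepts contain no negation, and $\Jmc^{\downarrow\Lmc}_a$ is a sub-interpretation of $\Imc_{\Amc,S,\Lmc}$, so the inclusion is a homomorphism that keeps $a$ fixed. Moreover, in this case $C$ cannot contain $\bot$, since $\Lmc\in\{\EL,\ELI\}$.

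Since the Claim was proved for arbitrary $(a,\cdot)\in S$ and by induction over all \Lmc-concepts, no further induction is needed.
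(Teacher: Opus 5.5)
Your proof is correct and follows the paper's route: the paper simply calls the corollary an immediate consequence of the Claim in the proof of Lemma~\ref{lem:IASLsubI}. Instantiating that Claim at $p=a$, with $\mn{tail}(a)=d_\Pi$ enumerating $aS$, and treating $aS=\emptyset$ separately via the maximal interpretation is exactly the intended argument. One small precision: for $\Lmc=\EL$, the elements of the $\EL$-unraveling of the maximal interpretation satisfy every \EL-concept but not every \ELI-concept (e.g.\ $\exists r^-.\top$ fails at the root), though this is harmless since $C$ is an \Lmc-concept.
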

\noindent

\lemsecondontconstrcorr*
\noindent
\begin{proof} 
Assume that $E$ admits a fitting \ELbot-ontology. To show that $\Omc_{\mn{alt}}$ fits the positive examples, 
let $\Amc \in E^+$.
It can be readily verified that the interpretation $\Jmc$, defined by 
\begin{align*}
    \Delta^\Jmc &:= \mn{ind}(\Amc),\\
    A^\Jmc & := \{a \mid A(a) \in \Amc\}, \hspace{-0.6cm}& \text{ for all } A\in \NC \text{ with } A\neq \overline{V_b},\\
    \overline{V_a}^\Jmc&:= \mn{ind}(\Amc)\setminus \{a\} \hspace{-0.6cm}& \text{ for all } a \in \mn{ind}(\Amc),\\
    r^\Jmc & := \{(a,b) \mid r(a,b)\in \Amc\} \hspace{-0.4cm}& \text{ for all }r\in \NR,
\end{align*}
is a model of $\Amc \cup \Omc_{\mn{alt}}$.

In order to show that $\Omc_{\mn{alt}}$ fits the negative examples, let $\Amc \in E^-$. The main argument of the proof boils down to the following claim.
\\[2mm]
    {\bf Claim.} If $ \Amc \cup \Omc_{\mn{alt}}$ is consistent, then there exists an \ELbot-simulation from \Amc to $\Amc^+ : = \biguplus_{\Bmc \in E^+} \Bmc$.
\\[1mm]
Suppose that $\Amc \cup \Omc_{\mn{alt}}$ is consistent with witnessing model \Imc.
Define the relation $S \subseteq \mn{ind}(\Amc) \times \mn{ind}(\Amc^+)$ by
\[
S = \left\{(a,b) \in\mn{ind}(\Amc) \times \mn{ind}(\Amc^+)  \mid a \notin \overline V_b^\Imc \right\}.
\]
We show that $S$ is indeed an \ELbot-simulation.
First, consider the Condition~(atom). Assume that $A(a) \in \Amc$ and $(a,b) \in S$. Since \Imc is a model of \Amc, we have $a \in A^\Imc$. 
Because \Imc is also a model of $\Omc_{\mn{alt}}$, it follows that $a \in \overline{V_c}{\vphantom{V}}^\Imc$ for all $c \in \mn{ind}(\Amc^+)$ with $A(c) \notin \Amc^+$. By definition of $S$, $a \notin \overline{V_b}{\vphantom{V}}^\Imc$ and thus $A(b) \in \Amc^+$. Next, consider Condition~(Rel). Assume that $r(a,a') \in \Amc$ and $(a,b) \in S$. Then, $a \notin \overline{V_b}{\vphantom{V}}^\Imc$ by definition. Since $\Imc$ is a model of $\Omc_{\mn{alt}}$, we obtain \[a \not\in (\exists r. \bigsqcap_{r(b,b') \in \Amc^+} \overline{V_{b'}})^\Imc.\]
As $(a,a') \in r^\Imc$, this implies $a' \not\in \overline{V_{b'}}{\vphantom{V}}^\Imc$ for some $r(b,b') \in \Amc^+$. By definition of $S$, this yields $(a',b') \in S$, as required.  
Finally, to see that $S$ satisfies Condition~(total), observe that since \Imc is a model of $\Omc_{\mn{alt}}$, for every element $d \in \Delta^\Imc$ there exists some $b \in \mn{ind}(\Amc^+)$ such that $d \notin \overline{V_b}{\vphantom{V}}^\Imc$. In particular, this holds for every $ a \in \mn{ind}(\Amc)$, so $S$ is left-total. Consequently, $S$ is an \ELbot-simulation, which proves the claim.
\medskip 

Now, since $E$ admits a fitting \ELbot-ontology, from Theorem~\ref{thm:consfitting} we obtain $\Amc \not \preceq_{\ELbot} \Amc^+$. By contraposition, the claim implies that $\Amc \cup \Omc_{\mn{alt}}$ is inconsistent, proving that $\Omc_{\mn{alt}}$ also fits the negative examples.
\end{proof}

It is easy to see that the construction of the alternative fitting ontology as well as the proof of correctness can be extended to \ELIbot.

\section{Proofs for Section~\ref{sect:AQs}}

\thmcharaq*
\noindent
\begin{proof}
`1$~\Rightarrow~$2'.
Suppose that $ E $ admits a fitting \Lmc-ontology~\Omc. Then for every  $e= (
\Amc, Q(a) ) \in E^-$ there is a model $\Imc_e$ of \Amc and \Omc with  $ \Imc_e \not
\models Q(a)$. 
%
Since \Lmc-ontologies are invariant under disjoint union, the interpretation $ \Imc = \biguplus_{e \in E^-}
\Imc_e$ is also a model of \Omc. Moreover, \Imc is clearly a model of $\Amc^-$ and therefore induces the following completion of $ \Amc^- $:
\[
	\Cmc = \Amc^- \cup \{ Q(b)  \mid  ( \Amc, Q(a) )\in E^+ ,b \in \mn{ind}(\Amc^-) \cap Q^\Imc  \} 
.\] 
Note that \Cmc  satisfies Condition~(a) by assumption. It is thus left to show that
Condition~(b) holds as well. We prove this by contradiction. 

Assume that for some  $ ( \Amc, Q(a)
) \in E^+ $ there is a \Lmc-simulation $ S $ from  \Amc to \Cmc such that for some $ ( a,b
)\in S  $ we have $ Q(b) \not\in \Cmc $. By construction of \Cmc and since \Imc is a model of~$\Amc^-$,  $ S $ is also a \Lmc-simulation from
\Amc to 
\Imc. Now consider the interpretation $
\Imc_{\Amc, S, \Lmc} $ which, by construction, is a model of \Amc. Since \Imc is a model of
\Omc, by Lemma~\ref{lem:IASLsubI}, $ \Imc_{\Amc, S, \Lmc} $ is also 
a model of \Omc. However, $Q(b) \notin \Cmc$ implies $b \notin Q^\Imc$  by design
of \Cmc. Together with $(a,b) \in S$, the
construction of $ \Imc_{\Amc, S, \Lmc} $ thus
 yields $ a \not \in Q^{\Imc_{\Amc, S, \Lmc}}
$. This means that $ \Imc_{\Amc, S, \Lmc} \not \models Q(a) $ and thus $\Imc_{\Amc, S, \Lmc}$
witnesses that $ \Omc $ does not fit  $ E $, a contradiction.

\medskip
`2$~\Rightarrow~$1'. Assume that there is a completion \Cmc of $ E $ that satisfies Conditions~(a) and~(b). We show how to construct
an \Lmc-ontology that fits $E$. We first
define an ontology $\Omc_e$ for each   $ e=( \Amc, Q(a) ) \in E^+$, distinguishing two cases:
\begin{enumerate}
	\item There is no \Lmc-simulation from \Amc to \Cmc. 
    
    This is only possible if $\Lmc \in \{\ELbot, \ELIbot\}$ because then simulations are required to be total. As in the \mbox{`(ii)~$\,\Rightarrow\,$~(i)'} direction of the proof of Theorem
		\ref{thm:consfitting}, we find an $ a_\Amc \in \mn{ind}( \Amc )  $ 
		such that for all $ b \in \mn{ind}(\Cmc)$,  \[
		( \Amc, a_\Amc ) \not \preceq^\ell_{\Lmc} ( \Cmc, b )
		\]
		where $ \ell = | \Cmc | \cdot |\Amc| $. We then set \[
			\Omc_e := \{C_{\Amc, a_\Amc}^{\ell,\Lmc} \sqsubseteq \bot\}  
		.\] 
	\item There is an \Lmc-simulation from \Amc to \Cmc.
    
    We set \[
			\Omc_e := \{ C_{\Amc, a}^{\ell,\Lmc} \sqsubseteq Q\}  
		,\]
		where $ \ell = |\Cmc| \cdot |\Amc| $.
        
\end{enumerate}
We show that $\Omc := \bigcup_{e \in E^+} \Omc_e $ fits $E$. For each
$ e= ( \Amc, Q(a) )  \in E^+ $, we  have $ \Amc \cup \Omc_e \models Q(a)
$, thus $ \Amc \cup \Omc \models Q(a)
$ as required. In both cases this is due to $b \in (C_{\Amc, b}^{\ell,\Lmc})^\Jmc$ for all models \Jmc of \Amc and individuals $b \in \mn{ind}(\Amc)$.

Next 
consider a negative example $ e=( \Amc, Q(a) ) \in E^-  $. Let $ \Imc_{\Cmc} $ be \Cmc viewed
as an interpretation. We show that $\Imc_\Cmc$
is a model of \Amc and \Omc with $\Imc_\Cmc \not\models Q(a)$, thus proving $\Amc \cup \Omc \not\models Q(a)$, as required. In fact, it is
clear that $\Imc_\Cmc$ is a model of \Amc since
$\Amc^- \subseteq \Cmc$. Moreover, $\Imc_\Cmc \not\models Q(a)$  because \Cmc satisfies
Condition~(a). It thus remains to show that $\Imc_\Cmc$
is a model of \Omc.

To this end, it suffices to show that $\Imc_\Cmc$ is a
model of every ontology $\Omc_{e'}$,  $e' \in E^+$.
Take any such $ e'=( \Bmc, P(b) )$. 
By combining Lemma~\ref{lem:finmodsim} and Lemma~\ref{lem:charconc}, we obtain that the following  are
equivalent for all $ c \in \mn{ind}( \Bmc )  $ and $ d \in \Delta^{\Imc_\Cmc} $:
\begin{enumerate}
	\item $ ( \Bmc, c ) \preceq_\Lmc ( \Imc_\Cmc, d )   $;
	\item $ ( \Bmc, c ) \preceq^\ell_\Lmc ( \Imc_\Cmc, d )   $,
		where $ \ell = | \Cmc | \cdot | \Bmc | $;
	\item $ d \in ( C_{\Bmc, c}^{\ell,\Lmc} )^{\Imc_\Cmc}  $. 
\end{enumerate}
If $ \Omc_{e'} $ was constructed in Case~1 above, then the equivalences above immediately imply
	$( C_{\Bmc, a_\Bmc}^{\ell,\Lmc} )^{\Imc_\Cmc} = \emptyset $ and thus $ \Imc_\Cmc
	\models \Omc_{e'} $. 
Now assume that $ \Omc_{e'} $ was constructed in Case~2. 
Since \Cmc satisfies Condition~(b), it follows from the same equivalences that $ ( C_{\Bmc, b}^{\ell,\Lmc} )^{\Imc_\Cmc} \subseteq
			Q^{\Imc_\Cmc}$
		 and therefore $ \Imc_\Cmc \models \Omc_{e'} $. 
\end{proof}

For an adequate discussion of the different fitting ontologies, we refer the reader to the discussion in Section~\ref{subsect:DiscFitOnt}. 

\proprefcand*
\noindent
\begin{proof}
    `1$~\Rightarrow~$2' We prove the contraposition. Thus assume that $Q(a) \notin \Amc^\pm$ for all $(\Amc, Q(a))\in E^-$. Clearly, $\Amc^\pm$ is a completion of $\Amc^-$ that satisfies Condition~(a) of  Theorem~\ref{thm:charaq}. It suffices to show that it also satisfies Condition~(b). To this end, let $(\Amc, Q(a)) \in E^+$ and 
    $(\Amc,a) \preceq_\Lmc (\Amc^\pm,b)$. Then (\Rsf) yields $Q(b) \in \Amc^\pm$, as required. 
    
    \medskip
    `2$~\Rightarrow~$1'. Let $\Cmc_0 = \Amc^- \subseteq \Cmc_1 \subseteq \ldots \subseteq \Cmc_n = \Amc^\pm$ be the sequence of ABoxes obtained by the exhaustive  application of (\Rsf). We  first observe the following:
    \\[2mm]
    {\bf Claim.} If $\Cmc$ is a completion of $\Amc^-$ that satisfies Condition~(b) of Theorem~\ref{thm:charaq}, then $\Cmc_i \subseteq \Cmc$ for all $i \leq n$.
    \\[2mm]
    \emph{Proof of claim.} We prove the claim by induction on $i$.
    Let $\Cmc$ be a completion of $\Amc^-$ that satisfies Condition~(b) of Theorem~\ref{thm:charaq}.  The base case $\Cmc_0= \Amc^- \subseteq \Cmc$ is immediate by definition of completions. Now let $i>0$ and $\Cmc_{i-1} \subseteq \Cmc$. The identity on $\mn{ind}(\Cmc_{i-1})$ is an \Lmc-simulation from $\Cmc_{i-1}$ to $\Cmc$. Assume that $\Cmc_{i}$ was obtained from $\Cmc_{i-1}$ due to an application 
    of (\Rsf) with respect to $(\Amc, Q(a))\in E^+$
    and $(\Amc,a) \preceq_\Lmc (\Cmc_{i-1},b)$. Composing this simulation with the one from $\Cmc_{i-1}$ to $\Cmc_i$ yields $(\Amc,a) \preceq_\Lmc (\Cmc_{i},b)$. Since $\Cmc$ satisfies Condition~(b) of Theorem~\ref{thm:charaq}, we then have 
    $Q(b) \in \Cmc$. This shows $\Cmc_i \subseteq \Cmc$, as required, and finishes the proof of the claim.

    \smallskip

 Assume that $Q(a)\in \Amc^\pm$ for some $(\Amc ,Q(a)) \in E^-$. To prove that Condition~1 is satisfied, it suffices to show that there is no completion \Cmc of $\Amc^-$ that satisfies
    Conditions~(a) and~(b) of Theorem~\ref{thm:charaq}. 
    Suppose that \Cmc is a completion satisfying Condition~(b). By
    assumption, we have $Q(a) \in \Amc^\pm$
    and the claim yields $Q(a) \in \Cmc$, thus
    \Cmc does not satisfy Condition~(a).
\end{proof}

\section{Proofs for Section~\ref{sect:UCQCQ}}
\label{sect:appendUCQ}

Before proving the results of Section~\ref{sect:UCQCQ}, we formalize the concepts behind the reduction queries, a well-known machinery used for ontology-mediated querying in description logics.
These concepts encode tree-shaped CQs in the spirit of the characteristic concepts introduced earlier.
Although the characteristic concept would suffice to capture the structure of a tree-shaped query, its size may be exponential in the size of the query.
This blow-up occurs in the presence of inverse roles, which enable the inductive definition to repeatedly traverse edges towards the root, resulting in a redundant encoding of identical substructures.
We therefore refine this construction to avoid such redundancies. The resulting concept is linear in the size of the tree and equivalent to the characteristic concept.

Let $\Lmc \in \{\EL, \ELbot, \ELI, \ELIbot\}$. An \emph{\Lmc-tree} CQ $T$ is a tuple $(p,t_r)$ consisting of a connected CQ $p$ and an element $t_r \in \mn{ind}(p) \cup \mn{Var}(p)$ satisfying the following conditions:
\begin{itemize}
    \item if $\Lmc \in \{\EL, \ELbot\}$, then $ p$ is a directed tree with root $t_r$;
    \item if $\Lmc \in \{\ELI, \ELIbot\}$, then $p$ viewed as an undirected graph is an undirected tree with root $t_r$.
\end{itemize}
For simplicity, we adopt the convention that each $(\emptyset, t_r)$, where $t_r$ is an individual or variable, is an \Lmc-tree CQ, interpreting it as an unlabeled one-node tree with root $t_r$.

  In the following, let $T=( p, t_r)$ be an \Lmc-tree CQ. 
  For each $r(t_r, t) \in  p$ we set $T_{|t}:=(p ', t)$, where $ p'$ is the restriction of $ p$ to the connected component of $p \setminus \{r(t_r,t)\}$ that contains $t$, if such a component exists, and $p'=\emptyset$ otherwise. By construction, $T_{|t}$ is also an \Lmc-tree CQ. 
The  \Lmc-concept $C^\Lmc_{T}$ that encodes $T$ can then be inductively defined as follows:
\begin{align*}
    C_{T}^{ \Lmc} &:= \bigsqcap_{A(t_r) \in p} A \sqcap \bigsqcap_{r(t_r,t) \in p} \exists r. C^{\Lmc}_{T_{|t}} \qquad 
\end{align*}
Observe that $\mn{rd}(C^\Lmc_{T_{|t}})<\mn{rd}(C^\Lmc_{T})$ for every $r(t_r,t)\in p$.
It is also clear from the construction that $C^\Lmc_{T}$ is of linear size in $||p||$. 

\begin{lemma} \label{lem:treeconcept}
    Let $\Lmc \in \{\EL, \ELbot, \ELI, \ELIbot\}$, $T= (p,t_r)$ an $\Lmc$-tree CQ with $p \neq \emptyset$, \Imc an interpretation and $d \in \Delta^\Imc$. Then 
    \[
    (p,t_r) \rightarrow (\Imc, d) \quad \text{ if and only if } \quad d \in ( C^\Lmc_{T})^\Imc.
    \]
\end{lemma}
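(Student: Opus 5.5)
We prove both directions simultaneously by induction on the depth of the tree $T=(p,t_r)$, that is, on the maximal length of a path from the root $t_r$ in the (directed or undirected) tree underlying $p$. Since $\mn{rd}(C^\Lmc_{T_{|t}})<\mn{rd}(C^\Lmc_T)$, this is well-founded. We read a homomorphism $(p,t_r)\rightarrow(\Imc,d)$ as a homomorphism from $\Imc_p$ to $\Imc$ that is the identity on individual names and maps $t_r$ to $d$. If $t_r$ is an individual name, this forces $d=t_r$, and the statement is then understood for $d=t_r$. For the one-node convention $(\emptyset,t)$, which only arises for subtrees, we treat $C^\Lmc_{(\emptyset,t)}=\top$; both sides are then trivially true.

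\emph{`Only if'.} Let $h$ be a homomorphism with $h(t_r)=d$. For every atom $A(t_r)\in p$ we get $d\in A^\Imc$. For every atom $r(t_r,t)\in p$, where $r$ may be an inverse role in the \ELI case, we get $(d,h(t))\in r^\Imc$. The restriction of $h$ to the terms of $T_{|t}$ is a homomorphism $(T_{|t})\rightarrow(\Imc,h(t))$. By the induction hypothesis, $h(t)\in(C^\Lmc_{T_{|t}})^\Imc$, and hence $d\in(\exists r.C^\Lmc_{T_{|t}})^\Imc$. Taking all conjuncts together gives $d\in(C^\Lmc_T)^\Imc$.

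\emph{`If'.} Let $d\in(C^\Lmc_T)^\Imc$. Every atom $A(t_r)\in p$ gives $d\in A^\Imc$. For each atom $r(t_r,t)\in p$, choose a witness $e_t$ with $(d,e_t)\in r^\Imc$ and $e_t\in(C^\Lmc_{T_{|t}})^\Imc$. The induction hypothesis yields a homomorphism $h_t:(T_{|t})\rightarrow(\Imc,e_t)$. Define $h$ by $h(t_r)=d$ and $h=h_t$ on the terms of $T_{|t}$.

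Well-definedness of $h$ comes from the tree shape. The subtrees $T_{|t}$ for distinct children $t$ are pairwise term-disjoint and do not contain $t_r$, because $p$ is a tree in the directed, respectively undirected, sense. Every atom of $p$ is either an atom at $t_r$ or an atom of exactly one $T_{|t}$, so $h$ is a homomorphism.

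\textbf{Main obstacle.} The only delicate point is the treatment of individual names. A homomorphism must be the identity on them, but the concept $C^\Lmc_T$ cannot force a witness $e_t$ to equal a given individual. One option is to assume that the trees in question have only variables below the root, as is the case for reduction queries, where individuals occur only as roots. The other is to read `$\rightarrow$' here as a plain homomorphism of $\Imc_p$ without the identity requirement. I would fix this reading explicitly at the start; the induction is then routine.
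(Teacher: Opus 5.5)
Your proof is correct and follows the paper's argument. Both use induction on the role depth of $C^\Lmc_T$ (equivalently, the depth of the tree). For `only if', you restrict $h$ to the subtrees $T_{|t}$; for `if', you glue the term-disjoint subtree homomorphisms $h_t$ together with $t_r \mapsto d$.

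Your remark on individual names is a legitimate clarification that the paper leaves implicit. Its proof treats $(p,t_r)\rightarrow(\Imc,d)$ as a plain pointed homomorphism, and in its application to reduction queries individuals occur only at the root, so either of your readings suffices.
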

\noindent
\begin{proof}
    We proceed by induction on $\mn{rd}(C^\Lmc_{T})$. If $\mn{rd}(C^\Lmc_{T})=0$, then $p$ contains only concept atoms, each occurring in the conjunction $C^\Lmc_{T}$, and the statement follows immediately. Suppose $\mn{rd}(C^\Lmc_{T})> 0$ and assume the equivalence holds for all \Lmc-tree CQs whose corresponding concept has smaller role depth. 
    
    `\emph{only if}'. Assume that $h$ witnesses
    $(p,t_r) \rightarrow (\Imc, d)$. Then, $d \in A^\Imc$ for all $A(t_r)\in p$, so the first conjunction of $C^\Lmc_{T}$ holds at $d$.
    It remains to show that the second conjunction holds. Let $r(t_r,t)\in p$ and $T_{|t}= (p', t)$. Since $h$ is a homomorphism, $(d,h(t))\in r^\Imc$.
    Consequently, if $p'=\emptyset$, we immediately obtain $d \in (\exists r. \top)^\Imc=(\exists r.  C^\Lmc_{T_{|t}})^\Imc$. Otherwise, the restriction of $h$ to $p'$ witnesses $(p', t) \rightarrow (\Imc, h(t))$. By induction hypothesis, $h(t) \in (C^{\Lmc}_{T_{| t}})^\Imc$, and thus $d \in (\exists r. C^{\Lmc}_{T_{| t}})^\Imc$. Hence, the second conjunction of $C^\Lmc_{T}$ holds at $d$, and we conclude that $d \in (C^\Lmc_{T})^\Imc$.

    \smallskip
    `\emph{if}'. Assume that $d\in ( C^\Lmc_{T})^\Imc$. By construction of $C^\Lmc_{T}$, for every $r(t_r, t) \in p$ there exists $(d,e_t) \in r^\Imc$ such that $e_t \in ( C^{\Lmc}_{T_{|t}})^\Imc$, where $T_{|t}= (p', t)$. If $p'\neq \emptyset$, the induction hypothesis yields a homomorphism $h_t$ from $p'$ to \Imc with $h_t(t)=e_t$. Otherwise, define $h_t: \{t\} \to \Delta^\Imc$ by setting $h_t(t)=e_t$.
     Since the domains of the $h_t$ are disjoint, we may construct a function $h$ from $p$ to $\Imc$ by 
     \[
    h = \{(t_r, d)\} \cup \bigcup_{r(t_r, t) \in p} h_t.
    \]
    Observe that each $h_t$ is a homomorphism from a proper subtree of $p$ to \Imc. Thus, to verify that $h$ is a homomorphism from $p$ to $\Imc$, it suffices to consider the assertions of the form $A(t_r)\in p$ and $r(t_r, t) \in  p$. The former are preserved by the assumption $d \in (C^\Lmc_{T})^\Imc$, and the latter by the choice of $e_t$. Thus, $h$ witnesses $(p,t_r) \rightarrow (\Imc, d)$.
\end{proof}

It remains to establish the final connection to the \Lmc-forest \Amc-variations. Let $\Lmc \in \{\EL, \ELbot, \ELI, \ELIbot\}$, \Amc be an ABox, $p$ a CQ and $p'$ an \Lmc-forest \Amc-variation of $p$. Recall that each connected component $q$ of $p' \setminus \Amc$ is tree-shaped, as noted in the original definition. It therefore contains some $t_r \in \mn{ind}(q) \cup \mn{Var}(q)$ such that $(q, t_r)$ is an \Lmc-tree CQ. Moreover, whenever $q$ contains an individual $a$, this is the only individual that occurs in $q$ and we may assume $t_r = a$. Otherwise, we fix $t_r$ to be an arbitrary but fixed element of $\mn{Var}(q)$. 
The \emph{reduction queries} of $p'$ are the queries $ \widehat q$, one for each \Lmc-tree CQ $T=(q,t_r)$ arising from a connected component of $p' \setminus \Amc$, defined by:
\[
    \widehat q : = \begin{cases}
        C_{T}^\Lmc(a), \qquad &\text{ if } t_r \text{ is an individual};\\[1mm]
        \exists x\, C_{T}^\Lmc(x), & \text{ otherwise}.
    \end{cases}   
\]
We use $\mn{red}(p')$ to denote the set of reduction queries of $p'$.
\propqEntIffVarEnt*
\noindent
\begin{proof}
We start with Point~1. `\emph{only if}'. Let $\Imc$ be an \Lmc-forest model of \Amc and suppose that $ \Imc \models p$, witnessed by a homomorphism $g$ from $p$ to $\Imc$. Using $g$, we construct an \Lmc-forest \Amc-variation $p'$ of $p$ as follows. First, for every variable $x$ with $g(x)=a\in\mn{ind}(\Amc)$, replace $x$ by $a$. Second, identify variables $x$ and $y$ whenever $g(x) = g(y) \notin \mn{ind}(\Amc)$. Since \Imc is a forest model of \Amc and $g$ is a homomorphism that satisfies $g(a)=a$ for all individuals $a$ in $p$, the result $p'$ is an \Lmc-forest \Amc-variation of $p$. 
    Now let $h = g \cup \mathrm{id}_{|\mn{ind}(\Amc)}$ and $h'$ be the restriction of $h$ to the terms occurring in $p'$. It can be readily verified that $h'$ is a homomorphism from $p'$ to \Imc satisfying $h'(a)=a$ for all individual $a$ occurring in $p'$.
	
	\smallskip
	`\emph{if}'. For this direction, the assumption that \Imc is an \Lmc-forest model of \Amc can be dropped.
    Let \Imc be a model of \Amc and assume there exists some \Lmc-forest \Amc-variation $p'$ of $p$ such that $\Imc \models p'$, witnessed by a homomorphism $g$. Then by definition of variations, there is a canonical homomorphism  $h$ from $p$ to $p'$ such that $h(a)=a$ for all individuals $a$ occurring in $p$. Since homomorphisms are closed under composition, the mapping $h' = g \circ h$ is a homomorphism from $p$ to \Imc that satisfies $h'(a)=a$ for all individuals $a$ in $p$, as required.

    \medskip 
    For Point~2, let \Imc be a model of \Amc, $p'$ be an \Lmc-forest \Amc-variation of $p$, and $(p_1,t_1), \dots, (p_n, t_n)$ be the \Lmc-tree CQs that arise from the connected components of $p' \setminus \Amc$.
    By construction, whenever an individual $a$ occurs in $p_i$, we have $t_i=a$ and $\mn{ind}(p_i)=\{a\}$. 
    Hence, an application of Lemma~\ref{lem:treeconcept} yields $\Imc \models p_i$ if and only if $\Imc \models \widehat p_i$, for every $1 \leq i \leq n$, where $\widehat p_i$ is the reduction query corresponding to $(p_i, t_i)$. Moreover, since 
    $p' \setminus ( p_1 \cup \dots \cup p_n) \subseteq \Amc$ and $\Imc \models \Amc$, it follows that
    $\Imc \models p' $ if and only if $\Imc \models p_i$ for all $1 \leq i \leq n$.
    Combining these equivalences yields the desired conclusion.
\end{proof}
It follows from the proof, that the `if'-direction in Point~1 holds even for arbitrary models of \Amc. We will use this fact in the proof of the following theorem.

\charUCQwithFin*

To prove the theorem, we need a way to handle the non-rooted parts of the queries in Condition~(b) uniformly when constructing a fitting ontology from a finite interpretation that satisfies Conditions~(a) and~(b). We achieve this by introducing a fresh role $u$ that simulates a universal role within each component of the disjoint union from Condition~(a). The following lemma shows that, with this gadget, existential \Lmc-queries can be treated as a union of rooted \Lmc-queries.

\begin{lemma}\label{lem:univrole}
    Let $\Lmc \in \{\EL, \ELbot, \ELI, \ELIbot\}$ and $u \in \NR$. Let $(\Jmc_i)_{i\in I}$ be a non-empty family of interpretations such that $u^{\Jmc_i}= \Delta^{\Jmc_i} \times \Delta^{\Jmc_i}$ for every $i \in I$, and set $\Imc = \biguplus_{i \in I} \Jmc_i$. For every ABox \Amc, every \Lmc-simulation $S$ from \Amc to \Imc, and every \Lmc-concept $C$, in none of which $u$ occurs, the following holds:
    \[\Imc_{\Amc, S, \Lmc} \models \exists x\, C(x) \quad \text{ if and only if }\quad\Imc_{\Amc, S, \Lmc} \models \exists u. C(a)\]
    for some $a \in \mn{ind}(\Amc)$.
\end{lemma}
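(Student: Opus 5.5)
The proof uses Corollary~\ref{lem:IASLsim} in both directions, together with the fact that the tuples making up $\Jmc_a$ stay inside single components $\Jmc_i$ of $\Imc$.

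The `if' direction is immediate. Suppose $\Imc_{\Amc,S,\Lmc}\models \exists u.C(a)$ for some $a \in \mn{ind}(\Amc)$. Then $a$ has a $u$-successor in $\Imc_{\Amc,S,\Lmc}$ that satisfies $C$. Hence $C^{\Imc_{\Amc,S,\Lmc}}\neq\emptyset$, that is, $\Imc_{\Amc,S,\Lmc}\models\exists x\,C(x)$.

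For the `only if' direction, take $p \in C^{\Imc_{\Amc,S,\Lmc}}$. By construction of $\Imc_{\Amc,S,\Lmc}$, $p$ is an element of some $\Jmc^{\downarrow\Lmc}_a$; individuals of $\Imc_0$ are covered as well, since they are the roots $p=a$ with $\mn{tail}(p)=d_\Pi$. Fix this $a$; I claim that $\Imc_{\Amc,S,\Lmc}\models \exists u.C(a)$. By Corollary~\ref{lem:IASLsim}, which applies because $\exists u.C$ is again an \Lmc-concept, it suffices to show $aS \subseteq (\exists u.C)^\Imc$.

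If $aS=\emptyset$, this inclusion holds trivially. (Equivalently, $\Jmc_a$ is then the maximal interpretation, so $a$ satisfies every \ELI-concept.)

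Otherwise $\Jmc_a=\prod_{d\in aS}(\Imc,d)$, and $\mn{tail}(p)=(e_d)_{d\in aS}$. The claim in the proof of Lemma~\ref{lem:IASLsubI} yields $e_d\in C^\Imc$ for every $d \in aS$. The key observation is that $e_d$ lies in the same component $\Jmc_i$ of $\Imc$ as $d$. Indeed, $p$ is a path in $\Jmc_a$ starting at the root $(d)_{d\in aS}$, and each step of this path is a (possibly inverse) role edge of the product. Projecting to coordinate $d$ therefore gives a path in $\Imc$ from $d$ to $e_d$. Since $\Imc$ is a disjoint union, every such path stays within a single $\Jmc_i$. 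Because $u^{\Jmc_i}=\Delta^{\Jmc_i}\times\Delta^{\Jmc_i}$, we get $(d,e_d)\in u^\Imc$, and hence $d\in(\exists u.C)^\Imc$. This gives $aS\subseteq(\exists u.C)^\Imc$, as required.

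The main point to check is this projection argument, in particular that it is unaffected by paths using inverse roles or $u$-edges inside $\Jmc_a$. Neither causes a problem, because every edge of the product projects coordinatewise to an edge of $\Imc$. The hypothesis that $u$ does not occur in \Amc or $C$ is used only to ensure that $S$ and $C$ are unaffected by adding the universal $u$-edges, and that the claim in the proof of Lemma~\ref{lem:IASLsubI} applies with $\Imc$ interpreting $u$ as given.
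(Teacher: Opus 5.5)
Your proof is correct and is essentially the paper's argument: both rest on Corollary~\ref{lem:IASLsim} together with the fact that role paths in $\Imc$ never leave a component $\Jmc_i$. The paper argues by contraposition and encodes the path from an individual to the $C$-element as the concept $\exists r_1.\cdots\exists r_n.C$, so it only needs the Corollary at individuals. You argue directly and make the coordinatewise projection of that path through the product explicit, via the claim inside the proof of Lemma~\ref{lem:IASLsubI}.
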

\noindent
\begin{proof} 
    Let \Amc be an ABox, $S$ an \Lmc-simulation from \Amc to \Imc, and $C$ an \Lmc-concept such that $u$ does not occur in $C$ and \Amc.
    The `if'-direction is trivial. 
    To prove the other direction we proceed with the contraposition. Suppose $\Imc_{\Amc, S, \Lmc} \not \models \exists u. C(a)$ for all $a \in \mn{ind}(\Amc)$. Then, Corollary~\ref{lem:IASLsim} yields an element $d \in aS$ for which $d \not \in (\exists u. C)^{\Imc}$. 
    Let $i \in I$ such that $d \in \Delta^{\Jmc_i}$. Since $u$ acts as a universal role in $\Imc_i$, we obtain $C^{\Jmc_i} = \emptyset$. Hence, for every finite sequence $(r_j)_{1 \leq j \leq n}$ with $r_j \in \mn{N}_{\mn{R}}^\Lmc$, we have $d \not \in (\exists r_1. \dots \exists r_n C)^\Imc$. A second application of Corollary~\ref{lem:IASLsim} gives $\Imc_{\Amc, S, \Lmc} \not \models \exists r_1. \dots \exists r_n C(a)$ for every individual $a\in \mn{ind}(\Amc)$. Since every element of $\Imc_{\Amc, S,\Lmc}$ is reachable from an individual of \Amc by some path, that does not contain inverse roles if $\Lmc \in \{\EL, \ELbot\}$, this proves $C^{\Imc_{\Amc, S, \Lmc}}= \emptyset$.
\end{proof}

We are now ready to prove the characterization.

\noindent
\begin{proof} \textbf{(of Theorem~\ref{thm:charUCQwithFin})}
    `$1~\Rightarrow~2$'. 
    We prove the contraposition. Assume that Point~2 does not hold. We want to show that there is no fitting \Lmc-ontology. It suffices to prove that any \Lmc-ontology \Omc fitting the negative examples does not fit the positive examples. Hence, let \Omc be a \Lmc-ontology that fits the negative examples. 
	 Then, for each $ e = \left( \Amc, q \right) \in E^- $, there exists a model $ \Imc_e $ 
	 of \Amc witnessing that $ \Amc \cup \Omc \not\models q $.  Define the interpretation
	 $ \Imc := \biguplus_{e \in E^-}  \Imc_e$, which is non-empty since $ E^- $ is non-empty.
	 W.l.o.g., we may assume that each $\Imc_e$, and thus also \Imc, is finite by employing the finite model property for UCQ entailment in \Lmc. See, for instance, Corollary~1 in \cite{DBLP:conf/kr/GogaczIM18}.  
	 Since Point~2 is violated by assumption, there exists $ \left( \Amc, q \right) \in E^+  $ and an \Lmc-simulation $S$
	from \Amc to \Imc such that  $ \Imc_{\Amc, S, \Lmc} \not \models q$. 
	By Lemma \ref{lem:IASLsubI}, the interpretation $ \Imc_{\Amc, S, \Lmc}$ is a model of $\Amc \cup \Omc$. Consequently, $\Amc \cup \Omc \not \models q$, proving that \Omc does not fit the positive examples.

    \medskip
    `2$~\Rightarrow~$1'
    We prove this direction for a slightly weaker variant of (a):
    \begin{enumerate}[label=(\alph*$'$)]
        \item $ \Imc = \biguplus_{e \in E^-}  \Imc_e$, where for each 
					 $ e=( \Amc, q ) \in E^- $, $ \Imc_e $ is a model of \Amc such that $ \Imc_e \not \models p' $ for every \Lmc-forest \Amc-variation $p'$ of each  CQ in $q$.  
    \end{enumerate}
    Clearly, if a finite interpretation satisfies Conditions~(a) and~(b), then it also satisfies 
    Conditions~(a$'$) and~(b), by the remark after Lemma~\ref{prop:qEntIffVarEnt}. 
    We will show that the latter pair of properties is enough to prove Point~1.
    
    Suppose that \Imc is a finite interpretation satisfying Conditions~(a$'$) and~(b). Moreover, let $|\Delta^\Imc| = k$ and $\Imc = \biguplus_{e \in E^-} \Imc_e$ be the disjoint union from Condition~(a$'$). To deal with the non-rooted parts of queries appearing in $E$, we introduce a fresh role name $u$. Since $u$ does not appear in $E$, w.l.o.g., we may assume that $u$ acts as an universal role on $\Imc_e$, that is, $u^{\Imc_e}= \Delta^{\Imc_e} \times \Delta^{\Imc_e}$, for each $e \in E^-$. This assumption allows us to use Lemma~\ref{lem:univrole} later on.

	We proceed with the construction of the fitting \Lmc-ontology \Omc virtually in the same way as in the proof of Theorem~\ref{thm:charaq}.
	First, for every $ e=( \Amc, q ) \in E^+ $, we define the ontology $\Omc_e$, distinguishing the following cases:
	\begin{enumerate}
		\item There exists no \Lmc-simulation from \Amc to \Imc. Observe that then $ \Lmc=\Lmc'_{\bot} \in \{\ELbot, \ELIbot\}$  and there exists an individual 
			$ a \in \mn{ind}( \Amc )  $ such that 
		$ ( \Amc, a ) \not \preceq_{\Lmc'} ( \Imc, d ) $, 
			for all $ d \in \Delta^\Imc $. 
            To guarantee that the constructed ontology fits this example, we must ensure that $\Amc $ is inconsistent with \Omc. 
            Hence, we set \[
            \Omc_e := \left\{C^{\ell, \Lmc}_{\Amc, a} \sqsubseteq \bot \right\}  
			,\]	
            where $\ell = |\Amc| \cdot k$.
		\item Otherwise, let $S$ be the maximal \Lmc-simulation from \Amc to \Imc.
        By assumption and Lemma~\ref{prop:qEntIffVarEnt}, there exists some rooted CQ $ p $ 
	in  $ q $ and an  \Lmc-forest \Amc-variation $ p' $ of  $ p $ such that  
    $
         \Imc_{\Amc, S, \Lmc} \models p' 
     $.
      Since $S$ is chosen to be maximal, it suffices to ensure that $\Amc \cup \Omc$ entails the reduction queries of $p'$. 
      For each existential reduction query $\widehat p=\exists x\, D(x) \in \mn{red}(p')$, let $a_{\widehat p}$ be an individual such that $\Imc_{\Amc, S, \Lmc} \models \exists u. D (a_{\widehat p})$. The existence of such an individual is ensured by Lemma~\ref{lem:univrole}.
      We then set 
       \begin{align*}
            \Omc_e := &\left\{C^{\ell, \Lmc}_{\Amc, a_{\hat{p}}} \sqsubseteq \exists u. D (a_{\widehat p}) {\;\bigl \vert\;} \widehat p = \exists x \,D(x)\in \mn{red}(p')\right\}  \\
        &\cup\left\{C^{\ell, \Lmc}_{\Amc, a} \sqsubseteq D {\;\bigl \vert\;}  D(a)\in \mn{red}(p')\right\},
		\end{align*}
        where $\ell = |\Amc| \cdot k$.
	\end{enumerate}
	We now show that $\Omc:=\bigcup_{e \in E^+} \Omc_e$ fits $ E $. For this, we use the following claim.
    \\[2mm]
    {\bf Claim.} \Imc is a model of \Omc.
    \\[1mm]
    \emph{Proof of claim.} 
    It is enough to show that \Imc is a model of $\Omc_e$ for every $e \in E^+$. Let $e \in E^+$ and $ C\sqsubseteq D \in \Omc_e $. By construction of $\Omc_e$, we must have $C=C_{\Amc, a}^{\ell,\Lmc}$, for $\ell= k \cdot |\Amc|$ and some $a \in \mn{ind}(\Amc)$. Using Lemma \ref{lem:charconc} and then Lemma~\ref{lem:finmodsim}, we first obtain $ ( \Amc, a ) \preceq_{\Lmc'}^{\ell} ( \Imc, d )$ and subsequently $( \Amc, a ) \preceq_{\Lmc'} ( \Imc, d )$ for all $d \in C^{\Imc}$, where $ \Lmc'$ is \Lmc without $\bot$. Thus, $C^\Imc \subseteq  aS$ for the maximal $\Lmc'$-simulation $S$ from $\Amc$ to \Imc.
    If $ D = \bot $, then $aS = \emptyset$, by construction of $\Omc_e$, and we obtain $ C^{\Imc} \subseteq aS = \emptyset = D^{\Imc} $, as required.
    Otherwise, $S$ is also the maximal \Lmc-simulation from \Amc to \Imc, and we have $ \Imc_{\Amc, S, \Lmc} \models D(a) $, by construction of $\Omc_e$. An application of Corollary~\ref{lem:IASLsim} yields $ aS \subseteq  D^{\Imc} $ and thus  $C^{\Imc} \subseteq aS\subseteq D^{\Imc}$, proving $\Imc \models \Omc_e$ and therefore the claim.

    \medskip
    To show that \Omc fits the negative examples, let $e=(\Amc, q) \in E^-$. Recall that $\Imc_e$ is a finite model of \Amc such that $\Imc = \Imc_e \uplus \Jmc $, for some finite interpretation \Jmc, and $ \Imc_e \not \models p' $ for all \Lmc-forest \Amc-variations  $p'$ of any CQ in $q$. 
    Since \Lmc-ontologies are invariant under taking components of disjoint unions, the claim and $\Imc = \Imc_e \uplus \Jmc$ implies $\Imc_e \models \Omc$.
    Now, let $S_{id}$ denote the identity on $\mn{ind}(\Amc)$. Clearly, $S_{id}$ is an $\Lmc_\bot$-simulation from \Amc to $\Imc_e$. 
     Thus, $(\Imc_e)_{ \Amc, S_{id}, \Lmc }$ is an \Lmc-forest model of $\Amc $ and, by Lemma~\ref{lem:IASLsubI}, also model of $\Omc$. 
    In order to show $(\Imc_e)_{\Amc,S_{id}, \Lmc } \not \models q$, according to Lemma~\ref{prop:qEntIffVarEnt}, it suffices to prove $(\Imc_e)_{\Amc,S_{id}, \Lmc } \not \models p'$ for every \Lmc-forest \Amc-variation $p'$ of each CQ in $q$. Let $p'$ be such a variation.
    Since $\Imc_e \not\models p'$, Lemma~\ref{prop:qEntIffVarEnt} yields a reduction query $\widehat p$ of $p'$ such that $\Imc_e \not\models \widehat p$.
    If $\widehat p$ is a rooted \Lmc-query, using Corollary~\ref{lem:IASLsim}, we immediately obtain $(\Imc_e)_{\Amc,S_{id}, \Lmc } \not \models \widehat p$.
    Otherwise, $\widehat p$ is an existential \Lmc-query, say $\exists x\, C(x)$. Since $S_{id}$ is an $\Lmc_\bot$-simulation and by assumption $\Imc_e \models C \sqsubseteq \bot$, we may use Lemma~\ref{lem:IASLsubI} to derive $(\Imc_e)_{\Amc, S_{id},\Lmc } \models C \sqsubseteq \bot$ from which  $(\Imc_e)_{\Amc,S_{id}, \Lmc } \not \models \widehat p$ follows. Employing Lemma~\ref{prop:qEntIffVarEnt} one last time yields $(\Imc_e)_{\Amc,S_{id}, \Lmc } \not \models p'$, proving that
   $(\Imc_e)_{\Amc,S_{id}, \Lmc }$ witnesses $\Amc \cup \Omc \not \models q$. 

    It remains to show that \Omc fits the positive examples. Let $ e=( \Amc, q ) \in E^+ $. Clearly, every model \Jmc of \Amc must have $ a \in (C_{\Amc, a}^{\ell,\Lmc})^\Jmc $ for all $a \in \mn{ind}(\Amc)$ and $\ell =k\cdot |\Amc|$. If \Jmc is also a model of \Omc, then $\Jmc \models D(a)$ for all $C_{\Amc, a}^{\ell,\Lmc} \sqsubseteq D \in \Omc_e$, by construction. 
    If $D =\bot$ for any such $D$, this means $\Amc \cup \Omc$ is inconsistent, proving $\Amc \cup \Omc \models q$. Otherwise, let \Jmc be a model of $\Amc \cup \Omc$, and let $p'$ be the \Lmc-forest \Amc-variation of some CQ in $q$, from which $\Omc_e$ is constructed.
    An inspection of all CIs in $\Omc_e$ reveals that $\Jmc \models D(a)$ for all $D(a) \in \mn{red}(p')$ and for all $ \exists x\, D(x) \in \mn{red}(p')$ there exists some $b \in \mn{ind}(\Amc)$ such that $\Jmc \models \exists u. D(b)$. Since the latter, in particular, implies $\Jmc \models\exists x\, D(x)$, \Imc entails all reduction queries of $p'$.
    A final application of Lemma~\ref{prop:qEntIffVarEnt} yields $\Jmc \models p'$ and thus $\Jmc \models q$.
\end{proof}

The proof yields the following alternative characterization.
\begin{lemma}\label{cor:charUCQfinaalt}
 Let $\Lmc \in \{\EL,\ELbot,\ELI,\ELIbot\}$, and $ E= \left( E^+, E^- \right)  $ be a collection of labeled ABox-UCQ examples with 
	$ E^- \neq \emptyset  $. Conditions~(a) and~(b) of Theorem~\ref{thm:charUCQwithFin}
  can be replaced by:
	\begin{enumerate}[label=(\alph*$'$)]
		\item[(a$'$)] $ \Imc = \biguplus_{e \in E^-}  \Imc_e$ where, for each 
		$ e=( \Amc, q ) \in E^- $, $ \Imc_e $ is a model of \Amc such that $ \Imc_e \not \models p' $, for every \Lmc-forest \Amc-variation $p'$ of each CQ in $q$;
        
        \item[(b$'$)] there exists a role name $u$ so that for all $(\Amc, q) \in E^+$ there exists an \Lmc-forest \Amc-variation $p'$ of a CQ in $q$ and an individual $a_{\widehat p} \in \mn{ind}(\Amc)$ for every $\widehat p \in \mn{red}(p')$ such that the following holds:
        for every \Lmc-simulation $S$ from \Amc to \Imc,  
        \[\Imc_{\Amc, S, \Lmc}  \models C(a)  \quad \text{ and } \quad \Imc_{\Amc, S, \Lmc}  \models \exists u. C(a_{\widehat p}), \]
        for all $C(a) \in \mn{red}(p')$ and $\widehat p=\exists x\, C(x) \in \mn{red}(p')$.
	\end{enumerate}
\end{lemma}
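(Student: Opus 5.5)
The plan is to show that Point~1 of Theorem~\ref{thm:charUCQwithFin} is equivalent to Point~2 with (a),(b) replaced by (a$'$),(b$'$). Both directions go through the proof of Theorem~\ref{thm:charUCQwithFin} just given. Throughout, we use that for a finite \Imc the relevant simulations are witnessed by characteristic concepts of depth $\ell=|\Amc|\cdot|\Delta^\Imc|$ (Lemmas~\ref{lem:charconc} and~\ref{lem:finmodsim}).

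\emph{Direction (a$'$)+(b$'$) $\Rightarrow$ 1.} I would build the ontology exactly as in Case~2 of the `2$\Rightarrow$1' part of the proof above, but read off the CIs directly from (b$'$). For $e=(\Amc,q)\in E^+$:
\begin{itemize}
\item If there is no \Lmc-simulation from \Amc to \Imc, we are in the $\bot$ case and use Case~1 unchanged.
\item Otherwise, take the variation $p'$ and the individuals $a_{\widehat p}$ given by (b$'$). Put $C^{\ell,\Lmc}_{\Amc,a}\sqsubseteq D$ for every rooted $D(a)\in\mn{red}(p')$, and $C^{\ell,\Lmc}_{\Amc,a_{\widehat p}}\sqsubseteq \exists u.D$ for every existential $\widehat p=\exists x\,D(x)$.
\end{itemize}
Condition (b$'$), applied to the maximal simulation $S$, together with Corollary~\ref{lem:IASLsim}, gives $aS\subseteq D^\Imc$ and $a_{\widehat p}S\subseteq(\exists u.D)^\Imc$. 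The proof of the Claim that \Imc models \Omc then goes through verbatim. The negative examples are handled word for word as in the proof, since it used only (a$'$). The only subtlety is that $u$ need no longer be universal on the $\Imc_e$. However, the negative-example argument uses $\Imc_e\models C\sqsubseteq\bot$ for existential reduction queries, and this follows from (a$'$) alone. Also, $u$ occurs only on right-hand sides in \Omc, so it can be reinterpreted freely; this gives a model of \Omc that is unchanged on all symbols of $E$. The positive examples follow as before: every model \Jmc of $\Amc\cup\Omc$ satisfies all reduction queries of $p'$, hence satisfies $q$ by Lemma~\ref{prop:qEntIffVarEnt}.

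\emph{Direction 1 $\Rightarrow$ (a$'$)+(b$'$).} Apply Theorem~\ref{thm:charUCQwithFin} to obtain a finite \Imc satisfying (a) and (b); (a) implies (a$'$) by the remark after Lemma~\ref{prop:qEntIffVarEnt}. Choose a fresh role $u$ and reinterpret it as the universal relation on each $\Imc_e$. This is harmless because $u$ does not occur in $E$, so (a) and (b) are preserved. For a positive example, let $S_{\max}$ be the maximal simulation. By (b) and Lemma~\ref{prop:qEntIffVarEnt}, $\Imc_{\Amc,S_{\max},\Lmc}$ is an \Lmc-forest model (unravelings hang off individuals). So some \Lmc-forest variation $p'$ holds in it, and Lemma~\ref{lem:univrole} supplies the $a_{\widehat p}$. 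To transfer this to an arbitrary simulation $S\subseteq S_{\max}$, use Corollary~\ref{lem:IASLsim}: truth of $C(a)$ in $\Imc_{\Amc,S,\Lmc}$ means $aS\subseteq C^\Imc$, and this is monotone downward in $S$. Hence the same $p'$ and the same $a_{\widehat p}$ work uniformly for all $S$.

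The main obstacle is this uniformity step, namely fixing one $p'$ independent of $S$. The maximality of $S_{\max}$ combined with the antimonotonicity in Corollary~\ref{lem:IASLsim} resolves it. A further subtlety is the case of no simulation at all under $\bot$, where (b$'$) is vacuous.
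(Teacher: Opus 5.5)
Your proposal is correct and is essentially the paper's argument. The paper obtains the lemma by observing that the proof of Theorem~\ref{thm:charUCQwithFin} only uses (a$'$) together with one variation $p'$ and individuals $a_{\widehat p}$, chosen for the maximal simulation via Lemma~\ref{lem:univrole} after making a fresh $u$ universal on each $\Imc_e$. You additionally make explicit the downward monotonicity from Corollary~\ref{lem:IASLsim} that transfers this choice to every $S \subseteq S_{\max}$.

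One caveat concerns the $\bot$ case in which there is no simulation from \Amc to \Imc. There, (b$'$) is not entirely vacuous, because it still asserts that some \Lmc-forest \Amc-variation of a CQ in $q$ exists, and this can fail. For example, take $E^+=\{(\{B(b)\},\exists x\, r(x,x))\}$ and $E^-=\{(\{A(a)\},A'(a))\}$. Then $\{B\sqsubseteq\bot\}$ is a fitting \ELbot-ontology, but no \ELbot-forest variation of $\exists x\, r(x,x)$ exists. This is a corner-case inaccuracy of the statement itself, shared by the paper. It is repaired by requiring the variation in (b$'$) only when a simulation from \Amc to \Imc exists, which is also the only way the paper later uses (b$'$).
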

\subsection*{Variants and Analysis of Fitting Ontologies}
\label{subsect:DiscFitOnt}
We now discuss the fitting ontologies that arise from the characterization. Let $\Lmc \in \{\EL, \ELbot, \ELI, \ELIbot\}$.
Recall that a fitting \Lmc-ontology \Omc for a collection $E=(E^+, E^-)$ of ABox-(U)CQ examples, as obtained in the characterization, captures the positive examples viewed as implications $\Amc \rightarrow q$. In fact, \Omc even expresses the stronger implication $\Amc \rightarrow p'$ for some \Lmc-forest \Amc-variation $p'$ of some CQ in $q$. These implications are encoded using the different reduction queries of $p'$, a fresh role $u$, and the characteristic concepts of \Amc. This extends the construction of the fitting ontologies from Theorem~\ref{thm:charaq} and, likewise, does not use any auxiliary concept names. However, the introduction of an auxiliary role $u$ is in general unavoidable whenever $E^+$ contains non-rooted UCQs. The following example illustrates this.
\begin{example}
    Let $E = (E^+, E^-)$, where $E^+:=\{(A(a), \exists x\, B(x))\}$ and $E^-:=\{(A(a), B(a))\}$. Clearly, the \EL-ontology $\Omc := \{A \sqsubseteq \exists u. B\}$ fits $E$. However, any \EL-ontology $\Omc'$ that does not contain auxiliary role names cannot contain any existential quantifiers, since no role name occurs in $E$. Thus, if $\Omc'$ fits $E^+$, we must have $\Omc' \models A \sqsubseteq B$. But then $\{A(a)\} \cup \Omc' \models B(a)$, showing that $\Omc'$ cannot fit $E^-$. This proves that every \EL-ontology that fits $E$ must contain an auxiliary role.
\end{example}
If, on the other hand, $E^+$ contains only rooted CQs, then a straightforward inspection of the construction in the proof of Theorem~\ref{thm:charUCQwithFin} shows that the resulting fitting ontology contains no auxiliary symbols.

Another difference to the formerly constructed fitting ontologies lies the depth of the used characteristic concepts, which is not determined directly by the examples themselves, but is linear in the size of $|\Delta^\Imc|\cdot ||E^+||$, where \Imc is the finite interpretation satisfying Conditions~(a) and~(b) from which \Omc is constructed. Indeed, while \Omc only contains polynomial many CIs, the depth of the used characteristic concepts results in an exponential blowup. We may thus use the same trick as in Section~\ref{sect:consistency} to reduce \Omc to polynomial size. 
Introduce the auxiliary concept name $X_{\Amc,a}^k$ for every $\Amc \in E^-$, $a \in \mn{ind}(\Amc)$, $\ell = |\Delta^\Imc|\cdot|\Amc|$ and $k \in \{0, \dots, \ell\}$. Then the ontology $\Omc'$ contains the following CIs for each $C_{\Amc, a}^{\ell,\Lmc} \sqsubseteq D \in \Omc$ and $1 \leq k \leq \ell$:
\[\begin{array}{rcl}
\displaystyle
	 \bigsqcap_{A(a) \in \Amc} \hspace{-0.1cm} A 
		 \sqcap \hspace{-0.1cm}\bigsqcap_{r\left( a,b \right) \in \Amc,  r \in \mn{N}_{\mn{R}}^\Lmc} \hspace{-0.1cm} \exists r.\, X_{\Amc, b}^{k-1} 
	 &\sqsubseteq& X_{\Amc, a}^k \\[5mm]
   \displaystyle \bigsqcap_{A(a) \in \Amc}A \ \sqsubseteq \ X_{\Amc, a}^0\qquad
 X_{\Amc, a_\Amc}^\ell &\sqsubseteq& D,
\end{array}\]
Since $\Omc'$ is merely a succinct representation of \Omc, it also fits $E$. Consequently, whenever $E$ admits a fitting \Lmc-ontology, it also admits a fitting ontology of size polynomial in $|\Delta^\Imc|\cdot||E^+||$. 

Note that this construction also applies to $(\Lmc,\mathrm{AQ})$-ontology fitting.
In this setting, since $\Omc$ is constructed from a completion $\Cmc$ satisfying
Conditions~(a) and~(b) of Theorem~\ref{thm:charaq}, we use $\Cmc$ in place of \Imc. Thus, the bound $|\mn{ind}(\Cmc)| \leq ||E^-||$ implies that the succinct fitting ontology
$\Omc'$ is of size polynomial in $||E||$.

We now present an alternative fitting ontology, generalizing the construction from Section~\ref{sect:consistency}, by encoding a finite interpretation rather than an ABox.
Let $E=(E^+, E^-)$ be a collection of labeled ABox-UCQ examples with $E^- \neq \emptyset$ and assume that \Imc is a finite interpretation that satisfies Conditions~(a) and~(b) of Theorem~\ref{thm:charUCQwithFin}. 
W.l.o.g., we assume that \Imc interprets only symbols used in the corresponding fitting ontology $\Omc$, and denote by $\mn{sig}(\Imc)$ the set of roles and concept names interpreted in $\Imc$ as non-empty. Correspondingly, we write $\mn{sig}(E^+)$ for the set of concept and role names occurring in $E^+$.
For each $d\in \Delta^\Imc$, let $\overline{V_d}$ be a fresh auxiliary concept name. Then, the alternative fitting ontology $\Omc_{\mn{alt}}^\Imc$ contains the following CIs:
\begin{enumerate}[label=(\roman*)]
    \item $\bigsqcap_{d \in \Delta^\Imc}\overline{V_d} 
	 \sqsubseteq \bot $, if $\Lmc \in \{\ELbot, \ELIbot\}$;
     \item $A \sqsubseteq \overline{V_d}  $ for every $d \in \Delta^\Imc \setminus A^\Imc$ and $ A\in \mn{sig}(E^+)$;
     \item $\exists r. \bigsqcap_{(d,e) \in r^\Imc} \overline{V_e} \sqsubseteq
			\overline{V_d}$ for every $d \in \Delta^\Imc$ and every $r \in \mn{N}_{\mn{R}}^\Lmc \cap\mn{sig}(E^+)$;
     \item $\bigsqcap_{d \in \Delta^\Imc \setminus A^\Imc}\overline{V_d}  \sqsubseteq A$ for every $A \in \mn{sig}(\Imc) \cup \mn{sig}(E^+)$;
     \item $\bigsqcap_{d \in P_F^r}
				\overline{V_d} \sqsubseteq \exists r. \bigsqcap_{e 
				\in F}
			\overline{V_e} $ for all  $r\in \mn{N}_{\mn{R}}^\Lmc \cap (\mn{sig}(\Imc) \cup \mn{sig}(E^+))$ and all $F \subseteq 
			\Delta^\Imc$, where $P_F^r:= \{d \in \Delta^\Imc\mid \forall e\in \Delta^\Imc,\ (d,e) \in r^\Imc \text{ implies } e \in F\} $.
\end{enumerate}

Intuitively, the CIs in $\Omc_{\mn{alt}}^\Imc$ of the form (i)-(iii) ensure that any model \Jmc of $\Omc_{\mn{alt}}^\Imc$ admits an \Lmc-simulation (of restricted signature) to \Imc, where (i) corresponds to the (total) Condition, (ii) to the (atom) Condition, and (iii) to the (Rel) and, possibly, (iRel) Condition. The remaining types of CIs, (iv) and (v), ensure that \Jmc cannot be extended in any non-trivial way without violating this simulation property. 

Note that, in contrast to the fitting ontologies constructed from the positive examples, this ontology is of exponential size in $||\Imc||$, due to the number of CIs of the form (v), but is independent of the size and number of positive examples. 

\begin{lemma} \label{lem:altontUCQ}
    Let $\Lmc \in \{\EL, \ELbot, \ELI, \ELIbot\}$, let $E = (E^+, E^-)$ be a collection of labeled ABox-UCQ examples with $E^- \neq \emptyset$ that admits a fitting \Lmc-ontology, and let \Imc be a finite interpretation satisfying Conditions~(a) and~(b) of Theorem~\ref{thm:charUCQwithFin}. Then the ontology $\Omc_{\mn{alt}}^\Imc$ fits $E$.
\end{lemma}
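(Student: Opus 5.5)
The proof has two halves, mirroring Lemma~\ref{lem:consaltont}: first show that every model of $\Omc_{\mn{alt}}^\Imc$ admits a simulation into \Imc, then use this to check the negative and positive examples separately. The central claim concerns an arbitrary model \Jmc of $\Omc_{\mn{alt}}^\Imc$. Consider the relation
\[
S_\Jmc=\{(x,d)\in\Delta^\Jmc\times\Delta^\Imc \mid x\notin \overline{V_d}^\Jmc\}.
\]
This relation is an \Lmc-simulation from \Jmc, restricted to the signature $\mn{sig}(E^+)$, into \Imc. (Atom) follows from the CIs~(ii), (Rel) and, for $\ELI_{(\bot)}$, (iRel) follow from~(iii), and (Tot) follows from~(i). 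The argument is verbatim the one in the proof of Lemma~\ref{lem:consaltont}.

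\textbf{Positive examples.} Let $(\Amc,q)\in E^+$ and let \Jmc be a model of $\Amc\cup\Omc_{\mn{alt}}^\Imc$. Composing the identity from \Amc to \Jmc with $S_\Jmc$ gives an \Lmc-simulation $S$ from \Amc to \Imc. If no such simulation exists at all, which can only happen in the $\bot$ case, then $\Amc\cup\Omc_{\mn{alt}}^\Imc$ is inconsistent and $q$ is entailed trivially. Otherwise, Condition~(b) of Theorem~\ref{thm:charUCQwithFin} gives $\Imc_{\Amc,S,\Lmc}\models q$. It then remains to transfer this to \Jmc. The plan is to show that $\Imc_{\Amc,S,\Lmc}$ maps into \Jmc, more precisely that every reduction query of the relevant variation true in $\Imc_{\Amc,S,\Lmc}$ also holds in \Jmc. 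Here the CIs~(iv) and~(v) do the work. For $x\in\Jmc$, let $F_x=\{d\mid x\notin\overline{V_d}^\Jmc\}=xS_\Jmc$, so that $x$ satisfies $\bigsqcap_{d\notin F_x}\overline{V_d}$. By~(iv), $x\in A^\Jmc$ whenever $F_x\subseteq A^\Imc$. By~(v), for every $r$ and every $F$ such that all $r$-successors in \Imc of elements of $F_x$ lie in $F$, the element $x$ has an $r$-successor $y$ with $F_y\subseteq F$. With these two facts, an induction on \Lmc-concepts $C$ yields
\[
F_x\subseteq C^\Imc \;\Longrightarrow\; x\in C^\Jmc .
\]
This is the converse direction to Corollary~\ref{lem:IASLsim}, obtained by mimicking the product structure. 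Combining it with Corollary~\ref{lem:IASLsim} and Lemma~\ref{prop:qEntIffVarEnt}, every rooted reduction query $C(a)$ true in $\Imc_{\Amc,S,\Lmc}$ satisfies $aS\subseteq C^\Imc$, and since $F_a\subseteq aS$ we get $a\in C^\Jmc$. Existential reduction queries are handled via Lemma~\ref{lem:univrole}, with $u$ in the signature of \Imc, so that they reduce to rooted queries of the form $\exists u.C$.

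\textbf{Negative examples.} Let $e=(\Amc,q)\in E^-$. Expand $\Imc_e$ to a model of $\Omc_{\mn{alt}}^\Imc$ by setting $\overline{V_d}$ to be everything except the elements that simulate to $d$, for instance $\overline{V_d}=\Delta^{\Imc_e}\setminus\{d\}$. The task is to check that this makes~(i)--(v) true; for~(v), the witness is the successor itself. The same unraveling argument as in the proof of Theorem~\ref{thm:charUCQwithFin} then gives a model of $\Amc\cup\Omc_{\mn{alt}}^\Imc$ that does not satisfy $q$.

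The main obstacle is the induction step for $\exists r.C$ in the displayed implication. There one has to choose $F$ as the set of $r$-successors of $F_x$ inside $C^\Imc$, and this requires care when some element of $F_x$ has no $r$-successor in $C^\Imc$. I expect this case to be resolved by the definition of $P^r_F$ (elements without $r$-successors lie in $P^r_F$ vacuously), possibly after first restricting to the maximal simulation.
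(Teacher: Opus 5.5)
\textbf{The consequence you draw from CI~(v) is wrong.} This is exactly why your $\exists r.C$ step gets stuck. Since $x\in\overline{V_d}^\Jmc$ iff $d\notin F_x$, the premise $x\in(\bigsqcap_{d\in P^r_F}\overline{V_d})^\Jmc$ means $F_x\cap P^r_F=\emptyset$, not $F_x\subseteq P^r_F$. Likewise, the conclusion yields $y$ with $F_y\cap F=\emptyset$, not $F_y\subseteq F$. Writing $G=\Delta^\Imc\setminus F$, what~(v) really says is: if every element of $F_x$ has \emph{some} $r$-successor in $G$, then $x$ has an $r$-successor $y$ with $F_y\subseteq G$.

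Your version is false. Take the expansion with $\overline{V_d}=\Delta^\Imc\setminus\{d\}$ from your negative part, so that $F_x=\{x\}$. An $x$ without $r$-successors then satisfies your hypothesis vacuously for every $F$, yet has no $r$-successor. With your version the induction step cannot be completed: its hypothesis forces $F$ to contain \emph{all} $r$-successors of elements of $F_x$, including those outside $C^\Imc$, so you never obtain $F_y\subseteq C^\Imc$. Neither vacuous membership in $P^r_F$ nor passing to the maximal simulation repairs this.

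With the correct reading the step is immediate and has no problematic case. $F_x\subseteq(\exists r.D)^\Imc$ means every element of $F_x$ has an $r$-successor in $D^\Imc$. Take $F=\Delta^\Imc\setminus D^\Imc$; then $P^r_F$ is the set of elements without an $r$-successor in $D^\Imc$, so $F_x\cap P^r_F=\emptyset$. Now~(v) gives an $r$-successor $y$ of $x$ with $F_y\subseteq D^\Imc$, and the induction hypothesis gives $y\in D^\Jmc$.

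Once repaired, your induction on concepts is a legitimate alternative to the paper's route. The paper instead builds a simulation from $\prod_{t\in aS}(\Imc,t)$ into $(\Jmc,a)$, applying~(v) with $F=\Delta^\Imc\setminus\{e'_i\}_{i\in I}$, and then uses Lemmas~\ref{lem:prodlem} and~\ref{lem:SimPresConc}. Your version avoids products; the content is the same.

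\textbf{Existential reduction queries.} Lemma~\ref{lem:univrole} requires $u$ to be the universal relation on each component of $\Imc$, and that is not a hypothesis here. $\Imc$ is an arbitrary finite interpretation satisfying (a) and (b). You cannot modify it, because $\Omc_{\mn{alt}}^\Imc$ is built from it. For instance, if $u^\Imc=\emptyset$, then $F_a\subseteq(\exists u.C)^\Imc$ fails whenever $F_a\neq\emptyset$. Use the paper's argument instead. Every element of $\Imc_{\Amc,S,\Lmc}$ is reachable from an individual along an $\Lmc$-path. Hence $\Imc_{\Amc,S,\Lmc}\models\exists x\,C(x)$ yields $\Imc_{\Amc,S,\Lmc}\models\exists r_1.\cdots\exists r_n.C(a)$ for some $a\in\mn{ind}(\Amc)$, and the rooted case applies.

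\textbf{Negative examples.} The unraveling detour is unnecessary. Condition~(a) already gives $\Imc_e\not\models q$, so the expanded $\Imc_e$ is itself the witness. Only the choice $\overline{V_d}=\Delta^{\Imc_e}\setminus\{d\}$ works here. ``Everything not simulating to $d$'' can violate~(v): in the $\EL$ case, take an element whose only $r$-successor simulates to an isolated element $e$, and let $F=\{e\}$.
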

\noindent
\begin{proof}
    Let $\Lmc$, $E$, \Imc and $\Omc_{\mn{alt}}^\Imc$ be as in the lemma. Moreover, in the following we write $\Lmc'$ to denote \Lmc without $\bot$. 
    We first show that $\Omc_{\mn{alt}}^\Imc$ fits the negative examples. 
    Let $\Imc_*$ be the expansion of \Imc obtained by interpreting each $\overline{V_d}$ for $d \in \Delta^\Imc$ as
    \[\overline{V_d}^{\Imc_*} := \Delta^{\Imc} \setminus \{d\}.\] 
    Since this extension only concerns fresh concept names, $\Imc_*$ still satisfies Conditions~(a) and~(b) of Theorem~\ref{thm:charUCQwithFin}.  
    \\[2mm]
    {\bf Claim.} $\Imc_*$ is a model of $\Omc_{\mn{alt}}^\Imc$.
    \\[1mm]
    \emph{Proof of claim.} 
    By construction, $\Imc_*$ satisfies the CI of type~(i).
    Moreover, observe that for all $D \subseteq \Delta^{\Imc_*}$, we have \[
    \bigcap_{d \in D} \overline{V_d}^{\Imc_*} = \Delta^{\Imc_*} \setminus D.
    \]
    In particular, this implies $(\bigsqcap_{d \in \Delta^\Imc \setminus A^\Imc} \overline{V_d})^{\Imc_*} = A^{\Imc_*}$ for every $A \in \NC$, and thus $\Imc_*$ satisfies all CIs of the form~(ii) and~(iv).
    Now consider the CIs of type~(iii) and let $d \in \Delta^{\Imc_*}$ and $r$ be a role (possibly inverse).
    Using the equality above, we obtain 
    \[
    \bigcap_{(d,e) \in r^\Imc} \overline{V_e}^{\Imc_*} = \left\{e \in \Delta^{\Imc_*} \mid (d,e)\notin r^{\Imc}\right\}.
    \]
    Thus, any element $d' \in (\exists r. \bigsqcap_{(d, e) \in r^{\Imc}} \overline{V_e})^{\Imc_*}$ has an $r$-successor $e'$ in $\Imc_*$ with $(d,e') \not\in r^{\Imc}$. Since this implies $d' \neq d$, we conclude $d' 
    \in \overline{V_d}{\vphantom{V}}^{\Imc_*}$, proving that $\Imc_*$ satisfies all CIs of the form~(iii).
    Finally, to show that ${\Imc_*}$ satisfies the CIs of type (v), let $F \subseteq \Delta ^\Imc$, $r$ be a role, and $P_F^r:=\{d \in \Delta^\Imc\mid \forall e \in \Delta^\Imc, \ (d,e) \in r^\Imc \rightarrow \ e \in F\}$. 
    Then
    \[\bigcap_{d \in P_F^r} \overline{V_d}^{\Imc_*} = \Delta^{\Imc_*} \setminus P_F^r = \left\{d \mid \exists (d,e)\in r^{\Imc_*} \text{ s.t. } e \notin F  \right\}.\]
    Hence, if $d' \in (\bigsqcap_{d \in P_F^r} \overline{V_d})^{\Imc_*}$, then there exists some $(d',e') \in r^{\Imc_*}$ such that $e' \notin  F$. Since $\Delta^{\Imc_*} \setminus F = \bigcap_{e \in F} \overline{V_e}{\vphantom{V}}^{\Imc_*}$, we obtain $e' \in \overline{V_e}{\vphantom{V}}^{\Imc_*}$ for all $e \in F$. Consequently, $d' \in (\exists r.\bigsqcap_{e \in F} \overline{V_e})^{\Imc_*} $, as required. Thus, $\Imc_*$ satisfies all CIs in $\Omc^{\Imc}_{\mn{alt}}$, which proves the claim.
    \medskip

    Since $\Imc_*$ satisfies Condition~(a) of Theorem~\ref{thm:charUCQwithFin}, we may consider the disjoint union $\Imc_* = \biguplus_{e \in E^-} \Imc_{*,e}$, where $\Imc_{*,e}$ is a model of \Amc and $\Imc_{*,e} \not \models q$ for all $e = (\Amc ,q)\in E^- $. 
    Since $\Lmc$-ontologies are invariant under taking components of disjoint unions, $\Imc_{*,e}$ is also a model of $\Omc_{\mn{alt}}^\Imc$ and thus a witness of $\Amc \cup \Omc_{\mn{alt}}^\Imc \not \models q$. This finishes the proof that $\Omc_{\mn{alt}}^\Imc$ fits the negative examples.

    It remains to show that $\Omc_{\mn{alt}}^\Imc$ fits the positive examples.
    To this end, we make the semantic consequences of the CIs of type (iv) and (v) explicit in the following claim. 
    \\[2mm]
    {\bf Claim.} Let \Jmc be a model of $\Omc_{\mn{alt}}^\Imc$, $ T= (\Imc, t_i)_{i\in I}$ be a sequence of pointed copies of \Imc, and $B \subseteq \Delta^{\Imc}$ with $\Delta^\Imc \setminus B \subseteq \{t_i\}_{i \in I} $. Then for every $d \in (\bigsqcap_{b \in B} \overline{V_b})^\Jmc$,  we have \[\left(\prod T, t_\Pi\right) \preceq_{\Lmc'} (\Jmc, d).\]
    \emph{Proof of claim.} 
    Let $\Jmc$ and $T$ be as in the claim. 
    Throughout the proof, for every $\bar e \in \Delta^{\prod T}$, we write $\bar e = (e_i)_{i \in I}$. Define the relation $S \subseteq \Delta^{\prod T} \times \Delta^\Jmc$ as
    \[S= \left\{ (\overline e, d) {\;\biggl \vert\;}\substack{\exists B \subseteq \Delta^\Imc \text{ such that }  \Delta^\Imc \setminus B \subseteq \{e_i\}_{i \in I}\\ \text{ and } d \in (\bigsqcap_{e \in B} \overline{V_e})^\Jmc }\right\}.\]
    Observe that by definition, $(t_\Pi , d) \in S$ whenever $d \in (\bigsqcap_{e \in B} \overline{V_e})^\Jmc$ for some $B \subseteq \Delta^\Imc$ with $\Delta^\Imc \setminus B \subseteq \{t_i\}_{i \in I}$. 
    It is thus left to show that $S$ is an $\Lmc'$-simulation from $\prod T$ to $\Jmc$. We first consider (atom). Let $\bar e \in A^{\prod T}$ and $(\bar e, d ) \in S$. 
    By definition of $S$, there exists $ B \subseteq \Delta^\Imc $ such that $\Delta^\Imc \setminus B \subseteq \{e_i\}_{i \in I}$ and $d \in \overline{V_c}{\vphantom{V}}^\Jmc$ for all $c \in B$. Moreover, since $\bar e \in A^{\prod T}$, $\{e_i\}_{i \in I} \subseteq A^\Imc$ follows from Lemma~\ref{lem:prodlem}. Combining this with the previous inclusion, we obtain $\Delta^\Imc \setminus B \subseteq A^\Imc$, and hence $\Delta^\Imc \setminus A^\Imc \subseteq B$. 
    This means that, by assumption, $d \in \overline{V_c}{\vphantom{V}}^\Jmc$ for all $c \in \Delta^\Imc \setminus A^\Imc$. 
    As $\Jmc$ is a model of $\Omc_{\mn{alt}}^\Imc$, it satisfies the CIs of the form (iv) and therefore $d \in A^\Jmc$, as required. 
    To prove that $S$ satisfies (Rel), let $r$ be a role name, $(\bar e, \bar e') \in r^{\prod T}$ and $(\bar e, d) \in S$. By definition of $S$, there exists $B \subseteq \Delta^\Imc$ such that $\Delta^\Imc \setminus B \subseteq \{e_i\}_{i \in I} $ and $d \in \overline{V_c}{\vphantom{V}}^\Jmc$ for all $c \in B$. 
    Let $F = \Delta^\Imc \setminus\{e'_i\}_{i \in I}$ and $P_F^r = \{p \in \Delta^\Imc \mid \forall f \in \Delta^\Imc,\ (p,f) \in r^\Imc \rightarrow f \in F\}$. Since $(\bar e, \bar e') \in r^{\prod T}$, we have $(e_i , e_i') \in r^\Imc$ for all $i \in I$ by the definition of direct products. Consequently, by the choice of $F$, we obtain $\{e_i\}_{i \in I} \cap P_F^r = \emptyset$ and thus $P_F^r \subseteq \Delta^\Imc \setminus \{e_i\}_{i \in I}$. Together with the inclusion $\Delta^\Imc \setminus B \subseteq \{e_i\}_{i \in I}$  from above, this yields $P_F^r \subseteq B$. Hence, we must have $d \in \overline{V_c}{\vphantom{V}}^\Jmc$ for all $c \in P_F^r$. 
    Since $r\in \mn{N}_{\mn{R}}^\Lmc \cap \mn{sig}(\Imc)$, we may apply the CI of type (v) corresponding to $F$ and $r$ to obtain $d \in (\exists r. \bigsqcap_{f \in F} \overline{V_f})^\Jmc$. Thus, there exists some $(d, d') \in r^\Jmc$ such that $d' \in \overline{V_f}{\vphantom{V}}^\Jmc$ for all $f \in F$. By definition of $F$ and $S$, we conclude $(\bar e', d') \in S$. The proof of (iRel) in the case of $\Lmc \in \{\ELI, \ELIbot\}$ is virtually identical and therefore omitted. Hence, $S$ is an $\Lmc'$-simulation from $\prod T$ to \Jmc, which finishes the proof of the claim. 
    \medskip

    Now let $(\Amc, q) \in E^+$. In order to show that $\Omc_{\mn{alt}}^\Imc$ fits $(\Amc, q)$, we need a second important observation: any model \Jmc of $\Amc \cup \Omc_{\mn{alt}}^\Imc$ gives rise to an \Lmc-simulation from \Amc to \Imc defined by 
    \[S = \left\{(a,e) \in \mn{ind}(\Amc) \times \Delta^\Imc \mid a \notin \overline{V_e}^\Jmc\right\}.\]
    That $S$ is indeed an \Lmc-simulation was already shown in the proof of Lemma~\ref{lem:consaltont} using only the CIs of the form~(i)-(iii).
    
    Now let \Jmc be a model of $\Amc \cup \Omc_{\mn{alt}}^\Imc$. 
    The former observation yields an \Lmc-simulation $S$ from \Amc to \Imc. Since $\Imc$ satisfies Condition~(b) of Theorem~\ref{thm:charUCQwithFin}, $\Imc_{\Amc, S, \Lmc} \models q$ and, by Lemma~\ref{prop:qEntIffVarEnt}, there exists an \Lmc-forest \Amc-variation $p'$ of some CQ in $q$ such that $\Imc_{\Amc, S, \Lmc} \models p'$. In the following, we show that $\Jmc \models \widehat p$ for each $\widehat p \in \mn{red}(p')$ from which $\Jmc \models q$ follows, by Lemma~\ref{prop:qEntIffVarEnt}. Let $C(a) \in \mn{red}(p')$. Since $\Imc_{\Amc, S, \Lmc} \models C(a)$, an application of Corollary~\ref{lem:IASLsim} yields $a S \subseteq C^\Imc$. By definition of $S$, we have $a \in (\bigsqcap_{d \notin aS } \overline{V_d})^\Jmc$. Employing the claim for $T = (\Imc, t)_{t \in aS}$ and $B = \Delta^\Imc \setminus aS$ shows that $(\prod T, t_\Pi ) \preceq_{\Lmc'} ( \Jmc, a)$. We may now apply Lemma~\ref{lem:prodlem} to derive $t_\Pi \in C^{\prod T}$, from which $a \in C^\Jmc$ follows by Lemma~\ref{lem:SimPresConc}. Thus, $\Jmc \models C(a)$, as required. 
    Now let $\exists x \, C(x) \in \mn{red}(p')$. Then, $\Imc_{\Amc, S, \Lmc} \models \exists r_1. \dots \exists r_n C(a)$ for some $a \in \mn{ind}(\Amc)$ and some finite sequence $(r_i)_{1\leq i \leq n}$ of $r_i \in \mn{N}_{\mn{R}}^\Lmc$, since $\Imc_{\Amc, S, \Lmc} \models \exists x \, C(x)$ and every element of $\Imc_{\Amc, S, \Lmc}$ is connected by an \Lmc-path to some individual in $\Amc$. The argument now proceeds as in the case $C(a) \in \mn{red}(p')$, resulting in $\Jmc \models \exists r_1. \dots \exists r_n C(a)$ and, consequently, $\Jmc \models \exists x \, C(x)$. This shows that $\Jmc \models q$ and therefore $\Amc \cup \Omc_{\mn{alt}}^\Imc \models q$, completing the proof that $\Omc_{\mn{alt}}^\Imc$ also fits the positive examples.
\end{proof}

In principle, this proof also applies to the alternative fitting ontologies for (\Lmc, AQ)-ontology fitting that are based on fitting completion rather than finite interpretations and, in particular, do not contain CIs of the form~(v). One only has to adjust the second claim by replacing $\Lmc'$ simulations with $0$-\Lmc-simulations and substitute interpretations with completions at the relevant points. As these changes are merely technical, details are omitted.

\section{Proofs for Section~\ref{sect:EL}}
\subsection*{Fundamental Properties of $\Lmc^{\mn{sim}}_{\bot}$}

\begin{lemma}\label{lem:LsiInvSim}
Let $\Lmc \in \{\EL, \ELI\}$. Then $\Lmc_\bot^{\mn{sim}}$-concepts are invariant under $\Lmc$-simulations. That is, for any two pointed interpretations $(\Jmc_1, d_1)$ and $(\Jmc_2, d_2)$ with $(\Jmc_1, d_1)\preceq_\Lmc (\Jmc_2, d_2)$ and every $\Lmc_\bot^{\mn{sim}}$-concept $C$, \[d_1 \in C^{\Jmc_1}\quad \text{ implies }\quad d_2 \in C^{\Jmc_2}.\]  
\end{lemma}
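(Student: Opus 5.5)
We argue by structural induction on the $\Lmc_\bot^{\mn{sim}}$-concept $C$, proving the stronger statement: for \emph{every} $\Lmc$-simulation $S$ from $\Jmc_1$ to $\Jmc_2$ and every pair $(e_1,e_2)\in S$, $e_1 \in C^{\Jmc_1}$ implies $e_2 \in C^{\Jmc_2}$. The lemma follows by taking a simulation containing $(d_1,d_2)$. Quantifying over all pairs of $S$, not just the distinguished one, is what makes the existential case go through.

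\emph{Base cases.} For $C=\top$ there is nothing to show. For $C=\bot$ the premise is never satisfied, since $\bot^{\Jmc_1}=\emptyset$. For a concept name $C=A$ the claim is exactly Condition~(Atom).

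\emph{Conjunction.} For $C=D\sqcap D'$ the claim follows directly from the induction hypothesis applied to $D$ and to $D'$.

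\emph{Existential restriction.} Let $C=\exists r.D$, where $r$ may be an inverse role if $\Lmc=\ELI$. From $e_1\in C^{\Jmc_1}$ we obtain some $f_1$ with $(e_1,f_1)\in r^{\Jmc_1}$ and $f_1\in D^{\Jmc_1}$. Condition~(Rel), or (iRel) when $r$ is an inverse role, gives $f_2$ with $(e_2,f_2)\in r^{\Jmc_2}$ and $(f_1,f_2)\in S$. The induction hypothesis then yields $f_2\in D^{\Jmc_2}$, so $e_2\in C^{\Jmc_2}$.

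\emph{Simulation quantifier.} This is the only case that is not routine, but it is also short. Let $C=\exists^{\mn{sim}}_\Lmc(\Jmc,e)$ and suppose $e_1\in C^{\Jmc_1}$. Then there is an $\Lmc$-simulation $S'$ from $\Jmc$ to $\Jmc_1$ with $(e,e_1)\in S'$. We show that the relational composition $S'\circ S=\{(x,z)\mid \exists y\,(x,y)\in S',(y,z)\in S\}$ is an $\Lmc$-simulation from $\Jmc$ to $\Jmc_2$.
\begin{itemize}
\item (Atom): if $x\in A^{\Jmc}$, then $y\in A^{\Jmc_1}$ by (Atom) for $S'$, and hence $z\in A^{\Jmc_2}$ by (Atom) for $S$.
\item (Rel) and (iRel): take a successor (or, for (iRel), a predecessor) of $x$. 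Condition (Rel)/(iRel) for $S'$ produces a matching successor/predecessor of $y$ related to it by $S'$. Condition (Rel)/(iRel) for $S$ then produces a matching successor/predecessor of $z$ related to that element by $S$. The two witnesses together show the pair is in $S'\circ S$.
\end{itemize}
Since $(e,e_2)\in S'\circ S$, we conclude $e_2\in C^{\Jmc_2}$. Here the simulation type $\Lmc$ of the quantifier matches the type of $S$, which is exactly what allows the two simulations to be composed.

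\emph{Remark on $\bot$.} Totality is not needed anywhere in the argument: we only ever use simulations without the $\bot$-subscript, and $\bot$ itself is handled trivially in the base case.
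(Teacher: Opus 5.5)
Your proof is correct and takes essentially the same approach as the paper. The paper likewise handles only the simulation-quantifier case explicitly, by composing the simulation that witnesses $\exists^{\mn{sim}}_\Lmc$ with the given $\Lmc$-simulation, and calls the remaining constructors a straightforward induction; you write out that induction, using the natural strengthening to all pairs of $S$.
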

\noindent
\begin{proof}
    Let $\Lmc \in \{\EL, \ELI\}$. It suffices to consider the case where $C$ is of the form $\exists^{\mn{sim}}_\Lmc(\Imc,d)$.  The remaining cases follow by a straightforward induction over the concept constructors. Let $(\Jmc_1, d_1)$ and $(\Jmc_2,d_2)$ be pointed interpretations such that $(\Jmc_1, d_1) \preceq_{\Lmc} (\Jmc_2, d_2)$, and assume $d_1 \in (\exists^{\mn{sim}}_\Lmc (\Imc, d))^{\Jmc_1}$. Then there exists an $\Lmc$-simulation $S_1$ from $\Imc$ to $\Jmc_1$ and an $\Lmc$-simulation $S_2$ from $\Jmc_1$ to $\Jmc_2$ such that $(d, d_1) \in S_1$ and $(d_1, d_2) \in S_2$. Their composition $ S= S_2 \circ S_1$ is an $\Lmc$-simulation from $\Imc$ to $\Jmc_2$, and we have $(d,d_2) \in S$. Hence, $d_2 \in (\exists^{\mn{sim}}_\Lmc (\Imc, d))^{\Jmc_2} $, as required.
\end{proof}
As an easy consequence we obtain the following.
\begin{corollary}\label{cor:simdisju}
    Let $\Lmc \in \{\EL, \ELI\}$. Then $\Lmc_\bot^{\mn{sim}}$-ontologies are invariant under disjoint unions and under taking components of such unions. That is, for every $\Lmc_\bot^{\mn{sim}}$-ontology \Omc and all interpretations $\Imc_1$ and $\Imc_2$, \[\Imc_1 \uplus \Imc_2 \models \Omc\quad \text{ if and only if }\quad\Imc_1 \models \Omc \text{ and } \Imc_2 \models \Omc.\]
\end{corollary}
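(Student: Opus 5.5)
The plan is to derive both directions from Lemma~\ref{lem:LsiInvSim}, using that the canonical embeddings of a component into a disjoint union, and the identity on a component read as a map into that component, are homomorphisms and hence $\Lmc$-simulations. Fix an $\Lmc_\bot^{\mn{sim}}$-ontology \Omc, interpretations $\Imc_1,\Imc_2$, and write $\Imc=\Imc_1\uplus\Imc_2$.

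The key step is to show that for every $\Lmc_\bot^{\mn{sim}}$-concept $C$, every $i\in\{1,2\}$, and every $d\in\Delta^{\Imc_i}$, we have $d\in C^{\Imc_i}$ if and only if $d\in C^{\Imc}$. Left to right: the inclusion of $\Imc_i$ into $\Imc$ is a homomorphism, so $(\Imc_i,d)\preceq_\Lmc(\Imc,d)$, and Lemma~\ref{lem:LsiInvSim} applies. Right to left: consider the relation $S=\{(e,e)\mid e\in\Delta^{\Imc_i}\}$ from $\Imc$ to $\Imc_i$. It is an $\Lmc$-simulation, since every $r$-successor of $e\in\Delta^{\Imc_i}$ in \Imc, and for $\ELI$ every $r$-predecessor, lies in $\Imc_i$ by the definition of disjoint union, and concept memberships agree. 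Here we use that \Lmc is \EL or \ELI, so totality is not required; $\bot$ is a concept constructor, not part of the simulation notion. Since $(d,d)\in S$, Lemma~\ref{lem:LsiInvSim} gives the converse. The one subtlety to check is that $\top$ and $\bot$ are trivially preserved; the lemma already covers the full syntax.

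The main claim then follows by unfolding satisfaction of CIs $C\sqsubseteq D$. If $\Imc\models\Omc$ and $d\in C^{\Imc_i}$, then $d\in C^\Imc\subseteq D^\Imc$, so $d\in D^{\Imc_i}$. Conversely, if both $\Imc_i\models\Omc$, take $d\in C^\Imc$; then $d$ lies in some $\Delta^{\Imc_i}$, so $d\in C^{\Imc_i}\subseteq D^{\Imc_i}$, which gives $d\in D^\Imc$.

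I expect no real obstacle. The only point requiring care is verifying that the identity restricted to a component is a simulation out of the union under (iRel). This is immediate because the union has no edges between components.
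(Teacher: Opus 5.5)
Your proposal is correct and follows essentially the same route as the paper. The paper likewise uses the identity on each component as an $\mathcal{L}$-simulation in both directions between $\Imc_i$ and $\Imc_1 \uplus \Imc_2$, applies Lemma~\ref{lem:LsiInvSim} to obtain $C^{\Imc_1\uplus\Imc_2} = C^{\Imc_1}\cup C^{\Imc_2}$ for every concept $C$, and then unfolds the CIs. One small wording point: in your opening sentence, the identity on a component is not a homomorphism out of the union, only a partial relation. This is harmless, because you check the simulation conditions directly and simulations for these two logics need not be total.
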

\noindent
\begin{proof}
    Let $\Lmc \in \{\EL, \ELI\}$, $\Imc_1$ and $\Imc_2$ be interpretations, and \Omc be an $\Lmc_\bot^{\mn{sim}}$-ontology. W.l.o.g., assume that $\Delta^{\Imc_1} \cap \Delta^{\Imc_2} = \emptyset$, and thus $\Imc_1 \uplus \Imc_2 = \Imc_1 \cup \Imc_2$. 
    First, we show that for every $\Lmc_\bot^{\mn{sim}}$-concept $C$, 
    \[ C^{\Imc_1 \uplus \Imc_2} = C^{\Imc_1} \cup C^{\Imc_2}.\]
    Since the identity relation $S_i = \mathrm{id}_{\Delta^{\Imc_i}}$ is a \Lmc-simulation from $\Imc_i$ to $\Imc_1 \uplus \Imc_2$, the inclusion $C^{\Imc_i} \subseteq C^{\Imc_1 \uplus \Imc_2}$ follows directly from Lemma~\ref{lem:LsiInvSim}. For the opposite inclusion, let $d \in C^{\Imc_1 \uplus \Imc_2}$. Observe that $d \in \Delta^{\Imc_i}$ for some $i \in \{1,2\}$. Since $S_i$ is also an \Lmc-simulation from $\Imc_1 \uplus \Imc_2$ to $\Imc_i$, another application of Lemma~\ref{lem:LsiInvSim} yields $d \in C^{\Imc_i}$. Thus, $C^{\Imc_1 \uplus \Imc_2} \subseteq C^{\Imc_1} \cup C^{\Imc_2}$.
    
    Now let $C \sqsubseteq D \in \Omc$. Using the equality above for $C$ and $D$, we obtain $C^{\Imc_1 \uplus \Imc_2} \subseteq D^{\Imc_1 \uplus \Imc_2}$ if and only if $C^{\Imc_i} \subseteq D^{\Imc_i}$ for $i \in \{1,2\}$. Consequently, $\Imc_1 \uplus \Imc_2 \models \Omc $ if and only if $\Imc_i \models \Omc$ for all $i \in \{1,2\}$. 
\end{proof}

\begin{corollary} \label{cor:LsiInvDirProd}
    Let $\Lmc \in \{\EL, \ELI\}$. Then $\Lmc_\bot^{\mn{sim}}$-concepts are invariant under direct products. That is, for every family $S=(\Imc_i,d_i)_{i \in I}$ of pointed interpretations and each $\Lmc_\bot^{\mn{sim}}$-concept~$C$, \[d_\Pi \in C^{\prod S}\quad \text{ if and only if } \quad
    d_i \in C^{\Imc_i} \text{ for all } i \in I.\]
\end{corollary}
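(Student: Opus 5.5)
Both directions go by induction on the structure of the $\Lmc_\bot^{\mn{sim}}$-concept $C$, with the product construction and Lemma~\ref{lem:prodlem} doing most of the work. The key fact for the `only if' direction is that, by Lemma~\ref{lem:prodlem}(1), for each $i \in I$ the projection $\pi_i$ is a homomorphism from $\prod S$ to $\Imc_i$ with $\pi_i(d_\Pi)=d_i$. Every homomorphism is an $\Lmc$-simulation, and if $I\neq\emptyset$ it is even total. Hence $(\prod S,d_\Pi)\preceq_\Lmc(\Imc_i,d_i)$, and Lemma~\ref{lem:LsiInvSim} gives $d_\Pi\in C^{\prod S}\Rightarrow d_i\in C^{\Imc_i}$ for every $C$. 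This direction needs no induction. If $I=\emptyset$ the right-hand side is vacuous, so there is nothing to show.

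For the `if' direction we prove, by induction on $C$, the stronger claim that for every $\bar e=(e_i)_{i\in I}\in\Delta^{\prod S}$ we have $\bar e\in C^{\prod S}$ whenever $e_i\in C^{\Imc_i}$ for all $i$. The stronger form lets the induction pass through existential restrictions.
\begin{itemize}
\item For $\top$ and concept names the claim holds by definition of the product.
\item For $\bot$ the premise is false as soon as $I\neq\emptyset$. For $I=\emptyset$ we adopt the convention that $\bot$ is excluded, or we note that the statement is applied only to nonempty families.
\item Conjunction follows directly from the induction hypothesis.
\item For $\exists r.D$, choose $r$-successors $e_i'$ of $e_i$ in $\Imc_i$ with $e_i'\in D^{\Imc_i}$. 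Then $\bar e'=(e_i')_{i\in I}$ is an $r$-successor of $\bar e$ in the product, and the induction hypothesis gives $\bar e'\in D^{\prod S}$. Inverse roles are handled the same way.
\end{itemize}

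The only genuinely new case, and the main obstacle, is the simulation quantifier $C=\exists^{\mn{sim}}_\Lmc(\Jmc,e)$. Here we are given $\Lmc$-simulations $Z_i$ from $\Jmc$ to $\Imc_i$ with $(e,e_i)\in Z_i$, and we must build one into the product. We take
\[
Z=\{(f,\bar g)\mid (f,g_i)\in Z_i \text{ for all } i\in I\}.
\]
This relation contains $(e,\bar e)$. It satisfies (Atom) because concept membership in the product is componentwise. It satisfies (Rel) and (iRel) because a successor of $f$ in $\Jmc$ can be matched in each component separately via $Z_i$, and these matches assemble into a product successor, which stays in $Z$. Simulations of the form $\exists^{\mn{sim}}_\Lmc$ are not required to be total, since $\Lmc\in\{\EL,\ELI\}$, so no totality issue arises. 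This completes the induction, and the statement is the instance $\bar e=d_\Pi$. If the family is infinite, the same argument works because products are defined componentwise.
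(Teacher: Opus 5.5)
Your proof is correct. The `only if' direction is the same as the paper's: the projection homomorphism from Lemma~\ref{lem:prodlem} combined with Lemma~\ref{lem:LsiInvSim}. The routes differ in the key case $C=\exists^{\mn{sim}}_\Lmc(\Jmc,e)$ of the `if' direction.

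The paper never builds a simulation into the product directly. Instead it proceeds in four steps:
\begin{itemize}
\item it unravels $\Jmc$ at $e$;
\item it turns the simulations $(\Jmc^{\downarrow\Lmc}_e,e)\preceq_\Lmc(\Imc_i,d_i)$ into homomorphisms, using that the unraveling is tree-shaped;
\item it applies the homomorphism universal property of products (Lemma~\ref{lem:prodlem}, Point~3) to obtain $(\Jmc^{\downarrow\Lmc}_e,e)\rightarrow(\prod S,d_\Pi)$;
\item it composes this with $(\Jmc,e)\preceq_\Lmc(\Jmc^{\downarrow\Lmc}_e,e)$ from Lemma~\ref{lem:unravbasic}.
\end{itemize}
Your componentwise relation $Z$ reaches the same conclusion in one step and is more elementary. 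It needs neither unravelings nor the conversion of simulations out of trees into homomorphisms. The paper cites that conversion as a ``Point~(4)'' of Lemma~\ref{lem:unravbasic}, but no such point is stated there. The paper's route, in return, reuses the homomorphism-based product machinery it has already set up.

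You also make explicit the strengthening to arbitrary tuples $\bar e$. This is what lets the induction pass through $\exists r.D$, and the paper leaves it implicit in its ``straightforward induction''.

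Two small remarks:
\begin{itemize}
\item Your caveat about $I=\emptyset$ is not merely a convention; it is necessary. With the paper's convention that $\prod\emptyset$ is the maximal interpretation, $C=\bot$ makes the left-hand side false and the right-hand side vacuously true. So the statement as written holds only for nonempty families (or $\bot$-free concepts), and you can say so outright.
\item The totality remark in the `only if' direction is unnecessary. Lemma~\ref{lem:LsiInvSim} applies to non-total $\Lmc$-simulations, and $\bot$ causes no trouble there because it is never satisfied.
\end{itemize}
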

\noindent
\begin{proof}
    Let $\Lmc \in \{\EL, \ELI\}$, $S=(\Imc_i, d_i)_{i \in I}$ be a family of pointed interpretations, and $C$ be an $\Lmc_\bot^{\mn{sim}}$-concept. 
    `\emph{only if}'. Assume that $d_\Pi \in C^{\prod S}$. By Lemma~\ref{lem:prodlem}, we have $(\prod S, d_{\Pi}) \preceq_\Lmc (\Imc_i, d_i)$ for all $i \in I$. An application of Lemma~\ref{lem:LsiInvSim} yields $d_i \in C^{\Imc_i}$ for all $i \in I$, as required.
    
    `\emph{if}'. Suppose $d_i \in C^{\Imc_i}$ for all $i \in I$. It suffices to treat the case where $C=\exists^{\mn{sim}}_{\Lmc}(\Jmc, e)$. The remaining cases follow by a straightforward induction over the concept constructors. By assumption and the semantics of the simulation quantifier, $(\Jmc, e) \preceq_\Lmc (\Imc_i, d_i)$ for all $i \in I$. Since $(\Jmc^{\downarrow \Lmc}_e , e) \to (\Jmc, e)$ by Point~(1) of Lemma~\ref{lem:unravbasic} and since every homomorphism is an \Lmc-simulation, we obtain $(\Jmc^{\downarrow \Lmc}_e , e) \preceq_\Lmc (\Imc_i, d_i) $ for all $i \in I$. Thus, $(\Jmc^{\downarrow \Lmc}_e , e) \to (\Imc_i, d_i)$ follows by Point~(4) of the same lemma. This allows us to use Lemma~\ref{lem:prodlem} and obtain $(\Jmc^{\downarrow \Lmc}_e , e) \to (\prod S, d_\Pi)$. Viewing this homomorphism as an \Lmc-simulation and composing it with $(\Jmc, e) \preceq_\Lmc (\Jmc^{\downarrow \Lmc}_e , e)$ from Lemma~\ref{lem:unravbasic} yields $(\Jmc, e) \preceq_\Lmc (\prod S, d_\Pi)$. By the semantics of the simulation quantifier, this proves $d_\Pi \in C^{\prod S}$.
\end{proof}

\subsection*{Upper bound for $\EL_{(\bot)}$}

\charVarSimQuant*

\noindent
\begin{proof}
    Let \Lmc and $E$ be as in the lemma. Moreover, let $u\in \NR$ be a fresh role name not occurring in $E$ and let $\Lmc'$ denote \Lmc without $\bot$. 
    For each $e=(\Amc,q) \in E^+$, we define the set $\Gamma_e$ of $\Lmc_{\bot}^{\prime\mn{sim}}$-ontologies as follows:
    \begin{itemize}
        \item for every \Lmc-forest \Amc-variation $p'$ of a CQ in $q$ and every way to choose an individual $a_{\widehat q}\in \mn{ind}(\Amc)$ for every existential reduction query $\widehat q$ of $p'$,  the ontology $\Omc$  that contains the following CIs:
        \begin{align*}
            \exists_\Lmc^{\mn{sim}}(\Amc, a) &\sqsubseteq C &\text{ for every }C(a) \in \mn{red}(p');\\
            \exists_\Lmc^{\mn{sim}}(\Amc, a_{\widehat q}) &\sqsubseteq \exists u. C &\hspace*{-0.1cm}\text{for every } \widehat q=\exists x\, C(x) \in \mn{red}(p').
        \end{align*}
        \item if $\Lmc \in \{ \EL_\bot,\ELI_\bot \}$, then 
     for every $a \in \mn{ind}(\Amc)$, the ontology \[\Omc_{a} = \{\exists_\Lmc^{\mn{sim}}(\Amc, a) \sqsubseteq \bot\}.
    \]
    \end{itemize}
    It is easy to see that every ontology in $\Gamma_e$ is of size linear in $||e||$. It remains to show that this choice is as required.
    \smallskip 
    
    `1~$\Rightarrow$~2' Assume that $E$ admits a fitting \Lmc-ontology. By  Corollary~\ref{cor:charUCQfinaalt}, there exists a finite interpretation \Imc that satisfies Conditions~(a$'$) and~(b$'$). Let $u'$ be the role name for which \Imc satisfies Condition~(b$'$) and define $\Jmc$ to be the (finite) interpretation obtained from $\Imc$ by setting $u^{\Jmc}:= u^{\prime\Imc}$. 
    Since $u$ does not occur in $E$, $\Jmc$ still satisfies Condition~(a$'$) of Corollary~\ref{cor:charUCQfinaalt} and therefore also Condition~(a) of this theorem.
    To prove 2, it therefore remains to show that $\Jmc$ satisfies Condition~(b). 
    To this end, let $e=(\Amc, q) \in E^+$. First, assume that there is no \Lmc-simulation from \Amc to \Jmc. This means we have $\Lmc \in \{\ELbot, \ELIbot\}$ and there exists an individual $a\in\mn{ind}(\Amc)$ such that $(\Amc, a) \not \preceq_{\Lmc'} (\Jmc, d)$ for all $d \in \Delta^\Jmc$. Pick the corresponding ontology $\Omc_a = \{\exists_{\Lmc'}^{\mn{sim}}(\Amc, a) \sqsubseteq \bot\} \in \Gamma_e$. By assumption, $(\exists_{\Lmc'}^{\mn{sim}}(\Amc, a))^\Jmc = \emptyset$. This shows that \Jmc is a model of $\Omc_a \in \Gamma_e$, as required.
    Now, suppose that there exists an \Lmc-simulation from \Amc to \Jmc. 
    Observe that by construction of \Jmc and a careful inspection of Definition~\ref{def:refcand}, we have that $\Jmc$ satisfies Condition~(b$'$) of Corollary~\ref{cor:charUCQfinaalt} for the role name $u$.
    Using this fact, we obtain an \Lmc-forest \Amc-variation $p'$ of some CQ in $q$ and an individual $a_{\widehat p} \in \mn{ind}(\Amc)$ for every $\widehat p \in \mn{red}(p')$ such that the conditions in (b$'$) are met. 
    Pick the ontology $\Omc \in \Gamma_e$  constructed in regard to $p'$ and and the choice of individuals wrt. $\mn{red}(p')$. Consider any $\exists^{\mn{sim}}_{\Lmc'}(\Amc, a) \sqsubseteq D \in \Omc$, and let $S$ be the maximal \Lmc-simulation from \Amc to \Jmc. Since $S$ is maximal, $aS = (\exists^{\mn{sim}}_{\Lmc'}(\Amc, a))^\Jmc$. 
    By Condition~(b$'$), $\Jmc_{\Amc, S,\Lmc} \models D(a)$ to which we apply Corollary~\ref{lem:IASLsim} and obtain $aS \subseteq D^\Jmc$. Thus, \Jmc is a model of $\exists^{\mn{sim}}_{\Lmc'}(\Amc, a) \sqsubseteq D$ and thereby $\Omc$. This proves that \Jmc satisfies Condition~(b), and we are done. 
    \smallskip

    `2~$\Rightarrow$~1' Assume that there exists a finite interpretation \Imc that satisfies Conditions~(a) and~(b) of the statement. We prove that $E$ admits a fitting \Lmc-ontology by showing that $\Imc$ satisfies the conditions in Corollary~\ref{cor:charUCQfinaalt}. Since $\Imc$ already satisfies Condition~(a$'$) of this Corollary by assumption, it is enough to show that \Imc satisfies Condition~(b$'$) wrt. the role name $u$. 
    Let $e = (\Amc, q)\in E^+$ and $\Omc \in \Gamma_e$ such that $\Imc \models \Omc$. 
    If there exists no \Lmc-simulation from \Amc to \Imc, there is nothing to show. Otherwise, let $S$ be an \Lmc-simulation from \Amc to \Imc.
    In the case $\Lmc \in \{\ELbot, \ELIbot\}$, this simulation is left-total and therefore $\emptyset \subsetneq aS \subseteq (\exists_{\Lmc'}^{\mn{sim}}(\Amc, a))^\Imc$ for all $a \in \mn{ind}(\Amc)$. Consequently, \Omc cannot be of the form $\Omc_a = \{\exists_\Lmc^{\mn{sim}}(\Amc, a) \sqsubseteq \bot\}$ for any $a \in \mn{ind}(\Amc)$, since \Imc is a model of \Omc.  
    Thus, \Omc is based on an \Lmc-forest \Amc-variation $p'$ of some CQ in $q$ and a 
    choice of individuals $a_{\widehat p} \in \mn{ind}(\Amc)$ for every $\widehat p \in \mn{red}(p')$. Let $D(a)$ be such that either $ D(a) \in \mn{red}(p')$ or $D(a) = \exists u. C(a_{\widehat p})$, where $\widehat p = \exists x\, C(x) \in \mn{red}(p')$ . By construction, $\exists_\Lmc^{\mn{sim}}(\Amc, a)  \sqsubseteq D \in \Omc$. Since \Imc is a model of \Omc, we obtain \[aS \subseteq (\exists_\Lmc^{\mn{sim}}(\Amc, a) )^\Imc \subseteq D^\Imc.\] 
    An application of Corollary~\ref{lem:IASLsim} yields $\Imc_{\Amc, S, \Lmc} \models D(a)$. Thus, \Imc satisfies Condition~(b$'$) of Corollary~\ref{cor:charUCQfinaalt}. 
\end{proof}

Since the algorithms for $(\EL_{(\bot)}, \text{UCQ})$ and $(\ELI_{(\bot)}, \text{UCQ})$-ontology fitting are virtually the same, except that the latter replaces the guessing of variations by an enumeration, we prove their correctness simultaneously. 

\elicorrectness*
\begin{lemma}
    The algorithm of Theorem~\ref{thm:ELIrootedUCQinExpTime} accepts its input $E$ if and only if there is an \Lmc-ontology that fits $E$.
\end{lemma}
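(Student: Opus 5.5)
We reduce correctness to Theorem~\ref{lem:charVarSimQuant}: a fitting \Lmc-ontology exists iff there is a finite interpretation $\Imc=\biguplus_{e\in E^-}\Imc_e$ satisfying Conditions~(a) and~(b) there. The algorithm accepts iff there is a choice $(\Omc_e)_{e\in E^+}$ with $\Omc_e\in\Gamma_e$ such that for every negative example $(\Amc,q)\in E^-$, every CQ $p$ in $q$ and every \Lmc-forest \Amc-variation $p'$ of $p$, some $q'\in\mn{red}(p')$ satisfies $\Amc\cup\Omc^*\not\models^{\mn{fin}} q'$, where $\Omc^*=\bigcup_{e\in E^+}\Omc_e$. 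The same argument covers the \ELI-algorithm, with Theorem~\ref{thm:ELIsimexptime} in place of Theorem~\ref{thm:ELsimPTime}.

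\emph{Completeness.} Assume a fitting ontology exists and take \Imc as in Theorem~\ref{lem:charVarSimQuant}, together with $\Omc_e\in\Gamma_e$ satisfied by \Imc. Guess exactly these $\Omc_e$. By Corollary~\ref{cor:simdisju}, each component $\Imc_e$ is a finite model of $\Omc^*$, and it is a model of \Amc. For each $p'$, Condition~(a) gives $\Imc_e\not\models p'$. Point~2 of Lemma~\ref{prop:qEntIffVarEnt} then yields a $q'\in\mn{red}(p')$ with $\Imc_e\not\models q'$. Since $\Imc_e$ is a finite countermodel, $\Amc\cup\Omc^*\not\models^{\mn{fin}}q'$, so every co-guess succeeds.

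\emph{Soundness.} This is the main obstacle: the individual non-entailments have to be glued into a single finite $\Imc_e$ refuting all variations at once. Fix a negative example $(\Amc,q)$.
\begin{itemize}
\item First, $\Amc\cup\Omc^*$ must be finitely consistent, since some non-entailment succeeds (a CQ $p$ exists and has at least one variation).
\item For \EL, let $\Imc_e$ be the finite universal model $\Imc_{\Amc\cup\Omc^*}$ of \cite{lutz2010enriching}. It admits an \EL-simulation into every model, and this simulation is the identity on $\mn{ind}(\Amc)$.
\item If $\Imc_e\models q'$ for a rooted $q'=C(a)$, then Lemma~\ref{lem:LsiInvSim} transfers $a\in C$ to every finite model, so the query would be entailed. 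For an existential $q'$, the element witnessing $C$ is likewise transferred along the simulation, since $\Lmc$-concepts are preserved by simulations. Hence, for each $p'$, the $q'$ supplied by the algorithm is also false in $\Imc_e$, and Lemma~\ref{prop:qEntIffVarEnt} gives $\Imc_e\not\models p'$.
\item Then $\Imc=\biguplus\Imc_e$ satisfies~(a), and it satisfies~(b) by Corollary~\ref{cor:simdisju}. Theorem~\ref{lem:charVarSimQuant} gives a fitting ontology.
\end{itemize}

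For \ELI no universal model is known, so we instead take, for each $p'$, a finite countermodel $\Jmc_{p'}$ of the chosen $q'$ and form the direct product $\prod_{p'}(\Jmc_{p'})$, restricted to the diagonal on the individuals of \Amc.
\begin{itemize}
\item By Corollary~\ref{cor:LsiInvDirProd}, $\Omc^*$-concepts are preserved in the product, so the product remains a finite model of $\Omc^*$ and \Amc. We would make this precise using the unraveling of Lemma~\ref{lem:unravbasic} around the ABox part.
\item By Lemma~\ref{lem:prodlem}, the product maps homomorphically to each factor. Hence any $q'$ that fails in some factor also fails in the product.
\end{itemize}
If the diagonal restriction causes difficulties, the fallback is to pass to the $\Imc_{\Amc,S,\Lmc}$-style construction of Definition~\ref{def:tripleint} with $S$ the diagonal.
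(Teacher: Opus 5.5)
\textbf{Gap: consistency of $\Amc\cup\Omc^*$.} Your soundness step relies on the claim that every CQ has an \Lmc-forest \Amc-variation, and this is false. Take $p=\exists x\,r(x,x)$ and an ABox \Amc without $r$-loops. Replacing $x$ by an individual violates Condition~1 of the definition, and keeping $x$ leaves a self-loop in $G_{\Imc_{p'}}$, so $p$ has no such variation.

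If no CQ of a negative example $(\Amc,q)$ has a variation, the algorithm checks nothing for that example. Then nothing guarantees that $\Amc\cup\Omc^*$ is finitely consistent, and neither your universal model nor your product (now over an empty family) is available.
\begin{itemize}
\item For \EL and \ELI this is harmless. $\Omc^*$ contains no $\bot$, so the interpretation with domain $\mn{ind}(\Amc)$, all concept names true everywhere and all roles total, is a finite model of $\Amc\cup\Omc^*$.
\item For \ELbot and \ELIbot the statement itself fails. Let $E^+=\{(\{A(b)\},\exists x\,r(x,x))\}$ and $E^-=\{(\{A(a)\},\exists x\,r(x,x))\}$. There is no fitting ontology, since the two examples are isomorphic. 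Yet the only member of $\Gamma_e$ for the positive example is $\{\exists^{\mn{sim}}_{\EL}(\{A(b)\},b)\sqsubseteq\bot\}$, which amounts to $A\sqsubseteq\bot$. The negative example triggers no check, so the algorithm accepts.
\end{itemize}
The paper's own proof has the same hole: its $\prod M_e$ with $M_e=\emptyset$ is the maximal interpretation, which violates this CI.

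The repair is to let the algorithm also verify $\Amc\cup\Omc^*\not\models^{\mn{fin}}F(a)$ for a fresh concept name $F$ and each $(\Amc,q)\in E^-$. This is an instance of the same entailment problem, so the complexity is unchanged. Completeness is unaffected, because $\Imc_e$ witnesses the check. In the soundness direction you then take any finite model of $\Amc\cup\Omc^*$ as $\Imc_e$ when there are no variations.

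\textbf{The rest matches the paper.} Your completeness direction is the paper's. For soundness, the paper uses your product construction uniformly for all four logics and mentions the finite universal model only as a shortcut for \EL. The plain product with each diagonal tuple $(a,\dots,a)$ renamed to $a$ already works, so your hedges (unraveling, fallback construction) are unnecessary:
\begin{itemize}
\item it is a model of $\Omc^*$ by Corollary~\ref{cor:LsiInvDirProd}, applied at every tuple;
\item it is a model of \Amc, since every factor is;
\item the projections are homomorphisms that fix the individuals, so each chosen reduction query fails in the product.
\end{itemize}
Rename the diagonal tuples rather than restricting the domain, since deleting tuples can destroy existential witnesses. If you keep the universal model for \EL, use its part generated from $\mn{ind}(\Amc)$, so that witnesses of existential reduction queries lie in the domain of the simulation.
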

\noindent
\begin{proof}
    Let $\Lmc \in \{\EL,\ELbot, \ELI, \ELIbot\}$, $E=(E^+, E^-)$ be a collection of ABox-UCQ examples with $E^- \neq \emptyset$ and $u \in \NR$ a fresh role.
    For each $e \in E^+$, let $\Gamma_e$ be the set of $\Lmc^{\prime\mn{sim}}_\bot$-ontologies from the proof of Theorem~\ref{lem:charVarSimQuant}, where $\Lmc'$ is \Lmc without $\bot$. 
    Recall that, at their core, the two algorithms presented above check the following condition:
    \begin{enumerate}[label=]
        \item[] For every $e \in E^+$ there exists an $\Omc_e \in \Gamma_e$ such that $\Amc \cup \bigcup_{e \in E^+} \Omc_e \not\models^{\mn{fin}} p'$ for every $(\Amc, q) \in E^-$ and \Lmc-forest \Amc-variation $p'$ of a CQ in $q$.
    \end{enumerate}
    To prove correctness of the algorithm, by Theorem~\ref{lem:charVarSimQuant}, it therefore suffices to show the following equivalence:
    \begin{enumerate}
        \item there exists a finite interpretation \Imc that satisfies Conditions~(a) and~(b) of Theorem~\ref{lem:charVarSimQuant};
        \item for every $e \in E^+$ there exists an $\Omc_e \in \Gamma_e$ such that $\Amc \cup \bigcup_{e \in E^+} \Omc_e \not\models^{\mn{fin}} p'$ for every $(\Amc, q) \in E^-$ and \Lmc-forest \Amc-variation $p'$ of a CQ in $q$.
    \end{enumerate}

    `1 $\Rightarrow$ 2' Assume there exists finite model \Imc that satisfies Conditions~(a) and~(b) of Theorem~\ref{lem:charVarSimQuant}. 
    Since \Imc satisfies Condition~(b), for every $e \in E^+$ there exists an $\Omc_e \in \Gamma_e$ such that $\Imc \models \bigcup_{e\in E^+}\Omc_e$. 
    Let $e = (\Amc,q) \in E^-$, and $\Imc_e$ be the model of \Amc from Condition~(a) that satisfies $ \Imc_e \not \models p' $ for every \Lmc-forest \Amc-variation $p'$ of any CQ in $q$. 
    Since $\Lmc_\bot^{\prime\mn{sim}}$ is invariant under taking components of disjoint unions by Corollary~\ref{cor:simdisju}, we obtain $\Imc_e \models \bigcup_{e\in E^+}\Omc_e$. Hence, $\Imc_e$ witnesses $\Amc \cup \bigcup_{e\in E^+}\Omc_e \not \models^{\mn{fin}} p'$ for every \Lmc-forest \Amc-variation $p'$ of any CQ in $q$.

    \medskip
    `2 $\Rightarrow$ 1' Assume that for each $e \in E^+$ there exists some $\Omc_e\in \Gamma_{e}$ for which Condition~2 holds. Let $\Omc:=\bigcup_{e\in E^+}\Omc_e $, let $e = (\Amc, q) \in E^-$, and let $FV_e$ denote the set of all \Lmc-forest \Amc-variations of all CQs in $q$. By assumption, for each $p' \in FV_e$ there exists a finite model $\Imc_{e,p'}$ of $\Amc \cup \Omc$ such that $\Imc_{e,p'} \not \models p'$. Let $M_e$ be the set of these interpretations and define $\Imc_e$ to be the interpretation obtained from $\prod M_e$
    by renaming $(a,a,...,a)$ to $a$ for every $a \in \mn{ind}(\Amc)$. It is an easy exercise to show that in the special case where $\Amc \cup \Omc$ admits a finite universal model $\Jmc_e$, we have $\Jmc_e \not \models p' $ for every $p' \in FV_e$. In this case, we may take $M_e = \{\Jmc_e\}$, making $\Imc_e$ coincide with $\Jmc_e$. 
    
    The following claim captures how $\Imc_e$ combines the relevant query entailment properties of the different interpretations $\Imc_{e,p'}$.
    \\[2mm]
    {\bf Claim.} $\Imc_e$ is a finite model of $\Amc \cup \Omc$ and $\Imc_e \not \models p'$ for every $p' \in FV_e$.
    \\[1mm]
    \emph{Proof of claim.} Since $\Imc_e$ is a finite product of finite models of \Omc and since $\Lmc_\bot^{\mn{sim}}$ is invariant under direct products, $\Imc_e$ is a finite model of $\Omc$. It is straightforward to verify that $\Imc_e$ is also a model of \Amc. To show the second part of the claim, let $p' \in FV_e$. 
    Assume for contradiction that $\Imc_e \models p' $, witnessed by a homomorphism $h$ from $p'$ to $\Imc_e$ such that $h(a)=a$ for all $a \in \mn{ind}(p')$. 
    By construction of $\Imc_e$ and Lemma~\ref{lem:prodlem}, $h$ extends to a homomorphism $h'$ from $p'$ to $\Imc_{e,p'}$ such that $h'(a)= a$ for all $a \in \mn{ind}(p')$. However, this contradicts the assumption that $\Imc_{e,p'} \not \models p'$. Hence, we must have $\Imc_e \not \models p'$. This proves the claim.

    \medskip
    We now show that the finite interpretation $\Imc = \biguplus_{e \in E^-} \Imc_e$ satisfies Conditions~(a) and~(b) of Theorem~\ref{lem:charVarSimQuant}. By the claim, \Imc already satisfies Condition~(a).  Moreover, since $\Lmc_\bot^{\mn{sim}}$ is invariant under disjoint unions, and $\Imc_{e} \models \Omc$, we obtain $\Imc \models \Omc$. Hence, \Imc also satisfies Condition~(b).
\end{proof}


Note that the proof also yields a bound on the size of the finite interpretations witnessing the existence of a fitting ontology, provided that the relevant witness models for $\Amc \cup \bigcup_{e \in E^-}\Omc_e \not\models p'$ are bounded in size by some $f(||E||)$. In general, this bound is exponential in $||E||\cdot \log (f(||E||))$, due to the possibly exponential number of \Lmc-forest \Amc-variations and the direct product involved in the construction.
However, in the presence of finite universal models, this blowup can be avoided, as observed in the proof, leading to a polynomial bound in $f(||E||)$. 
In the context of the discussion about the size of fitting ontologies in Section~\ref{sect:appendUCQ}, the preceeding bound therefore implies that, since $\ELbot^{\mn{sim}}$-knowledge bases always admit finite universal models of polynomial size \cite{lutz2010enriching}, if $E$ admits a fitting $\EL_{(\bot)}$-ontology, it also admits one of size polynomial in $||E||$.

\subsection*{Lower Bound for $\EL_{(\bot)}$}
\CQSigmaptwohard*
    We derive this lower bound via a reduction from the complement of the \mn{2}-\mn{COLORING}-\mn{EXTENSION} problem. 
    To state the problem formally, we introduce some basic notions.
    Let $G=(V,E)$ be an undirected graph and $P \subseteq V$ a set of vertices. 
    A $k$-coloring of $G$ is a function $c_k: V \to \{1, \dots, k\}$ such that $c_k(v) \neq c_k(w)$ for all $(v,w)\in E$. There is a well known one-to-one correspondence between the $k$-colorings of $G$ and homomorphisms from $G$ to $K_k$, where $K_k$  denotes the complete graph of $k$ vertices \cite{DBLP:books/daglib/0013017}. Let $G_{|P}$ denote the subgraph of $G$ induced by $P$. We say that a $k$-coloring $c_k$ of $G_{|P}$ is \emph{extendable} to a $(k+1)$-coloring of $G$ if there exists a $(k+1)$-coloring $c_{k+1}$ of $G$ such that $c_k(v)=c_{k+1}(v)$ for all $v \in P$.

    The \mn{2}-\mn{COLORING}-\mn{EXTENSION} problem for $(G,P)$ asks whether every 2-coloring of $G_{|P}$ is extendable to a 3-coloring of $G$. This problem is known to be $\Pi^P_2$-complete \cite{szeider2005generalizations}.
    
    Given an undirected graph $G=(V,E)$ and $P \subseteq V$, in the following we show how to construct a collection of ABox-rooted CQ-examples $E_{G,P}=(E^+, E^-)$ such that $E_{G,P}$ admits a fitting $\Lmc$-ontology iff there exists a 2-coloring of $G_{|P}$ that is not extendable to a 3-coloring of $G$.
    W.l.o.g., assume that $V=\{1, \ldots, n\}$ and $P=\{1, \ldots, k\}$. We define $E_{G,P}=(E^+,E^-)$ using the characterization of Theorem~\ref{thm:charUCQwithFin} and the correspondence between $k$-colorings and homomorphisms into $K_k$. 
    There is only a single positive and a single negative example.
    We begin with the positive example $(\Amc_2,q_P)$. It is designed to enforce the choice of a 2-coloring of $G_{|P}$ via the homomorphism witnessing $\Amc_2 \cup \Omc \models q_p$, where \Omc is a fitting \Lmc-ontology. We achieve this by choosing 
    $\Amc_2$ to be (the directed variant of) $K_2$ and $q_P$ to be $G_{|P}$. Moreover, we force the query $q_P$ to map entirely into $\mn{ind}(\Amc_2)$ by adding to both $\Amc_2$ and $q_P$ reflexive loops
    for a role name $s$.  The chosen 2-coloring (i.e., the homomorphism) must then be ``memorized'' in \Omc via concept names $V_v$ tied to the respective variables $x_v$ in $q_P$ for each $v \in P$, together with dedicated concept assertions $T_i(b_i)$ in $\Amc_2$ corresponding to the two colors. Since $q_P$ must be rooted, we also add an anchor individual $b_1$ and connect it by $s$ to all variables.
    Formally, define $E^+ := \{(\Amc_2, q_S)\}$, where
    \begin{align*}
        \Amc_2 :=\;&  
    \{r(b_1, b_2), r(b_2, b_1), T_1(b_1), T_2(b_2)\}\\
    & \cup \{s(b_i,b_j) \mid 1\leq i,j \leq 2\} 
    \end{align*}
    and
    \begin{align*}
    q_P :=\;& \exists x_1\ldots \exists x_k\; \bigwedge_{1 \leq i \leq k} s(x_k, x_k) \land \bigwedge_{v \in P} V_v(x_v) \\ &\qquad\qquad\quad\land \bigwedge_{(u,v)\in E_{|P}} r(x_u, x_v) \land r(x_v , x_u) \\ 
    & \qquad\qquad\quad\land \bigwedge_{v \in P} s(b_1, x_v).
    \end{align*}
    Here, $E_{|P}$ denotes the restriction of $E$ to $P$.
     It remains to define a negative example $(\Amc_3, q_G)$ that checks that the 2-coloring chosen by the positive example and memorized by \Omc via the concepts $V_v$ is not extendable to a 3-coloring of $G$. Recall that a fitting ontology \Omc should satisfy $\Amc_3 \cup \Omc \not\models q_G$. 
     We reuse the same trick as before: encoding $K_3$ in $\Amc_3$ and encoding $G$ in $q_G$ via the dedicated role name $r$. On its own, this would merely ensure that $G$ is not 3-colorable. To tie the construction to the 2-coloring chosen by the positive example, we attach, for each $v \in P$, the concept $V_v$ to the corresponding variable $x_v$ in $q_G$ and include the concept assertions $T_i(a_i)$ in $\Amc_3$. With this restriction, the negative example guarantees that there exists no 3-coloring of $G$ that agrees with the 2-coloring of $G_{|P}$ chosen by $(\Amc_2, q_P)$ and stored in \Omc. To put it differently, this 2-coloring of $G_{|P}$ is not extendable to a 3-coloring of $G$. To complete the construction, we add an individual and roles to make $q_G$ a rooted CQ.
     Formally, define $E^-:=\{(\Amc_3, q_G)\}$, where
     \begin{align*}
         \Amc_3 =\;& \{T_1(a_1), T_2(a_2)\}  \cup \{s(a_i,a_j) \mid 1 \leq i,j \leq 3\}\\ 
         &\cup \{ r(a_i, a_j), r(a_j, a_i) \mid 1 \leq i < j \leq 3\}
     \end{align*}
     and
     \begin{align*}
         q_G=\;& \exists x_1 \ldots \exists x_n \bigwedge_{1\leq i \leq n} s(a_1,x_i) \land \bigwedge_{v \in P} V_v(x_v) 
         \\ &\qquad\qquad\quad\land \bigwedge_{(v,w)\in E} r(x_v,x_w) \land r(x_w,x_v)
         .
     \end{align*}
     Note that the construction of $E_{G,P}$ is linear in the size of $G$. It remains to prove the validity of the reduction.
    \begin{lemma}
        Let $\Lmc\in\{\EL, \ELbot\}$, $G=(V,E)$ be an undirected graph and $P \subseteq V$. Then, $G_{|P}$ admits a 2-coloring that is not extendable by a 3-coloring of $G$ if and only if $E_{G,P}$ admits a fitting \Lmc-ontology.
    \end{lemma}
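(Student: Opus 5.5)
We prove both directions through the characterization of Theorem~\ref{thm:charUCQwithFin}. Since $E^-$ consists of the single example $(\Amc_3,q_G)$, a witness interpretation $\Imc$ for Point~2 is just a finite model $\Imc_e$ of $\Amc_3$ with $\Imc_e\not\models q_G$. For the positive example $(\Amc_2,q_P)$ it must satisfy Condition~(b). Throughout we read the atom $s(x_k,x_k)$ in $q_P$ as $s(x_i,x_i)$. The key point is that the only elements carrying $T_1$ and $T_2$ pin down where $b_1,b_2$ can be simulated. The concept names $V_v$ of $\Imc$ then transport a 2-coloring of $G_{|P}$ between the two examples.

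\emph{From a non-extendable coloring to a fitting ontology.} Let $c$ be a 2-coloring of $G_{|P}$ that is not extendable to a 3-coloring of $G$. Let $\Imc$ be $\Amc_3$ viewed as an interpretation, extended by $V_v^\Imc=\{a_{c(v)}\}$ for $v\in P$. Clearly $\Imc\models\Amc_3$.

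Suppose $h$ were a homomorphism from $q_G$ to $\Imc$. The atoms $s(a_1,x_i)$ force every $x_i$ onto some $a_j$. The $r$-atoms then make $v\mapsto j$ a 3-coloring of $G$, because $\Amc_3$ has no $r$-loops. Finally, the atoms $V_v(x_v)$ force agreement with $c$ on $P$, contradicting non-extendability. Hence $\Imc\not\models q_G$.

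For Condition~(b), let $S$ be any $\Lmc$-simulation from $\Amc_2$ to $\Imc$. By (Atom), $b_iS\subseteq T_i^\Imc=\{a_i\}$. By Definition~\ref{def:tripleint}, $b_i\in V_v^{\Imc_{\Amc_2,S,\Lmc}}$ holds whenever $b_iS\subseteq V_v^\Imc$. This is the case for $i=c(v)$, and vacuously if $b_iS=\emptyset$. Now map $x_v\mapsto b_{c(v)}$. The atoms $s(x_v,x_v)$ and $s(b_1,x_v)$ are satisfied because $\Amc_2$ contains all $s$-edges. The atoms $r(x_u,x_v)$ are satisfied because $c$ is a proper 2-coloring and $r$ links $b_1,b_2$ in both directions. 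So $\Imc_{\Amc_2,S,\Lmc}\models q_P$, and Theorem~\ref{thm:charUCQwithFin} yields a fitting $\Lmc$-ontology.

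\emph{From a fitting ontology to a non-extendable coloring.} Let $\Imc=\Imc_e$ be as in Point~2. Then $S=\{(b_1,a_1),(b_2,a_2)\}$ is an $\Lmc$-simulation (even total) from $\Amc_2$ to $\Imc$, since $\Imc\models\Amc_3$ supplies the needed $T_i$, $r$- and $s$-facts. By Condition~(b) there is a homomorphism $h$ from $q_P$ to $\Imc_{\Amc_2,S,\Lmc}$.

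The step I expect to need the most care is showing that $h(x_v)\in\{b_1,b_2\}$ for all $v\in P$. I would argue it as follows. Outside $\Imc_0$, the interpretation $\Imc_{\Amc_2,S,\Lmc}$ consists of $\EL$-unravelings $\Jmc_a^{\downarrow\Lmc}$. In these, every edge goes from a path to a strictly longer path, so there are no $s$-self-loops. The atom $s(x_v,x_v)$ therefore forces $h(x_v)\in\mn{ind}(\Amc_2)$.

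This lets us set $c(v)=i$ iff $h(x_v)=b_i$. Since $\Amc_2$ has no $r$-loops, $c$ is a 2-coloring of $G_{|P}$. Moreover, $b_{c(v)}\in V_v^{\Imc_0}$ means that $a_{c(v)}\in V_v^\Imc$, because $(b_{c(v)},a_{c(v)})\in S$.

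Now suppose $c$ extended to a 3-coloring $c'$ of $G$. Then $x_v\mapsto a_{c'(v)}$ is a homomorphism from $q_G$ to $\Imc$: the $s$-atoms and $r$-atoms are satisfied by $\Amc_3$, and $a_{c(v)}\in V_v^\Imc$ handles the atoms $V_v(x_v)$ for $v\in P$. This contradicts $\Imc\not\models q_G$. Hence $c$ is a 2-coloring of $G_{|P}$ that is not extendable, which completes the proof.
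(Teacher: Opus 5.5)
Your proof is correct and follows essentially the same route as the paper. It uses the same three-element witness built from $\Amc_3$ and the coloring, the same simulation $\{(b_1,a_1),(b_2,a_2)\}$ with the $s$-self-loops forcing each $x_v$ onto an individual, and the same contradiction from an extending 3-coloring. The only differences are minor: you treat every simulation from $\Amc_2$ uniformly via $b_iS\subseteq\{a_i\}$ rather than separating the unique non-empty one from the empty one for $\EL$, and you correctly read the atom $s(x_k,x_k)$ in $q_P$ as $s(x_i,x_i)$.
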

    \noindent
    \begin{proof}
     `\emph{if}'. Suppose there exists a fitting \Lmc-ontology for $E_{G,P}$. By Theorem \ref{thm:charUCQwithFin}, there exists a finite interpretation \Imc that satisfies Conditions~(a) and~(b). 
    Since $E^-$ is a singleton, we may assume that $\Imc = \Imc_e$, where $e=(\Amc_3, q_G ) \in E^-$ and $\Imc_e$ is the interpretation witnessing Condition~(a). This means that $\Imc$ is a model of $\Amc_3$ and $\Imc \not \models q_G$. We begin with the following claim:
    \\[1mm]
	{\bf Claim.} There exists a 2-coloring $c_2:P \to \{1,2\}$ of $G_{|P}$ such that $a_{c_2(v)} \in V_v^\Imc$ for all $v \in P$. \vspace{0.1cm}

    Observe that $S=\{(b_1, a_1), (b_2,a_2)\}$ is an $\Lmc_\bot$-simulation from $\Amc_2$ to $\Amc_3$. Thus, $S$ is also an \Lmc-simulation from $\Amc_2$ to \Imc. Since $\Imc$ satisfies Condition~(b), $\Imc_{\Amc_2, S, \Lmc}$ entails $q_P$. Let $h$ be the witnessing homomorphism. The reflexive $s$-roles in $q_P$ ensure that the image of $h$ lies entirely in $\mn{ind}(\Amc_2)$, since $\Imc_{\Amc_2, S, \Lmc}$ is a forest model.
    Define $c_2: P \to \{1,2\}$ by setting $c_2(v)=i$ whenever $h(x_v)=b_i$. To show that $c_2$ is a proper 2-coloring of $G_P$, let $(u,v) \in E_{|P}$ and suppose $c_2(u)=i$. Then $h(x_u)=b_i$. An inspection of $q_P$ and $\Imc_{\Amc_2, S, \Lmc}$ reveals that $r(x_u,x_v), r(x_v, x_u)\in q_P$ and $(b_i,b_i)\notin r^{\Imc_{\Amc_2, S, \Lmc}}$.  This implies $h(x_v)=b_j$ with $i\neq j$. Hence, $c_2(v) \neq c_2(u)$, proving that $c_2$ is a proper 2-coloring of $G_{|P}$. 
    It remains to verify the additional property of the claim. First, note that $h(x_v)=b_{c_2(v)}$ for all $v \in P$. Since $h$ is a homomorphism, we obtain $b_{c_2(v)}\in V_v^{\Imc_{\Amc_2, S, \Lmc}}$. By Corollary~\ref{lem:IASLsim} and the definition of $S$, we conclude $a_{c_2(v)}\in V_v^{\Imc}$. 

    \medskip
    We now show that the 2-coloring $c_2$ of the claim is not extendable to a 3-coloring of $G$. It suffices to prove the equivalent statement that every 3-coloring of $G$ disagrees with $c_2$ on some vertex in $ P$.
    We use the well known fact that every 3-coloring $c_3$ of an undirected graph $H$ gives rise to a homomorphism $h$ from $H$ to $K_3=(\{1,2,3\}, E')$ such that $c_3(v)=h(v)$. 
    Let $c_3: V \to \{1,2,3\}$ be a 3-coloring of $G$. By construction of $(\Amc_3, q_G)$, this 3-coloring induces a homomorphism $h$ from $q_G \setminus \{V_v(x_v)\mid v \in P\}$ to $\Amc_3$ such that $h(x_v)=a_{c_3(v)}$ for all $v \in V$.  Note that $h$ cannot be a homomorphism from $q_G$ to \Imc, as \Imc satisfies Condition~(a). Hence, there exists some $v \in P$ such that $h(x_v)= a_{c_3(v)} \not \in V_v^\Imc$. 
    The claim yields $a_{c_2(v)} \in V_v^\Imc$ which implies $c_2(v) \neq c_3(v)$.
    
    \medskip
     `\emph{only if}'. Let $c_2: S \rightarrow \{1,2\}$ be a 2-coloring of $G_{|P}$ that is not extendable to a 3-coloring of $G$. Based on $c_2$, define the finite interpretation \Imc as follows:
    \begin{align*}
        \Delta^\Imc &= \{a_1, a_2, a_3\},\\
        T_1^\Imc &= \{a_1\}, \\T_2^\Imc &= \{a_2\},\\
        V_v^\Imc&= \{a_{c_2(v)}\} && \text{ for all } v \in P,\\
        r^\Imc &= \{(a_i, a_j) \mid 1 \leq i,j \leq 3, i\neq j\},\\
        s^\Imc &= \{(a_i, a_j) \mid 1 \leq i,j \leq 3\}.
    \end{align*}
    All concepts and roles not defined are interpreted as empty. 
    
    It remains to show that \Imc satisfies Conditions~(a) and~(b) of Theorem~\ref{thm:charUCQwithFin}. 
    We start with (a). Since $E^-$ is a singleton, we may assume $\Imc_e = \Imc$ for $e=(\Amc_3, q_G)\in E^-$. It is immediate that $\Imc_e$ is a model of $\Amc_3$. Contrary to what we want to show, suppose $\Imc_e \models q_G$ with witnessing homomorphism $h$. 
        By definition of $\Amc_3$ and $q_G$, every homomorphism $h$ from $q_G$ to $\Imc_e$ gives rise to a 3-coloring $c_3: G \to \{1,2,3\}$, defined by $c_3(v)=i$ if $h(x_v)= a_i$. 
        The atoms $V_v$ in $q_G$ imply 
         $a_{c_3(v)}= h(x_v) \in V_v^\Imc$ for all $v \in P$. Hence, $c_3(v)=c_2(v)$ for all $v \in P$.  This contradicts the assumption that $c_2$ is not extendable to a 3-coloring of $G$. Thus $\Imc_e \not\models q_G$, meaning that $\Imc$ satisfies Condition~(a), as required.  
        Next, we consider Condition~(b), for which we only need to treat the single positive example $(\Amc_2, q_P)$. 
        Since $T_1^\Imc$ and $T_2^\Imc$ are singletons, the only non-empty $\Lmc$-simulation from $\Amc_2$ to $\Imc$ is $S=\{(b_1,a_1), (b_2,a_2)\}$. Define the mapping $h: \mn{var}(q_P) \cup \{b_1\} \to \mn{ind}(\Amc_2) $ by $h(b_1)=b_1$ and $h(x_v)=b_{c_2(v)}$ for all $v \in P$. It is straightforward to verify that $h$ is a homomorphisms from $q_P \setminus \{V_v(x_v) \mid v \in P\}$ to $\Amc_2$. Moreover, by construction $a_{c_2(v)} \in V_v^\Imc$, from which we obtain  $\Imc_{\Amc_2, S, \Lmc} \models V_v(b_{c_2(v)})$, for all $v \in P$. Hence, $h$ is a homomorphism from $q_P$ to $\Imc_{\Amc_2, S, \Lmc} $. By definition of $h$, this proves $\Imc_{\Amc_2, S, \Lmc} \models q_P$. If $\Lmc = \EL$, we additionally need to consider the empty \Lmc-simulation $\emptyset$ from $\Amc_2$ to $\Imc$. By definition of $\Imc_{\Amc_2, \emptyset, \Lmc}$, we have $\Imc_{\Amc_2, \emptyset, \Lmc} \models V_v(b_i)$ for all $v \in P$ and $1 \leq i \leq 2$. Clearly, this implies that the mapping $h$ defined above is again a homomorphism from $q_P$ to $\Imc_{\Amc_2, \emptyset, \Lmc} $ with $h(b_1) = b_1$, witnessing $\Imc_{\Amc_2, \emptyset, \Lmc} \models q_P$, as required.
    \end{proof}
\section{Proofs for Section~\ref{sect:ELI}}
\subsection*{Upper bound for $\ELI_{(\bot)}$}
\finmodontologyfinmodkb*
\noindent
\begin{proof}
    We present the proof for the restricted case. The proof of the unrestricted case can be obtained by removing every occurrence of ``\mn{fin}'', ``finite'' and ``finitely''. (Replacing ``a'' by ``an'' at the appropriate positions is left to the reader)
    
    We first prove Point~(1).
    `\emph{only if}' 
    We argue by contraposition. Assume that $\Amc \cup \Omc \not \models^{\mn{fin}} C(a)$. Then there exists a finite model $\Imc$ of $\Amc \cup \Omc $ such that $a \not \in C^\Imc $. Let \Jmc be the finite interpretation obtained from \Imc by interpreting the fresh symbols as:
    \begin{align*}
        u^\Jmc &:= \Delta^{\Imc} \times \mn{ind}(\Amc)\\
        S_{\Amc, b}^\Jmc &:= (\exists^{sim}_\Lmc ({\Amc}, b))^\Imc \qquad \text{ for all } b \in \mn{ind}(\Amc)
    \end{align*}
    Because $u$ and the concepts $S_{\Amc, b}$ do not occur in \Omc, $\Jmc$ is a model of $\Omc$.
    Moreover, \Jmc is also a model of $\Omc_\Amc$, since by construction $b \in S_{\Amc, b}^\Jmc$ for all $b \in \mn{ind}(\Amc)$. In particular $a \in S_{\Amc, a}^\Jmc$ and $a \not \in C^\Jmc$. Hence, $\Jmc \not \models S_{\Amc, a} \sqsubseteq C$ from which we conclude $\Omc \cup \Omc_\Amc \not\models^{\mn{fin}} S_{\Amc, a} \sqsubseteq C$. 
    
    \medskip
    `\emph{if}' 
     We again argue by contraposition. Assume that $\Omc \not \models^{\mn{fin}} S_{\Amc, a} \sqsubseteq C $.
    Then there exists a finite witness model $\Imc $ of $\Omc$ such that $S_{\Amc, a}^\Imc \not \subseteq C^\Imc$. 
    
    We proceed as follows. In the spirit of Definition \ref{def:tripleint}, we construct a finite model $\Jmc_\Amc$ of \Amc by attaching to each individual $b$ of \Amc a copy of a uniform product of \Imc with itself. 
    The element attached to $b$ is chosen such that it enumerates exactly the elements of $S_{\Amc,b}^\Imc$.
    In this way, we can view $b$ as a representative of $S_{\Amc,b}^\Imc$.
    We then construct a set of simulations between these uniform copies that extend to a simulation from $\Jmc_\Amc$ into each copy. In the last part, we use these simulations to show that $\Jmc_\Amc$ is a model of $\Amc \cup \Omc$, and then use the fact that $a$ is a representative of $S_{\Amc,a}$ and the assumption to prove the lemma.

    Let $b_m\in \mn{ind}(\Amc)$ be an individual such that for each $c \in \mn{ind}(\Amc)$, there exists a surjection from  $S_{\Amc ,b_m}^\Imc$ to $S_{\Amc ,c}^\Imc$, i.e., we have 
    \[b_m  \in  \mathrm{argmax}_{c \in \mn{ind}(\Amc)} \;|S_{\Amc,c}^\Imc|.\]
    Then define $(\Jmc, \bar w)$ as
    \[
    (\Jmc, \bar w):=\prod_{d \in S_{\Amc ,b_m}^\Imc} (\Imc, d).
    \]
    Based on \Jmc, we obtain for each $c\in \mn{ind}(\Amc)$ the interpretation $\Jmc_c$ from \Jmc  
    by renaming an element $(d_i)_{i \in I} \in \Delta^\Jmc$ that satisfies $\{d_i \}_{i \in I} = S_{\Amc ,c}^\Imc$ to $c$. The existence of such an element follows from the choice of $b_m$. 
    Observe that all interpretations $\Jmc_c$ are isomorphic (ignoring individuals), as they arise from the same product. 
    For the same reason, we can make the following crucial observation:
    \\[1mm]
	{\bf Claim.} Let $c \in \mn{ind}(\Amc)$. Then there exists a left-total \Lmc-simulation $R_c$ from $\Jmc_c$ to $\Jmc$ such that $cR_c = S_{\Amc,c}^\Jmc$.
    \\[2mm]
    \textbf{Proof of claim:}
    By construction of $\Jmc_c$, it suffices to construct a left-total \Lmc-simulation $R$ from $\Jmc$ to itself such that for each $c \in \mn{ind}(\Amc)$ and $\bar v_c =(d_i)_{i \in I}\in \Delta^\Jmc$, where $\{d_i\}_{i \in I} = S_{\Amc ,c}^\Imc$, we have  \[\bar v_c R = S^\Jmc_{\Amc ,c} .\]
    Indeed, after renaming $\bar v_c$ to $c$, this relation yields the required simulation $R_c$ from $\Jmc_c$ to $\Jmc$.
    
    We construct $R$ from a union of homomorphisms. Intuitively the uniform product structure of \Jmc allows us to construct, for each tuple, homomorphisms that arbitrarily re-index its components.
    By construction of \Jmc and Point~1 of Lemma~\ref{lem:prodlem}, for each $\bar v=(d_i)_{i \in I} \in \Delta^{\Jmc}$ and every $i \in I$ there exists a homomorphism $h_{\bar v, i}$ from \Jmc to \Imc such that $h_{\bar v, i}(\bar v)=d_i$. Using Point~3 of Lemma~\ref{lem:prodlem} we combine these homomorphisms pointwise in every possible way: for every $\bar v=(d_i)_{i \in I}\in \Delta^\Jmc$ and each function $f: I \to I$ there exists a homomorphism $g_{\bar v, f}$ from $\Jmc$ to $\Jmc$ such that $g_{\bar v, f}(\bar v) =(h_{\bar v, f(i)}(\bar v))_{i \in I}= (d_{f(i)})_{i \in I}$. Now, define \[
    R= \bigcup_{\substack{\bar v \in \Delta^\Jmc, \\f:I \to I}}\left\{(\bar d, \bar e) {\;\bigl \vert\;}   g_{\bar v, f}(\bar d)= \bar e \right\}.
    \]
    Since $R$ is a non-empty union of homomorphisms (viewed as binary relations), it is a left-total \Lmc-simulation.
    It remains to show that $R$ satisfies the required equality. Fix $c \in \mn{ind}(\Amc)$ and $\bar v_c = (d_i)_{i \in I} \in \Delta^\Jmc$ with $\{d_i \}_{i \in I} = S_{\Amc ,c}^\Imc$. Since by construction $\bar v_c \in S_{\Amc, c}^\Jmc$ and $R$ is an \Lmc-simulation, we already have $\bar v_c R \subseteq S_{\Amc, c}^\Jmc$. For the opposite inclusion, let $\bar v = (e_i)_{i\in I} \in S_{\Amc, c}^\Jmc$. By construction of $\Jmc$ as a product, we obtain $\{e_i\}_{i\in I} \subseteq S_{\Amc, c}^\Imc =\{d_i\}_{i \in I} $. Hence, there exists a function $f: I \to I$ such that $e_i = d_{f(i)}$ for all $i \in I$. Then  $g_{\bar v_c, f}(\bar v_c) = \bar v$ and therefore $(\bar v_c, \bar v)\in R$. 
    This yields $\bar v_c R \supseteq S_{\Amc, c}^\Jmc$, as required.
    
    \medskip

    We now construct the finite interpretation $\Jmc_\Amc$ that witnesses $\Amc \cup \Omc \not \models^{\mn{fin}} C(a)$ from the interpretations $\Jmc_c$ defined above. Throughout the remainder of the proof, we assume, w.l.o.g., that $\Delta^{\Jmc_b} \cap \Delta^{\Jmc_c} = \emptyset$ for all $b,c \in \mn{ind}(\Amc)$ with $b \neq c$.
    Then, let $\Jmc_\Amc $ be the interpretation obtained from 
    $\Jmc'=\bigcup_{c \in \mn{ind}(\Amc)} \Jmc_c$ by adding, for every $r(c_1,c_2) \in \Amc$, the pair $(c_1,c_2)$ to $r^{\Jmc'}$.
    Since $\Imc$ is a model of $\Omc_\Amc$, $A(c) \in \Amc$ implies $c \in S_{\Amc, c}^\Imc \subseteq A^\Imc$.
    Hence,  $\Jmc_\Amc$ is a finite model of $\Amc$.
    It remains to show that $\Jmc_\Amc$ is a model of $\Omc$.
    First observe that $\Jmc$ and every $\Jmc_c$ are models of $\Omc \cup \Omc_\Amc$, because $\Lmc^{\mn{sim}}_\bot$ is invariant under direct products, by Corollary~\ref{cor:LsiInvDirProd}.
    To establish that $\Jmc_\Amc \models \Omc$, we relate the two interpretations via an \Lmc-simulation from $\Jmc_\Amc$ to \Jmc.
    Let \[R = \bigcup_{c \in \mn{ind}(\Amc)} R_c,\]
    where $R_c$ is the left-total \Lmc-simulation from $\Jmc_c$ to \Jmc obtained from the previous claim.
    \\[1mm]
	{\bf Claim.} $R$ is a left-total \Lmc-simulation from $\Jmc_\Amc$ to $\Jmc$.
    \\[2mm]
    \textbf{Proof of claim:} 
    That $R$ satisfies (Atom) and is left-total follows immediately from the construction of $\Jmc_\Amc$ and from the fact that each $R_c$ is a left-total \Lmc-simulation from $\Jmc_c$ to $\Jmc$. It remains to verify (Rel) and (iRel). Let $r$ be a role, $(d_1, e_1 )\in r^{\Jmc_\Amc}$ and  $( d_1, d_2) \in R$. 
    We distinguish two cases:
    \begin{itemize}
        \item If $\{ d_1,  e_1\} \subseteq \Delta^{\Jmc_c}$ for some $c \in \mn{ind}(\Amc)$, then $(d_1, e_1)\in r^{\Jmc_c}$ and
    $( d_1,  d_2) \in R_c$. Since $R_c$ is a left-total \Lmc-simulation from $\Jmc_c$ to $\Jmc$, there is some $(d_2, e_2) \in r^\Jmc$ such that $(e_1, e_2) \in R_c \subseteq R$. 
        \item Otherwise $\{ d_1,  e_1\} \not\subseteq \Delta^{\Jmc_c}$ for all $c \in \mn{ind}(\Amc)$. By construction of $\Jmc_\Amc$, $d_1 = b, e_1 = c\in \mn{ind}(\Amc)$, and either  $r(b,c) \in \Amc$ or $r^-(c,b) \in \Amc$. Since $d_1 = b$, we obtain $(d_1, d_2)\in R_b$. This implies $ d_2 \in S_{\Amc, b}^\Jmc$ by the first claim. Because $\Jmc$ is a model of $\Omc_{\Amc}$, there exists an $ e_2 \in \Delta^\Jmc$ such that $(d_2,  e_2) \in r^\Jmc$ and $ e_2 \in S_{\Amc, c}^\Jmc$. Using the first claim a second time yields $( e_1, e_2 ) \in R_c \subseteq R$. 
    \end{itemize}
    This proves that $R$ is a left-total \Lmc-simulation, and therefore the claim.

    \medskip
    Recall that $\Jmc_c$ is isomorphic to \Jmc (ignoring individuals). Under this isomorphism, for every $c \in \mn{ind}(\Amc)$, we write $Z_c$ for the relation $R$ viewed as a left-total \Lmc-simulation from $\Jmc_\Amc$ to $\Jmc_c$. By construction of $R$
    and since $\Jmc_c \subseteq \Jmc_\Amc$, the relation $Z_c$ contains all endomorphisms of $\Jmc_c$ (viewed as binary relations). In particular, we have $\mathrm{id}_{\Delta^{\Jmc_c}} \subseteq Z_c$.
    
    We now use these simulations to show that $\Jmc_\Amc$ is a model of $ \Omc$. Let $D \sqsubseteq E \in \Omc$ and assume $d \in D^{\Jmc_\Amc}$. By design of $\Jmc_\Amc$, there is some $c \in \mn{ind}(\Amc)$ such that $d \in \Delta^{\Jmc_c}$.
    By Lemma~\ref{lem:LsiInvSim}, $\Lmc^{\mn{sim}}$-concepts are invariant under simulations, and thus $dZ_c \subseteq D^{\Jmc_c} \subseteq E^{\Jmc_c}$. 
    Since $( d, d) \in Z_c$, this implies $d \in E^{\Jmc_c}$. 
    By construction of $\Jmc_\Amc$ we have $\Jmc_c \subseteq \Jmc_\Amc$.
    Thus, the identity relation is an \Lmc-simulation from $\Jmc_c$ to $\Jmc_\Amc$. A second application of Lemma~\ref{lem:LsiInvSim} yields $d \in E^{\Jmc_\Amc}$.  
    Therefore, $D^{\Jmc_\Amc} \subseteq E^{\Jmc_\Amc}$, and hence $\Jmc_\Amc \models \Omc$. Since we have already established that $\Jmc_\Amc \models \Amc$, we conclude that $\Jmc_\Amc $ is a finite model of $\Amc \cup \Omc$.
    Recall that the initial assumption guarantees the existence of some $d \in S_{\Amc, a}^\Imc$ such that $d \not \in C^\Imc $. Since $\Jmc$ is a product of $\Imc$ with itself, Lemma~\ref{lem:prodlem} yields an element $\bar d \in S_{\Amc, a}^\Jmc $ with $\bar d \not \in C^\Jmc$, witnessing $S_{\Amc, a}^\Jmc \not \subseteq C^\Jmc$.
    Moreover, by construction of the \Lmc-simulation $R$ from $\Jmc_\Amc$ to $\Jmc$, we have $a R = S_{\Amc, a}^\Jmc$.
    Hence, $a R \not \subseteq C^\Jmc$ and thus $a \not \in C^{\Jmc_\Amc}$, by contraposition of Lemma~\ref{lem:SimPresConc}. 
    Consequently, $\Jmc_\Amc$ witnesses $\Amc \cup \Omc \not \models^{\mn{fin}} C(a)$.
    \bigskip

    To prove Point~(2), let $a \in \mn{ind}(\Amc)$ and set $\Omc' := \Omc \cup \{C \sqsubseteq \bot\}$. An application of Point~(1) yields \[ \Omc' \cup \Omc_\Amc \models^{\mn{fin}} S_{\Amc,a} \sqsubseteq \bot \;\;\text{if and only if}\;\; \Amc \cup\Omc' \models^{\mn{fin}} \bot(a).\]
    Moreover, $\Amc \cup \Omc' \models^{\mn{fin}} \bot(a)$ holds precisely when
    $\Amc \cup \Omc \cup \{C \sqsubseteq \bot\}$ is finitely inconsistent,
    which in turn is equivalent to
    $\Amc \cup \Omc \models^{\mn{fin}} \exists x\,C(x)$.
    This proves Point~(2).
\end{proof}

\subsection*{Deciding $\ELI^{\mn{sim}}_\bot$ subsumption}

We aim to prove the following.

\thmELIsimsubsexptime*
To decide $\ELI_\bot$ subsumption with respect to $\ELI^{\mn{sim}}_\bot$-ontologies, we translate the problem into the two-way multi-modal $\mu$-calculus with simultaneous fixed points, denoted by $L_\mu$. Let $\mn{VAR}$ be a countably infinite set of variables. A formula of the two-way $\mu$-calculus with simultaneous fixed points (\emph{$L_\mu$-formula}) is formed according to the syntax rule
\[
\varphi ::= X \mid A \mid \neg \varphi \mid \varphi \land \psi \mid \langle r\rangle\varphi
 \mid \nu_k X_1 \dots X_n.\, \varphi_1,\dots,\varphi_n,
\]
where $X, X_1,\dots,X_n \in \mn{VAR}$, $A \in \NC \cup\{\top, \bot\}$, $r$ is a (possibly inverse) role, $1 \leq k \leq n$, and $\varphi,\psi,\varphi_1,\dots,\varphi_n$ are $L_\mu$-formulas. In the last clause, every free occurrence of an $X_i$ in a $\varphi_j$ must be positive, i.e., within the scope of an even number of negations. Here, we call a variable \emph{free} if it is not in the scope of a $\nu_k$ operator that binds it, and  \emph{bound} otherwise. We can evaluate a $L_\mu$-formula with free variables in an interpretation by treating these variables as concept names. We say that an $L_\mu$-formula is an \emph{$L_\mu$-sentence} if it does not contain any free variables.  The derived operators $\varphi \lor \psi$, $[r]\varphi$ and $\varphi \rightarrow \psi$ are defined as usual by $\neg (\neg \varphi \land \neg \psi)$, $\neg \langle r\rangle \neg \varphi$ and $\neg \varphi \lor \psi$, respectively. In addition, for a finite set $R$ of role names, we define
$[U]_R \varphi$ as an abbreviation for
$$
  \nu_1 X . (\varphi \wedge \bigwedge_{r \in R} ([r]X \wedge [r^-]X)),
$$
expressing that $\varphi$ is true at every element reachable from the current element on a path that travels along roles from $R$, forwards or backwards.

The semantics of $L_\mu$-formulas is defined via interpretations and variable assignments in the usual way. For a concrete definition, we refer to~\cite{DBLP:books/el/07/BradfieldS07}.
In particular, given an interpretation \Imc, the interpretation of an $L_\mu$-formula $\nu_k X_1 \dots X_n.\, \varphi_1,\dots,\varphi_n$ is the $k$-th component $W_k$ of the unique greatest fixed point $(W_1, \dots, W_n)$, with respect to component-wise set inclusion, of the equation system 
%
$$
\begin{array}{rcl}
W_1 &=& \varphi_1^{\Imc[X_1 \to W_1, \dots, X_n \to W_n]},\\[1mm]
&\dots&\\[1mm]
W_n &=& \varphi_n^{\Imc[X_1 \to W_1, \dots, X_n \to W_n]}.
\end{array}
$$
%

We use the following result.
\begin{theorem}
\label{thm:mucalcinexptime}
    Finite satisfiability in the two-way $\mu$-calculus with simultaneous fixed points is \ExpTime-complete.
\end{theorem}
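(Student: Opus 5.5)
The lower bound needs no new work: the (plain) two-way $\mu$-calculus, and already its fragment with only a universal-modality-style greatest fixed point such as $[U]_R$, can express concept satisfiability w.r.t.\ \ALC-TBoxes, which is \ExpTime-hard. \ALC has the finite model property, so this hardness transfers to finite satisfiability.

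\textbf{Reduction to parity automata.} For the upper bound I would follow the route of \cite{bojarnczyk2002two}, who shows that finite satisfiability of the two-way $\mu$-calculus is decidable in \ExpTime. There are two steps.

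First, given an $L_\mu$-sentence $\varphi$ with simultaneous fixed points, translate it in polynomial time into an alternating two-way parity tree automaton $\mathfrak A_\varphi$ in the style of \cite{vardi1998reasoning}. The states are the subformulas of $\varphi$ together with one state for each component $\varphi_i$ of every simultaneous block $\nu_k X_1\dots X_n$. A variable state $X_i$ loops back to the state of $\varphi_i$. Priorities are assigned by the alternation nesting of blocks, and all variables of one block share the same (even, since only $\nu$ blocks occur syntactically) priority. Least fixed points appear only through negation. After converting to negation normal form, dual $\mu$-blocks receive odd priorities.

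Correctness of this translation is the standard argument. A simultaneous greatest fixed point is the limit of the component-wise descending approximation. An accepting run therefore corresponds to a postfixed-point witness, in which the block is visited infinitely often with even top priority. Conversely, a least-fixed-point block is unfolded only finitely often, by well-foundedness of the approximation ordinals. Alternatively, a simultaneous block can be replaced by ordinary nested fixed points via Beki\'c's lemma. This gives only an exponential blow-up in formula size, yet it remains polynomial in closure size (the number of distinct subformulas). Closure size is what governs the automaton size, so the bound still holds.

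Second, apply Boja\'nczyk's finite-emptiness procedure. His proof reduces finite satisfiability to the existence of a regular (finite-graph) witness, which is checked via a parity game of exponential size in the number of states. Since the state count is polynomial, this yields \ExpTime.

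\textbf{Main obstacle.} The delicate point is confirming that Boja\'nczyk's finite-model procedure depends only on the automaton and not on the fixed-point syntax. His argument is phrased for the conventional two-way $\mu$-calculus. I would therefore isolate his result as a statement about two-way alternating parity automata on graphs: finite nonemptiness is decidable in time exponential in the number of states and polynomial in the input alphabet. I would also check that the multi-modal setting (several roles, plus the inverse roles available through two-wayness) is covered, by encoding role labels into node labels of an unranked graph in the usual way.

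Once the automaton-level statement is secured, the theorem follows by composing the polynomial translation with that statement.
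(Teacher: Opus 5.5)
Your proposal follows essentially the same route as the paper. The paper likewise appeals to the folklore extension of Vardi's translation into two-way alternating parity automata to simultaneous fixed points, keeping polynomially many states and priorities, and then invokes Boja\'nczyk's \textsc{ExpTime} emptiness test over finite graphs; your lower bound via $\mathcal{ALC}$-TBoxes and the finite model property is also fine.

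One caveat concerns your Beki\'c aside, where you gloss closure size as ``the number of distinct subformulas.'' After Gaussian elimination, the number of distinct syntactic subformulas can be exponential, because an inner term $t_j$ reappears with many different patterns of substituted arguments. Only the Fischer--Ladner closure of closed unfoldings stays linear. That alternative therefore works only if the automaton's states are closure elements, not syntactic subformulas.
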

We are not aware that Theorem~\ref{thm:mucalcinexptime}
can be found in exactly this form in the literature.
However, it is folklore that the standard translation of an $\Lmc_\mu$-formula without simultaneous fixed points into a two-way parity  automaton given in
   \cite{vardi1998reasoning} 
   and also used in \cite{bojarnczyk2002two} can be generalized to simultaneous fixed points while yielding an automaton with a polynomial-sized set of states and  acceptance condition.
   It remains to decide emptiness of the
   resulting automaton, which is possible in \ExpTime both in the unrestricted \cite{vardi1998reasoning} and in the finite case \cite{bojarnczyk2002two}.
   
It only remains to verify that, given an $\ELI^{\mn{sim}}_\bot$ ontology \Omc and two $\ELIbot$ concepts $E, F$,  we can construct an $L_\mu$ sentence $\varphi$ of size polynomial in $||\Omc|| + ||E|| +||F|| $ such that $\Omc \not \models^{\mn{fin}} E \sqsubseteq F$ if and only if $\varphi$ is finitely satisfiable.

The construction of $\varphi$  relies on a mapping $\pi$ that sends $\ELI^{\mn{sim}}$-concepts to $L_\mu$-sentences. This translation is based on the standard translation $\pi$ of \ELIbot-concepts to $L_\mu$-sentences, inductively defined as follows: 
\begin{align*}
    \pi(\bot) &:= \bot, \\
    \pi(\top) &:= \top, \\
    \pi(A) &:= A \qquad\qquad\qquad A\in \NC,\\
    \pi(C \sqcap D) &:= \pi(C) \wedge \pi(D),\\
    \pi(\exists r. C) &:= \langle r\rangle \pi(C).
\end{align*}
Clearly, $||\pi(C)||$ is polynomial in the size of $C$. Moreover, it is well known that the following holds \cite{Baader2017}, where $R(\Omc)$ denotes the set of all role names used in the ontology \Omc.
\begin{proposition}\label{prop:modal}
    Let $\Omc$ be an \ELIbot ontology and $E$, $F$ \ELIbot-concepts. Then, $\Omc \not\models^{\mn{fin}} E \sqsubseteq F$ if and only if \[[U]_{R(\Omc)}\big ( \bigwedge_{C\sqsubseteq D \in \Omc} (\pi(C) \rightarrow \pi(D)) \big )\wedge \pi(E) \wedge \neg \pi(F)\]
    is finitely satisfiable.
\end{proposition}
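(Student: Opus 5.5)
We prove both directions by relating finite models of $\Omc$ to finite models of the displayed $L_\mu$-sentence, which we abbreviate as $\psi_\Omc \wedge \pi(E) \wedge \neg\pi(F)$ with $\psi_\Omc = [U]_{R(\Omc)}(\bigwedge_{C\sqsubseteq D \in \Omc}(\pi(C)\rightarrow \pi(D)))$. The first step is the standard observation that $\pi$ is semantics-preserving: $\pi(C)^\Imc = C^\Imc$ for every \ELIbot-concept $C$ and every interpretation \Imc. This follows by a routine induction on $C$, since each clause of $\pi$ mirrors the DL semantics ($\langle r\rangle$ is existential restriction, including for inverse roles).

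For the `only if' direction, assume $\Omc \not\models^{\mn{fin}} E \sqsubseteq F$. Take a finite model \Imc of \Omc and $d \in E^\Imc \setminus F^\Imc$. We claim that $d$ satisfies the sentence in \Imc. The conjuncts $\pi(E)$ and $\neg\pi(F)$ hold at $d$ by the first step. For $\psi_\Omc$, let $W$ be the set of elements where $\bigwedge (\pi(C)\rightarrow\pi(D))$ holds. Since \Imc is a model of \Omc, $W = \Delta^\Imc$. Then $X := \Delta^\Imc$ is a post-fixed point of the defining equation of $[U]_{R(\Omc)}$, so it is contained in the greatest fixed point, and hence $d$ satisfies $\psi_\Omc$.

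For the `if' direction, let \Imc be finite with $d$ satisfying the sentence. The main point is that $\psi_\Omc$ guarantees the CIs only on the component $\Delta'$ of elements reachable from $d$ along roles in $R(\Omc)$, traversed forwards or backwards. To see this, we unfold the greatest fixed point: its value $X$ satisfies $X \subseteq \varphi^\Imc$ and is closed under $[r]$ and $[r^-]$ for $r\in R(\Omc)$. Hence every element of $\Delta'$ lies in $X$ and therefore satisfies all implications.

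Now let $\Jmc$ be the restriction of \Imc to $\Delta'$. This is again finite. We show by induction that for every concept $G$ over the signature of \Omc, $E$, $F$ and every $e\in\Delta'$, we have $e \in G^\Jmc$ iff $e\in G^\Imc$. The only subtle point is the signature: $E$ and $F$ may use role names outside $R(\Omc)$, so restricting could change their extensions. We therefore take $R$ to include the roles of $E$ and $F$ as well, or equivalently restrict only with respect to $R(\Omc)$-reachability while keeping all other edges leaving $\Delta'$, together with their targets, as extra unconstrained elements. The cleaner fix is to close $\Delta'$ under all roles occurring in $E$ and $F$ as well; since the CIs of \Omc mention only $R(\Omc)$-roles, they still hold on the new elements provided we extend reachability appropriately.

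I expect this signature subtlety to be the main obstacle, and I would resolve it by the following route. Instead of restricting, take the disjoint union of \Imc's $R(\Omc)$-component of $d$ with an arbitrary finite model of \Omc. Elements outside the component keep their successors for roles outside $R(\Omc)$, and those roles are irrelevant to \Omc because \Omc does not mention them. Concretely, we delete every $R(\Omc)$-edge leaving $\Delta'$ (there are none, by the definition of reachability) and reinterpret $R(\Omc)$-roles as empty outside $\Delta'$. This yields a finite model of \Omc in which $d\in E\setminus F$, since $E$ and $F$ evaluate identically by locality.

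Finally, observe that the sentence has size polynomial in $||\Omc||+||E||+||F||$.
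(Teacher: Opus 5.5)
Your `only if' direction is correct. So is the fixed-point unfolding in the `if' direction: every element of the $R(\Omc)$-component $\Delta'$ of $d$ satisfies all implications $\pi(C)\rightarrow\pi(D)$. The `if' direction then breaks down exactly at the signature issue you flag, and none of your repairs works.

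Keeping the edges that leave $\Delta'$ together with their targets adds elements that $[U]_{R(\Omc)}$ does not constrain, so CIs of $\Omc$ may fail there. The same holds for closing $\Delta'$ under the roles of $E$ and $F$; ``extending reachability appropriately'' amounts to changing the sentence. Your final route has the same defect:
\begin{itemize}
\item emptying the $R(\Omc)$-roles outside $\Delta'$ does nothing about CIs such as $A\sqsubseteq\bot$ or $\top\sqsubseteq B$ at those elements;
\item it may newly violate CIs such as $A\sqsubseteq\exists r.\top$;
\item it may destroy $E$ at $d$, e.g.\ $E=\exists s.\exists r.\top$ with $r\in R(\Omc)$ and $s\notin R(\Omc)$;
\item taking a disjoint union with some other model of $\Omc$ does not help witness $E$ at $d$.
\end{itemize}

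In fact no repair can exist: the statement as written is false when $E$ uses a role name outside $R(\Omc)$. Take $\Omc=\{A\sqsubseteq\bot\}$, $E=\exists s.A$ and $F=\bot$. Then $\Omc\models E\sqsubseteq F$. But $R(\Omc)=\emptyset$, so $[U]_{R(\Omc)}$ is vacuous and the sentence is equivalent to $\neg A\wedge\langle s\rangle A$. This holds at $d$ in the interpretation with domain $\{d,e\}$, $s=\{(d,e)\}$ and $A=\{e\}$. Adding the trivial CI $\exists r.\top\sqsubseteq\top$ gives the same failure with $R(\Omc)\neq\emptyset$.

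The fix is to use $[U]_R$ with $R$ containing the role names of $\Omc$ and of $E$. The role names of $F$ may be added but are not needed, since passing to a substructure cannot make the positive concept $F$ true at $d$. With this change your first idea is already a complete proof. Restrict $\Imc$ to the $R$-component of $d$, obtaining $\Jmc$. That component is closed under $R$-neighbours in both directions, so every concept whose roles lie in $R$ has the same extension on it in $\Imc$ and $\Jmc$. Hence $\Jmc$ is a finite model of $\Omc$ with $d\in E^{\Jmc}\setminus F^{\Jmc}$.

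This is the connected-model argument following Schild that the paper appeals to, and the paper's one-line justification tacitly needs the same signature assumption. The gap is harmless in the paper's application to query entailment, where the left-hand side is the fresh concept name $S_{\Amc,a}$. For general subsumption, enlarging $R$ costs nothing in the complexity bound.
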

Although, in its original form, the proposition is stated with the genuine universal modality $[U]$, we may employ the same connected-model argument as used by \cite{Schild91}, to replace $[U]$ with $[U]_R(\Omc)$.

It remains to show that $\pi$ and Proposition~\ref{prop:modal} can be extended to the simulation quantifier. We set  \[\pi(\exists^{\mn{sim}}_\ELI (\Imc, d_k)) : = \nu_k X_1 \dots X_n. \varphi_1, \dots \varphi_n,\]
where $\Delta^{\Imc}=\{d_1, \dots, d_n\}$, and for  $1 \leq i \leq n$,
\[
\varphi_i = \bigwedge_{d_i \in A^\Imc} A \wedge \bigwedge_{(d_i,d_j)\in r^\Imc} \langle r\rangle X_j.\]
Note that $||\pi(\exists^{\mn{sim}}_\ELI (\Imc, d_k))||$ is polynomial in  $||\Imc||$. 

\begin{lemma}
    Let $(\Jmc, e)$ be a pointed interpretation. Then, \[e \in (\exists^{\mn{sim}}_\ELI (\Imc, d_k))^\Jmc \quad\text{ iff }\quad e\in (\pi(\exists^{\mn{sim}}_\ELI (\Imc, d_k)))^\Jmc.\]
\end{lemma}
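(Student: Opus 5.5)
The plan is to show both directions by relating the greatest fixed point $(W_1,\dots,W_n)$ of the equation system to the maximal \ELI-simulation from \Imc to \Jmc. Let $Z$ denote the union of all \ELI-simulations from \Imc to \Jmc; this union is again an \ELI-simulation, since (Atom), (Rel) and (iRel) are each preserved under arbitrary unions. For $1 \leq i \leq n$, set $Z_i := \{ e \mid (d_i,e) \in Z\}$. The goal is to prove $W_i = Z_i$ for all $i$. The lemma then follows for $i=k$, because $e \in (\exists^{\mn{sim}}_\ELI(\Imc,d_k))^\Jmc$ holds iff $(d_k,e) \in Z$.

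For $Z_i \subseteq W_i$, we use the Knaster--Tarski characterization of the greatest fixed point of a monotone operator. Monotonicity holds because all $X_j$ occur positively in the $\varphi_i$. Under this characterization, it suffices to show that the tuple $(Z_1,\dots,Z_n)$ is a post-fixed point, that is, $Z_i \subseteq \varphi_i^{\Jmc[X_1\to Z_1,\dots,X_n\to Z_n]}$ for every $i$. Take $e \in Z_i$. By (Atom), $e \in A^\Jmc$ for every $A$ with $d_i \in A^\Imc$. Now take any conjunct $\langle r\rangle X_j$ with $(d_i,d_j) \in r^\Imc$. If $r$ is a role name, (Rel) yields some $e'$ with $(e,e') \in r^\Jmc$ and $(d_j,e') \in Z$, so $e' \in Z_j$. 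If $r$ is an inverse role $s^-$, then $(d_j,d_i) \in s^\Imc$, and (iRel) gives the corresponding $e'$. Hence $e$ satisfies $\varphi_i$ under this assignment.

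For $W_i \subseteq Z_i$, define $S := \{(d_i,e) \mid e \in W_i\}$ and check that $S$ is an \ELI-simulation; then $S \subseteq Z$. Since $(W_1,\dots,W_n)$ is a fixed point, $e \in W_i$ implies $e \in \varphi_i^{\Jmc[X\to W]}$. The conjunct $A$ in $\varphi_i$ gives (Atom). The conjunct $\langle r\rangle X_j$ gives, for every $r$-edge $(d_i,d_j)$ of \Imc, an $r$-successor of $e$ in $W_j$; this yields (Rel) for role names and (iRel) for inverse roles.

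The only care needed is bookkeeping. First, the intended reading of $(d_i,d_j) \in r^\Imc$ in the definition of $\varphi_i$ must range over all roles $r$, including inverses, because otherwise (iRel) would not be encoded. Second, if \Imc interprets infinitely many names, the conjunction is still finite, since \Imc interprets only finitely many names as non-empty. I expect no genuine obstacle. The main point is choosing the correct characterization of the greatest simultaneous fixed point, namely as the union of all post-fixed points, and observing that it matches maximal simulations exactly.
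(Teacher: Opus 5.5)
Your proposal is correct and follows the paper's proof: the paper also shows that $\{(d_i,e)\mid e\in W_i\}$, built from the greatest fixed point, is an \ELI-simulation, and conversely that the sets $d_iS$ of a simulation $S$ form a post-fixed point. The only difference is cosmetic. You work with the maximal simulation $Z$ and prove $W_i = Z_i$, whereas the paper uses an arbitrary simulation containing $(d_k,e)$.
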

\noindent
\begin{proof}
    `\emph{if}' Assume that $e\in (\pi(\exists^{\mn{sim}}_\ELI (\Imc, d_k)))^\Jmc$ and
    let $(W_1, \dots, W_n)$ be the greatest solution of the simultaneous fixed point system defined by $\pi(\exists^{\mn{sim}}_\ELI (\Imc, d_k))$. This means that, for $1 \leq i \leq n$, we have 
    \[W_i = \varphi_i^{\Jmc[X_1 \to W_1, \dots, X_n\to W_n]} \quad \text{ and } \quad e \in W_k.\]
    Now, define the relation $S \subseteq \Delta^\Imc \times \Delta^\Jmc$ as \[
    S:= \{(d_i, d) \mid d \in W_i \}.
    \]
    Since by definition $(d_k,e)\in S$, it remains to prove that $S$ is an \ELI-simulation. That $S$ satisfies (Atom) is clear by inspecting the definition of $\varphi_i$. To prove (Rel) and (iRel), let $(d_i, e_1)\in S$ and $(d_i, d_j) \in r^\Imc$ for some role $r$. Then by definition of $S$, we have $e_1 \in W_i = \varphi_i^{\Jmc[X_1 \to W_1, \dots, X_n\to W_n]}$. The construction of $\varphi_i$ and semantics of $\langle r\rangle$ yields an element $e_2 \in W_j$ with $(e_1, e_2) \in r^\Jmc$. Moreover, by definition of $S$, $e_2 \in W_j$ implies $(d_j, e_2)\in S$, as required.

    \medskip
    `\emph{only if}' Let $e \in (\exists^{\mn{sim}}_\ELI (\Imc, d_k))^\Jmc$. Then by the semantics of simulation quantifiers, there exists an \ELI-simulation $S$ from $\Imc$ to $\Jmc$ such that $(d_k, e)\in S$. Define the sets $W_i^S := d_i S$ for all $1 \leq i \leq n$. Then for $1 \leq i \leq n$, we have $W_i^S \subseteq \varphi_i^{\Jmc [X_1 \to W_1^S, \dots, X_n \to X_n^S]}$. This can be easily verified by the property (Atom) corresponding to the first conjunction, and (Rel) and (iRel) corresponding to the second conjunction. Thus, $W_k^S$ is a post fixed point of $\pi(\exists^{\mn{sim}}_\ELI (\Imc, d_k))$ in \Jmc. Since $\pi(\exists^{\mn{sim}}_\ELI (\Imc, d_k))$ is a greatest fixed point formula, we conclude $e \in W_k^S \subseteq  (\pi(\exists^{\mn{sim}}_\ELI (\Imc, d_k)))^\Jmc $.
\end{proof}

Proposition~\ref{prop:modal} can thus be extended in the following way.
\begin{proposition}
\label{prop:ELIsimtomucalc}
    Let $\Omc$ be an $\ELIbot^{\mn{sim}}$ ontology and $E$, $F$ \ELIbot-concepts. Then, $\Omc \not\models^{\mn{fin}} E \sqsubseteq F$ if and only if \[[U]_{R(\Omc)}\big ( \bigwedge_{C\sqsubseteq D \in \Omc} (\pi(C) \rightarrow \pi(D)) \big )\wedge \pi(E) \wedge \neg \pi(F)\]
    is finitely satisfiable.
\end{proposition}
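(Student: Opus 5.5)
The plan is to reduce Proposition~\ref{prop:ELIsimtomucalc} to Proposition~\ref{prop:modal}. The reduction rests on the translation $\pi$, extended to simulation quantifiers, being semantically faithful on \emph{every} interpretation. The preceding lemma gives exactly this faithfulness for the atomic case $\exists^{\mn{sim}}_\ELI(\Imc,d_k)$. I would therefore first show, by structural induction on $\ELIbot^{\mn{sim}}$-concepts $C$, that $C^\Jmc = \pi(C)^\Jmc$ for all interpretations $\Jmc$. The cases $\top$, $\bot$, concept names, conjunction and $\exists r.D$ are handled as in the standard translation. The simulation-quantifier case is the lemma itself. Since $\pi(\exists^{\mn{sim}}_\ELI(\Imc,d_k))$ is a sentence whose bound variables are fresh, nesting it inside other formulas causes no variable capture, so the induction goes through. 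This step does not depend on finiteness.

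Next, I would argue the two directions of the equivalence. Suppose $\Omc \not\models^{\mn{fin}} E \sqsubseteq F$ is witnessed by a finite model $\Jmc$ of $\Omc$ and some $e \in E^\Jmc \setminus F^\Jmc$. Every CI $C\sqsubseteq D$ of $\Omc$ holds at every element of $\Jmc$, so by faithfulness $\pi(C)\rightarrow\pi(D)$ holds everywhere. In particular it holds at every element reachable from $e$ via roles in $R(\Omc)$, and hence $e$ satisfies the formula. Conversely, let a finite $\Jmc$ and an element $e$ satisfy the formula. Let $\Jmc'$ be the restriction of $\Jmc$ to the set of elements reachable from $e$ along roles in $R(\Omc)$, taken forwards or backwards.

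I expect the main obstacle to be showing that restricting to $\Jmc'$ preserves the truth value of every relevant formula at every element of $\Jmc'$. This includes the simulation quantifiers, whose witnessing simulations might a priori leave the component. My first approach would be to observe that the inclusion and the restriction relate $\Jmc'$ and $\Jmc$ by \ELI-simulations in both directions when restricted to the signature $R(\Omc)$. Concretely, the identity on $\Delta^{\Jmc'}$ is an \ELI-simulation from $\Jmc'$ into $\Jmc$. In the other direction, for $R(\Omc)$-roles, every successor or predecessor in $\Jmc$ of an element of $\Jmc'$ again lies in $\Jmc'$, so the identity is also an \ELI-simulation back from $\Jmc$ to $\Jmc'$ with respect to the roles in $R(\Omc)$.

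We may assume that the finite interpretations inside the quantifiers, as well as $E$ and $F$, use only roles from $R(\Omc)$. If $E$ or $F$ mentions other roles, we add them to $R$ with harmless tautological CIs. With this assumption, Lemma~\ref{lem:LsiInvSim}, applied in both directions, shows that all concepts involved have the same extension on $\Delta^{\Jmc'}$ in $\Jmc$ and in $\Jmc'$. It follows that $\Jmc'$ is a finite model of $\Omc$ with $e\in E^{\Jmc'}\setminus F^{\Jmc'}$, which witnesses $\Omc \not\models^{\mn{fin}} E \sqsubseteq F$. This is the same connected-model argument that underlies Proposition~\ref{prop:modal}, now justified for simulation quantifiers through simulation invariance.

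Finally, I would note that the resulting formula has size polynomial in $||\Omc||+||E||+||F||$. This combines with Theorem~\ref{thm:mucalcinexptime} to yield the upper bound of Theorem~\ref{thm:ELIsimsubsexptime}.
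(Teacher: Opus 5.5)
Your proposal is correct and follows the paper's route: faithfulness of the extended translation $\pi$ (the preceding lemma, lifted by a trivial induction), combined with the reduction of Proposition~\ref{prop:modal} and the connected-model argument relativized to $R(\Omc)$, which you make explicit via two-way simulation invariance where the paper only cites Schild. One caveat: your assumption that $E$ mentions only roles from $R(\Omc)$ is not w.l.o.g.\ but necessary, since for $\Omc=\{A\sqsubseteq\bot\}$, $E=\exists s.A$, $F=\bot$ the displayed formula is finitely satisfiable (a single $s$-edge into $A$) although $\Omc\models E\sqsubseteq F$; padding $\Omc$ with tautological CIs therefore amends the formula rather than reducing to the stated one, which is harmless for the paper, where $E$ is the concept name $S_{\Amc,a}$ and extra roles in $F$ do no damage.
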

Clearly the  $L_\mu$ sentence in Proposition~\ref{prop:ELIsimtomucalc} is of size polynomial in $||\Omc|| + ||E|| + ||F||$.

 \subsection*{Lower bound for \ELI.}
\ELIfitExpT*
As a proof, we provide a \PTime-reduction to the complement of AQ entailment in \ELI, which is known to be \ExpTime-hard \cite{BaaderEtAl-OWLED08DC}.

Let \Omc be an \ELI-ontology, \Amc an ABox, and $A_q(a)$ be an AQ. 
We construct a collection of labeled ABox-CQ examples $E=(E^-,E^+)$ such that $\Amc \cup \Omc \not \models A_q(a)$ if and only if there exists a fitting \ELI-ontology for $E$ if and only if there exists a fitting \ELIbot-ontology for $E$.

To keep the reduction simple, we may assume w.l.o.g. that $\Omc$ is in \emph{normal form}. That is, every CI in \Omc is of one of the following patterns: \[
A \sqsubseteq B, \quad A_1 \sqcap A_2 \sqsubseteq B,\quad
    A \sqsubseteq \exists r. B, \quad\exists r.  A \sqsubseteq B,
\]
where $A, A_1, A_2, B$ are concept names or $\top$.
Indeed, if \Omc is not in normal form, one can construct in polynomial time an ontology $\Omc'$ such that $\Amc \cup \Omc' \models A_q(a)$ if and only if $\Amc \cup \Omc \models A_q(a)$ \cite{Baader2017}.

We now describe the reduction, starting by translating the CIs of \Omc into positive examples and then defining $\Amc$ and $A_q(a)$ to be the single negative example. For the reader's convenience, it may be helpful to revisit the characterization from Theorem~\ref{thm:charUCQwithFin}.

First, we introduce a fresh concept name $A_{\top}$ that allows us to explicitly refer to the $\top$-concept in the ABoxes and queries we construct below. A suitable translation function $\cdot^\mn{cv}: \NC \cup \{\top\} \to \NC$ is then defined by extending the identity on \NC with $\top^{\mn{cv}} := A_\top$.
Using this translation, we encode the CIs of \Omc as positive examples by adding the following elements to  $E^+$:
\begin{itemize}
    \item $(\{A^{\mn{cv}}(a)\}, B^{\mn{cv}}(a)) $ for every $ A \sqsubseteq B \in \Omc$;
    \item $(\{A_1^{\mn{cv}}(a), A_2^{\mn{cv}}(a)\}, B^{\mn{cv}}(a))$ for every $ A_1 \sqcap A_2 \sqsubseteq B \in \Omc$;
    \item $(\{A^{\mn{cv}}(a)\}, \exists x\ r(a,x)\land B^{\mn{cv}}(x) \land A_\top(x))$ for every  $A \sqsubseteq \exists r. B \in \Omc$;
    \item $(\{A^{\mn{cv}}(b), r(a,b)\}, B^{\mn{cv}}(a))$ for every $ \exists r. A \sqsubseteq B \in \Omc$.
\end{itemize}

The set $E^-$ consists of the single negative example $(\Amc \cup \{A_\top(a) \mid a \in \mn{ind}(\Amc)\}, A_q(a))$. 

It remains to show the correctness of the reduction.
 \begin{lemma}
    Let $\Lmc \in \{\ELI, \ELIbot\}$, \Amc an ABox, \Omc an \ELI-ontology, $A_q(a)$ an AQ, and $E=(E^+,E^-)$ the labeled collection of examples constructed above. Then the following are equivalent:
    \begin{enumerate}[label=(\roman*)]
        \item $\Amc \cup \Omc \not \models A_q(a)$;
        \item there exists a fitting $\ELI$-ontology for $E$;
        \item there exists a fitting $\ELIbot$-ontology for $E$.
    \end{enumerate}
 \end{lemma}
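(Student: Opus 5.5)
We prove the cycle (i)$\Rightarrow$(ii)$\Rightarrow$(iii)$\Rightarrow$(i). Throughout we use the characterization of Theorem~\ref{thm:charUCQwithFin}. We abbreviate the negative ABox as $\Amc^* = \Amc \cup \{A_\top(b) \mid b \in \mn{ind}(\Amc)\}$.

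The step (ii)$\Rightarrow$(iii) is trivial, because every \ELI-ontology is also an \ELIbot-ontology.

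For (i)$\Rightarrow$(ii), assume $\Amc \cup \Omc \not\models A_q(a)$.
\begin{itemize}
\item By the finite model property of \ELI for AQ entailment, pick a finite model $\Imc_0$ of $\Amc$ and $\Omc$ with $a \notin A_q^{\Imc_0}$.
\item Extend $\Imc_0$ to $\Imc$ by setting $A_\top^\Imc = \Delta^\Imc$. Then $\Imc$ is a finite model of $\Amc^*$ with $\Imc \not\models A_q(a)$, so Condition~(a) holds.
\item For Condition~(b), take a positive example $(\Bmc, q)$ and an \ELI-simulation $S$ from $\Bmc$ to $\Imc$. We show $\Imc_{\Bmc,S,\ELI} \models q$ by Corollary~\ref{lem:IASLsim}, case by case over the four normal-form patterns.
\end{itemize}
The four cases go as follows.
\begin{itemize}
\item For $A \sqsubseteq B$ and $A_1 \sqcap A_2 \sqsubseteq B$: every $d \in aS$ satisfies the left-hand concept names by (Atom). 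Hence $d \in B^\Imc$, since $\Imc \models \Omc$.
\item For $\exists r.A \sqsubseteq B$: each $d \in aS$ has an $r$-successor in $bS$ by (Rel). That successor satisfies $A$, so $d \in B^\Imc$.
\item For $A \sqsubseteq \exists r.B$: every $d \in aS$ satisfies $\exists r.(B \sqcap A_\top)$, because $A_\top$ is everywhere. By Corollary~\ref{lem:IASLsim}, $\Imc_{\Bmc,S,\ELI} \models \exists r.(B\sqcap A_\top)(a)$, which gives the CQ.
\item When $aS = \emptyset$: the product is the maximal interpretation, so all conclusions hold trivially.
\end{itemize}
The translation $\cdot^{\mn{cv}}$ is harmless here, since $A_\top$ is interpreted as the whole domain.

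For (iii)$\Rightarrow$(i), let \Imc be a finite interpretation satisfying Conditions~(a) and~(b) for $\Lmc = \ELIbot$. There is a single negative example, so \Imc is a model of $\Amc^*$ with $a \notin A_q^\Imc$. Define $\Imc'$ as the restriction of \Imc to $\Delta' = A_\top^\Imc$, which contains $\mn{ind}(\Amc)$. We claim $\Imc' \models \Amc \cup \Omc$, which yields (i).
\begin{itemize}
\item For the first three patterns: take $d \in \Delta'$ satisfying the body. The one-element positive ABox simulates to $d$ via $\{(a,d)\}$. That relation is total, so it is an \ELIbot-simulation. Condition~(b) and Corollary~\ref{lem:IASLsim} give $d \in B^\Imc$. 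The pattern $A \sqsubseteq \exists r.B$ gives an $r$-successor in $B \sqcap A_\top$, so it stays inside $\Delta'$.
\item For $\exists r.A \sqsubseteq B$: the simulation $\{(a,d),(b,e)\}$ for an $r$-edge $(d,e)$ inside $\Delta'$ is again total and satisfies (Rel)/(iRel) for the single edge.
\item For $\top$ on the left: $\top^{\mn{cv}} = A_\top$, which matches exactly the elements of $\Delta'$. This is precisely why we restrict to $A_\top^\Imc$.
\end{itemize}
The main point to check carefully is the totality issue for \ELIbot. With \ELIbot, Condition~(b) only constrains total simulations. The simulations used above have domain $\mn{ind}(\Bmc)$ and are total, so the argument goes through. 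A second point is that the forest/product construction must be compatible with the corollary even when some $aS$ has several elements. The corollary handles this, because it gives $\Imc_{\Bmc,S,\Lmc} \models C(a)$ if and only if $aS \subseteq C^\Imc$.
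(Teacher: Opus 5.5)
Your proposal is correct and follows the paper's proof essentially step for step. It uses the same cycle of implications, the same finite model with $A_\top$ interpreted as the whole domain for (i)$\Rightarrow$(ii), and the same restriction to $A_\top^\Imc$ combined with Condition~(b) and Corollary~\ref{lem:IASLsim} for (iii)$\Rightarrow$(i). You are more explicit than the paper in treating all four normal-form patterns and in checking that the simulations you use are total, as \ELIbot requires.
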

 \noindent
 \begin{proof}
     Let $\Lmc$, \Omc, \Amc, $A_q(a)$, and $E$ be as in the Lemma.
     
     `(i)$~\Rightarrow~$(ii)' Assume that $\Amc \cup \Omc \not \models A_q(a)$. Since \ELI enjoys the finite model property \cite{Baader2017}, $\Amc \cup \Omc \not \models^{\mn{fin}} A_q(a)$. Let $\Imc$ be the corresponding finite witness model. W.l.o.g. we may assume that  $A_\top^\Imc = \Delta^\Imc$, because $A_{\top}$ is a fresh concept name. Then, $(A^{\mn{cv}})^\Imc = A^\Imc$ for all $A \in \NC \cup \{\top\}$.
     We now prove Condition~(ii) by showing that \Imc satisfies Conditions~(a) and~(b) of Theorem~\ref{thm:charUCQwithFin}. 
     Condition~(a) is already satisfied by the assumption and the fact that $\mn{ind}(\Amc)\subseteq A_\top^
     \Imc$, ensuring that $\Imc$ is a model of $\Amc \cup \{A_\top(a) \mid a \in \mn{ind}(\Amc)\}$.
     For Condition~(b), observe that every positive example corresponds to a CI in \Omc that is satisfied by \Imc. 
     Since the different types of examples can be handled similarly, we illustrate the argument only for examples that arise from CIs of the form $A \sqsubseteq \exists r. B$. Let $A \sqsubseteq \exists r. B \in \Omc$ and $(\Amc',q')$ be the corresponding positive example.
     By definition, $\Amc':= \{A^{\mn{cv}}(a)\}$ and $q':=\exists x \ r(a,x) \land B^{\mn{cv}}(x) \land A_\top(x)$.
     Suppose that $S$ is an $\ELI$-simulation from $\Amc'$ to $\Imc$. Since $S$ is an \ELI-simulation and $\Imc$ is a model of \Omc, we have 
     \[
        aS \subseteq (A^{\mn{cv}})^\Imc = A^\Imc \subseteq (\exists r. B)^\Imc=(\exists r. B^{\mn{cv}} \sqcap A_\top)^\Imc.
     \]
     By Corollary~\ref{lem:IASLsim}, this implies $\Imc_{\Amc', S , \ELI} \models \exists r. B^{\mn{cv}} \sqcap A_\top(a)$ and thus $\Imc_{\Amc', S , \ELI} \models q'$. Consequently, \Imc also satisfies Condition~(b) and is therefore a witness for the existence of a fitting \ELI-ontology.

     \medskip 
     `(ii)$~\Rightarrow~$(iii)' Trivial, since every \ELI-ontology is also an \ELIbot-ontology.
     
     \medskip
     `(iii)$~\Rightarrow~$(i)' Suppose there exists a fitting $\ELIbot$-ontology for $E$. By Theorem~\ref{thm:charUCQwithFin}, there exists an interpretation \Imc that satisfies Conditions~(a) and~(b) of the theorem. Using Condition~(a) and the fact that there is only one negative example, we may assume that \Imc is a model of $\Amc \cup \{A_\top(a) \mid a \in \mn{ind}(\Amc)\}$ such that $\Imc \not \models A_q(a)$. Let \Jmc be the restriction of \Imc to $A_\top^\Imc$. Then, $(A^{\mn{cv}})^\Jmc = A^\Jmc$ for all $A \in \NC \cup \{\top\}$.
     It is straightforward to verify that \Jmc is a model of $\Amc$ and that $\Jmc \not \models  A_q(a)$. It remains to show that \Jmc is a model of \Omc.
     Again, we exemplarily treat only one type of CI, the remaining cases are similar.  Let $\exists r. A \sqsubseteq B \in \Omc$ and let $(\Amc',q')$ be the corresponding example.
      By definition, $\Amc'= \{A^{\mn{cv}}(b), r(a,b)\}$ and $q' = B^{\mn{cv}}(a)$. Assume that $d \in (\exists r. A )^\Jmc$.
     Since $\Jmc \subseteq \Imc$, we have $d \in (\exists r. A )^\Imc$. Thus, there exists an $\ELIbot$-simulation $S$ from $\Amc'$ to \Imc such that $(a,d) \in S$. Because \Imc satisfies Condition~(b) of Theorem~\ref{thm:charUCQwithFin}, we obtain $\Imc_{\Amc', S, \ELIbot} \models B^{\mn{cv}}(a)$, and hence $d \in (B^{\mn{cv}})^\Imc$ by Corollary~\ref{lem:IASLsim}. By construction of $\Jmc$ this implies  
     $d \in (B^{\mn{cv}})^\Jmc = B^\Jmc$. Thus, \Jmc satisfies the concept inclusion $\exists r. A \sqsubseteq B \in \Omc$.
 \end{proof}

\cleardoublepage

\end{document}